\newenvironment{dedication}
{
   \cleardoublepage
   \thispagestyle{empty}
   \vspace*{\stretch{1}}
   \hfill\begin{minipage}[t]{0.7\textwidth}
   \raggedright
}%
{
   \end{minipage}
   \vspace*{\stretch{3}}
   \clearpage
}
\newenvironment{abstract}%
    {\cleardoublepage\thispagestyle{empty}\null\vfill\begin{center}%
    \bfseries\abstractname\end{center}}%
    {\vfill\null}
\newenvironment{acknowledgements}%
    {\cleardoublepage\thispagestyle{empty}\null\vfill\begin{center}%
    \bfseries Acknowledgements\end{center}}%
    {\vfill\null}
\def\cleardoublepage{\clearpage\if@twoside \ifodd\c@page\else
  \hbox{}
  \thispagestyle{empty}
  \newpage
  \if@twocolumn\hbox{}\newpage\fi\fi\fi}
\DeclareMathAlphabet{\mathsfit}{T1}{\sfdefault}{\mddefault}{\sldefault}
\SetMathAlphabet{\mathsfit}{bold}{T1}{\sfdefault}{\bfdefault}{\sldefault}
\newtheorem{lemma}{Proposition}[section] 
\newtheorem{corollary}[lemma]{Corollary} 
\theoremstyle{definition}
\newtheorem{definition}{Definition}[section]
\def\eg{{\em e.g.}\:}
\def\cf{{\em cf.}\:}
\def\ie{{\em i.e.}\:}
\newcommand{\naturals}{\mathbb{N}}
\newcommand{\mystrut}{\rule[0pt]{0pt}{7pt}}
\newcommand{\stnsp}{\hspace{-0.03cm}}
\newcommand{\stsp}{\hspace{0.03cm}}
\newcommand{\stdsp}{\hspace{0.05cm}}
\newcommand{\scomp}{\bowtie}
\newcommand{\kstar}{*}
\newcommand{\pr}{^{\scriptscriptstyle\prime}}
\newcommand{\prr}{^{\scriptscriptstyle\prime\scriptscriptstyle\prime}}
\newcommand{\langA}[1]{\mathtt{\mathcal{L}}_{#1}}
\newcommand{\ret}[1]{\rho\:#1}
\newcommand{\op}[1]{\mathsfit{#1}}
\newcommand{\ty}[1]{\mathtt{#1}}
\newcommand{\com}[1]{\mathtt{#1}}
\newcommand{\mv}[1]{\mathit{#1}}
\newcommand{\sv}[1]{\mathbf{#1}}
\newcommand{\ljumps}{\mathcal{J}\!\mathit{umps}_{\mathit{local}}}
\newcommand{\jumps}{\mathcal{J}\!\mathit{umps}}
\newcommand{\skipp}{\com{skip}}
\newcommand{\basic}{\com{basic}}
\newcommand{\jump}[1]{\com{jump}\:#1}
\newcommand{\cjump}[3]{\com{cjump}\:#1\;\com{to}\:#2\;\com{otherwise}\;#3\;\com{end}}
\newcommand{\ite}[3]{\com{if}\:#1\;\com{then}\;#2\;\com{else}\;#3\;\com{fi}}
\newcommand{\while}[3]{\com{while}\:#1\;\com{do}\;#2\;\com{subsequently}\;#3\;\com{od}}
\newcommand{\whileS}[2]{\com{while}\:#1\;\com{do}\;#2\;\com{od}}
\newcommand{\await}[2]{\com{await}\:#1\;\com{do}\;#2\;\com{od}}
\newcommand{\wait}[1]{\com{wait}\:#1}
\newcommand{\Parallel}[1]{\parallel\!#1}
\newcommand{\sq}{\mathit{sq}}
\newcommand{\envC}[1]{\mathcal{E}\hspace{-1.1pt}\mathit{nv}\:#1}
\newcommand{\progC}[1]{\mathcal{P}\!\mathit{rog}\:#1}
\newcommand{\inC}[1]{\mathcal{I}\hspace{-1.1pt}\mathit{n}\:#1}
\newcommand{\outC}[1]{\mathcal{O}\!\mathit{ut}\:#1}
\newcommand{\envCi}[1]{\mathcal{E}\!\mathit{nv}^\omega\:#1}
\newcommand{\defeq}{\mathrel{\stackrel{\scriptscriptstyle{\mathit{def}}}{=}}}
\newcommand{\pcsi}[1]{\lsemantic#1\rsemantic^\omega}
\newcommand{\pcs}[1]{\lsemantic#1\rsemantic}
\newcommand{\rpcsi}[2]{\lsemantic#1\rsemantic^\omega_{#2}}
\newcommand{\rpcs}[2]{\lsemantic#1\rsemantic_{#2}}
\newcommand{\pstep}{\mathop{\rightarrow}\nolimits_{\scriptscriptstyle{\mathcal{P}}}\:}
\newcommand{\rpstep}[3]{#1 \vdash #2 \pstep #3}
\newcommand{\psteps}{\stackrel{\scriptscriptstyle\kstar}{\mathop{\rightarrow}}_{\scriptscriptstyle{\mathcal{P}}}}
\newcommand{\pstepsN}[1]{\stackrel{\scriptscriptstyle #1}{\mathop{\rightarrow}}_{\scriptscriptstyle{\mathcal{P}}}}
\newcommand{\rpsteps}[3]{#1 \vdash #2 \psteps #3}
\newcommand{\rpstepsN}[4]{#1 \vdash #2 \pstepsN{#3} #4}
\newcommand{\estep}{\mathop{\rightarrow}\nolimits_{\scriptscriptstyle{\mathcal{E}}}\:}
\newcommand{\rgvalid}[5]{\models\hspace{-0.5pt}\{#1,\:#2\}\:#3\:\{#4,\:#5\}}
\newcommand{\rgvalidr}[6]{#1\models\hspace{-0.5pt}\{#2,\:#3\}\:#4\:\{#5,\:#6\}}
\newcommand{\rgvalidi}[5]{\models^\omega\hspace{-1.5pt}\{#1,\:#2\}\:#3\:\{#4,\:#5\}}
\newcommand{\rgvalidri}[6]{#1 \models^\omega\hspace{-1pt}\{#2,\:#3\}\:#4\:\{#5,\:#6\}}
\newcommand{\rgvalidrext}[6]{#1\models_{\scriptscriptstyle2}\hspace{-1pt}\{#2,\:#3\}\:#4\:\{#5,\:#6\}}
\newcommand{\rgvalidext}[5]{\models_{\scriptscriptstyle2}\hspace{-1pt}\{#1,\:#2\}\:#3\:\{#4,\:#5\}}
\newcommand{\pcorr}[3]{\rho, \rho\pr \models #1\sqsupseteq_{#2}#3}
\newcommand{\pcorrG}[5]{#1, #2 \models #3\sqsupseteq_{#4}#5}
\newcommand{\peqv}[3]{\rho, \rho\pr \models #1\approx_{#2}#3}
\newcommand{\pcorrS}[3]{\rho\hspace{0.2pt} \models #1\sqsupseteq_{#2}#3}
\newcommand{\peqvS}[3]{\rho \models #1\approx_{#2}#3}
\newcommand{\peqvSG}[4]{#1 \models #2\approx_{#3}#4}
\newcommand{\pcorrC}[3]{\models #1\sqsupseteq_{#2}#3}
\newcommand{\peqvC}[3]{\models #1\approx_{#2}#3}
\newcommand{\progOf}[1]{\mathcal{P}(#1)}
\newcommand{\stateOf}[1]{\mathcal{S}(#1)}
\newcommand{\rimg}[2]{#1\cdot #2}
\newcommand{\rcomp}[2]{#1 \diamond #2}
\newcommand{\rconv}[1]{#1^{\circ}}
\newcommand{\negate}[1]{\neg #1}
\newcommand{\prefix}[2]{_{\scriptscriptstyle #1}|#2}
\newcommand{\suffix}[2]{^{\scriptscriptstyle #1}|#2}
\newcommand{\Rule}[2]{\dfrac{#1}{#2}}
\newcommand{\rgvalida}[5]{\begin{array}{l l}
                          \models\!\!\! & \{#1,\:#2\} \\
                                  & #3 \\
                                  & \{#4,\:#5\}
                           \end{array}}
\newcommand{\rgvalidap}[5]{\begin{array}{l l}
                          \models\!\!\! & \{#1,\:#2\} \\
                                  & #3 \\
                                  & \{#4,\:#5\}
\end{array}}
\newcommand{\rgvalidapext}[5]{\begin{array}{l l}
                          \models_{\scriptscriptstyle2}\hspace{-7.5pt} & \{#1,\:#2\} \\
                                  & #3 \\
                                  & \{#4,\:#5\}
                           \end{array}}
\newcommand{\myrhd}{\smalltriangleright}
\newcommand{\pos}{\mathit{{\mathcal P}\!os}}
\newcommand{\Next}{{\mathcal N}}
\newcommand{\rpcsf}[2]{\lsemantic#1\rsemantic^{\scriptscriptstyle{\mathit{fair}}}_{#2}}
\newcommand{\pcsf}[1]{\lsemantic#1\rsemantic^{\scriptscriptstyle{\mathit{fair}}}}
\newcommand{\lcond}[4]{#1 \Vdash #2 \myrhd #3 \myrhd #4}
\newcommand{\lcondN}[4]{#1 \Vdash_{\scriptscriptstyle{\mathcal N}} #2 \myrhd #3 \myrhd #4}
\newcommand{\NlcondN}[4]{#1 \not\Vdash_{\scriptscriptstyle{\mathcal N}} #2 \myrhd #3 \myrhd #4}
\newcommand{\lcondT}[5]{#1 \Vdash_{\scriptscriptstyle{\mathcal T}} #2 \myrhd #3 \myrhd #4 \myrhd #5}
\newcommand{\steqv}[2]{#1 \sim #2}
\newcommand{\plook}[2]{#1 |_{#2}}
\newcommand{\psubst}[3]{#1[#2]_{#3}}
\newcommand{\sqt}{\widehat{\sq}}
\newcommand{\sqts}{\widehat{\sigma}}
\begin{document}

\frontmatter

\pagenumbering{arabic}

\begin{titlepage}
\thispagestyle{empty}
\begin{tabular}{l}
\\[3cm]
 \begin{tabular}{c} 
 \large{\textbf{\textsf{A Framework for Modelling, Verification and}}} \\ 
 \large{\textbf{\textsf{Transformation of Concurrent Imperative Programs}}}
 \end{tabular} \\[4cm]
\hspace{3.9cm}{\large{Maksym Bortin}}
\end{tabular}
\end{titlepage}

\begin{dedication}
  {\it dedicated to the people of Ukraine}
\end{dedication}

\begin{abstract}
This report gives a detailed presentation of a framework, embedded into the simply typed higher-order logic
and aimed to support sound and structured reasoning
upon various properties of models of imperative programs with interleaved computations. 
As a case study, a model of the Peterson's mutual exclusion algorithm will be scrutinised in the course of the report
illustrating applicability of the framework.
\end{abstract}

       \begin{acknowledgements}
         A perception that some research in this direction could be useful emerged when
         the author was part of the Trustworthy System group in Sydney,
         where 
         a software system component utilising multiple cores had to be formally modelled and the intended behaviour verified.
         In short, an exceptional opportunity to apply theoretical results fostering computer aided verification of practically relevant software.  

         The bulk of this work (especially concerning liveness reasoning)
         was subsequently done at the Tallinn University of Technology with support
         by the Estonian IT Academy under grant 2014-2020.4.05.19-0001.
        \end{acknowledgements}

\tableofcontents
\mainmatter

\chapter*{Introduction}\label{S:intro}
The behaviours of programs running in parallel can become entangled to such an extent that
crucial interferences may easily escape all of thorough code reviews and tests, 
leaving although rather rare yet potentially weighty faults undetected.
In situations when for example safe functioning of some nearly invaluable autonomous devices is at stake,
providing dependable assurances about the absence of faults in actions and interactions of the controlling processes can thus hardly be regarded as dispensable.
Such assurances however demand rigorous verification of the deployed program code
which basically cannot pass by formal modelling with subsequent derivation of the required properties.

From the specific perspective of safety- and mission-critical software, more recent standards encourage application of formal modelling and reasoning
for certification of systems with high safety integrity levels.
On the other hand, they also demand to explicitly outline the limits of the chosen method regarding compliance 
with the actual requirements specification, as the formal methods supplement~\cite{do2011333} to the DO-178C standard~\cite{standard:dod178c} particularly stresses.
Indeed, like with outcomes of any other measure,
applicability of formally derived conclusions to requirements on eventual implementations
is bound to a series of assumptions.

In comparison to the sequential case, development of concurrent software has significantly more sources
for that.
Considering for example some conveniently
structured code such as
\vspace{-7.5pt}
\begin{verbatim}
     ...
     b = f(a);
     ...
\end{verbatim}
\vspace{-5pt}
the input/output properties of the invoked function \verb|f|
generally depend on
the properties of concurrently running processes:
in the sequential case one can simply assert that no other process can interfere when \verb|f(a)| is evaluated drawing thus further conclusions involving the value of \verb|b|,
but if the assertion is not met then the variable \verb|b| may assume quite different values.
Besides that, it is a common practice to rely on a compiler
that translates the above fragment to an assembly level code which shall behave accordingly. 
By doing so one actually makes an implicit assumption that pushing a return address as well as the argument on a stack in preparation to the function call
also proceeds without any interference. 
In presence of interleaving it is however \emph{per se} not evident
why another process could not push some different values on the same stack in between
and hence rather hazy to what extent the conclusions concerning behaviour of the structured code 
shall be applicable to the outcome of such compilation.

To approach these kinds of problems in a cohesive manner, 
the presented framework
encompasses state abstractions as well as high- and low-level language features
facilitating structured verification of input/output, invariant and liveness properties of interleaved computations
at various levels of abstraction. Moreover, as a conservative extension to the simply typed higher-order logic extended with Hindley-Milner polymorphism,
the framework places great emphasis on the soundness of the deduced verification methods.

A sound framework is surely a must as a theoretical foundation but from the practical point of view,
rigorous
reasoning with larger models typically requires an impressive
amount of successive applications of various logical rules
which need to be instantiated at each particular step. 
And this is the point where proof assistants ultimately become relevant, being capable to accomplish
such complex, yet largely mechanisable tasks in a highly reliable and efficient way. 
Based on Robin Milner's~\cite{Milner72, Milner411} influential work, 
there are tools 
such as \emph{Isabelle}\footnote{A version of the presented framework for the Isabelle/HOL prover
is available at \url{https://github.com/maksym-bortin/a_framework}}~\cite{Nipkow_PW:Isabelle}, \emph{HOL-Light}~\cite{DBLP:conf/tphol/Harrison09a} and \emph{HOL}~\cite{DBLP:conf/tphol/Gordon91}
that provide the assistance, which is particularly valuable concerning compliance with the mentioned software standards~\cite{do-case-studies}. 

Regarding related work, 
the development essentially leans on
the work by Stirling~\cite{STIRLING1988347}:
starting with the concept of potential computations controlled by
the environment, program, input and output conditions,
it systematically extends the Owicki and Gries method~\cite{Owicki_Gries_76} to
a Hoare-style rely/guarantee program logic for a concurrent while-language.
With a few adaptations the notions from~\cite{STIRLING1988347} become ubiquitous in what follows. 
Moreover, the technique combining deep and shallow embedding with
abstraction over the type of program states, 
which has been notably applied by Nieto~\cite{DBLP:phd/dnb/Nieto02} and Schirmer~\cite{Schirmer:PhD},
is also at the core of the presented encoding of the framework's language in the higher-order logic,
whereas the elegant approach to a light-weight extension
of the Hoare-style program logic enabling
state relations as postconditions has been adopted
from the presentation by van Staden~\cite{DBLP:conf/mpc/Staden15} and slightly tuned.
More of the related work will be discussed at the beginning of Chapter~\ref{S:pcorr} and in Chapter~\ref{S:live}. 

This report is organised as follows. Next chapter gives brief explanations for some basic decisions
behind the framework's design. Chapter~\ref{S:lang} describes the framework's language
and the underlying computational model featuring interleaving and jumps.
Chapter~\ref{S:pcorr} is devoted to the notion of program correspondence and its properties.
Chapter~\ref{S:pcs-props} introduces the relevant conditions on potential computations,
places correspondences in this context and pays moreover due attention to the special case of program equivalence.
Chapter~\ref{S:prog-log} presents 
a Hoare-style logic and
Chapter~\ref{S:PM1} applies it to derive a rule for establishing input/output properties of a
generic model of Peterson's mutual exclusion algorithm.
Chapter~\ref{S:gener} shows how the program logic from Chapter~\ref{S:prog-log}
can be lifted to enable reasoning with state relations as postconditions,
whereas Chapter~\ref{S:PM2} utilises this extension to strengthen the results of Chapter~\ref{S:PM1}. 
Based on a concise notion of fair computations that is elaborated first,
Chapter~\ref{S:live} describes an approach to verification of liveness properties.
Essentially all of the developed techniques will subsequently be applied in Chapter~\ref{S:PM3} which concludes the case study by proving termination of
the mutual exclusion model.
Finally, Chapter~\ref{S:concl} concludes the entire presentation outlining most important enhancements. 
\chapter{Outlining Some Basic Design Decisions}\label{S:concept}
  This small chapter offers a superficial introduction to the framework's
  `low-level instructions' and program transformations involving these.
  Although reaching an assembly level representation of a concurrent program does not need to be an objective of 
  each development process, enabling this option
  shapes the framework's design as outlined below. 
  This will be accomplished without delving into much detail, appealing rather to the intuition
  that each program can be transformed to another program that provably does exactly the same but uses only jumps instead of conditional and while-statements to this end. 
  Based on this, the claim in what follows will be that any appropriately shaped program can be brought into a form which is evaluated
  by the computational model (\cf\stsp Section~\ref{Sb:sem}) without resorting to instructions 
  unknown to an ordinary assembly interpreter.

  One of the most striking differences between a structured and an assembly-level language is
  that in the former case 
  we can conveniently follow the syntax tree in order to evaluate a program.
  For instance, the sequential composition of some $p$ and $q$ (as usually denoted by $p;\hspace{-0.5pt}q$)
  is evaluated by descending to the subtree $p$, processing this to some $p\pr$ and 
  `pasting' $p\pr$ back which ultimately results in $p\pr;q$ or just $q$ whenever $p\pr\! =\! \skipp$ \ie the left branch has been completely evaluated.
  Placing jumps in this context, the effect of the
  command $\cjump{\hspace{0.7pt}C\hspace{-0.5pt}}{\hspace{0.5pt}i\hspace{0.5pt}}{\hspace{0.4pt} p}$
  shall be the following:
  if $C$ holds in the current state then we proceed to the code $\ret{\hspace{1.2pt} i}\stsp$ associated to the label $i$ by means of a `retrieving' function $\rho$
  and continue with $p$ otherwise.
  Although this in principle mimics
  processing of conditional jumps by assembly interpreters, evaluations of the sequential composition
  $\cjump{\hspace{-0.2pt} C\hspace{-1.9pt}}{\hspace{0.7pt} i\stnsp}{\hspace{-0.5pt} p\hspace{-0.5pt}};q\stsp$
  do not:
they would simply advance to
the same $q$ once $\ret{\hspace{2.1pt} i}$ or $p$ is done with its computations. 
Depending on where the code block $\ret{\hspace{1pt} i}$ is actually placed in the program text,
such behaviour can generally be achieved at the assembly level only when either $p$ or $\ret{\stsp i}$   
concludes with an explicit jump to the entry label of $q$.
These considerations 
underline that deploying jumps in tree-structured programs demands certain preparations
to make sense from the assembly-level perspective and the transformations, sketched below, take care of that. 
\subsubsection{(1a) expressing conditional statements by jumps}
A statement $\ite{\hspace{0.9pt}\mv{C}}{p_1}{p_2}$ can equivalently be evaluated by evaluating $\cjump{\negate{C}}{\stsp i\stsp}{p_1}$ instead
if we  additionally ensure $\ret{\stsp i} = p_2$ with a fresh label $i$
($\negate{C}$ will denote the complement of the condition $C$).
Applying this rule straight to a program such as $\ite{\mv{C}}{p_1}{p_2};q$ yields $\cjump{\negate{C}\hspace{-0.5pt}}{\stsp i}{p_1\hspace{-2pt}};q$ with $\ret{\stsp i} = p_2$
and hence the issues pointed out above which are however addressed by the next rule.
\subsubsection{(1b) sequential normalisation of conditional statements}
Any sequential composition $\ite{\hspace{-0.5pt}\mv{C}\hspace{-1.5pt}}{\hspace{-1pt}p_1\hspace{-2pt}}{\hspace{-0.5pt}p_2\hspace{-1pt}};q$ can be transformed to 
the equivalent $\ite{\hspace{-0.7pt}\mv{C}\hspace{-0.7pt}}{\hspace{0.2pt} p_1;q\hspace{0.2pt}}{\hspace{0.2pt}p_2;q\hspace{0.2pt}}$ distributing essentially all of the subsequent code to both branches.
Applying (\textbf{1a}) to 
the latter,
we consequently obtain $\cjump{\stnsp\negate{C}\hspace{-0.5pt}}{\stdsp i\hspace{0.5pt}}{p_1;q}$ with $\ret{\stsp i} = p_2;q$,
whose evaluations do not resort to implicit jumps. 
\subsubsection{(2a) expressing while-statements by jumps}
Attempting to represent the usual syntactic construction $\whileS{\hspace{1.9pt} C\hspace{-0.5pt}}{p}$ using jumps, we simply would not know where to jump when
$C$ does not hold: 
we need access to the entire syntax tree, \eg \hspace{1.5pt}$\whileS{\hspace{0.2pt}C\hspace{-1.7pt}}{\hspace{-1pt}p\hspace{-0.9pt}};q$, to acquire this information.
In response to that the syntax of while-statements gets extended to $\while{\hspace{-0.5pt}C\hspace{-1.5pt}}{\hspace{-1pt}p_1\hspace{-1.5pt}}{\hspace{-1pt}p_2\hspace{-1pt}}$ 
which by contrast can equivalently be represented by $\cjump{\stdsp\negate{C}\hspace{0.2pt}}{\hspace{2.1pt} j\hspace{0.5pt}}{p_1;\jump\hspace{2.5pt} i\stsp}$
with two fresh different labels $i$ and $j$, defining additionally a retrieving function
\[
\ret{\stsp i}\; \defeq\; \cjump{\negate{\mv{C}}}{\stsp j\stsp}{p_1;\jump{\stdsp i\stsp}}
\]
and $\ret{\hspace{0.1pt}j} \hspace{-0.7pt}\;\defeq\; p_2$
to resolve these.

It is moreover worth noting that the extra unconditional jump instruction $\jump{\hspace{0.2pt} i}$ following $p_1$
opens possibilities for interleaving which are not present with $\while{\hspace{-0.2pt}\mv{C}\hspace{-1.2pt}}{\hspace{-0.5pt}p_1\hspace{-0.5pt}}{p_2}$
and the according evaluation rule in Section~\ref{Sb:sem}
will take accounts of that by means of an exclusively for this purpose used $\skipp$.
\subsubsection{(2b) sequential normalisation of extended while-statements}
Provided by the extended syntax of while-statements we proceed as in (\textbf{1b}):
any (sub)program of the form
$\while{\hspace{0.2pt}\mv{C}\stnsp}{\hspace{-0.2pt}\mv{p}_1\hspace{-0.2pt}}{\stnsp\mv{p}_2};q$ gets replaced by
the equivalent $\while{\hspace{0.2pt}\mv{C}}{\mv{p}_1}{p_2;q}$. 
\subsubsection{(3) halting labels for sequential components}
Let $p_1, \ldots, p_n$ be sequential programs that shall run in parallel which is denoted by $\parallel\hspace{-4.9pt} p_1, \ldots, p_n$.
Without any significant change to the overall behaviour we can extend each component $p_i$ by the final jump to a fresh `halting' label $h_i$, \ie
$p_i ; \jump{h_i}$ defining
moreover $\rho(h_i) \defeq \skipp$.

By successive application of the rules (\textbf{1b}) and (\textbf{2b}) and then (\textbf{1a}) and (\textbf{2a}) to each $p_i ; \jump{h_i}$ we eventually obtain
$\parallel\hspace{-2.9pt} p\pr_1, \ldots, p\pr_n$ together with a retrieving function $\rho$ that resolves all labels.
This outcome can directly be interpreted as $n$ `flat' lists of labelled instructions
with essentially the same computational effect as $\parallel\hspace{-2.9pt} p_1, \ldots, p_n$.
\subsubsection{Summary}
  To sum up, this section outlined questions that arise when viewing evaluation of structured programs with conditional and while-statements replaced by jumps
  from the perspective of an assembly interpreter and how these questions are addressed by the framework.
  All that will reappear in the following chapters which by contrast are kept formal.
  \chapter{A Generic Concurrent Imperative Language}\label{S:lang}
  The notions of shallow and deep embeddings arise quite naturally when a language needs to be modelled
  in a formal system such as a proof assistant (\cf \cite{DBLP:conf/tphol/Harrison09a}). 
  The technique, notably applied by Nieto~\cite{DBLP:phd/dnb/Nieto02} and Schirmer~\cite{Schirmer:PhD} aiming to combine the merits of both,
  will be used in this setting. 
  More precisely, the syntax of the framework's language is basically deeply embedded
  by means of a free construction. 
  However, all
  indivisible state transformations as well as control flow conditions
  (such as the variable $C$ in the conditional and while-statements that appeared in the preceding chapter) are embedded shallowly
  or, in other words, the language is generated over all appropriately typed logical terms
  so that neither one-step state updates nor decisions whether
  \eg a conditional statement $\ite{\hspace{-0.2pt}\mv{C}\hspace{-1pt}}{\hspace{-0.5pt}p\hspace{-0.5pt}}{\hspace{-0.5pt}q}$ branches to $p$ or to $q$ 
  require any additional evaluations of language terms.
\section{The syntax}\label{Sb:lang}        
This is modelled by means of the parameterised
type $\langA{\alpha}$, where $\alpha$ is a type parameter representing underlying states and $\mathcal{L}$ -- a type constructor
sending any actual type $\tau$ supplied for $\alpha$ to the new type $\langA{\tau}$.
For example, $\langA{\ty{int} \times \ty{int}}$ would be an instance with
the states carrying two integer variables.
Note that such instantiations will be relevant for specific modellings only, 
\ie \stsp states remain abstract at the level of uniform program logical reasoning.

The terms of type $\langA{\alpha}$ are constructed by the following grammar\index{framework language!grammar}:
\[
\begin{array}{l c l}
  \langA{\alpha} & \mbox{::=} & \skipp \\
                 &  | & \basic\hspace{2.5pt} f \\ 
  &      |       & \cjump{\stsp C}{\stsp i\stsp}{\langA{\alpha}}\index{jump!conditional}\index{framework language!conditional jump} \\
  &      |      &  \langA{\alpha} ; \langA{\alpha}\index{framework language!sequential composition} \\
                     &      |      & \Parallel{\!\mathcal{L}^*_\alpha}\index{framework language!parallel composition} \\
                   &        |    & \while{\stsp C}{\langA{\alpha}}{\langA{\alpha}}\index{framework language!while-statement} \\
  &       |     & \ite{\stsp C}{\langA{\alpha}}{\langA{\alpha}}\index{framework language!conditional statement} \\
                   &      |      & \await{\stsp C}{\langA{\alpha}\index{framework language!await-statement}}
\end{array}
\]
where
\begin{itemize}
\item[-] $i\stsp$ is called a \emph{label}\index{framework language!label} ranging over the natural numbers (\ie is of type $\ty{nat}$, but any infinitely countable set of identifiers, such as strings,
may in principle be employed here as well)
\item[-] $f$ is called a \emph{state transformer}\index{framework language!state transformer} ranging over the values of type $\alpha \Rightarrow \alpha$
\item[-] $C$ is called a \emph{state predicate}\index{framework language!state predicate} ranging accordingly over the values of type $\alpha \Rightarrow \ty{bool}$.
\end{itemize}
Furthermore,  $\mathcal{L}^*_\alpha$ as usually stands for a finite sequence of elements of type $\langA{\alpha}$.
So $\Parallel{\hspace{-2.9pt}p_1, p_2, p_3}$ constructs a new value of type $\langA{\alpha}$
out of the three values of this type. In general, the term $\Parallel{\hspace{-1.9pt}p_1, \ldots, p_m}$ is 
called the \emph{parallel composition} of $p_i$ for $1\hspace{-1pt} \le i \le m$.
A more convenient and common notation $p_1 \hspace{-2.5pt}\parallel\hspace{-1pt} p_2$ is used
in place of $\Parallel{\hspace{-2.5pt}p_1, p_2}$. The reason for having $\Parallel{\hspace{-2.9pt}p_1, \ldots, p_n}$ as
a primitive, instead of using the binary operator with nesting,
will be given later in Section~\ref{Sb:corr_pcs}.
Regarding sequential composition, $p_1;p_2;p_3$ will be used as a shorthand for $p_1;(p_2;p_3)$. 
Moreover,  $\while{\hspace{2.5pt} C}{ p\stsp}{\skipp}$ will be abbreviated by $\whileS{\hspace{1.5pt} C}{p}$ whereas
 $\await{\stdsp C}{\skipp}$ -- by \stsp$\wait{\hspace{0.7pt}C}$\index{framework language!await-statement!$\wait$}.

A term of type $\langA{\alpha}$ is called \emph{jump-free}\index{jump-free!term} if it does not make use of $\com{cjump}$,
and it is called \emph{locally sequential}\index{locally sequential term} if it does not make use of the parallel operator.
\section{The underlying computational model}\label{Sb:sem}
Program steps are constructed following 
the principles of `Structural Operational Semantics' by Plotkin~\cite{PLOTKIN20043}.
Carrying on the state abstraction built into the language, 
let $\alpha$ be a fixed arbitrary type throughout this section.
\subsection*{Program steps}\label{Sb:psteps}
In the context of a \emph{(code) retrieving function}\index{retrieving function} $\rho\stsp$ of type $\ty{nat}\hspace{-0.5pt} \Rightarrow\hspace{-0.5pt} \langA{\alpha}$,
the effects of one program step\index{computation!step!program} performed by a term $p$ of type $\langA{\alpha}$ in a state $\sigma$ of type $\alpha$
are captured by means of the relation $\pstep\hspace{-2.9pt}$ of type
$\langA{\alpha}\hspace{-2.5pt} \times\hspace{-0.9pt} \alpha \Rightarrow \langA{\alpha} \hspace{-2.5pt}\times\hspace{-0.9pt} \alpha \Rightarrow \ty{bool}$,
connecting thus two configurations: a \emph{configuration}\index{configuration} $\zeta = (p,\stsp \sigma)$ comprises
\begin{enumerate}
\item[--] the \emph{program part}\index{configuration!program part} $p$ 
  denoted by $\progOf{\zeta}$,
\item[--] the \emph{state part}\index{configuration!state part} $\sigma$ 
  denoted by $\stateOf{\zeta}$. 
\end{enumerate}
A $\skipp$-\emph{configuration}\index{configuration!skip} has thus the form $(\skipp,\stsp \sigma)$ where $\sigma$ is some state.

The notation $\rpstep{\rho\hspace{0.2pt}}{(p,\stsp \sigma)\stdsp}{\hspace{-0.5pt}(q, \stsp\sigma\pr)}$ is used from now on to indicate that
the configuration $(p,\stsp \sigma)$
can be transformed via $\stsp\pstep\hspace{-2.9pt}$
to $(q,\stsp \sigma\pr)$ in the context of $\rho$.

The set of possible transformation steps is defined inductively by the rules listed in Figure~\ref{fig:pstep}, 
explained in more detail below.
\begin{figure}
\[
\begin{array}{l l}
\mbox{\it\sl Basic} & \Rule{}{\rpstep{\rho\stsp}{(\basic\hspace{2pt} f,\stdsp \sigma)\stsp}{\stnsp(\skipp,\stdsp f\hspace{2pt}\sigma)}} \\
& \\
\mbox{\it\sl CJump-True} & \Rule{\sigma \in\hspace{-0.2pt} C}{\rpstep{\rho\stsp}{(\cjump{\stsp C}{\stsp i\hspace{0.5pt}}{\hspace{0.5pt}p},\stdsp \sigma)\stsp}{\stnsp(\ret{\stsp i},\stdsp \sigma)}} \\
& \\
\mbox{\it\sl CJump-False} & \Rule{\sigma \notin\hspace{-0.2pt} C}{\rpstep{\rho\stsp}{(\cjump{\stsp C}{\stsp i\hspace{0.5pt}}{\hspace{0.5pt}p}, \stdsp\sigma)\stsp}{\stnsp(p,\stdsp \sigma)}} \\
& \\
 \mbox{\it\sl Sequential} & \Rule{\rpstep{\rho\stsp}{(p_1,\stdsp \sigma)\stsp}{\stnsp(p\pr_1,\stdsp \sigma\pr)}}
              {\rpstep{\rho\stsp}{(p_1 ; p_2,\stdsp \sigma)\stsp}{\stnsp(p\pr_1 ; p_2,\stdsp \sigma\pr)}} \\
 & \\
\mbox{\it\sl Sequential-skip} & \Rule{}{\rpstep{\rho\stsp}{(\skipp ; p,\stdsp \sigma)\stsp}{\stnsp(p,\stdsp \sigma)}} \\ 
 &  \\
\mbox{\it\sl Parallel} & \Rule{\rpstep{\rho\stsp}{(p_i,\stsp \sigma)\stsp}{\stnsp(p\pr_i, \stsp\sigma\pr)} \quad\quad 1 \le i \le m \quad\quad m > 0}
{\rpstep{\rho\stsp}{(\Parallel{\hspace{-1.2pt}p_1,\ldots p_i \ldots, p_m},\stdsp \sigma)\stsp}{\stnsp(\Parallel{\hspace{-1.2pt}p_1,\ldots p\pr_i \ldots, p_m},\stdsp \sigma\pr)}} \\
 & \\
\mbox{\it\sl Parallel-skip} & \Rule{}
     {\rpstep{\rho\stsp}{(\Parallel{\!\skipp,\ldots, \skipp},\stdsp \sigma)\stsp}{\stnsp(\skipp,\stdsp \sigma)}} \\
      & \\
\mathrm{\it\sl Await} & \Rule{\sigma \in\hspace{-0.2pt} C \quad\quad \rpsteps{\rho\stsp}{(p,\stdsp \sigma)\stsp}{\stnsp(\skipp,\stdsp \sigma\pr)}}
       {\rpstep{\rho\stsp}{(\await{\stsp C}{p},\stdsp \sigma)\stsp}{\stnsp(\skipp, \stdsp\sigma\pr)}} \\
 & \\
\mbox{\it\sl While-True} & \Rule{\sigma \in\hspace{-0.2pt} C \quad\quad x = \while{\stdsp C\stnsp}{p_1}{p_2}}
                          {\rpstep{\rho\stsp}{(x,\stdsp \sigma)\stdsp}{\stnsp(p_1 ; \skipp ; x,\stdsp \sigma)}} \\
 & \\
\mbox{\it\sl While-False} & \Rule{\sigma \notin\hspace{-0.2pt} C \quad\quad x = \while{\stdsp C\stnsp}{p_1}{p_2}}{\rpstep{\rho\stsp}{(x,\stdsp \sigma)\stdsp}{\stnsp(p_2,\stdsp \sigma)}} \\
                               & \\
\mbox{\it\sl Conditional-True} & \Rule{\sigma \in\hspace{-0.2pt} C}{\rpstep{\rho\stsp}{(\ite{\stsp C}{p_1}{p_2}, \stdsp\sigma)\stsp}{\stnsp(p_1,\stdsp \sigma)}} \\ 
 & \\
\mbox{\it\sl Conditional-False} & \Rule{\sigma \notin\hspace{-0.2pt} C}{\rpstep{\rho\stsp}{(\ite{\stsp C}{p_1}{p_2}, \stdsp\sigma)\stsp}{\stnsp(p_2,\stdsp \sigma)}} 
\end{array}
\]
\caption{Inductive rules for program steps.}
\label{fig:pstep}
\end{figure}
\paragraph{The rule} \hspace{-5pt}`{\it\sl Basic}'
declares how an \emph{atomic}\index{atomic!step}, \ie indivisible, computation step can be performed by a state transformer $f$. 
For example, a configuration of the form $(\basic\stsp(g\hspace{0.5pt} \circ\hspace{0.5pt} h), \stsp\sigma)$ can thus atomically be transformed 
to $(\skipp,\stsp g(h\: \sigma))$.
As opposed to that, $(\basic\hspace{2.5pt}h ; \basic\hspace{2.9pt} g, \stsp\sigma)$ is at
a finer level of granularity 
since there is no indivisible step to the same $\skipp$-configuration, now reachable by three program steps.
\paragraph{The rules}\hspace{-10pt} {\it\sl`CJump-True'} \textbf{and} {\it\sl`CJump-False'}
declare how a conditional jump is handled in dependence on $C$.
There and throughout this work, $\sigma\hspace{-0.5pt} \in\hspace{-0.5pt} C$ shall abbreviate $C\hspace{1.7pt}\sigma\hspace{-1pt} =\hspace{-1.5pt} \op{True}$ 
($\sigma\hspace{-0.5pt} \notin\hspace{-0.9pt} C$ accordingly). 
      Furthermore, from now on let $\top_\tau$ denote the universal predicate on
      the type $\tau$, \ie the one which evaluates to $\op{True}$ for any element of type $\tau$, \eg $\top_{\ty{nat}}$ is then the set $\naturals$.
For the sake of brevity we will just write $\top$ whenever $\tau$ is clear from the context.
The \emph{unconditional jump}\index{jump!unconditional} instruction can thus be defined by
\[
\jump{\stsp i} \hspace{1.5pt}\defeq\hspace{1.5pt}\cjump{\top}{\stsp i\stsp}{\skipp}
\]
such that 
$\rpstep{\rho}{(\jump{\hspace{1.2pt} i},\stsp \sigma)\stsp}{\hspace{-0.2pt}(\ret{\hspace{1.7pt} i}, \stsp\sigma)}$ follows for any $i\stsp$ and $\sigma$.
Moreover, $(\ret{\hspace{2pt} i},\stsp \sigma)$ is the only
successor of $(\jump{\hspace{1pt}i},\stsp \sigma)$, \ie no branching as one would expect.

It is worth to be once more highlighted that the framework does not impose any constraints on the branching condition $C$ except its type $\alpha \Rightarrow \ty{bool}$,
whereas according to the rules its evaluation takes only one step whatsoever, \ie proceeds atomically.
Although such liberty can be handy in abstract modelling, $C$ shall be instantiated by conditions
that can be handled by a single assembly instruction,
\eg `\emph{does certain register contain} $0$?', 
when a low-level representation needs to be eventually reached.
\paragraph{The rule}\hspace{-9pt} {\it\sl`Parallel'} essentially states that
a configuration $(x,\stsp \sigma\pr)$
is a successor of $(\Parallel{\hspace{-3.1pt}p_1,\ldots p_i \ldots, p_m},\stsp \sigma)$
iff
there is some $p\pr_i$ such that
$(p\pr_i,\stsp \sigma\pr)$ is a successor configuration of $(p_i,\stsp\sigma)$
and $x\hspace{-1.5pt} =\hspace{2pt} \Parallel{\hspace{-1pt}p_1,\ldots p\pr_i \ldots, p_m}$
where $1 \le i \le m$. For example $(\basic\hspace{2.1pt}f_1 \hspace{-1pt}\parallel\hspace{0.2pt} \basic\hspace{2pt}f_2, \stsp \sigma)$
has thus exactly two successors $(\skipp \parallel \basic\hspace{2.1pt}f_2, \stdsp f_1\stsp\sigma)$ and
$(\basic\hspace{2.1pt}f_1 \hspace{-1.2pt}\parallel \skipp, \stdsp f_2\hspace{2pt}\sigma)$ 
whereas each of these has in turn only one successor: $(\skipp\hspace{-0.1pt} \parallel\hspace{-0.1pt} \skipp, \stdsp f_2(f_1\hspace{1.2pt} \sigma))$ and
$(\skipp\hspace{-0.1pt} \parallel\hspace{-0.1pt} \skipp, \stdsp f_1(f_2\hspace{1.9pt} \sigma))$, respectively.
\paragraph{The rule} \hspace{-5pt}{\it\sl`Parallel-skip'}
complements the preceding rule: in cases when all of the parallel components have terminated, $\Parallel{\hspace{-1pt}\skipp,\ldots, \skipp}$
may terminate in one step as well.
Thus, $(\basic\hspace{2.1pt}f_1 \hspace{-2.1pt}\parallel\hspace{-1pt} \basic\hspace{2.1pt}f_2, \stsp \sigma)$ reaches either
$(\skipp,\stdsp f_2(f_1\hspace{1.2pt} \sigma))$
or $(\skipp,\stdsp f_1(f_2\hspace{2.1pt} \sigma))$ in altogether three program steps.
\paragraph{The rule} \hspace{-5.5pt}{\it\sl`Await'} states that
if $\sigma\hspace{-1.5pt} \in\hspace{-1.7pt} C$ holds and we have an evaluation of $p$ that starts with $\sigma$ and terminates
by means of $n\hspace{0.2pt} \in \naturals$ consecutive program steps in some state
  then $\await{\hspace{1.2pt} C\hspace{-0.7pt}}{\hspace{-0.4pt}p\hspace{-0.4pt}}$ also
  terminates in the same state, but by means of a single indivisible step.
Since $\hspace{-0.1pt}\pstep\hspace{-1.5pt}$ is the smallest relation closed under the rules in Figure~\ref{fig:pstep},
a configuration $(\await{\hspace{1pt}C\hspace{-0.7pt}}{\hspace{-1pt}p\hspace{-0.2pt}},\stsp \sigma)$ has no successors when
$\sigma\hspace{0.2pt} \notin \hspace{0.2pt}C$,
\ie\stdsp the evaluation is then blocked. 
Consequently, $\await{\hspace{0.1pt}\top\hspace{-0.7pt}}{\hspace{-0.5pt}p\hspace{-0.1pt}}$ cannot block an evaluation 
but
creates an \emph{atomic} \emph{section}\index{atomic!section}, abbreviated by $\langle p \rangle$ in the sequel,
which allows us to model behaviours where 
one parallel component may force \emph{all} others to instantly halt their computations until an atomic section terminates.
Taking up the example for the rule `{\it\sl Basic}',
an atomic section such as $\langle\basic\hspace{2.5pt}h ;\stsp \basic\hspace{2.9pt} g\rangle$ is computationally in principle the same as
$\basic\stsp(g\hspace{1.5pt} \circ\hspace{1.5pt} h)$
since for any state $\sigma$ the configuration $(\skipp, \stsp g(h\hspace{2.5pt} \sigma))$ is the unique successor
for both, $(\langle\basic\hspace{2.5pt}h ; \stsp\basic\hspace{2.9pt} g\rangle, \stsp\sigma)$ and
$(\basic\stsp(g\hspace{1pt} \circ\hspace{1pt} h), \stsp\sigma)$.

\vspace{-4pt}
\paragraph{The rule}\hspace{-10pt} {\it\sl`While-True'\stsp}
largely has the standard form, except that it additionally takes into account
the transformation of while-statements to
representation using jumps, sketched in Chapter~\ref{S:concept},
by means of the extra $\skipp$ in $(p_1 ; \skipp ; x,\stdsp \sigma)$ that
shall act as
a sort of placeholder for a jump in the proof of Proposition~\ref{thm:while-norm2}.

This deviation from the usual evaluation approaches is justified as follows.
Three layers can be identified in the framework's language: 
\begin{enumerate}
\item[(a)] $\skipp$, $\basic$, $\mathtt{cjump}$ and the sequential composition form the `inner' layer: the computational core,
\item[(b)] the parallel composition and $\mathtt{await}$ additionally
  allow for modelling of software systems with controlled interleaving and $\com{wait}$-synchronisation,
  forming the `middle' layer,
\item[(c)] conditional and while-statements form the `outer' layer and
           additionally enable structured programs serving thereby primarily modelling and verification purposes.
\end{enumerate}
Hence, in order to validate the computational model one has in principle to scrutinise the rules in the layers (a) and (b). 
The presence of while-statements in the language is on the other hand quintessential for usable program logics only:
for computation they shall be translated to the layer (a), the translation being a subject to validation too.
From this perspective,
the behaviour of a while-statement, determined by the rules {\it\sl`While-True'} and {\it\sl`While-False'},
has in first place to be provably equivalent to the behaviour of its layer (a) representation. 
This, in turn, would not be the case without the extra $\skipp$ in presence of interleaving. 
\subsection*{Environment steps}\label{Sb:esteps}
Apart from the program steps, the second integral component of the computational model is the \emph{environment}
which 
can also perform indivisible steps via the $\estep\!\stnsp$ relation\index{computation!step!environment} following the only one rule that
$(p,\stsp \sigma) \stsp\estep \stnsp (p\pr,\stsp \sigma\pr)$ implies $p = p\pr$ for any $p, p\pr, \sigma$ and $\sigma\pr$.
That is, an environment step can randomly modify 
the state of a configuration leaving its program part unchanged.
\section{Programs}
In line with the computational model, a \emph{program}\index{program} is constituted by a pair $(\rho,\stsp p)$ where $\rho$ is a code retrieving function and $p$ is a term of type $\langA{\alpha}$.
It should be clear that with $p$ being jump-free, a program does not actually need a retrieving function to perform any of its
computation steps or, more precisely, performs the same steps independently of how $\rho$ has been defined.
Therefore, retrieving functions will for the sake of brevity be mostly omitted in these cases 
such that a jump-free term of type $\langA{\alpha}$ may particularly be called a (jump-free) program\index{jump-free!program}\index{program!jump-free} as well.
By contrast, if \eg $\hspace{0.4pt}\jump{\hspace{0.7pt} i}$ occurs in $p$ then
there can be two 
program steps $\rpstep{\rho\hspace{0.2pt}}{\hspace{-1.5pt}(p,\stsp \sigma)\stsp}{\hspace{-0.5pt}(p_1, \sigma)}$ and
$\rpstep{\rho\pr\stnsp}{(p,\stsp \sigma)\stsp}{\stnsp(p_2, \sigma)}$ 
with different 
$p_1$ and $p_2$ if $\ret{\hspace{1.5pt} i}\stsp \neq\stsp \rho\pr\hspace{2.5pt}i$ holds.

Note that with certain pairs $(\rho, p)$ one could
achieve something similar to programs having infinite source code: if $p = \jump{\stdsp 0}$ and
$\ret{\stdsp i} = \jump{\stnsp(i + 1)}$ holds for all $i\hspace{0.7pt} \in\hspace{-0.2pt}\naturals$ then $(\rho, p)$ in principle corresponds to the `program', written below as a flat, yet \emph{infinite} list of labelled instructions
\vspace{-5pt}
\[
\begin{array}{l l}
 & \jump{0}\\
0: & \jump{1}\\
1: & \jump{2}\\
\ldots & \\[-7pt]
\end{array}\]
(that is, each line $i:\jump{\!(i+1)}$ forms a separate block labelled by $i$)
doing nothing except perpetual jumping forward through the code. Such creations can be regarded as a byproduct,
and in the sequel we will only focus on pairs $(\rho, p)$ that are well-formed. Intuitively, $(\rho, p)$ shall be considered well-formed
if all labels that are involved in the process of its evaluation can be statically computed. Formally, for any $p$ one can first define
the finite set $\ljumps\hspace{2.9pt}p$ of labels that occur in $p$.
Further, for any set of labels $L$ one can define the set of its \emph{successor labels} $\Next_\rho(L)$ 
by $\{i\hspace{-0.5pt} \in\hspace{-1.7pt} \ljumps(\ret{\hspace{0.5pt} j}) \hspace{3.2pt}|\hspace{3.7pt} j\hspace{-0.5pt}\in\hspace{-1.5pt} L \}$
and consider the closure 
$
\jumps(\rho, p) \stdsp\defeq\stdsp \bigcup_{n\in\naturals}\Next^n_\rho(\ljumps\hspace{2.9pt}p).
$
A program $(\rho, p)$ is then called \emph{well-formed}\index{program!well-formed} if the set $\jumps(\rho, p)$ is finite.

Now, if there is some $\rho\pr$ that coincides with $\rho$ on the set $\jumps(\rho, p)$ then 
$\rpstep{\rho\hspace{-0.5pt}}{\hspace{-0.7pt}(p,\stsp \sigma)\stsp}{\stnsp(p\pr, \stsp\sigma\pr)}$ entails
$\rpstep{\rho\pr\hspace{-0.7pt}}{\hspace{-0.9pt}(p, \stsp\sigma)\stsp}{\stnsp(p\pr,\stsp \sigma\pr)}$ for
      any $\sigma$, $\sigma\pr$ and $p\pr$.
That is, $\rho$ can be arbitrarily re-defined on any label $i\hspace{0.5pt} \notin\hspace{-0.7pt} \jumps(\rho, p)$ without affecting
the evaluation.
In particular, if $p$ is jump-free 
then we have $\jumps(\rho,p)\stnsp =\hspace{-1.5pt} \bot$ for any $\rho$ by the above definition,
which once more highlights that $p$ is thus a well-formed program and the choice of a retrieving function does not matter for its evaluation.

Lastly, taking up the question of sequentiality, a program $(\rho, p)$ is called \emph{sequential}\index{program!sequential} if
$p$ is locally sequential and $\rho\hspace{3.2pt} i\stdsp$ is moreover locally sequential 
for any label $i\hspace{0.2pt} \in\hspace{-1.2pt} \jumps(\rho, p)$.
From the above considerations immediately follows that a jump-free $p$ is sequential iff it is locally sequential,
regardless which $\rho$ we take.
Note that 
program steps retain sequentiality, \ie
$\rpstep{\rho}{(p,\stsp \sigma)\stsp}{\stnsp(p\pr,\stsp \sigma\pr)}$ entails
that $(\rho, p\pr)$ is sequential if $(\rho, p)$ is. 
\section{Potential computations of a program}\label{sub:pcs}
Having defined all possible one-step transformations on configurations,
an \emph{infinite potential computation}\index{computation!potential!infinite} $\sq$ of a program $(\rho, p)$  
is a sequence of configurations
$
  \sq = (p_0,\stsp \sigma_0), (p_1,\stsp \sigma_1),\ldots 
$
  where $p_0 = p$ and either
  \begin{enumerate}
    \item[(i)] $\rpstep{\rho\hspace{0.2pt}}{(p_{i},\stsp \sigma_{i})\stsp}{\hspace{-0.5pt}(p_{i+1},\stsp \sigma_{i+1})}$ or
    \item[(ii)] $(p_{i},\stsp \sigma_{i})\stsp \estep\hspace{-0.5pt} (p_{i+1},\stsp \sigma_{i+1})$
  \end{enumerate}
      holds for all $i\hspace{0.4pt} \in\hspace{-0.5pt} \naturals$.
      Note that an infinite sequence of configurations is actually
      a function of type $\ty{nat}\hspace{-0.4pt} \Rightarrow\hspace{-0.4pt} \langA{\alpha}\hspace{-1.59pt} \times\hspace{-0.59pt} \alpha$ and 
      any program has at least one such computation since each type has at least one inhabitant so that some environment step can always be performed.

Furthermore, let $\prefix{n}{\sq}$\index{computation!prefix} and $\suffix{n}{\sq}$\index{computation!suffix} respectively denote
the sequences obtained by taking the first $n\hspace{0.4pt} \in\hspace{0.2pt}\naturals$ configurations from $\sq$ (a \emph{prefix} of $\sq$)
and by 
dropping the first $n$ configurations from $\sq$ (a \emph{suffix} of $\sq$).
If $\sq$ is an infinite computation 
and $n\hspace{-0.9pt} >\hspace{-0.7pt} 0$ then the prefix $\stdsp\prefix{n}{\sq}\stsp$ forms a \emph{finite potential computation}\index{computation!potential!finite}  
$
  \sq\pr\hspace{-1.2pt} =\stnsp (p_0,\stsp \sigma_0), \ldots, (p_{n-1},\stsp \sigma_{n-1})
$
  being a list of length $n$, denoted by $|\sq\pr|$.
  For a finite $\sq$ the side condition $i\hspace{-0.7pt}<\hspace{-0.7pt}|\sq|$ must be provided when accessing its $i$-th configuration, for convenience denoted by $\sq_i$.
Moreover, the prefix $\prefix{n}{\sq}$ is also a finite potential computation whenever $0<n\le|\sq|$ holds. 

Similarly, the suffix $\suffix{m}{\sq}$ of a finite potential computation $\sq$ is also a finite potential computation
if $m\hspace{-1.7pt}<\hspace{-1pt}|\sq|$ is provided for otherwise $\suffix{m}{\sq}$ is an empty list. 
In case $\sq$ is an infinite sequence,  $\suffix{m}{\sq}$ is by contrast one of its infinite suffixes.
The equality $\suffix{m}{\sq_i}\hspace{-1.7pt} =\hspace{-0.7pt} \sq_{i + m}$, \ie the $i$-th configuration of the suffix is
the $(i+m)$-th configuration of the original sequence, holds for any $\sq, m$ and $i$ with
the side condition $i + m < |\sq|$ for finite $\sq$. 

When the last configuration of a finite computation $\sq$ coincides with the first configuration of $\sq\pr$
then we can compose the computation $\sq \scomp\hspace{-0.5pt} \sq\pr$\index{computation!composition} as follows.
We take the suffix $\suffix{1}{\sq\pr}$ and 
attach it to $\sq$ using the first transition of $\sq\pr$.
Thus, for any $i$ we have $(\sq\hspace{0.2pt} \scomp\hspace{-0.1pt} \sq\pr)_i = \sq_i$ if $i < |\sq|$ and $(\sq \scomp \sq\pr)_i\hspace{-0.9pt} = \sq\mystrut\pr_{i - |\sq| + 1}$ otherwise.
Note that $i < |\sq| + |\sq\pr| - 1$ (\ie $\stdsp i < |\sq\hspace{-0.5pt} \scomp\hspace{-1pt} \sq\pr|$) must additionally be provided in the latter case if $\sq\pr$ is finite.

From now on, $\rpcsi{p}{\rho}$ and $\rpcs{p}{\rho}$ will respectively denote the sets of all infinite and finite potential computations of $(\rho, p)$. Since computation steps of a jump-free $p$ do not depend on the choice of $\rho$,
we will just write $\pcsi{p}$ and $\pcs{p}$ in such cases.

Next chapter is devoted to program correspondences: a structured generic approach to semantic relations between programs.
\chapter{Stepwise Correspondences between Programs}\label{S:pcorr}
Potential computations of a program can be viewed as a transition graph having configurations as vertices and
atomic actions as edges with two different labels: one for the program steps and one for the environment steps.
Correspondences will allow us to systematically compare the transition graphs generated by programs, \ie their behaviour.
This in particular covers the question whether two programs behave in essentially the same way so that they
also exhibit the same relevant properties and can thus be
regarded as equivalent. 
For example the transformations replacing while-statements by jumps 
are supposed to result in an equivalent program.

In presence of interleaved actions even seemingly plain modifications to a model may produce significant behaviour deviations
so that reasoning demands very fine-grained equivalences. 
In the context of concurrency theory, the most commonly used equivalence relation on various sorts of labelled transition systems is therefore
\emph{bisimulation}\index{bisimulation} due to Park~\cite{Park81} which is for instance also central to the \emph{CCS} process algebra by Milner~\cite{Milner80}.
Configurations of a pushdown automaton can also be viewed as processes generating (usually infinite) transition graphs,
 and Stirling~\cite{STIRLING1998113} showed that bisimulation equivalence 
is decidable for a special class of pushdown automata. 
Generally, however, establishing bisimilarity (this is remarkably the same as finding a winning strategy in certain two-player games,
\cf\stdsp Burkart \emph{et al.}~\cite{DBLP:books/el/01/BurkartCMS01}) on infinite structures requires co-inductive reasoning tailored to the particular instance.

Deviating slightly from the bisimulation approach, programs in this framework will be regarded as equivalent 
when they are mutually corresponding  (the exact definitions are given in Section~\ref{Sb:corr_defs} below)
which is basically down to mutual (\ie bi-directional) similarity on the transition graphs.
This equivalence relation is weaker than bisimulation but turns out to be sufficient  for all purposes relevant in what follows. 
To outline the difference very abstractly, the vertices $s_1$ and $t_1$ in the transition graphs
\begin{diagram}[height=0.7cm, textflow]
  s_2 & \rTo^{b} & s_4 & \quad &   &   &  t_3 \\
  \uTo^{a} &     &     &       &   &   &  \uTo_{b}  \\
  s_1     &      &     &       &  t_1 & \rTo^{a}  & t_2 \\
  \dTo^{a} &     &      &      &   &   &  \dTo_{c} \\
  s_3     & \rTo^{c} & s_5 &   &  &   & t_4 \\
\end{diagram}
are not mutually similar
since neither $s_2$ nor $s_3$ can simulate the behaviours of $t_2$ so that $s_1$ and $t_1$ cannot be bisimilar either.
With the following modification 
\begin{diagram}[height=0.7cm, textflow]
  s_2 & \rTo^{b} & s_4 & \quad &   &   &  t_3 \\
  \uTo^{a} &     & \uTo_{b} &       &   &   &  \uTo_{b}   \\
  s_1     &  \rTo^{a}   & s_6  &       &  t_1 & \rTo^{a}  & t_2  \\
  \dTo^{a} &    & \dTo_{c} &      &   &   &  \dTo_{c} \\
  s_3     & \rTo^{c} & s_5&   &  &   & t_4\\
\end{diagram}
they become mutually similar yet not bisimilar: $s_2$ and $s_3$ still cannot simulate the behaviours of $t_2$.
\section{Definitions}\label{Sb:corr_defs}
We first define when a set of pairs of $\langA{}$-terms forms a simulation
and based on that -- when two programs are called corresponding.  
\begin{definition}\label{def:corr}
Let $X$ be a relation of type $\langA{\alpha}\hspace{-1.5pt} \times\hspace{-0.7pt} \langA{\beta} \Rightarrow \ty{bool}$,
where 
$\alpha$ and $\beta$ are state type abstractions.
Further, let $r$ be of type $\alpha\hspace{-0.2pt} \times\hspace{-0.7pt} \beta \Rightarrow \ty{bool}$, \ie a relation
on the underlying states, and let $\rho, \rho\pr$ be two code retrieving functions of type
$\ty{nat}\hspace{-0.1pt} \Rightarrow\hspace{-0.4pt} \langA{\alpha}$ and
$\ty{nat}\hspace{-0.1pt} \Rightarrow\hspace{-0.4pt} \langA{\beta}$, respectively.
Then $X$ will be called a \emph{simulation}\index{simulation} w.r.t. $\!\rho, \rho\pr$ and $r$ if the following conditions hold:
\begin{enumerate}
\item[(1)]
if $(p, q)\hspace{-0.4pt} \in\hspace{-1.7pt} X$ and $(\sigma_1, \sigma_2)\hspace{-0.2pt} \in\hspace{-0.2pt} r$
then for any $\rpstep{\rho\pr\hspace{-0.5pt}}{(q, \stsp\sigma_2)\stsp}{\stnsp(q\pr,\stsp \sigma\pr_2)}$ 
there is a program step $\rpstep{\rho\stsp}{(p,\stsp \sigma_1)\stsp}{\stnsp(p\pr,\stsp \sigma\pr_1)}$ 
with $(p\pr, q\pr) \in\hspace{-1.5pt} X$ and $(\sigma\pr_1, \sigma\pr_2) \in\hspace{-0.2pt} r$,
\item[(2)]
 if $(\skipp,\stsp q) \in\hspace{-1pt} X$ then $q = \skipp$,
\item[(3)]
 if $(p,\stsp \skipp) \in\hspace{-1pt} X$ then $p = \skipp$.
\end{enumerate}
\end{definition}
\begin{definition}\label{def:pcorr}
Two programs $(\rho, p)$ and $(\rho\pr, q)$ \emph{correspond}\index{program correspondence} w.r.t. \!$r$ 
if there exists some $X$ that contains the pair $(p, q)$ and is a simulation w.r.t. \hspace{-4pt}$\rho, \rho\pr, r$.
This will be denoted by $\pcorr{\hspace{-0.2pt}p\hspace{0.2pt}}{r}{\hspace{-0.7pt}q}$.
Furthermore, $(\rho, p)$ and $(\rho\pr,q)$ are called \emph{mutually corresponding}\index{program correspondence!mutual} 
w.r.t. \hspace{-0.9pt}$r$
if $\stsp\pcorrG{\rho\pr}{\rho}{\hspace{-0.2pt} q\hspace{0.3pt}}{\rconv{r}}{\hspace{-2pt}p}\stsp$ additionally holds,
where $\rconv{r}$ is the converse of $r$.
This in turn will be denoted by $\peqv{p}{r}{\stnsp q}$.
\end{definition}
The notations $\pcorrS{p}{r}{\hspace{-1.5pt}q}$ and $\peqvSG{\rho}{p\hspace{-0.2pt}}{r}{\hspace{-1.7pt} q}$ will be used whenever
$\rho$ and $\rho\pr$ are the same. With jump-free $p$ and $q$ we just write $\pcorrC{p\hspace{-0.2pt}}{r}{\hspace{-1.7pt} q}$ and $\peqvC{p\hspace{-0.2pt}}{r}{\hspace{-1.7pt} q}$.
Moreover, the index $r$ will be omitted when $r$ is the identity on the underlying states such that
if $p$ and $q$ are jump-free then we may simply write $\pcorrC{p}{}{q}$ and $\peqvC{p}{}{\hspace{-0.4pt}q}$. 
\subsection{Highlighting the fixed point structure}\label{S:gfp}
The definitions are presented above in a form often favourable for reasoning upon correspondences between programs.
On the other hand, this form might have slightly concealed the structure elaborated below 
focusing on jump-free programs 
so that retrieving functions can be omitted for the sake of clarity. 

Let from now on $\rcomp{r\hspace{1.1pt}}{\hspace{1.1pt}s}$ denote the composition\index{relational!composition} of relations $r$ and $s$ defined in the standard way by
$\{(a, b) \hspace{2.9pt}|\hspace{2.9pt} \exists c. \: (a, c)\hspace{0.2pt} \in r \wedge (c, b)\hspace{0.2pt} \in s\}$.
Thus, given a relation $r$ of type $\alpha\hspace{0.4pt} \times\hspace{0.2pt} \beta\hspace{0.2pt} \Rightarrow\hspace{-0.1pt} \ty{bool}$,
the transformer $\mathcal{T}_r$ sends each $X$ of type $\langA{\alpha}\hspace{-0.7pt} \times \langA{\beta}\hspace{0.1pt} \Rightarrow \hspace{0.1pt} \ty{bool}$
to the relation of the same type defined by
\[
\bigcup\{X\pr \;|\;\; \rcomp{(X\pr \otimes\stdsp r)\hspace{1.5pt}}{\hspace{1.5pt}\pstep}\hspace{-0.5pt} \subseteq\hspace{2pt} \rcomp{\pstep\hspace{-1.7pt}}{\hspace{3.9pt}(X \otimes\stdsp r) \hspace{2pt}\wedge\hspace{1.5pt} \op{C}\stdsp X\pr}\}
\]
where
\begin{enumerate}
\item[--] $s\stsp \otimes\stsp r$ denotes a relational product containing $((p, \sigma), (q, \sigma\pr))$ iff $(p, q) \in s$ and $(\sigma, \sigma\pr) \in r$, 
\item[--] $\op{C}\stdsp X\pr$ stands for the condition:
  for all $(p, q)\hspace{-0.1pt}\in\hspace{-1.2pt} X\pr$ we have that $p$ and $q$ are jump-free and also $p = \skipp$ iff $q = \skipp$.
\end{enumerate}
By this it should be clear that a relation $X$ is a simulation w.r.t.\! $r$ by Definition~\ref{def:corr} (restricted to jump-free terms) if and only if
$X$ is a postfixed point of $\mathcal{T}_r$ \ie $X\hspace{-1pt} \subseteq\hspace{-1pt} \mathcal{T}_r\stdsp X$.
Hence,  as the union of the respective simulations, $\sqsupseteq_r$ is the same as $\bigcup \{X \:|\: X\hspace{-1.2pt} \subseteq\hspace{-1.5pt} \mathcal{T}_r\stdsp X\}$ 
which is by definition the greatest fixed point of $\mathcal{T}_r$ \ie formally $\sqsupseteq_r\hspace{2.5pt} = \stnsp\nu\mathcal{T}_r$.
Furthermore, by the monotonicity of the relational composition and $\otimes$-operator we have $\mathcal{T}_r(\nu\mathcal{T}_r) = \nu\mathcal{T}_r$
and can particularly conclude that
$\sqsupseteq_r$ is itself a postfixed point of $\mathcal{T}_r$ and hence a simulation by Definition~\ref{def:corr}.

It is worth to be noted that by assuming $\approx\hspace{2.9pt} \subseteq\hspace{-1.5pt} \mathcal{T}_{\op{id}}(\approx)$, 
\ie that the mutual correspondence with $r\hspace{-1pt} =\hspace{-1pt} \op{id}$ is likewise a postfixed point of $\mathcal{T}_{\op{id}}$,
we could infer that the respective transition graphs of any jump-free $p$ and $q$
with $\peqvC{p}{}{q}$ are bisimilar
which, as pointed out before, is not true in this generality.
A particular consequence is 
that $\approx$ is not a simulation in the sense of Definition~\ref{def:corr}
although it clearly satisfies the conditions (2) and (3).

Lifting lastly the restriction to jump-free terms, the relation transformer additionally gets two retrieving functions as parameters 
and the above argumentation accordingly yields that $\pcorrG{\rho}{\rho\pr}{p\hspace{0.5pt}}{r}{\hspace{-1.2pt}q}$ holds iff
$(p,q) \in \nu\mathcal{T}_{\rho, \rho\pr, r}$ does.

\section{Properties of correspondences}\label{Sb:corr_props}
From the above considerations follows that the relation $\{(p, q)\;|\; \pcorr{p\stsp}{r}{\stnsp q}\}$ is
the largest simulation w.r.t. \hspace{-1.5pt}$\rho, \rho\pr, r$
and the conditions (1)--(3) of Definition~\ref{def:corr} consequently hold with $\{(p, q)\hspace{3pt}|\hspace{3.5pt} \pcorr{p}{r}{\hspace{-1pt} q}\}$ in place of $X$
(note once more that one cannot proceed likewise with the mutual correspondence for any $r$).

First, the condition (1) of Definition~\ref{def:corr} with $\{(p, q)\;|\; \pcorr{p}{r}{\hspace{-1.5pt} q}\}$ in place of $X$
can by induction be generalised to any number of steps:
\begin{lemma}\label{thm:pcorr_steps}
Assume $\pcorr{p}{r}{\stnsp q}$ and  
$\rpstepsN{\rho\pr\stnsp}{(q,\stsp \sigma_2)}{n}{\stnsp(q\pr,\stsp \sigma\pr_2)}$ where $n\hspace{0.1pt} \in\hspace{-0.5pt} \naturals$. Furthermore, suppose $(\sigma_1, \sigma_2) \in\hspace{-0.1pt} r$.
Then there exist $p\pr$ and $\sigma\pr_1$ such that
$\rpstepsN{\rho}{(p, \stsp\sigma_1)}{n}{(p\pr, \stsp\sigma\pr_1)}$ holds with $\pcorr{p\pr}{r}{\stnsp q\pr}$ and $(\sigma\pr_1, \sigma\pr_2) \in\hspace{0.1pt} r$.
\end{lemma}

The general transitivity rule for program correspondences has the following form:
\begin{lemma}\label{thm:ctrans}
  If $\stdsp\pcorrG{\rho_1}{\rho_2\hspace{-0.9pt}}{\hspace{-0.5pt}p_1\hspace{-1.5pt}}{r}{\hspace{-1.2pt} p_2}$ and
     $\pcorrG{\rho_2}{\rho_3\hspace{-0.5pt}}{\hspace{-0.9pt}p_2\hspace{-1.5pt}}{s}{\hspace{-1.2pt}p_3}$
then we also have $\pcorrG{\rho_1}{\rho_3}{p_1\hspace{-0.5pt}}{\rcomp{r\hspace{1.4pt}}{\hspace{1.2pt} s}}{\hspace{0.1pt} p_3}$. 
\end{lemma}
\begin{proof}
The assumptions give us two simulations: $X_1$ w.r.t. \!$\rho_1, \rho_2, r$ containing $(p_1, p_2)$, 
and $X_2$ w.r.t. \!$\rho_2, \rho_3, s$ containing $(p_2, p_3)$.
The relational composition $\rcomp{X_1\hspace{-0.5pt}}{\hspace{-0.2pt} X_2}$ contains the pair $(p_1, p_3)$ and is a simulation w.r.t.
$\!\rho_1, \rho_3,\stdsp \rcomp{r\hspace{1pt}}{\hspace{1pt} s}$
which follows straight from the properties of $X_1$ and $X_2$.
\end{proof}
With identities in place of $r$ and $s$, the preceding statement simplifies to 
\begin{corollary}\label{thm:ctrans2}
If $\pcorrG{\rho_1}{\rho_2}{p_1\stnsp}{}{p_2}$ and $\pcorrG{\rho_2}{\rho_3}{p_2}{}{p_3}$
then we also have $\pcorrG{\rho_1}{\rho_3}{p_1\stnsp}{}{p_3}$.
Furthermore, for jump-free $p_1,p_2,p_3$ we accordingly just have that $\pcorrC{p_1\hspace{-1pt}}{}{p_2}$ and $\pcorrC{p_2\hspace{-1pt}}{}{p_3}$
imply $\pcorrC{p_1\hspace{-1pt}}{}{p_3}$.
\end{corollary}

That $\sqsupseteq$ is moreover reflexive and hence a preorder is shown next.
Note that $\sqsupseteq_r$ does not have to be a preorder for any $r$ though.
\begin{lemma}\label{thm:crefl}
$\pcorrS{p\hspace{0.5pt}}{}{p}$. 
\end{lemma}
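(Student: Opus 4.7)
The plan is to exhibit an explicit simulation witnessing $\pcorrS{p}{}{p}$. The obvious candidate is the diagonal relation on programs, $X \defeq \{(x, x) \mid x : \langA{\alpha}\}$, since we are comparing $p$ against itself under a single retrieve function $\rho$ and the identity relation on states.

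First I would unfold the definition of $\pcorrS{p}{}{p}$ to the form $\rho, \rho \models p \sqsupseteq_{\mathit{Id}} p$ and reduce the goal, by Definition~\ref{def:pcorr}, to producing a simulation w.r.t. $\rho, \rho$ and the identity relation on $\alpha$ that contains the pair $(p, p)$. I would then take $X$ as above; membership of $(p, p)$ in $X$ is immediate.

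Next I would verify the three clauses of Definition~\ref{def:corr} for $X$. For clause~(1), suppose $(x, x) \in X$ and $(\sigma_1, \sigma_2) \in \mathit{Id}$, so $\sigma_1 = \sigma_2$. Given any step $\rpstep{\rho}{(x, \sigma_2)}{(x\pr, \sigma\pr_2)}$, the very same step serves as the required step from $(x, \sigma_1)$, yielding $(x\pr, x\pr) \in X$ and $(\sigma\pr_2, \sigma\pr_2) \in \mathit{Id}$. Clauses (2) and (3) both follow instantly from the shape of $X$: if $(\skipp, q) \in X$ or $(p, \skipp) \in X$, then by definition both components coincide, forcing $q = \skipp$ (resp. $p = \skipp$).

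There is no real obstacle here; the only thing to be careful about is matching the precise statement of Definition~\ref{def:corr} — in particular, remembering that the existential step on the left side of the simulation is trivially witnessed by mirroring the given right-side step, which is possible precisely because the state relation is the identity and the two retrieve functions agree. No induction, composition, or appeal to Proposition~\ref{thm:pcorr_steps} is needed; the diagonal simulation closes the proof directly.
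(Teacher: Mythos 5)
Your proof is correct and follows exactly the paper's argument: the paper likewise takes the identity relation on $\langA{\alpha}$ as the simulation with respect to the identity on $\alpha$. Your verification of the three clauses of Definition~\ref{def:corr} simply spells out the details the paper leaves implicit.
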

\begin{proof}
  The identity on $\langA{\alpha}$ gives us a simulation w.r.t. 
  the identity on $\alpha$.
\end{proof}
The central consequence of Corollary~\ref{thm:ctrans2} and Proposition~\ref{thm:crefl} is that
for any jump-free $p$ 
the set $\{(\rho, x) \hspace{3pt}|\hspace{3.9pt} \peqvS{p\hspace{-0.1pt}}{}{\hspace{-0.9pt}x}\}$ forms its equivalence class
whose elements indeed have a lot in common: 
one of the results in Section~\ref{Sb:pcorr-rule} is that programs $(\rho, p)$ and $(\rho, q)$ exhibit the same rely/guarantee properties whenever
$\peqvS{\hspace{-1pt}p\hspace{-0.2pt}}{}{\hspace{-0.7pt}q}$ holds.
Moreover, under the same assumption using Proposition~\ref{thm:ctrans} one can show that
for any $x$ and $r$ we have
$\pcorrS{p\hspace{0.5pt}}{r}{\hspace{-1pt}x}$ iff $\pcorrS{q\hspace{0.5pt}}{r}{\hspace{-0.7pt}x}$,
\ie $(\rho, p)$ and $(\rho, q)$ correspond to exactly the same programs.

These considerations essentially justify the choice of $\approx$ as \emph{the} program equivalence\index{program!equivalence} for this framework
and so we can step up to investigating on associativity and commutativity properties.

The sequential composition operator will be shown to be associative by Proposition~\ref{thm:seq-assoc}.
Regarding associativity of the parallel composition, 
consider the programs $\op{p} \defeq (\basic\hspace{1.79pt} f \hspace{-1pt}\parallel\hspace{-0.7pt} \skipp) \parallel \skipp$ and
$\op{q} \defeq \basic\hspace{1.7pt} f \hspace{-0.9pt}\parallel (\skipp \parallel \skipp)$
with any $f$ such that there exists a state $\overline{\sigma}\hspace{-0.5pt} \neq\hspace{-1pt} f\hspace{2.5pt}\overline{\sigma}$.  
If we assume $\models\op{p}\hspace{-0.2pt} \sqsupseteq\hspace{-0.1pt} \op{q}$
then by Proposition~\ref{thm:pcorr_steps} to the step $(\op{q},\stsp \overline{\sigma})\stdsp \pstep\stnsp (\basic\hspace{2pt} f \stnsp\parallel\hspace{-0.5pt} \skipp,\stsp \overline{\sigma})$
there must be a corresponding step $(\op{p},\stsp \overline{\sigma})\stdsp \pstep\stnsp (x,\stsp \overline{\sigma})$ with some $x$.
This is however impossible since  $(\op{p},\stsp \overline{\sigma}) \stdsp\pstep\stnsp ( (\skipp\hspace{-1pt} \parallel\hspace{-1pt} \skipp) \parallel\hspace{-1pt} \skipp,\stsp f\hspace{2.5pt}\overline{\sigma})$
is the only one program step from $(\op{p},\stsp \overline{\sigma})$, meaning that $\models\hspace{-1pt} \op{p}\hspace{-0.4pt} \not\approx\stnsp \op{q}$ holds.
Although the counterexample might be perceived as technical, it nonetheless leads to the conclusion
that the binary parallel operator is not associative at the level of small-step evaluation.
In other words, we
take the edge off by allowing `flat' lists of parallel components (\eg $\stsp\parallel\hspace{-2.7pt} \basic\hspace{2pt} f, \skipp, \skipp$)
whose reordering shall not give rise to semantic differences though and Corollary~\ref{thm:pcomm} makes that point explicit.
\begin{lemma}\label{thm:seq-assoc}
$\peqvS{p_1 ; p_2 ; p_3}{}{\stnsp (p_1 ; p_2) ; p_3}$. 
\end{lemma}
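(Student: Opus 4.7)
The plan is to unfold $\peqvS{p_1; p_2; p_3}{}{(p_1; p_2); p_3}$ into the two correspondence claims $\pcorrS{p_1; (p_2; p_3)}{}{(p_1; p_2); p_3}$ and $\pcorrS{(p_1; p_2); p_3}{}{p_1; (p_2; p_3)}$ (recalling that $p_1; p_2; p_3$ abbreviates $p_1; (p_2; p_3)$), and to establish each by exhibiting an explicit simulation in the sense of Definition~\ref{def:corr}, both with respect to $\rho, \rho$ and the identity relation on $\alpha$.

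For the first direction I would propose the witness
\[
X = \{\,(q; (p_2; p_3),\; (q; p_2); p_3)\,\mid\, q \in \langA{\alpha}\,\} \cup \{\,(r, r)\,\mid\, r \in \langA{\alpha}\,\}.
\]
The seed pair lies in $X$ by taking $q = p_1$. Conditions (2) and (3) of Definition~\ref{def:corr} are immediate: neither $q; (p_2; p_3)$ nor $(q; p_2); p_3$ is syntactically equal to $\skipp$, so any pair in $X$ with $\skipp$ on one side must come from the diagonal, forcing the other side to coincide. For condition (1), the only non-trivial case is a step from $((q; p_2); p_3, \sigma)$. Inspection of Figure~\ref{fig:pstep} shows that the sole applicable rules are Sequential and Sequential-Skip; the latter is ruled out because $q; p_2$ is not $\skipp$. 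Thus the step proceeds by Sequential on the outer semicolon, driven by a step of $(q; p_2, \sigma)$, and a further case split on that inner step gives either $q = \skipp$ (Sequential-Skip produces $(p_2; p_3, \sigma)$ on the right, matched on the left by one Sequential-Skip step of $\skipp; (p_2; p_3)$, landing on the diagonal), or $q$ stepping to some $q'$ in state $\sigma'$ (producing $((q'; p_2); p_3, \sigma')$ on the right, matched on the left by a Sequential step of $q; (p_2; p_3)$ to $q'; (p_2; p_3)$ in state $\sigma'$, with the pair remaining in the first component of $X$).

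The converse correspondence $\pcorrS{(p_1; p_2); p_3}{}{p_1; (p_2; p_3)}$ follows symmetrically with
\[
Y = \{\,((q; p_2); p_3,\; q; (p_2; p_3))\,\mid\, q \in \langA{\alpha}\,\} \cup \{\,(r, r)\,\mid\, r \in \langA{\alpha}\,\}.
\]
A step of $q; (p_2; p_3)$ is either Sequential-Skip (when $q = \skipp$) or Sequential; in each subcase $((q; p_2); p_3)$ supplies a matching single step by reassociating, and the resulting pair remains in $Y$.

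I do not anticipate a genuine obstacle: the argument is structurally routine once the simulation relations are chosen to contain the diagonal as a sink absorbing computations past the point at which $p_1$ has been fully evaluated. The only subtlety worth articulating is the exhaustiveness of the case split on program steps from sequential compositions, which follows directly from the inductive definition of $\pstep$ in Figure~\ref{fig:pstep}. Once that is noted, matching one step with one step automatically preserves the identity on states.
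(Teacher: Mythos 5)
Your proposal is correct and follows essentially the same route as the paper: the same simulation relation $\{(u;p_2;p_3,\,(u;p_2);p_3)\mid u\}\cup\op{id}$, the same case split on whether the leading component is $\skipp$, and the symmetric set for the converse direction (which the paper merely asserts "follows similarly"). No gaps.
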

\begin{proof}
We show $\pcorrS{p_1 ; p_2 ; p_3}{}{(p_1 ; p_2) ; p_3}$ whereas $\pcorrS{(p_1 ; p_2) ; p_3}{}{p_1 ; p_2 ; p_3}$ follows similarly. 
The set
$X \defeq \{(u ; p_2 ; p_3,(u ; p_2) ; p_3) \;|\; u\hspace{0.4pt} \in\stnsp \top\}\stsp \cup\stsp \op{id}$
contains the pair $p_1 ; p_2 ; p_3$ and $(p_1 ; p_2) ; p_3$.
Further, suppose $\rpstep{\rho}{((u ; p_2) ; p_3,\stsp \sigma)\stsp}{\stnsp(x, \stsp\sigma\pr)}$ for some $u$.
If $u = \skipp$ then $x = p_2 ; p_3$ and $\sigma\pr\hspace{-1.2pt} =\hspace{0.5pt} \sigma$ which can be matched by the step
$\rpstep{\rho}{(\skipp ; p_2 ; p_3, \stsp\sigma)}{(p_2 ; p_3, \stsp\sigma)}$.
If  $u \neq \skipp$ then we have a program step $\rpstep{\rho\hspace{0.1pt}}{\hspace{-1pt}(u,\stsp \sigma)\stsp}{\hspace{-1.2pt}(u\pr,\stsp \sigma\pr)}$ with some $u\pr$ such that
$x\hspace{-0.5pt} =\hspace{-0.5pt} (u\pr ; p_2) ; p_3$. Thus, we further have 
$\rpstep{\rho\hspace{0.2pt}}{(u ; p_2 ; p_3,\stsp \sigma)\hspace{1pt}}{\hspace{-1pt}(u\pr ; p_2 ; p_3, \stsp\sigma\pr)}$ which is a matching step since
$(u\pr ; p_2 ; p_3, (u\pr ; p_2) ; p_3)\hspace{0.7pt} \in\hspace{-0.7pt} X$.
\end{proof}
\begin{lemma}\label{thm:pcomm1}
Let $I = \{1,\ldots, m\}$ with $m > 0$ and assume $q_i = p_{\pi(i)}\stsp$ for all $\stsp i\hspace{1pt} \in I$ where $\pi$ is a permutation on $I$.
Then $\stsp\pcorrS{\:\Parallel{\!p_1, \ldots, p_m\hspace{-1pt}}}{}{\:\Parallel{\hspace{-1.5pt} q_1, \ldots, q_m}}$. 
\end{lemma}
\begin{proof}
The following set contains the pair $(\Parallel{\!p_1, \ldots, p_m}, \Parallel{\!q_1, \ldots, q_m})$: 
\[
X \defeq  \{(\Parallel{\hspace{-2pt}u_{1}, \ldots, u_{m}}, \Parallel{\hspace{-2pt}v_1, \ldots, v_m}) \;|\; \forall i\stsp \in I.\:v_i\stnsp =\stnsp u_{\pi(i)} \} \cup \{(\skipp, \skipp)\}
\]
and is a simulation.
To establish that, suppose $\rpstep{\rho}{(\Parallel{\hspace{-2pt}v_1, \ldots, v_m},\stsp \sigma)\hspace{0.5pt}}{\hspace{-1pt}(x,\stsp\sigma\pr)}$ and
$(\Parallel{\hspace{-2pt} u_1, \ldots, u_m}, \Parallel{\hspace{-2pt}v_1, \ldots, v_m}) \in \stnsp X$ with some $x, \sigma, \sigma\pr$, $u_1,\ldots,u_m$ and $v_1,\ldots,v_m$. 
If $x$ is $\skipp$ so are $u_i, v_i$ for all $i\hspace{0.9pt} \in\hspace{-0.2pt} I$, \ie we are done. 
Otherwise there is a step $\rpstep{\rho\stsp}{(v_i, \sigma)\stsp}{(v_i\pr, \sigma\pr)}$ with some $i\hspace{0.7pt} \in\hspace{-0.5pt} I$
such that $x =\: \Parallel{\!v_1, \ldots v_i\pr \ldots,  v_m}$ holds.
We then also have
$\rpstep{\rho\stnsp}{\stnsp(\Parallel{\hspace{-2.7pt}u_1, \ldots u_{\pi(i)} \ldots, u_m}, \sigma)\stsp}{\stnsp(\Parallel{\hspace{-2.7pt}u_1, \ldots u\pr_{\pi(i)} \ldots, u_m}, \sigma\pr)}$
where $u\pr_{\pi(i)}$ is set to $v\pr_i$. The program step is sound because $u_{\pi(i)} = v_i$ is assumed.
\end{proof}
\begin{corollary}\label{thm:pcomm}
Let $I = \{1,\ldots, m\}$ with $m > 0$ and assume $q_i = p_{\pi(i)}\stsp$ for all $\stsp i\hspace{0.9pt} \in\hspace{-0.2pt} I$ where $\pi$ is a permutation on $I$.
Then $\stsp\peqvS{\:\Parallel{\hspace{-1.2pt}p_1, \ldots, p_m\hspace{-0.5pt}}}{}{\:\Parallel{\hspace{-1.2pt} q_1, \ldots, q_m}}$.   
\end{corollary}
\begin{proof}
The direction $\pcorrS{\:\Parallel{\hspace{-1.7pt}p_1, \ldots, p_m}\hspace{-0.5pt}}{}{\hspace{1.9pt}\Parallel{\hspace{-1.7pt}q_1, \ldots, q_m}}$ is just Proposition~\ref{thm:pcomm1}.
To show $\pcorrS{\:\Parallel{\hspace{-2pt}q_1, \ldots, q_m}\hspace{-0.5pt}}{}{\:\Parallel{\hspace{-2pt}p_1, \ldots, p_m}}$
we once more apply Proposition~\ref{thm:pcomm1}, but now with $\pi^{-1}$ for $\pi$. This is sound since
$\pi^{-1}$ is also a permutation on $I$ and we have
$p_i = p_{\pi(\pi^{-1}(i))} = q_{\pi^{-1}(i)}$ for all $i\hspace{1pt} \in\hspace{0.2pt} I$.
\end{proof}
\subsection{Closure properties}\label{Sb:corr-rules}
The rules presented in this section particularly enable syntax-driven derivations of
program correspondences, \ie derivations following syntactic patterns.
\begin{lemma}\label{thm:parallel-corr}
  Let $I = \{1,\ldots, m\}$ with $m > 0$ and assume
  \begin{enumerate}
  \item[\emph{(1)}] $\pcorr{p_i\hspace{-0.2pt}}{r}{\hspace{-0.5pt} q_i}\;$ for all $\stsp i\hspace{0.9pt} \in\hspace{-0.2pt} I$.
    \end{enumerate}
Then $\pcorr{\:\Parallel{\!p_1,\ldots, p_m}\hspace{-0.5pt}}{r}{\:\Parallel{\!q_1, \ldots,q_m}}$.
\end{lemma}
\begin{proof}
The assumption  (1) provides
simulations $X_1, \ldots, X_m$ w.r.t. \!\!$\rho, \rho\pr, r$ such that
$(p_i, q_i) \in\hspace{-0.9pt} X_i$ holds for all $i\hspace{1pt} \in I$.
Then let
\[
\hspace{-10pt}X\; \defeq\;  \{(\Parallel{\!u_1, \ldots, u_m},\stsp \Parallel{\!v_1, \ldots, v_m}) \hspace{3pt}|\hspace{3.9pt} \forall i\hspace{0.5pt} \in\hspace{0.1pt} I. \: (u_i, v_i)\hspace{0.2pt} \in\hspace{-1pt} X_i \} \cup
          \{(\skipp, \skipp)\}
\]
containing the pair $(\Parallel{\!p_1, \ldots, p_m},\stsp \Parallel{\!q_1, \ldots, q_m})$. 
To show that $X$ is a simulation w.r.t. $\!\rho, \rho\pr, r$
suppose we have a program step
$\rpstep{\rho\pr\hspace{-0.2pt}}{(\Parallel{\! v_1, \ldots, v_m},\stsp \sigma_2)\hspace{0.5pt}}{\hspace{-1pt}(x, \stsp\sigma\pr_2)}$
with $(\sigma_1, \sigma_2) \in r$ and
$(\Parallel{\! u_1, \ldots, u_m},\stsp \Parallel{\!v_1, \ldots, v_m}) \in\hspace{-1.2pt} X$.

If $x$ is $\skipp$ then we have 
$v_i\hspace{-0.5pt} = \skipp$ for all $i\stsp \in\hspace{-0.2pt} I$ and hence also $u_i\hspace{-0.5pt} = \skipp$ for all $i\hspace{-0.9pt} \in\hspace{-2pt} I$
due to $(u_i, v_i)\hspace{-1.5pt} \in\hspace{-3.5pt} X_i$.
If $x\hspace{-1.5pt}\neq\hspace{-1.2pt} \skipp$ then there is a program step
$\rpstep{\rho\pr}{(v_i, \stsp\sigma_2)\hspace{0.7pt}}{\hspace{-1pt}(v\pr_i,\stsp \sigma\pr_2)}$ with some $v\pr_i$ and $i\stsp \in\hspace{-0.2pt} I$
such that $x = \:\Parallel{\!\stnsp v_1, \ldots v\pr_i \ldots, v_m}$. Since we have $(u_i, v_i)\hspace{-0.55pt} \in\hspace{-1.79pt} X_i$ and $(\sigma_1, \sigma_2)\hspace{-0.9pt} \in\hspace{-0.9pt} r$,
there is a matching program step  $\rpstep{\rho\hspace{-0.2pt}}{\!(u_i, \sigma_1)\stsp}{\stnsp(u\pr_i, \sigma\pr_1)}$ with some $u\pr_i$ and $\sigma\pr_1$
such that $(u\pr_i, v\pr_i) \hspace{-1pt}\in\hspace{-2.7pt} X_i$ and also $(\sigma\pr_1, \sigma\pr_2)\hspace{0.29pt} \in\hspace{0.29pt} r$ hold.
This in turn enables the transition
\[
\rpstep{\rho\hspace{0.5pt}}{\stnsp(\Parallel{\!u_1, \ldots u_i \ldots, u_m}, \stsp\sigma_1)\stsp}{\stnsp(\Parallel{\!u_1, \ldots u\pr_i \ldots, u_m}, \stsp\sigma\pr_1)}
\]
with $(\Parallel{\!u_1, \ldots u\pr_i \ldots, u_m}, \Parallel{\!v_1, \ldots v\pr_i \ldots, v_m})\hspace{0.2pt} \in\stnsp X$.
\end{proof}
\begin{lemma}\label{thm:seq-corr}
  Assume
\begin{enumerate}
\item[\emph{(1)}] $\pcorr{p_1}{r}{q_1}$,
\item[\emph{(2)}] $\pcorr{p_2}{r}{q_2}$.
\end{enumerate}
  Then $\stsp\pcorr{p_1 ; p_2}{r}{q_1 ; q_2}$.
\end{lemma}
\begin{proof}
From (1) and (2) we respectively obtain two simulations $X_1, X_2$ w.r.t. $\rho, \rho\pr, r$
with $(p_1, q_1) \in\hspace{-1.5pt} X_1$ and $(p_2, q_2) \in\hspace{-1.5pt} X_2$.
Further, let
\[
\hspace{-10pt}X\; \defeq\; \{(u ; p_2,\stsp v ; q_2) \hspace{3pt}|\hspace{3.9pt} (u, v)\hspace{0.2pt} \in\hspace{-1pt} X_1\}\stdsp \cup\stsp X_2
\]  
such that $(p_1 ; p_2,\stdsp q_1 ; q_2) \in\hspace{-1.7pt} X$ holds in first place.
It remains thus to show that $X$ is a simulation w.r.t. \stnsp$\rho, \rho\pr, r$. 
To this end let $\rpstep{\rho\pr}{(v ; q_2,\stsp \sigma_2)\hspace{1pt}}{\hspace{-1pt}(x,\stsp \sigma\pr_2)}$ with $(\sigma_1, \sigma_2) \hspace{-0.5pt}\in\hspace{-0.7pt} r$ and
$(u, v)\hspace{-0.5pt} \in\hspace{-1.5pt} X_1$. If $v\hspace{-0.5pt} = \skipp$ then $u = \skipp,\stdsp x = q_2$ and $\sigma\pr_2\hspace{-0.7pt} = \sigma_2\;$ follow
so that we are done since $(p_2, q_2) \in\stnsp X_2$ and $X_2\stsp \subseteq\hspace{-0.2pt} X$.

If $v \neq \skipp$ then $x = v\pr ; q_2$ which is due to a step $\rpstep{\rho\pr}{(v,\stsp \sigma_2)\hspace{1pt}}{\hspace{-1pt}(v\pr, \stsp\sigma\pr_2)}$ with some $v\pr$.
By $(u, v)\hspace{0.5pt} \in\hspace{-1pt} X_1$ we then have $\rpstep{\rho\hspace{0.2pt}}{\hspace{-0.5pt}(u,\stsp \sigma_1)\hspace{1pt}}{\hspace{-1pt}(u\pr,\stsp \sigma\pr_1)}$ with
$(\sigma\pr_1, \sigma\pr_2)\hspace{0.2pt} \in\hspace{-0.2pt} r$ and $(u\pr, v\pr)\hspace{-0.1pt} \in\hspace{-1.5pt} X_1$. Thus, we derive the step 
$\rpstep{\rho\hspace{0.2pt}}{\stnsp(u ; p_2,\stsp \sigma_1)\hspace{1pt}}{\hspace{-1pt}(u\pr ; p_2,\stsp \sigma\pr_1)}$ with $(u\pr ; p_2,\stsp x)\hspace{0.2pt} \in\hspace{-0.7pt} X$.
\end{proof}

Next three propositions handle conditional, while- and await-statements taking additionally branching and blocking conditions into account.
\begin{lemma}\label{thm:cond-corr}
Assume
\begin{enumerate}
\item[\emph{(1)}] $\pcorr{p_1\hspace{-0.5pt}}{r}{\hspace{-0.5pt}p_2}$,
\item[\emph{(2)}] $\pcorr{q_1\hspace{-0.5pt}}{r}{\hspace{-0.5pt}q_2}$,
\item[\emph{(3)}]  for any $(\sigma_1, \sigma_2) \in r$ we have $\stsp\sigma_1 \in C_1$ iff $\stsp\sigma_2 \hspace{0.2pt}\in\hspace{0.2pt} C_2$.
\end{enumerate}
Then $\pcorr{\ite{\stsp C_1}{p_1}{q_1}\stsp}{r}{\ite{\stsp C_2}{p_2}{q_2}}$.
\end{lemma}
\begin{proof}
From (1) and (2) we respectively get simulations $X_p, X_q$ w.r.t. \hspace{-4pt}$\rho, \rho\pr, r$
with $(p_1, p_2)\hspace{0.2pt} \in\hspace{-1pt} X_p$ and $(q_1, q_2)\hspace{0.2pt} \in\hspace{-1pt} X_q$.
We define
\[\hspace{-10pt}X\; \defeq\;  X_p \cup X_q \cup \{(\ite{\stsp C_1}{p_1}{q_1},\stdsp\ite{\stsp C_2}{p_2}{q_2})\}\]
and show that $X$ is a simulation w.r.t. \hspace{-4pt}$\rho, \rho\pr, r$.
First, note that as the union of simulations, $X_p\hspace{-0.7pt} \cup\hspace{-1pt} X_q$ is a simulation as well.
Then for the remaining case, suppose we have a step $\rpstep{\rho\pr\hspace{-0.9pt}}{\hspace{-1pt}(\ite{\stsp C_2}{p_2}{q_2}, \stsp\sigma_2)\hspace{1pt}}{\hspace{-1pt}(x, \stsp\sigma_2)}$
with $(\sigma_1, \sigma_2) \in r$. If $\sigma_2 \in C_2$ then
$x = p_2$ and hence there is the matching transition
$\rpstep{\rho\hspace{0.5pt}}{(\ite{\stsp C_1}{p_1}{q_1}, \stsp\sigma_1)\hspace{1pt}}{\hspace{-0.9pt}(p_1,\stsp \sigma_1)}$ since $\sigma_1\hspace{-0.7pt} \in C_1$ by (3).
The case with $\sigma_2\hspace{0.5pt} \notin C_2$ is symmetric.
\end{proof}
\begin{lemma}\label{thm:while-corr}
  \stdsp Let $\stdsp p_{\mathit{while}}\hspace{-1.5pt} =\hspace{-1pt} \while{\stdsp C_1\hspace{-1.9pt}}{\hspace{-1.5pt}p_1\hspace{-1.9pt}}{\hspace{-1.5pt}q_1\hspace{-1.9pt}}$ and
  $q_{\mathit{while}}\stsp =\stsp \while{\stdsp C_2\stdsp}{p_2}{q_2}$,
and assume
\begin{enumerate}
\item[\emph{(1)}] $\pcorr{p_1\hspace{-0.5pt}}{r}{\hspace{-0.5pt}p_2}$,
\item[\emph{(2)}] $\pcorr{q_1\hspace{-0.5pt}}{r}{\hspace{-0.5pt}q_2}$,
\item[\emph{(3)}] for any $(\sigma_1, \sigma_2) \in r\stsp$ we have $\stsp\sigma_1\hspace{-0.5pt} \in C_1$ iff $\stsp\sigma_2 \hspace{0.2pt}\in\hspace{0.2pt} C_2$.
\end{enumerate}
Then $\pcorr{p_{\mathit{while}}}{r}{q_{\mathit{while}}}$.
\end{lemma}
\begin{proof}
From (1) and (2) we respectively get simulations $X_p, X_q$ w.r.t. \hspace{-4pt}$\rho, \rho\pr, r$
with $(p_1, p_2)\hspace{0.2pt} \in\hspace{-1pt} X_p$ and $(q_1, q_2)\hspace{0.2pt} \in\hspace{-1pt} X_q$.
The relevant set of pairs $X$ is defined by
\[
\begin{array}{l c l}
  \hspace{-10pt}X &\!\!\! \defeq &\!\! \{(p_{\mathit{while}},q_{\mathit{while}}), (\skipp;p_{\mathit{while}},\: \skipp;q_{\mathit{while}})\}\hspace{12pt} \cup\\
  &        &\!\! \{(u;\skipp;p_{\mathit{while}}, \:v;\skipp;q_{\mathit{while}}) \hspace{3pt}|\hspace{3.9pt} (u, v) \in\hspace{-1.5pt} X_p\} \cup X_q
\end{array}
\]
To establish $X$ as a simulation, suppose $(\sigma_1, \sigma_2) \in r$ and consider the following cases.

Regarding the first pair $(p_{\mathit{while}},q_{\mathit{while}})$, suppose $\rpstep{\rho\pr}{(q_{\mathit{while}}, \stsp\sigma_2)\hspace{1pt}}{\hspace{-1pt}(x,\stsp \sigma_2)}$.
If $\sigma_2\hspace{0.2pt} \in\hspace{0.1pt} C_2$ then $x = p_2;\skipp;q_{\mathit{while}}$ and, moreover, $\sigma_1\hspace{-0.1pt} \in\hspace{0.2pt} C_1$ by (3).
Hence there is the matching step $\rpstep{\rho\stsp}{(p_{\mathit{while}}, \stsp\sigma_1)\hspace{1pt}}{\hspace{-0.9pt}(p_1;\skipp;p_{\mathit{while}},\stsp \sigma_1)}$.

           The next case with $(\skipp;p_{\mathit{while}},\: \skipp;q_{\mathit{while}},)$ is clear because there is only one possible step
           $\rpstep{\rho\pr\hspace{-0.2pt}}{(\skipp;q_{\mathit{while}},\: \sigma_2)\stsp}{\stnsp(q_{\mathit{while}}, \sigma_2)}$ which is matched by the only possible
            $\rpstep{\rho\stsp}{(\skipp;p_{\mathit{while}},\: \sigma_1)\stsp}{\stnsp(p_{\mathit{while}}, \sigma_1)}$.
            
           Lastly, suppose $\rpstep{\rho\pr\hspace{-0.5pt}}{\hspace{-0.9pt}(v;\skipp;q_{\mathit{while}}, \stsp\sigma_2)\hspace{1pt}}{\hspace{-1pt}(x, \stsp\sigma\pr_2)}$
           with $(u, v)\hspace{-0.5pt}\in\hspace{-2.1pt} X_p$.
If we have $v = \skipp$ then $u = \skipp$ 
follows leading to the same situation as in the preceding case.
             Otherwise, with $v \neq \skipp$ we have $x = v\pr;\skipp;q_{\mathit{while}}$ due to a program step $\rpstep{\rho\pr}{\stnsp(v,\stsp \sigma_2)\hspace{1pt}}{\hspace{-1pt}(v\pr,\stsp \sigma\pr_2)}$.
             Since $X_p$ is a simulation, we obtain a transition
             $\rpstep{\rho}{\hspace{-1pt}(u, \stsp\sigma_1)\hspace{1pt}}{\hspace{-1pt}(u\pr,\stsp \sigma\pr_1)}$ with $(u\pr, v\pr)\hspace{-0.2pt} \in\hspace{-1.5pt} X_p$ and $(\sigma\pr_1, \sigma\pr_2)\hspace{-0.5pt} \in\hspace{-0.5pt} r$.
             The latter transition moreover entails $\rpstep{\rho\stsp}{(u;\skipp;p_{\mathit{while}},\stsp \sigma_1)\hspace{0.7pt}}{\hspace{0.7pt}(u\pr;\skipp;p_{\mathit{while}}, \stsp\sigma\pr_1)}$.
\end{proof}

\begin{lemma}\label{thm:await-corr}
Assume
\begin{enumerate}
\item[\emph{(1)}] $\pcorr{p_1\hspace{-0.5pt}}{r}{\hspace{-0.5pt}p_2}$,
\item[\emph{(2)}] $\sigma_2\hspace{0.1pt} \in\hspace{-0.3pt} C_2\stsp$ implies $\stsp\sigma_1\hspace{-0.4pt} \hspace{-0.4pt}\in C_1\stsp$ for any $(\sigma_1, \sigma_2)\hspace{0.1pt} \in r$.
\end{enumerate}
Then $\pcorr{\await{\stdsp C_1}{p_1\stnsp}\stdsp}{r}{\await{\stdsp C_2\stsp}{p_2}}$.
\end{lemma}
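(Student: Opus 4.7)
The plan is to build an explicit, very small simulation
\[
X \defeq \{(\await{C_1}{p_1},\ \await{C_2}{p_2}),\ (\skipp, \skipp)\}
\]
and to verify that $X$ satisfies the three clauses of Definition~\ref{def:corr} w.r.t.\ $\rho, \rho\pr, r$. Since the pair $(\await{C_1}{p_1},\ \await{C_2}{p_2})$ sits in $X$ by construction, this will immediately deliver $\pcorr{\await{C_1}{p_1}}{r}{\await{C_2}{p_2}}$. Clauses (2) and (3) are trivial because the only pair in $X$ involving $\skipp$ on either side is $(\skipp, \skipp)$ itself.

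For clause (1), I would fix $(\sigma_1, \sigma_2) \in r$ and inspect an arbitrary step $\rpstep{\rho\pr}{(\await{C_2}{p_2}, \sigma_2)}{(x, \sigma\pr_2)}$. By the Await rule of Figure~\ref{fig:pstep} the only way this step can arise is with $\sigma_2 \in C_2$, $x = \skipp$ and $\rpsteps{\rho\pr}{(p_2, \sigma_2)}{(\skipp, \sigma\pr_2)}$, which unfolds to $\rpstepsN{\rho\pr}{(p_2, \sigma_2)}{n}{(\skipp, \sigma\pr_2)}$ for some $n$. Assumption (2) then hands me $\sigma_1 \in C_1$. Proposition~\ref{thm:pcorr_steps} applied to $\pcorr{p_1}{r}{p_2}$ and this $n$-step computation yields some $p\pr_1, \sigma\pr_1$ with $\rpstepsN{\rho}{(p_1, \sigma_1)}{n}{(p\pr_1, \sigma\pr_1)}$, $\pcorr{p\pr_1}{r}{\skipp}$ and $(\sigma\pr_1, \sigma\pr_2) \in r$. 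The earlier observation that $\pcorr{x}{r}{\skipp}$ admits only $x = \skipp$ collapses $p\pr_1$ to $\skipp$, so $\rpsteps{\rho}{(p_1, \sigma_1)}{(\skipp, \sigma\pr_1)}$ holds. The Await rule now fires on the left-hand side and supplies the matching single step $\rpstep{\rho}{(\await{C_1}{p_1}, \sigma_1)}{(\skipp, \sigma\pr_1)}$, with $(\skipp, \skipp) \in X$.

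The only delicate point I anticipate is recognising that the single step on the right must be matched by exactly \emph{one} step on the left, not by the $n$ internal steps reconstructed from Proposition~\ref{thm:pcorr_steps}. This, however, is precisely the atomicity baked into the Await rule: once the inner multi-step computation of $p_1$ has been recovered, the rule itself packages all of those $n$ steps into a single external step on $\await{C_1}{p_1}$. So beyond this small piece of bookkeeping, no real obstacle arises.
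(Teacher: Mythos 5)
Your proof is correct and follows essentially the same route as the paper: the same two-element simulation $X$, inversion of the Await step on the right, Proposition~\ref{thm:pcorr_steps} to replay the inner computation of $p_2$ by one of $p_1$, and assumption (2) to re-enter the Await rule on the left. The only difference is that you make explicit the step forcing $p\pr_1 = \skipp$ from $\pcorr{p\pr_1}{r}{\skipp}$, which the paper leaves implicit.
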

\begin{proof}
  The set $X \defeq\hspace{0.5pt} \{(\await{\stdsp C_1\hspace{-1.5pt}}{\hspace{-0.5pt}p_1\hspace{-2pt}},\stdsp \await{\hspace{1pt}C_2\hspace{-0.5pt}}{\hspace{-0.5pt}p_2\hspace{-1pt}}),\hspace{2pt} (\skipp,\stdsp \skipp)\}$
forms a simulation: $\rpstep{\rho\pr\hspace{-0.7pt}}{\hspace{-0.9pt}(\await{\stsp C_2\hspace{-0.2pt}}{\hspace{-0.5pt}p_2\hspace{-0.5pt}},\stsp \sigma_2)\hspace{1pt}}{\hspace{-1pt}(x,\stdsp \sigma\pr_2)}$ and $(\sigma_1, \sigma_2)\hspace{-1pt} \in\hspace{-1pt} r$ firstly entail
$\sigma_2\hspace{-0.7pt} \in\hspace{-1.1pt}  C_2$, $\rpsteps{\rho\pr\hspace{-0.9pt} }{\hspace{-0.9pt} (p_2,\stdsp \sigma_2)\hspace{-1pt}}{\hspace{-2pt}(\skipp,\stdsp \sigma\pr_2)}$ and $x\hspace{-0.9pt}  =\hspace{-1pt}  \skipp$.
By Proposition~\ref{thm:pcorr_steps} and (1) we can further obtain a state $\sigma\pr_1$ such that $(\sigma\pr_1,\stsp \sigma\pr_2)\hspace{0.2pt} \in\hspace{-0.7pt} r$ and $\rpsteps{\rho\stsp}{(p_1, \stsp\sigma_1)\hspace{0.5pt}}{\hspace{-0.7pt}(\skipp,\stsp \sigma\pr_1)}$ hold. 
From this we can in turn derive the matching step \stdsp
$\rpstep{\rho\stsp}{(\await{\stdsp C_1}{p_1},\stsp \sigma_1)\hspace{1pt}}{(\hspace{-1pt}\skipp,\stsp\sigma\pr_1)}$
 as $\sigma_1\hspace{-0.5pt} \in\hspace{0.4pt} C_1$ follows by (2). 
\end{proof}

The remaining two propositions
handle correspondences between the combinations of indivisible steps that will be relevant later on. 
\begin{lemma}\label{thm:basic-corr}
If $(f\hspace{2pt} \sigma_1,\stsp g\hspace{2.3pt}\sigma_2) \in\hspace{0.2pt} r\stsp$ holds for any pair $\stsp(\sigma_1, \sigma_2) \in r$
then we have $\pcorrC{\basic\hspace{2.5pt} f\hspace{0.5pt}}{r}{\hspace{-0.5pt}\basic\hspace{3.5pt} g}$.
\end{lemma}
\begin{proof}
The assumption provides that  
$\{(\basic\hspace{2.5pt} f, \stdsp\basic\hspace{3.5pt}g), \stdsp(\skipp,\stdsp \skipp)\}$
is a simulation w.r.t. \!$r$.
\end{proof}
\begin{lemma}\label{thm:await-corrL}
If for any pair $(\sigma_1, \sigma_2) \in\hspace{-0.2pt} r$ there exists some $\sigma\pr_1$
such that $\rpsteps{\rho\stdsp}{(p,\stsp \sigma_1)}{(\skipp, \stsp\sigma\pr_1)}$ and $(\sigma\pr_1,\stsp f\hspace{2.1pt}\sigma_2) \in r$ hold
then $\pcorrS{\stsp\langle p \rangle\stsp}{r}{\hspace{-0.5pt}\basic\hspace{2.7pt} f}$.
\end{lemma}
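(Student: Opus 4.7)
The plan is to exhibit a concrete simulation witnessing the correspondence, following the template used in Proposition~\ref{thm:await-corr} and Proposition~\ref{thm:basic-corr}. Specifically, I would take
\[
X \defeq \{(\langle p \rangle,\: \basic\stdsp f),\; (\skipp, \skipp)\}
\]
which manifestly contains the required pair, and verify the three clauses of Definition~\ref{def:corr} with respect to $\rho, \rho\pr$ and $r$.

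Clauses (2) and (3) are immediate from the shape of $X$: the only pair with a $\skipp$ on either side is $(\skipp, \skipp)$ itself. For clause (1), the pair $(\skipp, \skipp)$ has no outgoing program steps on the right, so only the pair $(\langle p \rangle, \basic\stdsp f)$ needs attention. Fix $(\sigma_1, \sigma_2) \in r$ and a step $\rpstep{\rho\pr}{(\basic\stdsp f, \sigma_2)}{(q\pr, \sigma\pr_2)}$. By the rule Basic this step is unique, with $q\pr = \skipp$ and $\sigma\pr_2 = f\stsp\sigma_2$.

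The hypothesis applied to $(\sigma_1, \sigma_2) \in r$ yields some $\sigma\pr_1$ such that $\rpsteps{\rho}{(p, \sigma_1)}{(\skipp, \sigma\pr_1)}$ and $(\sigma\pr_1, f\stsp\sigma_2) \in r$. Since $\langle p \rangle$ abbreviates $\await{\top}{p}$ and trivially $\sigma_1 \in \top$, the rule Await then provides the matching single program step $\rpstep{\rho}{(\langle p \rangle, \sigma_1)}{(\skipp, \sigma\pr_1)}$. Together with $(\skipp, \skipp) \in X$ and $(\sigma\pr_1, f\stsp\sigma_2) \in r$, this discharges clause (1).

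There is no real obstacle here: the proof is essentially a bookkeeping instance of how atomic sections compress multi-step computations into one step, exactly as exploited in the rule Await, combined with the observation that $\basic\stdsp f$ performs a deterministic one-step update. The only thing to be mildly careful about is that the hypothesis supplies a \emph{reflexive-transitive} chain $\rpsteps{\rho}{(p, \sigma_1)}{(\skipp, \sigma\pr_1)}$ (possibly of zero length), which is precisely the premise the Await rule expects, so no separate case analysis on the number of inner steps is required.
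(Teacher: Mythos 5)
Your proof is correct and matches the paper's argument: the same two-element simulation $X = \{(\langle p \rangle, \basic\stdsp f), (\skipp, \skipp)\}$, the same use of the hypothesis to obtain $\sigma\pr_1$, and the same appeal to the Await rule to compress $\rpsteps{\rho}{(p, \sigma_1)}{(\skipp, \sigma\pr_1)}$ into a single matching program step. You merely spell out the verification of clauses (2) and (3) and the determinism of the Basic step, which the paper leaves implicit.
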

\begin{proof}
Let $X \defeq \{(\langle p \rangle, \stdsp\basic\hspace{2.5pt} f), \stdsp(\skipp,\stdsp \skipp)\}$.
In order to show that $X$ is a simulation w.r.t.\hspace{-2.1pt} $\rho$ and $r$, 
suppose $(\sigma_1,\stsp \sigma_2) \in\hspace{-0.5pt} r$. Thus, by the assumption there must be
some state $\sigma\pr_1$ with $(\sigma\pr_1, \stsp f\hspace{2pt}\sigma_2) \in r$ and $\rpsteps{\rho\stsp}{(p, \stsp\sigma_1)\hspace{0.5pt}}{\hspace{-1.5pt}(\skipp,\stsp \sigma\pr_1)}$,
such that $\rpstep{\rho\stsp}{(\langle p \rangle,\stsp \sigma_1)\hspace{1pt}}{\hspace{-0.5pt}(\skipp, \stsp\sigma\pr_1)}$ follows. 
\end{proof}
\section{Sequential normalisation}\label{Sb:seq-norm}
As pointed out in Chapter~\ref{S:concept}, certain code restructurings are needed prior to
a rational replacement of conditional
and while-statements by their representations using jumps.
Next two propositions show that no semantic differences arise by these restructurings 
whereas Section~\ref{Sb:jump-norm}
proceeds to the actual jump-transformations.
\begin{lemma}\label{thm:cond-norm1}
  Let
     \begin{enumerate}
      \item[\emph{(1)}] $L = \ite{\stdsp C}{p_1}{p_2}; q$,
      \item[\emph{(2)}] $R = \ite{\stsp C}{p_1 ; q}{p_2 ; q}$.
    \end{enumerate} 
     Then $\hspace{1.5pt}\peqvS{L}{}{\hspace{-1.5pt} R}$.
\end{lemma}
\begin{proof}
  We establish the correspondence $\pcorrS{\hspace{-0.5pt}L}{}{\hspace{-0.7pt}R}\stsp$
by showing that the set of pairs $\{(L,\stsp R)\}\cup\hspace{0.2pt} \op{id}$ is a simulation w.r.t. \!$\rho$.
Suppose $\rpstep{\rho\hspace{0.4pt}}{\hspace{-0.5pt}(R,\stsp \sigma)\hspace{1pt}}{\hspace{-1pt}(x,\stsp \sigma)}$ with some $x$ and $\sigma$.
If $\sigma\hspace{-0.7pt} \in\hspace{-0.9pt} C$
then $x\hspace{-1pt} =\hspace{-1pt} p_1;q$ and hence we have the matching step
$\rpstep{\rho}{\stnsp(L,\stsp \sigma)\hspace{1.2pt}}{\hspace{-0.5pt}(p_1;q,\stsp \sigma)}$.
The case with $\sigma\hspace{-0.5pt} \notin\hspace{-0.9pt} C$ is symmetric and the opposite direction
$\pcorrS{R\stsp}{}{\hspace{0.1pt} L}\stdsp$ can be concluded likewise.
\end{proof}
\begin{lemma}\label{thm:while-norm1}
  Let
     \begin{enumerate}
      \item[\emph{(1)}] $L = \while{\stsp C}{p_1}{p_2} ; q$,
      \item[\emph{(2)}] $R = \while{\stsp C}{p_1}{p_2 ; q}$.
    \end{enumerate}
Then $\hspace{1.5pt}\peqvS{L}{}{\! R}$.
\end{lemma}
\begin{proof}
We will establish the correspondence $\pcorrS{\hspace{-0.5pt}L\hspace{0.1pt}}{}{\hspace{-0.79pt}R}$
by showing that the set
$X\; \defeq \; \{(\skipp ; L, \:\skipp ; R),\: (L, R))\}  \cup
  \{(v ; \skipp ; L, \:v ; \skipp ; R)) \hspace{4pt}|\hspace{4.5pt} v\stsp \in \hspace{-1pt}\top \} \cup \op{id}$
  is a simulation w.r.t. \!$\rho$ whereas
  the opposite direction, \ie $\pcorrS{R\hspace{0.9pt}}{}{\hspace{0.2pt}L}$, follows in a symmetric way.

The case $(\skipp ; L, \stdsp \skipp ; R)$ is clear.
Next, assume $\rpstep{\rho\hspace{0.2pt}}{\hspace{-0.7pt}(R,\stsp\sigma)}{\hspace{0.2pt}(x,\stsp \sigma)}$ with some $x$ and $\sigma$.
If $\sigma\hspace{-0.59pt} \in\hspace{-0.9pt} C$
then $x \stnsp= p_1 ;\skipp;R$ and hence there is the matching step $\rpstep{\rho\stsp}{(L, \stsp\sigma)\stsp}{\hspace{-0.5pt}(p_1;\skipp;L,\stsp \sigma)}$
since $(p_1;\skipp;L,\stsp x)\hspace{0.29pt}\in\hspace{-0.9pt} X$.
If $\sigma\hspace{0.2pt} \notin\hspace{0.5pt} C$ then we have $x = p_2 ; q$
and hence we can take the step from $(L,\stsp \sigma)$ to the same $(x,\stsp \sigma)$.

Lastly, suppose $\rpstep{\rho}{(v ; \skipp ; R,\stsp \sigma)\hspace{1pt}}{\hspace{-1pt}(x,\stsp \sigma\pr)}$ with some $x, \sigma, \sigma\pr$.
If $v =\stnsp \skipp$ then $x = \stnsp\skipp ; R$ and $\sigma\pr\hspace{-1.7pt} = \sigma$. 
Hence $\rpstep{\rho}{\hspace{-0.5pt}(\skipp ; \skipp ; L,\stsp \sigma)\stsp}{\hspace{-1pt}(\skipp ; L,\stsp \sigma)}$ is the matching step.
If $v \neq \skipp$ then there is a step $\rpstep{\rho}{(v,\stsp \sigma)\stsp}{\stnsp(v\pr,\stsp \sigma\pr)}$
such that $x = v\pr ; \skipp ; R\hspace{1.5pt}$ holds. 
Thus, $\rpstep{\rho}{(v ; \skipp ; L,\stsp \sigma)\hspace{1pt}}{\hspace{-1pt}(v\pr ; \skipp ; L,\stsp \sigma\pr)}$ is the matching transition in that case.
\end{proof}
\section{Replacing conditional and while-statements by jumps}\label{Sb:jump-norm}
\begin{lemma}\label{thm:cond-norm2} 
The program equivalence
\[
\peqvS{\ite{\stsp C}{\stsp p\stsp}{\ret{\hspace{0.2pt} j}\stdsp}}{}{\cjump{\negate{C}}{j}{\stsp p\stsp}}
\]
holds for all $\rho, C, p$ and  $j$.
\end{lemma}
\begin{proof}
By taking the simulation w.r.t. \hspace{-2pt}$\rho$
\[\{(\ite{\stsp C}{\stsp p\stsp}{\ret{\hspace{0.2pt}j}\stdsp},\;\cjump{\negate{C}}{j}{\stsp p\stsp})\} \cup\stdsp \op{id}\] for the $\sqsupseteq$-direction and
\[\{(\cjump{\negate{C}}{j}{\stsp p\stsp},\; \ite{\stsp C}{\stsp p\stsp}{\ret{j}\stdsp})\} \cup\stdsp \op{id}\] for the opposite.
\end{proof}

The following proof utilises the extra $\skipp$ built into the rule `{\it\sl While-True}' (\cf\stsp Figure~\ref{fig:pstep}).
\begin{lemma}\label{thm:while-norm2}
Assume
\begin{enumerate}
\item[\emph{(1)}] $\ret{\hspace{1pt}i} = \cjump{\negate{C}}{\stsp j\stdsp}{\stdsp p;\jump{\hspace{1pt}i\stdsp}}$,
\item[\emph{(2)}] $\ret{\hspace{0.2pt}j} = q$.
\end{enumerate}
Then $\peqvS{\while{\stsp C}{p}{q}\stsp}{}{\ret{\hspace{1.1pt}i}}$.
\end{lemma}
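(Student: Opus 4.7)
The plan is to establish the two directions required by $\peqvS$ by constructing explicit simulations, in the same style as the preceding normalisation propositions. For the forward direction $\pcorrS{\while{C}{p}{q}}{}{\cjump{\negate{C}}{j}{p;\jump{i}}}$ the candidate is
\[
X \defeq \{(\while{C}{p}{q},\; \cjump{\negate{C}}{j}{p;\jump{i}}),\; (\skipp;\while{C}{p}{q},\; \jump{i})\} \cup \{(u;\skipp;\while{C}{p}{q},\; u;\jump{i}) \:|\: u \in \top\} \cup \op{id},
\]
which contains the required initial pair.

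Verification then proceeds by case analysis on which pair is inspected. For the initial pair: a step on the right is either by \emph{CJump-False} (yielding $\ret{j} = q$ by assumption (2), matched by \emph{While-False} on the left since both branches assume $\sigma \notin C$) or by \emph{CJump-True} (yielding $(p;\jump{i}, \sigma)$, matched by \emph{While-True} on the left giving $(p;\skipp;\while{C}{p}{q}, \sigma)$, an instance of the generic pair with $u = p$). For the generic pair $(u;\skipp;\while{C}{p}{q}, u;\jump{i})$: either $u = \skipp$ so that \emph{Sequential-Skip} fires on both sides, landing in $(\skipp;\while{C}{p}{q}, \jump{i})$, or $u \to u'$ on the right by \emph{Sequential} and the identical step is available on the left, preserving the shape of the pair. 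For $(\skipp;\while{C}{p}{q}, \jump{i})$: the unique right-hand step is $\jump{i} \to \ret{i} = \cjump{\negate{C}}{j}{p;\jump{i}}$ by assumption (1), landing back on the initial pair, and is matched by \emph{Sequential-Skip} on the left. Conditions (2) and (3) of Definition~\ref{def:corr} are immediate, since none of the non-identity pairs contain a $\skipp$ on either side.

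The backward direction uses $\rconv{X}$; at each pair above both sides perform uniquely determined steps of analogous kinds in the forward analysis, so $\rconv{X}$ is verified as a simulation by an entirely symmetric case analysis with the roles of left and right exchanged. The key subtlety, on which the whole construction hinges, is the extra $\skipp$ inserted by the rule \emph{While-True} (highlighted in Section~\ref{Sb:sem}): it provides precisely the one-step slack needed to absorb the unconditional jump $\jump{i}$ in the jump-based encoding on each iteration, so that the two sides remain synchronised step-for-step. Aside from keeping this accounting straight across the three kinds of non-trivial pairs in $X$, I do not foresee any real obstacle.
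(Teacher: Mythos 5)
Your proof is correct and follows essentially the same route as the paper's: the same simulation set $X$ (the paper writes the first pair as $(\while{C}{p}{q},\:\ret{i})$, which by assumption (1) is exactly your pair), the same case analysis on the three kinds of non-identity pairs with the extra $\skipp$ from While-True absorbing the $\jump{i}$, and the same appeal to a symmetric analysis for the opposite direction. The only blemish is that you have the rule names swapped — on $\cjump{\negate{C}}{j}{p;\jump{i}}$ it is CJump-True (condition $\sigma \in \negate{C}$, i.e.\ $\sigma \notin C$) that jumps to $\ret{j} = q$, and CJump-False that falls through to $p;\jump{i}$ — but the conditions and successors you pair with While-False and While-True are the right ones, so the substance is unaffected.
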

\begin{proof}
To establish the $\sqsupseteq$-direction we define 
\[
\hspace{-0.5cm}X \hspace{4pt}\defeq\hspace{4pt} \{(L, \stsp\ret{\hspace{0.9pt}i}), \:(\skipp ; L, \:\jump{\hspace{1pt}i})\}\cup\stsp \{(u ; \skipp ; L, \: u ; \jump{\hspace{1pt}i}\hspace{4.5pt} |\hspace{4.5pt} u \stsp\in\hspace{-0.7pt} \top \} \cup\stsp \op{id}
\]
where $L$ stands for $\while{\stsp C}{p}{q}$, and proceed by showing that $X$ is a simulation w.r.t. \!$\rho$.

Regarding the pair $(L, \stsp\ret{\hspace{0.7pt}i})$, suppose there is a step $\rpstep{\rho\hspace{-0.7pt}}{\hspace{-1.2pt}(\ret{\hspace{0.5pt}i},\stsp \sigma)\hspace{1pt}}{\hspace{-1pt}(x,\stsp \sigma)}$ with some $x$ and $\sigma$.
If $\sigma\hspace{-0.2pt}\in\hspace{-0.2pt} C$ then $x = p;\jump{\hspace{1.2pt} i}$ which can be matched by the program step $\rpstep{\rho\hspace{0.2pt}}{(L,\stsp \sigma)\hspace{1pt}}{\hspace{-1pt}(p ; \skipp ; L,\stsp \sigma)}$ since $(p ; \skipp ; L,\: p; \jump{\hspace{0.5pt}i})\hspace{0.2pt}\in \hspace{-1.5pt} X$.
If $\sigma\hspace{0.2pt} \notin\hspace{0.2pt} C$ then $x = \ret{\hspace{0.7pt}j} = q\stsp$ follows and we can simply take the step from $(L,\stsp \sigma)$ to the same configuration $(q,\stsp \sigma)$ since $\op{id}\hspace{0.2pt} \subseteq\hspace{-1pt} X$.

Next, with $(\skipp ; L, \:\jump{\hspace{0.5pt} i})$ any transition $\rpstep{\rho}{\hspace{-0.5pt}(\jump{\hspace{0.5pt}i},\stsp \sigma)\hspace{1pt}}{\hspace{-1pt}(\ret{\hspace{0.5pt}i},\stsp \sigma)}$ 
can be matched straight by $\rpstep{\rho\hspace{0.4pt}}{(\skipp ; L, \stsp\sigma)\hspace{1pt}}{\hspace{-1.2pt}(L,\stsp \sigma)}$ since $(L,\stdsp \ret{\hspace{0.7pt}i})\hspace{0.2pt} \in \hspace{-1pt}X$.

Lastly, suppose we have $\rpstep{\rho\hspace{-0.1pt}}{\hspace{-1pt}(u; \jump{i},\stsp \sigma)\hspace{1pt}}{\hspace{-1pt}(x,\stsp \sigma\pr)}$ with some $x, \sigma, \sigma\pr$ and $u$. If $u =\stnsp \skipp$ then $x =\stnsp \jump{\hspace{1pt}i}\stsp$ and $\sigma\pr\hspace{-1pt} = \sigma$ follow which is straight matched by
the step $\rpstep{\rho\stsp}{(\skipp; \skipp ; L, \stsp\sigma)\stsp}{\stnsp(\skipp ; L, \stsp \sigma)}$.
If $u \neq \skipp$ then $x = u\pr ; \jump{\hspace{1pt}i}$ due to a step 
$\rpstep{\rho\hspace{0.2pt}}{\stnsp(u,\stsp \sigma)\hspace{1pt}}{\hspace{-1pt}(u\pr,\stsp \sigma\pr)}$ with some $u\pr$. This transition further entails
$\rpstep{\rho\stsp}{\stnsp(u ; \skipp ; L,\stsp \sigma)\hspace{1pt}}{\hspace{-1pt}(u\pr ; \skipp ; L,\stsp \sigma\pr)}$ such that $(u\pr ; \skipp ; L,\: u\pr ;\jump{\hspace{1pt}i}) \in\hspace{-1.5pt} X$.

The correspondence $\pcorrS{\ret{\hspace{0.9pt}i}\stsp}{}{\hspace{0.2pt} L}$ follows symmetrically. 
\end{proof}

Note that for the sake of simplicity the above proposition captured only the basic principle behind the transformation:
it does not account for the circumstance that the construction $p;\jump{\hspace{0.5pt}i}$ usually creates an additional subject to the sequential normalisation
described in Section~\ref{Sb:seq-norm}. For instance, if $p$ is a conditional statement then
$\jump{\hspace{1pt}i}$ needs to be `pushed' into both branches.
To address that,
(1) may be generalised to $\ret{\hspace{0.9pt}i} = \cjump{\negate{C}}{j}{p\pr}$
with some $p\pr$ satisfying
$\peqvS{p\pr\hspace{-0.5pt}}{}{\hspace{0.1pt}p;\jump{\hspace{1pt}i}}$.
\chapter{Potential Computations and Program Correspondences}\label{S:pcs-props}
This chapter takes up the topic of potential computations and elaborates on the question how potential computations of two
corresponding programs are related.
In line with this, it will also pay due attention to a characterisation of the program equivalence in terms of computations.

\section{Conditions on potential computations}
We start by defining the conditions that allow us to systematically access subsets of potential computations of a program.
\begin{definition}\label{def:progC}
  Let $\stsp G\stsp$ be of type $\alpha\stnsp \times\stnsp \alpha \Rightarrow \ty{bool}$, \ie a state relation.
  Then the \emph{program condition}\index{condition!program}  $\progC{\hspace{-0.5pt}G}$ comprises 
  all infinite potential computations $\sq$ where $(\stateOf{\sq_{i}},\stsp \stateOf{\sq_{i+1}}) \in\hspace{-0.5pt} G\stsp$ holds for each program step $\sq_{i} \stsp\pstep\hspace{-2.4pt} \sq_{i+1}$,
  as well as all finite non-empty prefixes of such $\sq$. 
\end{definition}
Program conditions can thus be used to encircle computations where the pre- and the poststate of each program step are related by some given $G$
and, as one might expect, environment steps will be handled accordingly next. 
\begin{definition}\label{def:envCi}
Let $R\stdsp$ be a state relation. Then the \emph{environment condition}\index{condition!environment}  $\envCi{\hspace{-2.2pt}R}\stsp$ comprises
all infinite potential computations $\sq$ where the condition $(\stateOf{\sq_{i}},\stsp \stateOf{\sq_{i+1}}) \in \hspace{-1pt} R\stsp$
holds for each environment step $\sq_{i}\stsp \estep\hspace{-2.7pt} \sq_{i+1}$.
\end{definition}
This definition addresses only infinite computations for the following reason.
As defined by Stirling~\cite{STIRLING1988347}, the set of \emph{actual computations}\index{computation!actual} of a program is
the set of its potential computations without any environment steps.
Thus, $\envCi{\hspace{-1.5pt}\bot}$
contains all infinite actual computations (here and in what follows, $\bot$ denotes the empty relation: the complement of the already introduced $\top$).
Prefixing $\envCi{\hspace{-2pt}\bot}$ does not give us the entire set of finite actual computations however:
actual computations of programs terminating on all inputs when run in isolation would for instance be omitted.
Therefore we explicitly recast Definition~\ref{def:envCi} as follows:
\begin{definition}\label{def:envC}
Let $R\stdsp$ be a state relation. Then the \emph{environment condition}\index{condition!environment}  $\envC{\hspace{-1.2pt}R}$ comprises
all finite potential computations $\sq$ where the condition $(\stateOf{\sq_{i}}, \stsp\stateOf{\sq_{i+1}}) \in \hspace{-1.2pt} R\stsp$
holds for any $i < |\sq| -1$ with $\sq_{i}\stsp \estep\hspace{-3pt} \sq_{i+1}$.
\end{definition}
The set of (finite or infinite) actual computations is then consequently the union of $\envCi{\hspace{-2.5pt}\bot}$ and $\envC{\hspace{-1pt}\bot}$.
However, only the notation $\envC{\hspace{-1.2pt} R}$ will be used in the sequel as it will be clear from the context which variant
of the environment condition is meant. 

The remaining two conditions apply to both, finite and infinite computations, in particular allowing us to address input/output properties of programs. 
\begin{definition}\label{def:inC-outC}
  Let $\stsp C$ be of type $\alpha \Rightarrow \ty{bool}$, \ie a state predicate.
  A potential computation $\sq$  satisfies the \emph{input condition}\index{condition!input} $\inC{\hspace{-0.5pt}C}$ if $\stsp\stateOf{\sq_0}\hspace{-0.5pt} \in\hspace{-0.7pt} C$.
  Furthermore, if there is an index $i$ with $\progOf{\sq_i}\hspace{-0.9pt} =\hspace{-0.5pt} \skipp$ and $\stateOf{\sq_\mu}\hspace{-0.7pt} \in\hspace{-1pt} C$ holds for the first such index $\mu$ 
  then $\sq$ satisfies the \emph{output condition}\index{condition!output} $\outC{C}$.
\end{definition}
Despite a different formulation,
the definition of $\outC{\hspace{-0.2pt} C}$ essentially coincides with its counterpart given by Stirling~\cite{STIRLING1988347}: one can show that
$\sq \in \outC{C}$ holds iff $\progOf{\sq_i}\stnsp = \skipp$  implies
that there exists some $j \le i\stsp$ such that $\progOf{\sq_j} = \skipp\stsp$ and $\stsp\stateOf{\sq_j}\hspace{-0.5pt} \in\hspace{-0.5pt} C\stsp$ hold 
for any $\stsp i\hspace{-0.5pt} <\hspace{-0.5pt} |\sq|\stsp$ in case $\sq$ is finite or just for any $i\in\hspace{-0.7pt} \naturals$ otherwise.
\section{Replaying finite potential computations}\label{Sb:corr_pcs}
The following proposition resorts to environment conditions for
showing how finite potential computations can be replayed
along program correspondences.
\begin{lemma}\label{thm:corr-sim}
Assume that $\pcorr{p}{r}{\stnsp q}$, $\sq^q\hspace{-0.5pt} \in \rpcs{q}{\rho\pr}$, $\sq^q\hspace{-0.5pt} \in \envC{\hspace{-1.2pt}R\pr}$ and 
$\stsp \rcomp{r\hspace{1.1pt}}{\hspace{0.1pt}R\pr} \hspace{-0.7pt}\subseteq\hspace{-1.2pt} \rcomp{R\hspace{1pt}}{\hspace{1pt} r}$ hold.
Further, let $\stsp\sigma$ be a state with $(\sigma,\stsp \stateOf{\sq^q_0})\stnsp \in\hspace{-0.5pt} r$.
Then there exists a computation $\sq^p\hspace{-0.79pt} \in\hspace{0.2pt} \rpcs{p}{\rho}$ with the same length as $\sq^q$
satisfying moreover the following conditions:
\begin{enumerate}
\item[\emph{(1)}] $\stateOf{\sq^p_0} = \sigma$,
\item[\emph{(2)}] $\sq^p\hspace{-0.5pt} \in\hspace{0.2pt} \envC{\hspace{-1.2pt}R}$,
\item[\emph{(3)}] $(\stateOf{\sq\mystrut^p_i},\stsp \stateOf{\sq\mystrut^q_i}) \in\hspace{-0.2pt} r\stdsp$ for any $\stsp i < |\sq^q|$,
\item[\emph{(4)}] $\pcorr{\progOf{\sq\mystrut^p_i}}{r}{\hspace{-1.2pt}\progOf{\sq\mystrut^q_i}}\stdsp$ for any $\stsp i < |\sq^q|$,
\item[\emph{(5)}] $\sq\mystrut^p_i\stdsp \estep\hspace{-2.5pt} \sq\mystrut^p_{i+1}\stsp$ iff $\stdsp\sq\mystrut^q_i\stsp \estep\hspace{-2.5pt} \sq\mystrut^q_{i+1}\stdsp$ for any $\stsp i < |\sq^q| - 1$. 
\end{enumerate}
\end{lemma} 
\begin{proof}
We proceed by induction on the length of $\sq^q$. The case $|\sq^q|\hspace{-1.5pt} =\hspace{-0.5pt} 0$ is trivial since 
$\sq^q\hspace{-0.9pt} \in \hspace{-0.5pt}\rpcs{q}{\rho\pr}$ implies that $\sq^q$ is not empty. Next, if $|\sq^q| = 1$ then the singleton sequence
$(p, \sigma) \in \rpcs{p}{\rho}$ clearly meets the conditions (1)--(5).

Further, suppose $|\sq|\hspace{-0.5pt} = n + 1$ with $n \ge 1$ and consider $\prefix{n}{\sq\mystrut^q}$, \ie the prefix $\sq\mystrut^q_0, \ldots, \sq\mystrut^q_{n-1}$ with the length $n$,
for which we have $\prefix{n}{\sq^q}\hspace{-0.4pt}\in\hspace{0.2pt} \rpcs{q}{\rho\pr}\hspace{-1pt} \cap\hspace{0.2pt}\envC{\hspace{-0.9pt}R\pr}$
and $(\sigma,\stsp \stateOf{\prefix{n}{\sq^q}_0})\stnsp \in\hspace{-0.9pt} r$.
Hence, by the induction hypothesis there is a computation $\sq^p \in\hspace{0.5pt} \rpcs{p}{\rho}$ with $|\sq^p|\stnsp = n\stsp$ satisfying moreover
the conditions (1)--(5) w.r.t. \!$\prefix{n}{\sq^q}$.
This particularly means:
\begin{enumerate}
\item[(a)] $(\stateOf{\sq\mystrut^p_{n-1}}, \stsp\stateOf{\sq\mystrut^q_{n-1}}) \in\hspace{0.2pt} r$ and
\item[(b)] $\pcorr{\progOf{\sq\mystrut^p_{n-1}}}{r}{\stnsp\progOf{\sq\mystrut^q_{n-1}}}$.
\end{enumerate}

If we have an environment step $\sq\mystrut^q_{n-1}\hspace{-1pt} \estep\hspace{-2.5pt} \sq\mystrut^q_{n}\:$ then $(\stateOf{\sq\mystrut^q_{n-1}},\stsp \stateOf{\sq\mystrut^q_{n}}) \in\hspace{-1.2pt} R\pr$
holds since $\sq^q\hspace{-0.25pt} \in\hspace{0.1pt} \envC{\hspace{-1pt}R\pr}$ is assumed. Hence, 
$(\stateOf{\sq\mystrut^p_{n-1}},\stsp \stateOf{\sq\mystrut^q_{n}}) \in\hspace{0.5pt} \rcomp{r\hspace{1pt}}{\hspace{0.2pt} R\pr}$ can be further inferred using (a).
Then the assumption $\rcomp{r\hspace{1.1pt}}{\hspace{0.35pt} R\pr}\hspace{-0.5pt} \subseteq\hspace{-1pt} \rcomp{R\hspace{1.7pt}}{\hspace{1.5pt} r}$ provides some $\sigma\pr$ such that
$(\stateOf{\sq\mystrut^p_{n-1}}, \stsp\sigma\pr) \in\hspace{-1pt} R$ and $(\sigma\pr,\stsp \stateOf{\sq\mystrut^q_{n}}) \in r$ hold.
We can thus extend $\sq^p$ by an environment step to 
$
\sq\mystrut^p_0, \ldots, \sq\mystrut^p_{n-1}\stsp \estep \stnsp(\progOf{\sq\mystrut^p_{n-1}},\stsp \sigma\pr)
$ 
which yields a computation in $\rpcs{p}{\rho}$ of length $n+1$ satisfying the conditions (1)--(5) w.r.t. \!$\sq^q$.

If we have a program step $\rpstep{\rho\pr\hspace{-0.7pt}}{\hspace{-0.5pt}\sq\mystrut^q_{n-1}}{\hspace{-2.5pt}\sq\mystrut^q_{n}}$
then with (a) and (b) we can further obtain
a step
$\rpstep{\rho}{\hspace{-0.9pt}\sq\mystrut^p_{n-1}}{\hspace{-0.7pt}(p\pr,\stsp \sigma\pr)}$ such that $\pcorr{p\pr}{r}{\hspace{-1.7pt}\progOf{\sq\mystrut^q_{n}}}$ and
$(\sigma\pr, \stsp\stateOf{\sq\mystrut^q_{n}})\hspace{-0.1pt}  \in\hspace{-0.2pt} r$ hold.
Extending thus $\sq\mystrut^p$ to 
$
\sq\mystrut^p_0, \ldots, \sq\mystrut^p_{n-1} \stsp\pstep\hspace{0.1pt} (p\pr, \sigma\pr)
$
we once more get 
a computation in $\rpcs{p}{\rho}$ of length $n + 1$ satisfying (1)--(5) w.r.t. \!$\sq^q$.
\end{proof}
In particular, with $r$ instantiated by $\op{id}$ whereas $R$ and $R\pr$ -- by $\top$, the above statement simplifies to:
\begin{corollary}\label{thm:corr-sim-id}
If $\pcorr{p}{}{q}$ and $\sq^q\stnsp \in \rpcs{q}{\rho\pr}$
then there is a computation $\sq^p\hspace{-0.5pt} \in \rpcs{p}{\rho}$ with $|\sq^p| = |\sq^q|$
satisfying moreover
\begin{enumerate}
\item[\emph{(1)}] $\stateOf{\sq^p_i} = \stateOf{\sq^q_i}\stsp$ for any $\stsp i < |\sq^q|$,
\item[\emph{(2)}] $\pcorr{\progOf{\sq^p_i}}{}{\progOf{\sq^q_i}}\stsp$ for any $\stsp i < |\sq^q|$,
\item[\emph{(3)}] $\sq^p_i\stsp \estep\hspace{-2.9pt} \sq^p_{i+1}$ iff $\stsp\sq^q_i \stsp\estep\hspace{-2.9pt} \sq^q_{i+1}\stsp$ for any $\stsp i < |\sq^q| - 1$. 
\end{enumerate}
\end{corollary}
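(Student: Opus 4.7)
The plan is to recognise this corollary as a direct instance of Proposition~\ref{thm:corr-sim}, choosing the auxiliary relations so that the extra hypotheses about the environment become vacuous. Concretely, I would instantiate $r \defeq \op{id}$ on the common state type together with $R \defeq \top$ and $R\pr \defeq \top$.

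With this choice the three additional preconditions of Proposition~\ref{thm:corr-sim} hold for free. First, $\pcorr{p}{}{q}$ is by definition $\pcorr{p}{\op{id}}{q}$, matching the simulation hypothesis of the proposition. Secondly, every finite potential computation lies in $\envC{\top}$ by Definition~\ref{def:envC}, since $\top$ relates any pair of states, so $\sq^q \in \envC{\top}$ follows without any use of the hypothesis. Thirdly, the containment $\rcomp{\op{id}}{\top} \subseteq \rcomp{\top}{\op{id}}$ is trivial because both sides equal $\top$. Finally, picking the initial state $\sigma \defeq \stateOf{\sq^q_0}$ makes $(\sigma, \stateOf{\sq^q_0}) \in \op{id}$ automatic.

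Applying Proposition~\ref{thm:corr-sim} then yields a computation $\sq^p \in \rpcs{p}{\rho}$ of length $|\sq^q|$. Conclusions~(1) and~(3) of that proposition combine to give $\stateOf{\sq^p_i} = \stateOf{\sq^q_i}$ for every $i < |\sq^q|$ (for $i=0$ we use $\stateOf{\sq^p_0} = \sigma = \stateOf{\sq^q_0}$; for larger $i$ we use that $(\stateOf{\sq^p_i}, \stateOf{\sq^q_i}) \in \op{id}$ is just the equality), which is exactly condition~(1) of the corollary. Conclusion~(4) of the proposition gives $\pcorr{\progOf{\sq^p_i}}{\op{id}}{\progOf{\sq^q_i}}$, i.e.\ condition~(2), while conclusion~(5) is exactly condition~(2) of the proposition relabelled as condition~(3) here. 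Conclusion~(2) of the proposition, namely $\sq^p \in \envC{\top}$, is simply discarded.

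There is really no obstacle: the only subtlety is resisting the temptation to instantiate $R$ and $R\pr$ by $\op{id}$ rather than by $\top$, which would needlessly require $\sq^q$ to contain no state-changing environment steps. Taking $\top$ instead makes the environment premise vacuous, while the simulation direction carries through because the identity on states still satisfies the composition inclusion.
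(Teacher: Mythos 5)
Your proof is correct and is essentially the paper's own argument: the paper introduces this corollary explicitly as "the following instance ($\op{id}$ for $r$) of Proposition~\ref{thm:corr-sim}", and your instantiation with $r = \op{id}$, $R = R' = \top$ and $\sigma = \stateOf{\sq^q_0}$ fills in the details accurately, including the correct observation that $\top$ (rather than $\op{id}$) must be chosen for the rely relations so that the environment condition is vacuous.
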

\section{A characterisation of the program equivalence}\label{Sb:prg-equiv}
Apart from its primary application later in Section~\ref{Sb:pcorr-rule}, 
Proposition~\ref{thm:corr-sim} is also a step towards an interpretation of what a program equivalence
$\peqv{p\hspace{-0.2pt}}{}{\hspace{-0.5pt} q}$ actually means in terms of the potential computations of $(\rho, p)$ and $(\rho\pr, q)$:
this is expressed by the following proposition with a proof
utilising Corollary~\ref{thm:corr-sim-id}. 
\begin{lemma}\label{thm:corr-id-iff}
  \stdsp$\pcorr{p\hspace{-0.4pt}}{}{\hspace{-0.5pt}q}$ holds iff there exists a function $\Phi$ sending each $\sq^q \hspace{-0.5pt}\in\hspace{0.2pt} \rpcs{q}{\rho\pr}$
  to a non-empty set of computations $\Phi(\sq^q)$ such that
  \begin{enumerate}
  \item[\emph{(1)}] $\sq \in\hspace{0.1pt} \rpcs{p}{\rho}$ and $|\sq| = |\sq^q|\:$ for all $\sq \in \Phi(\sq^q)$,
  \item[\emph{(2)}] $\stateOf{\sq_i} = \stateOf{\sq^q_i}\:$ for any $\stsp i < |\sq^q|$ and $\sq \in \hspace{-0.2pt}\Phi(\sq^q)$,
  \item[\emph{(3)}] $\sq_i = \skipp\stsp$ iff $\stsp\sq^q_i = \skipp\;$ for any $\stsp i < |\sq^q|$ and $\sq \in \Phi(\sq^q)$,  
  \item[\emph{(4)}] $\sq_i \stsp\estep\hspace{-2.7pt} \sq_{i+1}$ iff $\stsp\sq^q_i\stsp \estep\hspace{-2.7pt} \sq^q_{i+1}\;$ for any $\stsp i < |\sq^q| - 1$ and $\sq \in \Phi(\sq^q)$,
  \item[\emph{(5)}]  $\Phi(\prefix{i+1\hspace{-0.5pt}}{\sq^q}) \subseteq\hspace{1pt} \prefix{i+1\hspace{-0.5pt}}{\Phi(\sq^q)}\;$ for any $\stsp i< |\sq^q|$ 
  \end{enumerate}
  where $\prefix{i+1}{\Phi(\sq^q)}$ denotes the set of prefixes of length $i+\hspace{-1pt}1$ taken from each computation in $\Phi(\sq^q)$.
\end{lemma}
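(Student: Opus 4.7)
The plan is to establish both directions, with condition~(5) serving as the pivot of the argument.

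\emph{Forward direction.} Assuming $\pcorr{p}{}{q}$, I would take $\Phi(\sq^q)$ to be the set of all $\sq^p \in \rpcs{p}{\rho}$ of length $|\sq^q|$ that match $\sq^q$ statewise at every index, exhibit the same pattern of environment versus program steps, and satisfy $\pcorr{\progOf{\sq^p_j}}{}{\progOf{\sq^q_j}}$ at every $j$. Non-emptiness of $\Phi(\sq^q)$ and properties~(1), (2), (4) of the lemma then follow at once from Corollary~\ref{thm:corr-sim-id}. Property~(3) follows from the observation recorded just after Definition~\ref{def:pcorr} that both $\pcorr{x}{}{\skipp}$ and $\pcorr{\skipp}{}{x}$ admit only the solution $x = \skipp$. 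For property~(5), given any $\sq' \in \Phi(\prefix{i+1}{\sq^q})$ its last configuration has the form $(p', \stateOf{\sq^q_i})$ with $\pcorr{p'}{}{\progOf{\sq^q_i}}$; applying Corollary~\ref{thm:corr-sim-id} to the suffix $\suffix{i}{\sq^q} \in \rpcs{\progOf{\sq^q_i}}{\rho\pr}$ produces a matching computation of $p'$ beginning at $(p', \stateOf{\sq^q_i})$, and gluing it onto $\sq'$ at the shared configuration yields an element of $\Phi(\sq^q)$ whose length-$(i{+}1)$ prefix equals $\sq'$.

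\emph{Backward direction.} Assume that such a $\Phi$ exists and define
\[
X = \{(p', q') \mid \exists\, \sq^q \in \rpcs{q}{\rho\pr},\, \sq^p \in \Phi(\sq^q),\, \sigma.\;\; \sq^q_{|\sq^q|-1} = (q',\sigma)\; \wedge\; \sq^p_{|\sq^q|-1} = (p',\sigma)\}.
\]
The pair $(p,q)$ belongs to $X$ by the singleton computation $(q,\sigma_0)$ and any element of $\Phi((q,\sigma_0))$, whose only program component must equal $p$ by~(1). Conditions~(2) and~(3) of Definition~\ref{def:corr} come directly from property~(3) of the lemma. For the simulation clause~(1) of Definition~\ref{def:corr}, suppose $(p',q') \in X$ with witnesses $\sq^q$ of length $n$ and $\sq^p \in \Phi(\sq^q)$ ending at $(q',\sigma)$ and $(p',\sigma)$, and consider a step $\rpstep{\rho\pr}{(q',\sigma)}{(q'',\sigma')}$. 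Extending $\sq^q$ by this step yields a computation $\sq^{q,\mathrm{new}}$ of length $n{+}1$ in $\rpcs{q}{\rho\pr}$, and property~(5) applied to $\sq^{q,\mathrm{new}}$ at index $n{-}1$ gives $\Phi(\sq^q) \subseteq \prefix{n}{\Phi(\sq^{q,\mathrm{new}})}$, so $\sq^p$ is the length-$n$ prefix of some $\sq^{p,\mathrm{ext}} \in \Phi(\sq^{q,\mathrm{new}})$. In particular $\sq^{p,\mathrm{ext}}_{n-1} = (p',\sigma)$, and properties~(2) and~(4) force the transition at position $n{-}1$ in $\sq^{p,\mathrm{ext}}$ to be a program step $\rpstep{\rho}{(p',\sigma)}{(p'',\sigma')}$ for some $p''$; the witnesses $\sq^{q,\mathrm{new}}$ and $\sq^{p,\mathrm{ext}}$ then place $(p'',q'')$ in $X$.

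\emph{Main obstacle.} Condition~(5) is the hinge of the whole argument. In the forward direction it dictates the very shape of $\Phi$ and compels one to invoke Corollary~\ref{thm:corr-sim-id} on suffixes in order to continue a matched prefix; in the backward direction it is precisely what allows one to lift a matched pair of finite computations by one further program step, so that the finite witnesses collected in $X$ actually yield a simulation in the sense of Definition~\ref{def:corr}.
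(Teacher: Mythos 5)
Your overall route coincides with the paper's: the same choice of $\Phi$ in the forward direction with the gluing argument via Corollary~\ref{thm:corr-sim-id} for condition~(5), and essentially the same relation $X$ in the backward direction, extended one step at a time using~(5). The forward direction is fine as written.

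The backward direction, however, has a genuine gap. Condition~(1) of Definition~\ref{def:corr} with $r=\op{id}$ quantifies over \emph{all} pairs $(\sigma_1,\sigma_2)\in\op{id}$, i.e.\ you must match a program step $\rpstep{\rho\pr}{(q\pr,\tau)}{(q\prr,\tau\pr)}$ departing from an \emph{arbitrary} state $\tau$, not merely from the state $\sigma$ in which your witnessing computations $\sq^q$ and $\sq^p$ happen to end. Your construction of $\sq^{q,\mathrm{new}}$ appends the program step directly to $\sq^q$, which is only a legal computation when the step departs from the final state $\sigma$ of $\sq^q$; for $\tau\neq\sigma$ the extension you describe is not a computation at all, and the simulation clause is left unverified for those states. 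The paper's proof closes exactly this hole by extending $\sq^q$ by \emph{two} configurations, $\sq^q\estep(\progOf{\sq^q_n},\tau)\pstep(q\prr,\tau\pr)$ --- an environment step may change the state arbitrarily, so this reaches the required $\tau$ first --- and then conditions~(2) and~(4) force the $\Phi$-image to perform a matching environment step followed by a matching program step $\rpstep{\rho}{(p\pr,\tau)}{(p\prr,\tau\pr)}$. Once you insert this environment step, the rest of your argument (the application of~(5) to recover an extension of $\sq^p$, and the placement of $(p\prr,q\prr)$ in $X$) goes through, with the index bookkeeping shifted by one.
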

\begin{proof}
  To show that the condition $\pcorr{\hspace{-0.5pt}p\hspace{-0.1pt}}{}{\hspace{-0.2pt}q}$ is necessary, suppose there is such $\Phi$ and let 
\[
X\hspace{3pt} \defeq \hspace{3pt} \{(\progOf{\sq_{|\sq^q|-1}},\stsp \progOf{\sq^q_{|\sq^q|-1}}) \hspace{3.9pt}|\hspace{3.9pt} \sq\mystrut^q\hspace{-0.2pt} \in \rpcs{q}{\rho\pr} \wedge \sq\hspace{0.5pt} \in \Phi(\sq^q)\}.
\]
Due to (1) and (2) we have $\Phi(q, \sigma) = \{(p, \sigma)\}$ for any state $\sigma$
(where $(q, \sigma)$ and $(p, \sigma)$ shall be understood as computations of length $1$), hence
$(p, q)\hspace{-0.1pt} \in\hspace{-1pt} X$ holds.
Next, we show that $X$ is a simulation w.r.t. $\hspace{-2.5pt}\rho$, $\rho\pr$ and $\op{id}$ (\cf\stdsp Definition~\ref{def:corr}).
To this end
we may assume $\sq^q\hspace{-0.9pt} \in\hspace{-0.1pt} \rpcs{q}{\rho\pr}$, $\sq\hspace{-0.2pt} \in\hspace{-0.5pt}  \Phi(\sq^q)$, $|\sq^q|\hspace{-1pt} = n +\hspace{-0.1pt} 1$ with $n\in\hspace{-0.2pt}\naturals$
and, moreover, a program step $\rpstep{\rho\pr\hspace{-0.7pt}}{\hspace{-1pt}(\progOf{\sq^q_n},\stsp \sigma)\hspace{1pt}}{\hspace{-1pt}(v, \stsp\sigma\pr)}$ 
for which we have to provide a matching step from the configuration $(\progOf{\sq_n},\stsp \sigma)$.
Let $\sq^{q_1}$ extend $\sq^q$ by two configurations as follows:
$
\sq^q\hspace{1pt} \estep\hspace{-1pt} (\progOf{\sq^q_n}, \stsp\sigma)\hspace{1pt} \pstep\hspace{-1pt} (v, \stsp\sigma\pr)
$.
Thus, we have $\sq^{q_1}\hspace{-1.7pt}\in \hspace{-0.5pt}\rpcs{q}{\rho\pr}$ and $\prefix{n+1}{\sq^{q_1}}\hspace{-1.7pt} =\hspace{-1pt} \sq^q$ so that by (5)
there exists a computation $\sq\pr\hspace{-1pt} \in\hspace{-0.7pt} \Phi(\sq^{q_1})$ with $\prefix{n+1}{\sq\pr}\hspace{-1pt} = \hspace{-1pt}\sq$.
Since $\sq\pr$ also meets the conditions (1)--(4), it must be of the form
$\sq\hspace{1pt} \estep\hspace{-1pt} (\progOf{\sq_n},\stsp \sigma) \hspace{1pt}\pstep\hspace{-1pt} (u,\stsp \sigma\pr)$
giving us the matching step $\rpstep{\rho\stsp}{\hspace{-0.4pt}(\progOf{\sq_n},\stsp \sigma)\stsp}{\hspace{0.2pt} (u,\stsp \sigma\pr)}$ 
because
$(u, v) \stnsp =\stnsp (\progOf{\sq\pr_{n+2}}, \progOf{\sq^{q_1}_{n+2}})$ \ie $(u, v) \in\hspace{-1pt} X$. 
For any $\sq^q\hspace{-1pt} \in\hspace{-0.5pt} \rpcs{q}{\rho\pr}$ and any $\sq\hspace{-0.5pt} \in \hspace{-1pt}\Phi(\sq^q)$ we can lastly conclude using (3)
that $\progOf{\sq_{|\sq^q|-1}} \hspace{-0.5pt}=\hspace{-0.5pt} \skipp\stdsp$ holds iff $\stdsp\progOf{\sq^q_{|\sq^q|-1}} \hspace{-0.5pt}=\hspace{-0.5pt} \skipp$ does.

In order to show that $\pcorr{\hspace{-0.5pt}p\hspace{-0.1pt}}{}{ q}$ is sufficient,
let $\Phi$ send each 
$\sq^q \in \rpcs{q}{\rho\pr}$ to the non-empty set
\[
\{\sq \in\hspace{0.2pt} \rpcs{p}{\rho}\hspace{3pt} |\hspace{3.5pt} \sq \mbox{ satisfies the conditions of Corollary~\ref{thm:corr-sim-id} w.r.t. }\! \sq^q \}
\]
such that (1)--(4) clearly hold.
Regarding (5), let $i < |\sq^q|$ and $\sq^{p_1}\hspace{-1pt} \in\hspace{-0.5pt} \Phi(\prefix{i+1}{\sq^q})$.
Since Corollary~\ref{thm:corr-sim-id} provides stepwise correspondence of $\sq^{p_1}$ and $\prefix{i+1}{\sq^q}$, 
we can particularly infer $\pcorr{\progOf{\sq^{p_1}_i}\hspace{-1pt}}{}{\hspace{-0.5pt}\progOf{\sq^q_i}}$.
Further, the suffix $\suffix{i}{\sq\mystrut^q}$ starts with $\sq\mystrut^q_i$ and is not empty so that
$\suffix{i}{\sq\mystrut^q}\hspace{-0.5pt} \in \rpcs{\progOf{\sq\mystrut^q_i}}{\rho\pr}$ holds. Resorting to Corollary~\ref{thm:corr-sim-id} once more,
we can replay this suffix by some $\sq^{p_2}\hspace{-1.2pt} \in\hspace{-0.5pt} \rpcs{\progOf{\sq^{p_1}_i}}{\rho}$ with $|\sq^{p_2}| \stnsp=\stnsp |\sq^q| -\hspace{0.2pt} i$.
Note that this construction ensures $\sq^{p_1}_i\hspace{-1.9pt} = \hspace{-0.5pt}\sq^{p_2}_0$,
\ie the last configuration of  $\sq^{p_1}$ is the first of $\sq^{p_2}$.
Thus, $\sq^{p_1}\hspace{-1.7pt} \scomp\hspace{0.1pt} \sq^{p_2} \in \rpcs{p}{\rho}$ is well-defined and
we have $\prefix{i+1}{(\sq^{p_1}\hspace{-1.7pt} \scomp\hspace{0.1pt} \sq^{p_2})} = \sq^{p_1}$
and, moreover,
$\sq^{p_1}\hspace{-1.9pt} \scomp\hspace{0.1pt} \sq^{p_2} \hspace{0.2pt}\in \Phi(\sq^q)$ since
$|\sq^{p_1}\hspace{-1.5pt} \scomp\hspace{0.1pt} \sq^{p_2}| = |\sq^{p_1}| + |\sq^{p_2}| - 1 = (i\hspace{0.7pt} + 1) + (|\sq^q| -\stdsp i) - 1 = |\sq^q|$. 
\end{proof}

Reflecting on this result, $\peqv{p}{\stnsp}{q}\stsp$ can be expressed by a second-order formula which
relates computations of $p$ and $q$ without self-references in contrast to the characterisation from Section~\ref{S:gfp} as an intersection of two greatest fixed points.
On the other hand, here we have the quite intricate condition (5) capturing that for any  $\sq^q \hspace{-1.2pt}\in\hspace{-1pt} \rpcs{q}{\rho\pr}$ we can find
some  $\sq^p\hspace{-1.5pt} \in\hspace{-0.9pt} \rpcs{p}{\rho}$ which not only corresponds to $\sq^q$ by means of (1)--(4) but is also
extendable in response to any possible extension of $\sq^q$.
Moreover, note that it is in principle possible to have some $\sq \in\hspace{-0.5pt} \Phi(\sq_1\hspace{-1.7pt} \scomp\hspace{-0.3pt} \sq_2)$
with $\prefix{n}{\sq}\hspace{0.2pt} \notin\hspace{-0.5pt} \Phi(\sq_1)$ where $n = |\sq_1|$ and $|\sq_2| > 1$ since the condition (5) is not an equality.

It is finally worth noting that projecting a computation $\sq_0, \ldots, \sq_n$ to the sequence of states $\stateOf{\sq_0}, \ldots, \stateOf{\sq_n}$ 
can be viewed as constructing a trace of the computation. Thus, by Proposition~\ref{thm:corr-id-iff}
from the assumption
$\peqv{\hspace{-0.1pt}p\hspace{-0.1pt}}{}{\hspace{-0.5pt}q}\stsp$
follows that the sets of such traces of $(\rho, p)$ and $(\rho\pr, q)$ are equal,
as one shall expect from equivalent programs.
Although such traces may be practical for various purposes, they omit important pieces of structural data,
particularly used to define the notion of fair computations in Section~\ref{Sb:faircomp}.
\chapter{A Hoare-style Rely/Guarantee Program Logic}\label{S:prog-log}
The following chapter applies most of the concepts presented so far and
is devoted to a Hoare-style program logic~\cite{Hoare}
for structured reasoning upon properties of programs with interleaved computations. 
\section{The extended Hoare triples}
Within the logic, program properties are expressed by means of the \emph{extended Hoare triples}\index{extended Hoare triple}, denoted by
$\rgvalidr{\rho}{R}{\hspace{-1pt}P}{\hspace{-0.1pt}p\hspace{0.2pt}}{Q}{\hspace{-0.7pt}G}$ where the component $R$ is called \emph{the rely}\index{condition!rely}, $P$ -- \emph{the precondition},
$Q$ -- \emph{the postcondition}, and $G$ -- \emph{the guarantee}\index{condition!guarantee}.
Following Stirling~\cite{STIRLING1988347}, the word \emph{extended} emphasises that such triples arise  
by a generalisation of the Hoare triples used in the method by Owicki and Gries~\cite{Owicki_Gries_76} 
where non-interference conditions are derived from annotated assertions. With the extended triples, such assertions are instead neatly captured
by the components that constitute the extension:
the rely and the guarantee namely.
\begin{definition}\label{def:rgval}
Let $(\rho, p)$ be a program, $R$ and $G$ -- state relations, 
whereas $P$ and $Q$ -- state predicates. Then the condition $\rgvalidr{\rho}{R}{\hspace{-1.7pt}P}{\hspace{-0.2pt}p\hspace{-0.2pt}}{Q}{\hspace{-1.5pt}G}$ holds iff
$\stdsp\envC{\hspace{-1pt}R}\hspace{0.4pt} \cap\hspace{0.4pt} \inC{\hspace{-0.9pt}P}\hspace{-0.9pt} \cap\hspace{0.2pt} \rpcs{p}{\rho}\stsp \subseteq\stsp \outC{\hspace{-0.3pt} Q}\hspace{0.2pt} \cap \progC{\hspace{-0.3pt} G}\stsp$ does.
\end{definition}
Next definition explicitly addresses properties of infinite computations too:
\begin{definition}\label{def:rgival}
The condition $\rgvalidri{\rho}{R}{\hspace{-1.5pt}P}{\hspace{-0.2pt}p\hspace{-0.2pt}}{Q}{\hspace{-1.5pt}G}$ holds iff
\begin{enumerate}
\item[(1)] $\envC{\hspace{-1.2pt}R}\hspace{0.4pt} \cap\hspace{0.4pt} \inC{\hspace{-1.2pt}P} \cap\hspace{0.5pt} \rpcsi{p}{\rho}\stdsp \subseteq\stsp \outC{\hspace{-0.3pt} Q} \cap \progC{\hspace{-0.5pt} G}\stsp $ and
\item[(2)] $\rgvalidr{\rho}{R}{\hspace{-1.5pt}P}{\hspace{-0.2pt}p\hspace{-0.2pt}}{Q}{\hspace{-1.5pt}G}$.
\end{enumerate}
\end{definition}
As shown next, these two definitions turn out to be logically equivalent so all rules in this chapter can refer to Definition~\ref{def:rgval} without any loss of generality. 
\begin{lemma}
$\rgvalidr{\rho}{R}{\hspace{-1.5pt}P}{p}{Q}{\hspace{-1.2pt}G}\stsp$ iff $\stsp\rgvalidri{\rho}{R}{\hspace{-1.5pt}P}{p}{Q}{\hspace{-1.2pt}G}$.
\end{lemma} 
\begin{proof}
  That the condition $\rgvalidr{\rho}{R}{\hspace{-1pt}P}{\hspace{0.4pt}p\hspace{0.4pt}}{Q}{\hspace{-0.9pt}G}\stsp$ is necessary
  follows immediately from Definition~\ref{def:rgival} whereas
  the opposite direction amounts to showing that 
  $\envC{\hspace{-1.5pt}R}\hspace{0.2pt} \cap\hspace{0.2pt} \inC{\hspace{-1.5pt}P}\hspace{-0.9pt} \cap \rpcsi{p}{\rho}\stsp \subseteq\stsp \outC{\hspace{-0.5pt} Q}\hspace{0.5pt}\hspace{-0.5pt} \cap\hspace{0.1pt} \progC{\hspace{-0.5pt} G}\stsp$ is implied by $\rgvalidr{\rho}{R}{\hspace{-1.5pt}P}{\hspace{-0.5pt}p\hspace{-0.5pt}}{Q}{\hspace{-1.5pt}G}$.
To this end let $\sq$ be an infinite computation in $\rpcsi{p}{\rho}$ with $\sq\hspace{0.2pt} \in\hspace{0.2pt} \envC{\hspace{-1pt} R}\hspace{0.5pt}  \cap\hspace{0.4pt} \inC{\hspace{-1pt} P}$.

To show $\sq\hspace{-0.2pt} \in\hspace{-0.5pt} \outC{\hspace{-0.3pt}Q}$ we can assume $\progOf{\sq_i} = \skipp$ with some $i\hspace{0.5pt} \in\stnsp \naturals$.
Then also the prefix $\prefix{i+1}{\sq}\hspace{-0.7pt} \in\hspace{-0.7pt} \rpcs{p}{\rho}$ satisfies $\envC{\hspace{-1.5pt} R}$ and $\inC{\hspace{-1.5pt} P}$ 
such that $\prefix{i+1}{\sq}\hspace{-0.5pt} \in\hspace{-1pt} \outC{\hspace{-0.4pt}Q}$ follows from $\rgvalidr{\rho}{R}{\hspace{-1pt}P}{\hspace{-0.5pt}p\hspace{-0.5pt}}{Q}{\hspace{-1pt}G}$. 
Since the last element of $\prefix{i+1}{\sq}$ is the $\skipp$-configuration $\sq_i$
we get some $j\stsp \le\stsp i$ with $\progOf{\sq_j} = \skipp$ and $\stateOf{\sq_j} \in\hspace{-0.4pt} Q$. 

Next, to show $\sq \in \hspace{-0.5pt}\progC{\hspace{-0.1pt}G}\stsp$ we proceed very similarly. 
Assuming a program step $\rpstep{\rho}{\sq_{i}\hspace{0.7pt}}{\hspace{-2.5pt}\sq_{i+1}}$ for some $i\hspace{0.7pt} \in\hspace{-0.2pt} \naturals$, we now consider
the prefix $\prefix{i+2}{\sq}\hspace{0.2pt} \in\hspace{-0.2pt} \rpcs{p}{\rho}$ which satisfies $\envC{\hspace{-1.2pt} R}$ and $\inC{\hspace{-1.2pt} P}$. 
Using $\rgvalidr{\rho}{R}{\hspace{-1.9pt}P}{\hspace{-0.9pt}p\hspace{-0.9pt}}{Q}{\hspace{-1.5pt}G}$ we can first infer $\prefix{i+2}{\sq}\hspace{0.2pt} \in \progC{\hspace{-0.2pt} G}$
to ultimately conclude $(\stateOf{\sq_{i}},\stsp \stateOf{\sq_{i+1}}) \in\hspace{-0.5pt} G$.
\end{proof}
\section{The program correspondence rule}\label{Sb:pcorr-rule}
In essence, this rule gives sufficient conditions for turning program correspondences
into implications on program logical properties. 
From now on, let $\rimg{R\hspace{1.2pt}}{\hspace{0.2pt}X}$ denote $\stsp\{b \hspace{3.7pt}|\hspace{3pt}\exists a\hspace{0.2pt}\in\hspace{-1pt} X. \hspace{2.5pt}(a, b)\hspace{0.2pt} \in\hspace{-0.9pt} R\}$,
\ie the image\index{relational!image} of a set $X$ under a relation $R$. 
\begin{lemma}\label{thm:pcorr-rule}
Assume $\pcorr{p\hspace{0.2pt}}{r}{\hspace{-0.5pt} q}$, $\rgvalidr{\rho\stsp}{R}{\hspace{-1.5pt}P}{\hspace{-0.9pt}p\hspace{-0.5pt}}{Q}{\hspace{-1pt}G}$ and
\begin{enumerate}
\item[\emph{(1)}] $P\pr \subseteq\stdsp \rimg{r\hspace{0.9pt}}{\hspace{0.7pt}P}$,
\item[\emph{(2)}] $\rcomp{r\hspace{1.5pt}}{\hspace{0.7pt}R\pr}\stsp \subseteq \stsp\rcomp{R\hspace{1.7pt}}{\hspace{1.7pt} r}$,
\item[\emph{(3)}] $\rimg{r\hspace{0.7pt}}{\hspace{1.1pt}Q} \stdsp\subseteq\stsp Q\pr$, 
 \item[\emph{(4)}] $\rcomp{\rcomp{\rconv{r}\hspace{-0.2pt}}{\hspace{1.7pt} G \hspace{1.2pt}}}{\hspace{1.7pt} r}\stdsp \subseteq\stsp G\pr$.
\end{enumerate}
Then $\rgvalidr{\rho\pr}{R\pr}{\hspace{-1.5pt}P\pr}{\hspace{-0.5pt}q\hspace{-0.5pt}}{Q\pr}{\hspace{-1.5pt}G\pr}$.
\end{lemma}
\begin{proof}
Let $\sq^q\hspace{-0.4pt} \in\hspace{0.1pt} \rpcs{q}{\rho\pr}\hspace{-0.2pt} \cap\stdsp \envC{\hspace{-0.2pt}R\pr} \cap\stdsp \inC{\hspace{-0.4pt}P\pr}$.
Then by (1) there is 
a state $\sigma \in\stnsp P$ such that $(\sigma,\stsp \stateOf{\sq^q_0})\hspace{-0.2pt} \in\hspace{-0.3pt} r$.
With (2), Proposition~\ref{thm:corr-sim} further provides the existence of 
a potential computation $\sq^p\hspace{-0.9pt} \in\hspace{-0.2pt} \rpcs{p}{\rho}$ with $|\sq^p| = |\sq^q|$ satisfying moreover
\begin{enumerate}
\item[(a)] $\sq^p \in\hspace{0.2pt} \envC{\hspace{-1.5pt}R}\hspace{0.5pt} \cap\hspace{0.2pt} \inC{\hspace{-1.1pt}P}$,
\item[(b)] $(\stateOf{\sq^p_i},\stsp \stateOf{\sq^q_i}) \in\hspace{-0.2pt} r\stsp$ for any $i < |\sq^q|$,
\item[(c)] $\pcorr{\progOf{\sq^p_i}\hspace{-0.2pt}}{r}{\hspace{-1.5pt}\progOf{\sq^q_i}}\stsp$ for any $i < |\sq^q|$,
\item[(d)] $\sq^p_i\hspace{1pt} \estep\hspace{-3pt} \sq^p_{i+1}\stsp$ iff $\stsp\sq^q_i\hspace{1pt} \estep\hspace{-3pt} \sq^q_{i+1}\stsp$ for any $i < |\sq^q| - 1$.   
\end{enumerate}
With (a) the assumed triple $\rgvalidr{\rho}{R}{\hspace{-1.5pt}P}{p}{Q}{\hspace{-1.5pt}G}$ gets applicable to $\sq^p$ yielding 
$\sq^p\hspace{-0.5pt} \in \outC{\hspace{-0.2pt}Q}\hspace{-0.1pt} \cap \progC{\hspace{-0.7pt}G}$ which is in turn used to establish
$\sq^q\hspace{-0.4pt} \in \outC{\hspace{-0.2pt}Q\pr}\hspace{-0.5pt} \cap \progC{\hspace{-0.2pt}G\pr}$.
 
First, suppose $\progOf{\sq^q_i}\hspace{-1.5pt} =\hspace{-0.5pt} \skipp$ with some $i\stnsp <\hspace{-1.2pt} |\sq^q|$.
Then $\progOf{\sq^p_i}\hspace{-1.5pt} =\hspace{-0.5pt} \skipp$ follows by (c).
Hence, $\sq^p \hspace{-1.2pt}\in\hspace{-0.7pt} \outC{\hspace{-0.2pt}Q}$ entails some $j \le i$ such that
$\sq^p_j$ is a $\skipp$-configuration and $\stateOf{\sq^p_j}\hspace{0.2pt} \in\hspace{-0.5pt} Q$. 
From (b) and (c) we further infer that $\sq^q_j$ must be also a $\skipp$-configuration and $(\stateOf{\sq^p_j},\stsp \stateOf{\sq^q_j})\hspace{-0.2pt} \in\hspace{-0.5pt} r$ holds.
That is, $\stateOf{\sq^q_j}$ is in the image of $Q$ under $r$ and hence $ \stateOf{\sq^q_j}\in\hspace{-0.2pt} Q\pr$ follows by (3).

Second, in order to show $\sq^q\hspace{-1pt} \in\hspace{-0.9pt} \progC{\hspace{-0.4pt}G\pr}\stsp$ suppose there is some  $i < |\sq^q| - 1$ with
a program step $\rpstep{\rho\pr}{\sq^q_i\hspace{1pt}}{\hspace{-2pt}\sq^q_{i+1}}$.
Hence from (d) follows that there must be also a program step at the same position on $\sq^p$,
\ie $\stsp\rpstep{\rho\hspace{0.1pt}}{\hspace{-0.5pt}\sq^p_i\hspace{1pt}}{\hspace{-2.7pt}\sq^p_{i+1}}$,
so that
$\sq^p\hspace{-1pt} \in\hspace{-0.7pt} \progC{\hspace{-0.1pt}G}$ entails $(\stateOf{\sq^p_i},\stsp \stateOf{\sq^p_{i+1}})\hspace{-0.5pt} \in\hspace{-0.7pt} G$.
Then $(\stateOf{\sq^q_i}, \stsp\stateOf{\sq^q_{i+1}})\hspace{-0.5pt} \in\hspace{-0.5pt} G\pr$ follows by (4) since both,
$(\stateOf{\sq^p_i},\stsp \stateOf{\sq^q_{i}})\hspace{-0.2pt} \in\hspace{-0.2pt} r$ and $(\stateOf{\sq^p_{i+1}},\stsp \stateOf{\sq^q_{i+1}})\hspace{-0.5pt} \in\hspace{-0.2pt} r$,
are provided by (b).
\end{proof}
This proposition can also be viewed as a generalisation of the principle, usually called `the rule of consequence':
from a valid triple $\rgvalidr{\rho\hspace{0.1pt}}{R}{\hspace{-0.7pt}P}{\hspace{-0.2pt}p\hspace{-0.2pt}}{Q}{\hspace{-0.7pt}G}$
one obtains valid triples replacing $R$ or $P$ by a stronger relation/condition whereas $G$ or $Q$ -- by a weaker.
Another key result of Proposition~\ref{thm:pcorr-rule} is: 
\begin{corollary}\label{thm:peqv-rg}
  \hspace{-5.7pt}  If $\peqvS{p\hspace{-0.5pt}}{}{\hspace{-0.7pt}q}$ then
  $\rgvalidr{\rho\hspace{-0.5pt}}{R}{\hspace{-0.4pt}P}{\hspace{0.2pt}p\hspace{0.3pt}}{Q}{\hspace{-0.2pt}G}\hspace{-1pt}$ iff
  $\rgvalidr{\rho\hspace{-0.5pt}}{R}{\hspace{-0.3pt}P}{\hspace{0.2pt}q\hspace{0.2pt}}{Q}{\hspace{-0.2pt}G}$.
\end{corollary}
\begin{proof}
With $\op{id}$ in place of $r$, the conditions (1)--(4) of Proposition~\ref{thm:pcorr-rule}
hold by the reflexivity of $\subseteq$. We apply this instance to $\pcorrS{p}{}{q}$ and to $\pcorrS{q}{}{p}$.
\end{proof}
In other words, equivalent programs exhibit the same rely/guarantee properties.
Taking for example the jump-transformation
\[
\peqvS{\ite{\stsp C}{\stsp p\stsp}{\ret{\hspace{-0.2pt} j}\stsp}}{}{\cjump{\negate{C}}{j}{\stsp p\stsp}}
\]
established by Proposition~\ref{thm:cond-norm2}, any property derived within the program logic for
$\ite{\stsp C}{\stsp p\stsp}{\ret{\hspace{-0.2pt} j}\hspace{0.4pt}}$ applies to the counterpart on the \emph{rhs} as well.
\section{Rules for $\skipp\stnsp$ and $\stsp\basic$}\label{Sb:skip-and-basic}
This section commences the syntax-driven part of the rely/guarantee program logic which comprises
the rules whose successive application allows us to turn triples into verification conditions
using backwards reasoning guided by the program syntax.
Note however that the syntax-driven approach
is hardly applicable to programs carrying only little syntactic structure:
programs that deploy while-statements are decisively more amenable to generation of verification conditions
than those which use only jumps.
This framework is therefore basically not aimed at verifying properties of any possible program
but only of those that are provably equivalent to some jump-free program for which the verification conditions are actually generated.
A particular consequence of this indirection is that 
a rule for $\com{cjump}$ as well as code retrieving functions are omitted in the syntax-driven part.
\begin{lemma}\label{thm:skip-rule}
$\rgvalid{R}{\hspace{-1.5pt}P}{\hspace{-0.5pt}\skipp\hspace{-0.5pt}}{P}{\hspace{-1pt}G}.$
\end{lemma}
\begin{proof}
A computation $\sq \in \pcs{\skipp}$ has no program steps and hence satisfies any program condition, 
whereas $\sq \in\hspace{-0.1pt} \outC{\hspace{-0.1pt}P}$ holds because $\sq \in\hspace{-0.2pt} \inC{\hspace{-0.2pt}P}$ is assumed.
\end{proof}
\begin{lemma}\label{thm:basic-rule}
Assume
\begin{enumerate}
\item[\emph{(1)}] $\rimg{R\stsp}{\hspace{0.2pt}P}\stsp \subseteq P$,
\item[\emph{(2)}] $P\subseteq\stsp \{\sigma \hspace{3pt}|\hspace{3.5pt} f\hspace{1.7pt}\sigma \hspace{0.2pt}\in Q\hspace{-0.2pt} \wedge (\sigma, f\hspace{1.7pt}\sigma) \in\hspace{-0.2pt} G \}$.
\end{enumerate}
Then $\;\rgvalid{R}{\hspace{-1.5pt}P}{\basic\hspace{2pt} f}{Q}{\hspace{-1.2pt}G}$.
\end{lemma}
\begin{proof}
Let $\sq\hspace{0.1pt} \in\hspace{0.1pt} \envC{\hspace{-0.4pt}R}\stdsp \cap\stdsp \inC{\hspace{-0.2pt}P} \cap\stdsp \pcs{\basic\hspace{2pt} f}$.
If $\sq$ does not perform any program steps then $\sq\hspace{-0.1pt} \in\hspace{-0.4pt} \outC{\hspace{-0.4pt}Q} \cap \progC{\hspace{-0.4pt}G}$ holds trivially.
Otherwise let $\sq_n\hspace{0.5pt}\pstep\hspace{-2.5pt} \sq_{n+1}$ be the first program step, \ie\stdsp
$\sq_n\hspace{-1pt} = (\basic\hspace{2.2pt} f,\stdsp \sigma_n)$ and $\sq_{n+1}\hspace{-1pt} = (\skipp,\stdsp f\hspace{1.7pt} \sigma_n)$ must hold with some $\sigma_n$
and, moreover, for any $m\hspace{-0.9pt} < \hspace{-0.2pt}n$ we have only environment steps $\stsp\sq_m\hspace{-0.4pt} \estep\hspace{-1.9pt} \sq_{m+1}$.
Thus, by (1) and $\stsp\stateOf{\sq_0}\hspace{-0.1pt} \in\hspace{-0.7pt} P$ we can further induce $\sigma_n \hspace{-0.9pt}\in\hspace{-0.9pt} P$   
so that $f\hspace{1.7pt}\sigma_n\hspace{-0.5pt} \in \hspace{-0.5pt}Q$ and $(\sigma_n,\stsp f\hspace{1.7pt}\sigma_n)\hspace{-0.5pt} \in \hspace{-0.5pt}G\hspace{0.2pt}$
follow by (2).
Using this, $\sq\hspace{-0.5pt} \in\hspace{-0.7pt} \progC{\hspace{-0.3pt}G}$ can be inferred because
$\sq$ does not have any other program steps from $n\hspace{-0.1pt}+\hspace{-0.2pt}1$ on, 
\ie $\suffix{n+1}{\sq}\hspace{-1pt} \in\hspace{-1.2pt} \pcs{\skipp}$, 
whereas  $\sq\hspace{-1.2pt} \in\hspace{-1.5pt} \outC{\hspace{-0.7pt}Q}\stsp$ follows since $\sq_{n+1}$ is the first $\skipp$-configuration on $\sq$.
\end{proof}
The assumption (1) in Proposition~\ref{thm:basic-rule} is usually called 
a \emph{stability}\index{stability} assumption: any environment step retains the precondition $P$, which is a general approach to reasoning about interference freedom.
It is also worth noting that $\skipp$ is the only language constructor that can get away without a stability assumption as Proposition~\ref{thm:skip-rule} has shown.
\section{A rule for the parallel composition operator}\label{Sb:parallel-rule}
The following rule and the proof are, in essence, extensions 
of the rule and the proof given by Stirling~\cite{STIRLING1988347} for the binary operator $p\stnsp \parallel q$. 
\begin{lemma}\label{thm:parallel-rule}
Let $I\hspace{-1pt} = \{1,\ldots,m\}$ with $m\hspace{-0.7pt} >\hspace{-0.5pt} 0$. Furthermore, let $G$ be a reflexive state relation and assume
\begin{enumerate}
\item[\emph{(1)}] $P\hspace{0.5pt} \subseteq\stsp \bigcap\nolimits_{k \in I}\hspace{-1.2pt} P_i $,
\item[\emph{(2)}] $R\hspace{0.5pt}  \subseteq\stsp \bigcap\nolimits_{k \in I}\hspace{-1.2pt} R_i$,
\item[\emph{(3)}] $\rgvalid{R_k}{\hspace{-2pt}P_k}{\hspace{-0.5pt}p_k\hspace{-1pt}}{Q_k}{\bigcap\nolimits_{l \in I\stsp \setminus \{k\}}\hspace{-2pt} R_l\hspace{0.2pt} \cap G}\stsp$ for any $k \in I$,
\item[\emph{(4)}]  $\rimg{R_k\hspace{-0.2pt}}{\hspace{1pt}Q_k} \hspace{-0.1pt}\subseteq\hspace{0.2pt}  Q_k \stsp$ for any $k\hspace{0.4pt} \in\hspace{-0.2pt} I$,
\item[\emph{(5)}] $\bigcap\nolimits_{k \in I}\hspace{-1pt} Q_k \hspace{0.5pt} \subseteq Q$.
\end{enumerate}
Then $\:\rgvalid{R}{\hspace{-1.5pt}P}{\hspace{-4pt}\Parallel{\hspace{-1.2pt} p_1,\ldots,p_m}}{Q}{\hspace{-1.2pt}G}$. 
\end{lemma}
\begin{proof}
  Let $\sq\hspace{-0.5pt} \in\hspace{-0.5pt} \envC{\hspace{-1.5pt}R} \cap\hspace{0.4pt} \inC{\hspace{-1.4pt}P}\hspace{-1pt} \cap\stsp \pcs{\hspace{1.2pt}\Parallel{\hspace{-1.7pt}p_1,\ldots,p_m}}$
and let $\sq\pr$ be the longest prefix of $\sq$ that does not contain a $\skipp$-configuration.
First, note that if $\sq$ does not contain such a configuration at all then $\sq\pr\hspace{-1.5pt} =\stnsp \sq$ holds.
Second, as $\sq\pr$ is not empty we have
$\sq\pr\hspace{-0.5pt} \in\hspace{0.4pt} \envC{\hspace{-1pt}R}\hspace{0.9pt} \cap\hspace{0.7pt} \inC{\hspace{-1pt}P}\hspace{-0.1pt} \cap\stsp \pcs{\hspace{1.2pt}\Parallel{\hspace{-1.2pt}p_1,\ldots,p_m}}$. 
Third, the program part of each configuration on $\sq\pr$ has the form $\Parallel{\hspace{-2.2pt}u_1,\ldots,u_m}$, \eg $\progOf{\sq\pr_0} =\stdsp \Parallel{\hspace{-1.5pt} p_1,\ldots,p_m}$.

For each $k\hspace{-0.5pt}\in\hspace{-1.2pt} I$ we can thus derive a computation $\sq^k\hspace{-1.9pt} \in\hspace{-0.5pt} \pcs{p_k}$ from $\sq\pr$ such that $|\sq^k| = |\sq\pr|$ for all
$k\hspace{0.5pt}\in\hspace{-0.2pt} I$
and moreover the following conditions are met:
\begin{enumerate}
\item[(a)] $\progOf{\sq\pr_i} =\; \Parallel{\!\progOf{\sq^{\scriptscriptstyle 1}_i}, \ldots, \progOf{\sq^m_i}}\stsp$ for any $i < |\sq\pr|$,
\item[(b)] $\stateOf{\sq\pr_i} = \stateOf{\sq^k_i}\stsp$ for any $i < |\sq\pr|$ and $k\hspace{0.1pt} \in\hspace{-0.2pt} I$,
\item[(c)] if $i < |\sq\pr| - 1$ then $\sq\pr_i \stsp\estep\hspace{-2.9pt} \sq\pr_{i+1}\stsp$ iff $\stsp\sq^k_i \stsp\estep\hspace{-2.9pt} \sq^k_{i+1}\stsp$
 holds for all $k\hspace{0.4pt} \in\hspace{-0.2pt} I$.
\end{enumerate}
Note that $\sq^k\hspace{-0.7pt} \in\hspace{0.5pt} \inC{\hspace{-1.1pt} P_k}\stdsp$ follows for any $k\hspace{0.5pt} \in\hspace{-0.1pt} I$ by (1) and (b).

Next, we show by contradiction that $\sq^k \hspace{-1.5pt}\in\hspace{-0.2pt} \envC{\hspace{-1.5pt}R_k}$ also holds for any $k \hspace{-0.2pt}\in\hspace{-0.9pt} I$.
Assuming the opposite, the set of indices $\bigcup_{k\in I}\stnsp M_k$ where
\[
M_k \hspace{2pt}\defeq\hspace{2pt} \{i \hspace{4pt}|\hspace{4pt} i < |\sq\pr| - 1 \stsp\wedge\stsp \sq^k_i \stsp\estep\hspace{-2.9pt} \sq^k_{i+1} \wedge\stsp (\stateOf{sq_i}, \stsp\stateOf{sq_{i+1}})\stsp \notin\hspace{-0.5pt} R_k \}
\]
is not empty. Then let $\mu$ be the least element of $\stsp\bigcup_{k\in I}\stnsp M_k$ and let $\stsp k_\mu\hspace{-2pt} \in\hspace{-1.7pt} I$
be chosen such that $\stsp \mu\hspace{0.4pt}  \in\hspace{-1pt}  M_{k_\mu}$.
Further, there must be some $n\hspace{0.4pt}  \in\hspace{-0.3pt}  I$ with a program step 
$\sq^n_\mu \hspace{1.5pt}\pstep\hspace{-2.5pt} \sq^n_{\mu+1}$:
would $\sq^n_\mu\hspace{1.5pt} \estep\hspace{-2.9pt}  \sq^n_{\mu+1}$ hold for all $n\hspace{-0.2pt}  \in\hspace{-0.5pt}I$
then we could first infer $\sq\pr_\mu \hspace{1.5pt} \estep\hspace{-2.5pt}  \sq\pr_{\mu+1}$ by (c),
then $(\stateOf{sq_\mu}, \stsp\stateOf{sq_{\mu+1}}) \hspace{-1.2pt} \in \hspace{-2.7pt} R$ since $\sq\hspace{-1.2pt} \in\hspace{-1.4pt} \envC{\hspace{-1.7pt} R}$,
and hence $(\stateOf{sq_\mu},\stsp \stateOf{sq_{\mu+1}})\hspace{-0.2pt} \in\hspace{-1.5pt} R_{k_\mu}$ by (2) in contradiction to $\mu\hspace{0.1pt} \in\hspace{-1.2pt}  M_{k_\mu}$.
Combined with the minimality of $\hspace{0.2pt}\mu\hspace{0.2pt}$ 
this means $\prefix{\mu+2}{\sq^n}\hspace{-0.9pt} \in\hspace{-0.15pt} \envC{\hspace{-0.55pt} R_n}\hspace{0.2pt}$.
With $\prefix{\mu+2}{\sq^n}\hspace{-0.2pt} \in\hspace{0.2pt} \inC{\hspace{-0.5pt} P_n}$
we obtain $\stsp\prefix{\mu+2}{\sq^n}\stnsp \in\stnsp \progC{\!(\bigcap\nolimits_{l \in I\stsp \setminus \{n\}} \hspace{-2pt} R_l\hspace{0.2pt}\cap\hspace{0.5pt} G)}$
by (3) and
since $k_\mu\hspace{-0.5pt} \hspace{-0.5pt}\in\hspace{-0.1pt} I\stsp\setminus\{n\}$ holds due to $\stsp\mu\hspace{0.7pt} \notin\hspace{-0.9pt} M_n\stsp$,
for the program step $\sq^n_\mu \hspace{1.5pt}\pstep\hspace{-2.5pt} \sq^n_{\mu+1}$ we infer $(\stateOf{sq_\mu},\stsp \stateOf{sq_{\mu+1}})\hspace{-0.2pt} \in\hspace{-1.2pt} R_{k_\mu}$
which once more contradicts $\mu\hspace{0.7pt} \in\hspace{-0.7pt} M_{k_\mu}$.

The intermediate result is that $\sq^k \hspace{-1.9pt}\in\hspace{-0.5pt} \pcs{p_k}\hspace{-0.5pt} \cap\hspace{-0.2pt} \inC{\hspace{-1.5pt} P_k}\hspace{-0.2pt} \cap\hspace{0.1pt} \envC{\hspace{-1.5pt} R_k}\stsp$
holds for each $k\hspace{0.2pt}\in\hspace{-0.4pt} I$.
Hence 
$\sq^k\hspace{-1.2pt} \in\hspace{-0.1pt} \outC{\hspace{-0.5pt}Q_k}\hspace{-0.7pt}\cap\hspace{-0.5pt} \progC{\hspace{-0.2pt}G}$
can be inferred for all
$k\hspace{0.2pt} \in\hspace{-0.3pt} I$ using (3) with subsequent weakening of the program condition.
This conclusion will be utilised below to establish $\sq \in \outC{\hspace{-0.5pt}Q}\hspace{0.1pt} \cap\hspace{-0.2pt} \progC{\hspace{-0.5pt}G}$.

First, to show $\sq\hspace{-0.7pt} \in\hspace{-1pt} \outC{\hspace{-0.5pt} Q}$ let $n \hspace{-0.2pt}=\hspace{-0.2pt} |\sq\pr|$ and 
suppose there is some $j < |\sq|$ with $\progOf{\sq_j}\hspace{-1pt} = \skipp$.
Then $j \ge n$ must hold because $\sq\pr$ does not contain any $\skipp$-configurations.  
Moreover, the step of $\sq$ at the position $n-1$ must be a program step to
the first $\skipp$-configuration. 
That is, we have
$\sq_{n-1} \hspace{0.2pt}\pstep\hspace{-2.5pt} \sq_n$ with $\sq_{n-1}\hspace{-1.7pt} = (\Parallel{\hspace{-2pt}\skipp, \ldots, \skipp},\stdsp \sigma)$ and
$\sq_n\hspace{-1.7pt} = (\skipp,\stsp \sigma)$ and some state $\sigma$ for which $\sigma\hspace{-1pt} \in\hspace{-1pt} Q\stsp$ has to be established.
Let $k\hspace{-0.2pt} \in\hspace{-1pt} I$ be fixed to this end. Using (a) we can infer $\sq^k_{n-1}\hspace{-1pt} = (\skipp, \stsp\sigma)$ such that from
$\sq^k\stnsp \in \outC{\hspace{0.1pt} Q_k}$ the existence of an index $i_k \le n\hspace{0.4pt} -\hspace{0.4pt} 1$ with $\sq^k_{i_k}\hspace{-1.5pt} = (\skipp,\stsp \sigma_{i_k})$
and $\sigma_{i_k}\hspace{-2.2pt} \in\hspace{-0.2pt} Q_k$ follows.
Furthermore, this property can be extended to $\stateOf{\sq^k_l}\hspace{-0.2pt} \in\hspace{-0.2pt} Q_k$ for all $l\stsp \ge\stsp i_k$
using (4) since there are only environment steps from $i_k$ on.
Altogether, we can conclude 
$\sigma\hspace{-0.9pt} \in\hspace{-0.5pt} \bigcap\nolimits_{k \in I}\hspace{-1.5pt} Q_k$
and hence $\sigma\hspace{-0.1pt} \in\hspace{-0.2pt} Q$ is a consequence of (5).

Second, to show $\sq\hspace{-1.2pt} \in\hspace{-2.1pt} \progC{\hspace{-0.9pt} G}\stdsp$ let $n\hspace{-0.5pt} = \hspace{-0.5pt}|\sq\pr|$ and
suppose $\sq_{i}\hspace{1pt} \pstep\hspace{-2.7pt} \sq_{i+1}$ with $i < |\sq| - 1$.
Note that $i < n$ holds: $i \ge n$ implies $\progOf{\sq_n} = \skipp$ and hence there would be only environment steps from $n$ on.
Further, if $i = n - 1$ then $\sq_{i} = (\Parallel{\!\skipp, \ldots, \skipp}, \stsp\sigma)$ and
$\sq_{i+1}\stnsp = (\skipp,\stsp \sigma)$ must hold with some state $\sigma$
and hence $(\sigma, \stsp\sigma) \in G$ follows by the reflexivity assumption.
Finally, if $i\stsp <\stsp n - 1$ then (c) provides the existence of a program step 
$\sq^k_{i}\stsp \pstep\hspace{-2.7pt} \sq^k_{i+1}$ with $k\hspace{-0.1pt} \in\hspace{-1pt}  I$ so that
$(\stateOf{\sq_i},\stsp \stateOf{\sq_{i+1}}) =\hspace{-0.1pt} (\stateOf{\sq^k_i}, \stsp\stateOf{\sq^k_{i+1}}) \in\hspace{-0.5pt} G$
follows from $\sq^k\hspace{-1.5pt} \in\hspace{-0.7pt} \progC{\hspace{-0.2pt}G}$.
\end{proof}

With $m = 2$, Proposition~\ref{thm:parallel-rule} can be simplified to a more convenient form:
\begin{corollary}\label{thm:parallel-rule2}
If $G$ is a reflexive state relation and
\begin{enumerate}
\item[\emph{(1)}] $P \subseteq\hspace{-0.5pt} P_1 \hspace{-0.5pt} \cap P_2 $,
\item[\emph{(2)}]  $R \hspace{0.5pt} \subseteq\hspace{-0.5pt}  R_1\hspace{-0.79pt} \cap R_2$,
\item[\emph{(3)}]  $\rgvalid{R_1}{\hspace{-1.5pt}P_1}{p_1\hspace{-1pt}}{Q_1}{\hspace{-1.5pt}R_2 \cap\hspace{0.2pt} G}$,
\item[\emph{(4)}]  $\rgvalid{R_2}{\hspace{-1.5pt}P_2}{p_2\hspace{-1pt}}{Q_2}{\hspace{-1.5pt}R_1 \hspace{-0.5pt}\cap\hspace{0.2pt} G}$, 
\item[\emph{(5)}]   $\rimg{R_1\hspace{0.1pt}}{\hspace{0.9pt}Q_1} \subseteq Q_1 $,
\item[\emph{(6)}]   $\rimg{R_2\hspace{0.1pt}}{\hspace{0.9pt}Q_2} \subseteq Q_2 $,
\item[\emph{(7)}]  $Q_1\hspace{-0.5pt} \cap\hspace{0.2pt} Q_2 \subseteq Q$
\end{enumerate}
then $\:\rgvalid{R}{\hspace{-1.5pt}P}{\hspace{-0.5pt}p_1 \hspace{-1.9pt}\parallel\hspace{-0.5pt} p_2\hspace{-0.5pt}}{Q}{\hspace{-1.5pt}G}$.
\end{corollary}
\subsubsection{An example}
The rules in this section enable structured reasoning upon properties of programs with interleaved computations, but are far from being complete
and the renown `parallel increment' example 
\[
\op{parallel\mbox{-}inc} \defeq \sv{x} := \sv{x} + 1 \parallel \sv{x} := \sv{x} + 1
\]
will be used below to substantiate the claim.
As a preparation to that, we first describe how actual programs working on actual states can be represented. 

In what follows, any state $\sigma$ on which a program operates comprises a mapping that sends each state variable\index{state variable} occurring in the program
(such as $\sv{x}$ in $\op{parallel\mbox{-}inc}$ above)
to a value of a fixed type assigned to the variable, \eg $\stsp\sigma\sv{x}$ is an integer in $\op{parallel\mbox{-}inc}$. 
Then an assignment\index{state variable!assignment} of a value, denoted by an appropriately typed logical term $t$, to a state variable $\sv{a}$ amounts to the indivisible step
$\basic(\lambda\sigma. \:\sigma_{[\sv{a}\stsp :=\stsp t]})$ with the updated state $\sigma_{[\sv{a} := t]}$ that sends
$\sv{a}$ to the value of $t$ and any $\sv{b} \neq \sv{a}\stsp$ to $\stsp\sigma\sv{b}$.
Note that if a state variable appears in $t$ then it actually stands for its image under $\sigma$. 
Thus, the assignment $\sv{x} := \sv{x} + 1$ is essentially a shorthand for $\basic(\lambda\sigma. \hspace{2pt}\sigma_{[\sv{x}\stsp :=\stsp \sigma\sv{x} + 1]})$.
Similar conventions apply to state predicates: for instance 
$\sv{a} = 0$ stands for $\{\sigma \hspace{2.5pt}|\hspace{2.9pt} \sigma\sv{a} = 0\}$.
Furthermore, primed variables will be used to capture state relations: \eg $\stdsp\sv{a} = \sv{a}\pr\hspace{-0.2pt} \wedge \sv{b} = \sv{a}\pr\stsp$
is the same as
$\stsp\{(\sigma,\stsp \sigma\pr) \hspace{3.5pt}|\hspace{3.5pt} \sigma\sv{a} = \sigma\pr\sv{a}\stsp \wedge\hspace{0.2pt} \sigma\sv{b} = \sigma\pr\sv{a}\}$.

Continuing with the `parallel increment' example, let $\sigma$ be a state assigning some integer value to the only state variable $\sv{x}$ and consider the triple:
\begin{equation}\label{eq:pinc}
\rgvalid{\bot}{\hspace{-1pt}\sv{x} = 0}{\op{parallel\mbox{-}inc}}{\sv{x} = 2}{\hspace{-1.5pt}\top}
\end{equation}
Since the rely condition $\bot$ confines the scope of reasoning to the actual computations of $\op{parallel\mbox{-}inc}$,
it is yet attainable 
to verify that (\ref{eq:pinc}) is indeed a property of $\op{parallel\mbox{-}inc}$ by plain unfolding Definition~\ref{def:rgval}:
this requires checking $\outC{\!(\sv{x} = 2)}$ only for seven computations that start with the state $\sigma_{[\sv{x} := 0]}$ and perform up to three program steps:
Figure~\ref{fig:p-inc} depicts the relevant transition subgraph where all environment steps are particularly omitted.
\begin{figure}
\begin{diagram}[height=2.5em,width=0.5em,notextflow]
                                                &         & (\op{parallel\mbox{-}inc},\stsp \sigma_{[\sv{x} := 0]}) &   &                                                   \\
                                                 & \ldTo  &                                                              & \rdTo  &                                              \\
\hspace{-2.5cm}(\skipp \parallel \sv{x} := \sv{x} + 1,\stsp \sigma_{[\sv{x} := 1]})  &        &                                                              &   & (\sv{x} := \sv{x} + 1 \parallel \skipp,\stsp \sigma_{[\sv{x} := 1]})    \\
                                                &  \rdTo  &                                                              & \ldTo  &                                              \\
&         & (\skipp \parallel \skipp, \stsp\sigma_{[\sv{x} := 2]}) &     &                                                  \\
&         & \dTo   & & \\
&         & (\skipp,\stsp \sigma_{[\sv{x} := 2]}) &     &                                                  \\
\end{diagram}
\caption{A transition subgraph of $\op{parallel\mbox{-}inc}$.}
\label{fig:p-inc}
\end{figure}

On the other hand, it is impossible to derive the same property immediately by the rule in Corollary~\ref{thm:parallel-rule2}:
assuming the opposite means that we have found some $R_i, P_i, Q_i$ with $i\stsp \in \{1, 2\}$ such that
\begin{enumerate}
\item[(a)] $(\sv{x} = 0)\hspace{0.2pt} \subseteq \hspace{0.2pt} P_1\hspace{-0.5pt} \cap P_2$,
\item[(b)] $\rgvalid{R_1}{\hspace{-1.5pt}P_1}{\sv{x} := \sv{x} + 1}{Q_1}{\hspace{-1.7pt}R_2}$,
\item[(c)] $\rgvalid{R_2}{\hspace{-1.5pt}P_2}{\sv{x} := \sv{x} + 1}{Q_2}{\hspace{-1.7pt}R_1}$,
\item[(d)] $Q_1\hspace{-0.7pt} \cap\hspace{0.2pt} Q_2\stsp \subseteq\stsp (\sv{x} = 2)$.
\end{enumerate}
Let $\sq \defeq (\sv{x}\hspace{-1pt} :=\hspace{-1pt} \sv{x}\hspace{0.2pt} +\hspace{-0.2pt}  1,\stsp \sigma_{[\sv{x} := 0]}) \hspace{1.5pt}\pstep\hspace{-1.5pt} (\skipp, \stsp\hspace{0.2pt}\sigma_{[\sv{x} := 1]})$ be a one-step computation
for which we have $\sq\hspace{-1.2pt} \in\hspace{-0.7pt} \envC{\hspace{-1.2pt}\bot}\hspace{-0.2pt} \cap\hspace{-0.2pt}  \pcs{\sv{x}\stnsp :=\stnsp \sv{x} \hspace{0.2pt} +\hspace{0.2pt}  1}$ by construction.
Furthermore, $\sq\hspace{0.1pt} \in\hspace{0.1pt} \inC{\hspace{-0.9pt}P_1} \hspace{-0.2pt}\cap \hspace{0.5pt} \inC{\hspace{-0.7pt}P_2}$ can be inferred from (a) allowing us to
apply (b) and (c) to $\sq$ in order to obtain $\sigma_{[\sv{x} := 1]} \in\hspace{-0.5pt} Q_1\hspace{-0.5pt}$ and $\sigma_{[\sv{x} := 1]} \in\hspace{-0.7pt} Q_2\hspace{0.5pt}$
such that $\sigma_{[\sv{x} := 1]} \in\hspace{-0.5pt} (\sv{x} = 2)$ follows using (d), which is evidently a contradiction.

This outcome shall however not lead to the hasty conclusion that (\ref{eq:pinc}) would not be derivable by the rules at all, for it is:
we can deploy two auxiliary variables (each local to one of the parallel components) keeping their sum equal to $\sv{x}$.
This allows us to establish a generalised property by an application of the parallel rule. Then deriving the triple (\ref{eq:pinc}) in principle amounts to `discarding' the auxiliaries
by means of the program correspondence rule in the same way as it will be accomplished in the case study Chapter~\ref{S:PM1},
where auxiliary variables are deployed for the same purpose: an application of the rule in Corollary~\ref{thm:parallel-rule2}. 
\section{A rule for the sequential composition operator}
The extra assumption (3) in the following statement will be justified below.
\begin{lemma}\label{thm:seq-rule}
Assume $G$ is a reflexive state relation and
\begin{enumerate}
\item[\emph{(1)}] $\rgvalid{R}{\hspace{-1.5pt}P}{\hspace{-0.5pt}p\hspace{-0.5pt}}{S}{\hspace{-1.5pt}G}$,
\item[\emph{(2)}] $\rgvalid{R}{\hspace{-1.5pt}S}{\hspace{-0.5pt}q\hspace{-0.5pt}}{Q}{\hspace{-1.5pt}G}$,
\item[\emph{(3)}] $q = \skipp$ implies $\rimg{R\hspace{1pt}}{\hspace{0.9pt}Q}\stsp \subseteq Q$.
\end{enumerate}
Then $\:\rgvalid{R}{\hspace{-1.5pt}P}{\hspace{-0.5pt}p;q\hspace{-0.5pt}}{Q}{\hspace{-1.5pt}G}$.
\end{lemma}
\begin{proof}
  Suppose $\sq\hspace{0.1pt} \in\hspace{0.1pt} \envC{\hspace{-1.2pt}R}\hspace{0.4pt} \cap\hspace{0.4pt} \inC{\hspace{-1pt}P}\hspace{-0.5pt} \cap\hspace{0.2pt}  \pcs{p;q}$.
  Further, let $\sq\pr$ be its longest prefix 
that does not reach a configuration of the form $(\skipp ; q,\stsp \sigma)$ and let $n = |\sq\pr|$.
Then for any $i < n$ there is some $u_i$ such that $\progOf{\sq_i}\hspace{-1pt} = \hspace{-0.5pt}u_i;q$ holds.

If $\sq$ does not reach a configuration of the form $(\skipp ; q,\stsp \sigma)$ then we have $\sq\pr\hspace{-0.5pt} = \sq$ and
$\progOf{\sq_i}\hspace{-0.2pt} = u_i;q$ for all $i < |\sq|$. This particularly means that there are no $\skipp$-configurations on $\sq$,
\ie $\sq\hspace{0.1pt} \in\hspace{-0.1pt} \outC{\hspace{0.2pt}Q}$ holds trivially.
We can further define $\sq\mystrut^p_i \hspace{-1pt}\defeq\hspace{-1pt} (u_i,\stsp \stateOf{\sq_i})$ for all $i < |\sq|$
inheriting the respective transitions of $\sq$.
This construction provides $\sq^p\hspace{-0.2pt} \in\hspace{0.1pt} \envC{\hspace{-1.2pt} R}\hspace{0.2pt}  \cap\hspace{0.5pt}  \inC{\hspace{-1.2pt} P}\hspace{-0.9pt} \cap \pcs{p}$ 
such that we can infer $\sq^p\hspace{-0.7pt}\in \hspace{-0.1pt}\outC{\hspace{-0.5pt} S}\hspace{-0.2pt} \cap\hspace{-0.2pt} \progC{\hspace{-0.5pt} G}$ using (1).
Then $\sq\hspace{0.1pt} \in\hspace{-0.5pt} \progC{\hspace{-0.2pt} G}$ follows from
$\sq^p\hspace{-0.9pt} \in\hspace{-0.55pt} \progC{\hspace{-0.2pt}G}$.

Otherwise we have $n < |\sq|$ and $\sq_n\hspace{-1pt} = (\skipp;q, \stsp\sigma_n)$.
In that case we define $\sq\mystrut^p_i \hspace{-0.5pt}\defeq (u_i,\stsp \stateOf{\sq_i})$ for all $i\stsp \le\stsp n$. 
In particular, $|\sq^p| = n + 1$ and $\sq^p_n$ is the first $\skipp$-configuration on $\sq^{p}$.
Since $\sq^p\hspace{-0.9pt} \in\hspace{-0.2pt} \envC{\hspace{-1.2pt} R}\hspace{0.2pt} \cap\hspace{0.2pt}  \inC{\hspace{-1.25pt} P}\hspace{-1pt} \cap\hspace{-0.2pt} \pcs{p}$ holds once more 
by construction, 
we also infer
$\sq^p\hspace{-1.25pt}\in \hspace{-0.2pt}\outC{\hspace{-0.5pt} S}\hspace{-0.2pt} \cap\hspace{-0.2pt} \progC{\hspace{-0.2pt} G}$ using (1).
Note however that in contrast to the preceding case, $\sigma_n \hspace{-0.9pt}\in\hspace{-0.4pt} S$
now follows from $\sq^{p}\hspace{-0.5pt} \in\hspace{0.2pt} \outC{\hspace{-0.5pt}S}$.

Further, if $\sq$ has only environment steps from $n$ on then $\sq\hspace{-0.1pt} \in\hspace{-0.5pt} \outC{\hspace{-0.2pt}Q}\stsp$  once more holds trivially,
whereas $\sq\hspace{0.1pt} \in\hspace{-0.5pt} \progC{\hspace{-0.4pt}G}\stsp$ is a consequence of $\sq^{p}\hspace{-0.9pt} \in\hspace{-0.5pt} \progC{\hspace{-0.2pt} G}$.

Otherwise, suppose $\sq$ makes a program step at $m$ with $n \le m < |\sq| -1$ and let $m_0$ be the least such index,
\ie $\sq_n \estep \hspace{-1.5pt}\ldots\hspace{2.5pt} \estep \hspace{-2.7pt}\sq_{m_0} \pstep \hspace{-2.7pt}\sq_{m_0 + 1}$
with $\sq_{m_0}\hspace{-1.9pt} = (\skipp;q,\stsp \sigma_{m_0})$ and $\sq_{m_0 + 1}\hspace{-1.2pt} = (q,\stsp \sigma_{m_0})$.
For each configuration $\sq_k$ with $n \le k \le m_0$ we thus have $\progOf{\sq_k} = \skipp;q$ and
can derive the computation $\sq^{q_0}\hspace{-1.5pt} = (q, \stsp\sigma_n)\stsp \estep\hspace{-1pt} \ldots\hspace{2pt} \estep\hspace{-1pt} (q, \stsp\sigma_{m_0})$ which is
composable with the suffix $\suffix{m_0+1}{\sq}$. Then let $\sq^q\hspace{-0.5pt} \defeq \sq^{q_0}\hspace{-1pt} \scomp\hspace{0.2pt} \suffix{m_0+1}{\sq}\stdsp$
such that $\sq^q\hspace{-0.7pt} \in \outC{\hspace{-0.1pt} Q} \cap \progC{\hspace{-0.2pt} G}$ follows from (2)
since $\sq^q\hspace{-0.5pt}  \in\hspace{0.1pt} \envC{\hspace{-1.2pt}R}\hspace{0.2pt} \cap \stsp\inC{\hspace{-0.9pt}S}\hspace{0.1pt} \cap \pcs{q}$ is provided by this construction.

Next we will show the output condition $\sq\hspace{-0.3pt} \in\hspace{-0.4pt} \outC{\hspace{-0.2pt} Q}$.
To this end let $j < |\sq|$ be an index with $\progOf{\sq_j}\hspace{-0.2pt} = \skipp$.
Then $j >\stnsp m_0$ must hold, whereas $j = m_0 + 1$ would entail $q = \skipp$.
In case $q = \skipp$ we have $\sq^q\hspace{-0.7pt} \in \pcs{\skipp}$. In combination with  $\sq^q\hspace{-0.5pt} \in \outC{Q}$
this yields $\sigma_n\hspace{-1.1pt} \in\hspace{-0.3pt} Q$. Furthermore, using (3) we can thus induce that $\stateOf{\sq^q_i}\hspace{-0.2pt} \in\stnsp Q$ holds for any $i < |\sq^q|$
and for $\stateOf{\sq^q_{m_0 - n}}$ in particular.
Since $\stateOf{\sq^q_{m_0 - n}} =\stsp \sigma_{m_0}\hspace{-0.7pt} = \stateOf{\sq_{m_0+1}}$ we can conclude this case.
With $q \neq \skipp$ as an additional assumption, $j\stsp >\hspace{0.2pt} m_0\hspace{0.4pt} +\hspace{0.2pt} 1$ follows by the above remarks.
Thus, we have $\progOf{\sq^q_{j - n - 1}} = \progOf{\sq_j} = \skipp\stsp$ such that $\sq\mystrut^q\hspace{-1pt} \in\hspace{-0.2pt} \outC{\hspace{-0.5pt}Q}$ entails
some $i \le j - n - 1$ with $\progOf{\sq^q_i} = \skipp\stsp$ and $\stsp\stateOf{\sq^q_i}\hspace{0.1pt} \in\hspace{-0.4pt} Q$.
If $i \le m_0\hspace{0.1pt} -\hspace{0.1pt} n$ then we get a contradiction in form of $q = \skipp$.
With $i > m_0 - n$ we can first infer $j \ge i + n + 1 > m_0 + 1$, 
and then $ \progOf{\sq_{i+n+1}} = \progOf{\sq^q_i} = \skipp\stsp$ and $\stsp \stateOf{\sq_{i+n+1}} =\stsp \stateOf{\sq^q_i} \in Q$.

Finally, the program condition $\sq\hspace{-1.7pt} \in\hspace{-2.1pt}  \progC{\hspace{-1pt}G}$ follows from $\sq^{p}\hspace{-2.4pt} \in\hspace{-2.2pt} \progC{\hspace{-1pt}G}$ and
$\sq^q\hspace{-0.1pt} \in\hspace{-0.1pt} \progC{\hspace{0.9pt}G}\stdsp$
if we additionally take into account that
the step $\stsp\sq_{m_0}\stnsp \pstep\stnsp \sq_{m_0 + 1}$, omitted by the composition $\sq^{q_0}\hspace{-1.5pt} \scomp\hspace{-0.4pt} \suffix{m_0+1}{\sq}\stsp$,
is covered by $G$ as well since $G$ is assumed to be reflexive.
\end{proof}

That the assumptions (1) and (2) alone are not generally sufficient to remain sound
in presence of the purely auxiliary
yet indispensable
$\skipp$-operator can be substantiated as follows. Suppose Proposition~\ref{thm:seq-rule} would be valid
without
(3) and let $\op{p} \defeq \sv{b} := \sv{a} + 1;\skipp$.
Hence, $\rgvalid{\sv{a} = \sv{a}\pr}{\hspace{-1pt}\sv{a}=0}{\hspace{-1.2pt}\op{p}\hspace{-0.7pt}}{\sv{b}=1}{\hspace{-1.9pt}\top}$ would be a consequence of
\begin{eqnarray*}
  \begin{aligned}
  &   \hspace{-20pt}\rgvalid{\sv{a}\stsp =\stsp \sv{a}\pr}{\hspace{-1pt}\sv{a}\stsp=\stsp 0}{\sv{b} := \sv{a} + 1}{\sv{b}\stsp=\stsp 1}{\hspace{-1.7pt}\top} \mbox{ and } \\
  &   \hspace{-20pt}\rgvalid{\sv{a}\stsp =\stsp \sv{a}\pr}{\hspace{-1pt}\sv{b}\stsp=\stsp 1}{\skipp}{\sv{b}\stsp=\stsp 1}{\hspace{-1.7pt}\top}
  \end{aligned}
\end{eqnarray*}  
where the former follows by
Proposition~\ref{thm:basic-rule} and the latter is just an instance of Proposition~\ref{thm:skip-rule}.
On the other hand, starting with some state $\sigma\hspace{-0.5pt} \in\hspace{-0.5pt} (\sv{a}\hspace{0.2pt} =\hspace{0.2pt} 0)$ 
the three-step computation
\[
\hspace{-0.7cm}(\op{p}, \stsp\sigma)\stsp \pstep (\skipp;\skipp,\stdsp \sigma_{[\sv{b} := 1]})\stsp \estep (\skipp;\skipp,\stdsp \sigma_{[\sv{b} := 0]})\stsp \pstep (\skipp,\stsp\sigma_{[\sv{b} := 0]})
\]
 clearly satisfies $\envC{\!(\sv{a}\stsp =\stsp \sv{a}\pr)} \cap \inC{\!(\sv{a}\stsp =\stsp 0)} \cap \pcs{\op{p}}$
but not $\outC{\!(\sv{b}\stsp=\stsp 1)}$.

The statement of Proposition~\ref{thm:seq-rule} is consequently not applicable to this example: the respective instance of assumption (3) cannot be satisfied since
the postcondition $\sv{b}\hspace{0.2pt} =1$ is not stable under the rely $\sv{a}\stsp =\stsp \sv{a}\pr$. 
\section{A rule for while-statements}
\begin{lemma}\label{thm:while-rule}
Assume $G$ is a reflexive state relation and
\begin{enumerate}
\item[\emph{(1)}] $\rimg{R\hspace{1pt}}{\hspace{0.5pt} P}\stsp \subseteq P$,
\item[\emph{(2)}] $\rgvalid{R}{\hspace{-1.2pt}P\hspace{-0.4pt} \cap\hspace{0.3pt} C}{\hspace{-0.5pt}p\hspace{-0.5pt}}{P}{\hspace{-0.5pt}G}$,
\item[\emph{(3)}] $\rgvalid{R}{\hspace{-1.2pt}P\hspace{-0.5pt} \cap \stnsp\negate{C}}{\hspace{-0.5pt}q\hspace{-0.5pt}}{Q}{\hspace{-1.5pt}G}$.
\end{enumerate}
Then $\:\rgvalid{R}{\hspace{-1.2pt}P}{\while{\stsp C\hspace{-0.2pt}}{p\hspace{0.4pt}}{q}}{Q}{\hspace{-1pt}G}$.
\end{lemma}
\begin{proof}  
  Let $p_{\mv{while}}$ stand for $\while{\stsp C\stnsp}{\hspace{-0.2pt}p}{q}$.
  We have to show
  $\sq\hspace{-0.4pt} \in\hspace{-0.5pt} \outC{\hspace{-0.4pt} Q}\hspace{-0.3pt}\cap\hspace{-0.3pt}\progC{\hspace{-0.5pt} G}\stsp$ 
  for any $\sq\hspace{-0.4pt} \in \hspace{-0.1pt}\pcs{p_{\mv{while}}}\hspace{-0.5pt} \cap \hspace{0.2pt}\envC{\hspace{-1.25pt}R}\hspace{0.3pt} \cap\hspace{0.1pt} \inC{\hspace{-1.25pt} P}$
  and proceed to this end by induction on the length of $\sq$,
which yields the induction hypothesis:
\begin{enumerate}
\item[($\mathcal{H}$)]\hspace{-5pt} \it{$\sq\pr\hspace{-0.3pt} \in\hspace{-0.1pt} \outC{\hspace{-0.2pt} Q}\hspace{-0.1pt} \cap\hspace{-0.3pt} \progC{\hspace{-0.1pt}G}\stsp$
  holds for any $\stsp\sq\pr\hspace{-0.3pt} \in\hspace{0.1pt} \pcs{p_{\mv{while}}}\hspace{-0.4pt} \cap\hspace{0.5pt} \envC{\hspace{-1.4pt}R} \hspace{0.2pt}\cap\hspace{0.3pt} \inC{\hspace{-1.1pt} P}$ with the length less than the length of $\sq$, {\em\ie} \!\!$|\sq\pr| < |\sq|$}.
\end{enumerate}
However, only the following consequence of ($\mathcal{H}$) will be employed below: 
\begin{enumerate}
\item[(4)] \it{if $\sq\pr$\hspace{-1pt} is a proper non-empty suffix of $\sq$ such that $\sq\pr_0\hspace{-0.7pt} =\hspace{-0.2pt} (p_{\mv{while}},\stsp \sigma)$ holds with some $\sigma \in\hspace{-0.7pt} P$ then
  $\sq\pr\hspace{-0.2pt} \in\hspace{-0.1pt}\outC{\hspace{-0.2pt}Q} \cap\hspace{-0.2pt} \progC{\hspace{-0.3pt}G}$}.
\end{enumerate}
If $\sq$ has no program steps then we are done. Otherwise we may assume that the first transition of $\sq$ is a program step, because
would we have $\sq_0\hspace{1pt} \estep\hspace{-2.5pt} \sq_1$ then $\suffix{1}{\sq}$ would be a proper non-empty suffix of $\sq$ such that
$\suffix{1}{\sq}_0\hspace{-0.7pt} = \sq_1\hspace{-0.2pt} = (p_{\mv{while}},\stsp \sigma)$ with some $\sigma \in\hspace{-0.7pt} P$ due to (1)
and hence $\suffix{1}{\sq}\hspace{0.2pt} \in\hspace{0.2pt}\outC{\hspace{-0.2pt}Q} \cap\hspace{-0.2pt} \progC{\hspace{-0.3pt}G}$ follows from (4).

That is, for the remainder of the proof we can assume $\sq_{0}\hspace{1.1pt} \pstep\hspace{-2.2pt} \sq_{1}$ such that $\sq_{0}\hspace{-1.5pt} = (p_{\mv{while}},\stsp \sigma_{0})$ and
$\sq_{1}\hspace{-1pt} = (x,\stsp \sigma_{0})$ hold with some $x$ and $\sigma_{0}\hspace{0.2pt} \in\hspace{-0.75pt} P$.

If $\sigma_{0}\hspace{-0.1pt} \notin\hspace{-0.2pt}  C$ then we have $x\hspace{-0.2pt} = q$ and
$\suffix{1}{\sq}\hspace{-0.1pt} \in\stnsp  \pcs{q}\hspace{-0.1pt} \cap\hspace{0.2pt} \envC{\hspace{-1.5pt}R}\hspace{0.4pt} \cap\hspace{0.2pt} \inC{\hspace{-2.5pt}(\stnsp P\hspace{-0.5pt} \cap\hspace{-1pt} \negate{C})}$.
Hence (3) entails $\suffix{1}{\sq} \in\hspace{-0.2pt} \outC{\hspace{-0.3pt} Q} \cap\hspace{-0.5pt} \progC{\hspace{-0.5pt} G}\stsp$ 
and we can conclude $\sq\hspace{-0.2pt} \in\hspace{-0.7pt} \outC{\hspace{-0.3pt} Q}\hspace{-0.2pt} \cap\hspace{-0.7pt} \progC{\hspace{-0.5pt} G}\stsp$ by the reflexivity of $G$.

If $\sigma_0\hspace{-0.4pt} \in\hspace{-0.4pt} C$ then $\suffix{1}{\sq}\hspace{-0.1pt} \in \pcs{p;\skipp;p_{\mv{while}}}\hspace{-0.5pt} \cap\hspace{0.2pt} \envC{\hspace{-1.1pt}R}\hspace{0.3pt} \cap\hspace{0.1pt}\inC{\hspace{-2pt}(\stnsp P\hspace{-0.75pt} \cap\hspace{0.1pt} C)}$.
Further, let $\sq\pr$ be the longest prefix of the suffix $\suffix{1}{\sq}$ which does not reach 
a configuration of the form $(\skipp ; \skipp; p_{\mv{while}}, \sigma)$ with some state $\sigma$,
and let $n_1 = |\sq\pr|$.
Then for any $i < n_1$ we have some $u_i \neq \skipp$ with $\progOf{\sq_{i +1}} = u_i;\skipp;p_{\mv{while}}$.

If $\sq\pr\hspace{-0.5pt} =\hspace{-0.1pt} \suffix{1}{\sq}$ then
$\sq$ has no $\skipp$-configurations at all and hence $\sq\hspace{-0.2pt} \in\hspace{-0.5pt} \outC{\hspace{-0.4pt}Q}$ holds trivially.
We can further define the computation 
$\sq\mystrut^p_i \defeq (u_i, \stsp\stateOf{\sq_{i + 1}})$ for all $i < n_1$ inheriting the respective transitions of $\sq$.
Then 
$\sq\hspace{-0.7pt} \in\hspace{-1.2pt} \progC{\hspace{-0.59pt}G}$ follows from the reflexivity of $G$ and
$\sq^p\hspace{-1.4pt} \in\hspace{-1.2pt} \progC{\hspace{-0.59pt}G}$ which in turn follows from (2)
since we have $\sq^p\hspace{-0.45pt} \hspace{-1pt}\in\hspace{-0.2pt} \pcs{p}\hspace{-0.5pt}\cap\hspace{0.2pt} \envC{\hspace{-1.25pt} R}\hspace{0.3pt} \cap\hspace{0.2pt} \inC{\!(\stnsp P\hspace{-0.7pt}\cap\hspace{0.3pt} C)}$ by construction.

Otherwise we can assume $n_1\hspace{-1pt} < |\sq| - 1$ and define $\sq\mystrut^p_i \defeq (u_i, \stdsp\stateOf{\sq_{i + 1}})$ now for all $i \le n_1$.
Then $\sq^p\hspace{-1.25pt} \in\hspace{-0.9pt} \outC{\hspace{-1.2pt} P}\hspace{-1.2pt} \cap\hspace{-0.5pt} \progC{\hspace{-0.5pt} G}\stsp$ follows once more from (2)
because $\sq^p \hspace{-1.7pt}\in\hspace{-0.9pt} \pcs{p}\hspace{-0.9pt}\cap\hspace{0.2pt}\envC{\hspace{-1.7pt} R}\hspace{-0.25pt}\cap\hspace{-0.29pt}\inC{\!(\stnsp P\hspace{-1pt}\cap
  \hspace{-0.1pt} C)}$ still holds by construction.
Furthermore, note that $\sq^p_{n_1}$ is the first $\skipp$-configuration on $\sq^p$. Translating these results to $\sq$ we obtain
$\sq_{n_1 + 1}\hspace{-1pt} =\hspace{-1pt} (\skipp;\skipp; p_{\mv{while}},\stsp \sigma_{n_1+ 1})$ with $\sigma_{n_1 + 1}\hspace{-0.2pt} \in\hspace{-1.5pt} P$
and conclude $\stsp\suffix{n_1+ 1}{\sq} \hspace{0.2pt}\in
\pcs{\skipp;\skipp; p_{\mv{while}}}\hspace{0.2pt} \cap\hspace{0.7pt} \envC{\hspace{-1.2pt} R}\hspace{0.4pt} \cap\hspace{0.5pt} \inC{\hspace{-1.2pt} P}$.

In case there are not at least two program steps: from $\skipp;\skipp; p_{\mv{while}}$ to $\skipp; p_{\mv{while}}$ and then from $\skipp; p_{\mv{while}}$ to  $p_{\mv{while}}$, 
the output condition
$\sq\in\hspace{-0.5pt}\outC{\hspace{1.55pt} Q}$ once more holds trivially, whereas 
$\sq\hspace{-0.2pt}\in\hspace{-0.7pt}\progC{\hspace{0.5pt} G}$ follows from $\sq^{p}\hspace{-1.4pt} \in\hspace{-0.7pt} \progC{\hspace{0.1pt} G}$
and the reflexivity of $G$.

Otherwise there must be some index $n_2$ such that $n_1 \hspace{-1.5pt}+ 2 < n_2 < |\sq|$  and $\sq_{n_2}\! = (p_{\mv{while}}, \sigma_{n_2})$ hold.
Using $\sigma_{n_1+ 1}\hspace{-0.2pt} \in\hspace{-1.29pt} P$ and the stability assumption (1) we can further induce $\sigma_{n_2}\hspace{-1.9pt} \in\hspace{-1.5pt} P$, such that
the hypothesis (4) is applicable to $\stsp\suffix{n_2}{\sq}$
yielding $\suffix{n_2}{\sq}\hspace{0.2pt} \in\hspace{-0.1pt} \outC{\hspace{-0.5pt}Q} \cap\hspace{-0.1pt} \progC{\hspace{-0.5pt}G}$.

To establish the output condition $\sq \in\hspace{-0.3pt} \outC{\hspace{0.7pt}Q}\stsp$ in this last branch,
let $j < |\sq|$ be an index with $\progOf{\sq_j} = \skipp$. 
By the above 
follows that $n_2 < j$ must hold,
\ie $\progOf{\suffix{n_2}{\sq}_{j - n_2}}\hspace{-0.5pt} = \skipp$.
Then $\suffix{n_2}{\sq}\hspace{0.1pt} \in\hspace{-0.2pt} \outC{\hspace{0.2pt} Q}$ entails some $i\stsp \le\stsp j\stsp -\stsp  n_2$ such that
$\suffix{n_2}{\sq}_{i} = \sq_{i + n_2} = (\skipp,\stsp \sigma_{i+n_2})$ with some $\sigma_{i+n_2} \hspace{-0.7pt}\in\hspace{-0.5pt} Q$.

Lastly, $\sq\hspace{-0.1pt} \in\hspace{-0.5pt} \progC{\hspace{-0.2pt}G}\stsp$ is a consequence of 
$\sq^{p} \hspace{-1pt} \in\hspace{-0.5pt} \progC{\hspace{-0.1pt}G}$, $\suffix{n_2}{\sq}\hspace{0.1pt} \in\hspace{-1pt} \progC{\hspace{-0.3pt}G}\stsp$ and the reflexivity of $G$. 
\end{proof}
\section{A rule for await-statements}
The main difficulty with triples of the form $\rgvalid{R}{\hspace{-0.7pt}P}{\await{\hspace{1.5pt} C\hspace{-0.5pt}}{p}}{Q}{\hspace{-0.7pt}G}$
is to capture that any state $\sigma$ before entering $p$
is related by $G$ to any state where $p$ might terminate when run in isolation starting with $\sigma$. 
As opposed to the one-step state transformers (\cf Proposition~\ref{thm:basic-rule}), such runs of $p$ can
comprise series of program steps and it would be rather very limiting
to require from each of them to comply with a transitive $G$.   
Therefore in the assumption (2) below, 
$\sigma$ represents a state before entering $p$ and is also invoked
in the postcondition to link it to
terminal states of $p$ by means of the guarantee $G$.
The same technique can be encountered by Nieto~\cite{DBLP:phd/dnb/Nieto02, 10.1007/3-540-36575-3_24}. 
It is also worth noting that deploying the assumption $(P\hspace{-0.7pt}  \cap\hspace{0.2pt} C)\hspace{-0.7pt} \times\hspace{-1pt} Q \subseteq G$
instead is in principle possible but would result in a considerably
weaker rule since $(P\hspace{-0.9pt} \cap \hspace{-0.1pt}C)\hspace{-0.9pt} \times\hspace{-0.9pt} Q$
generally gives only a rough over-approximation
of the actual input/output behaviour of $p$ restricted to the domain $P\hspace{-0.4pt} \cap\hspace{0.25pt} C$.  
\begin{lemma}\label{thm:await-rule}
Assume
\begin{enumerate}
\item[\emph{(1)}] $\rimg{R\hspace{1pt}}{\hspace{0.5pt}P}\hspace{0.5pt} \subseteq\hspace{-0.2pt} P$,
\item[\emph{(2)}] $\rgvalid{\bot}{\hspace{-2pt}P\hspace{-0.4pt} \cap\hspace{0.5pt} C\hspace{-0.3pt} \cap\hspace{-0.1pt} \{\sigma\}}{p}{\{\sigma\pr \;|\; \sigma\pr\hspace{-0.5pt} \in\hspace{-0.2pt} Q \wedge (\sigma,\stsp \sigma\pr) \in G\}}{\hspace{-1.9pt}\top}\:$ for any state $\sigma$.
\end{enumerate}
Then $\:\rgvalid{R}{\hspace{-1.5pt}P}{\await{\hspace{0.5pt}C\hspace{-0.5pt}}{p}}{Q}{\hspace{-1.5pt}G}$.
\end{lemma}
\begin{proof}
Let $\sq\hspace{-0.5pt} \in\hspace{-0.2pt} \envC{\hspace{-1.29pt}R}\hspace{0.1pt} \cap\hspace{0.1pt} \inC{\hspace{-1.29pt} P}\hspace{-0.7pt} \cap\hspace{0.2pt} \pcs{\await{\hspace{1pt} C\hspace{-1pt}}{p}}$.
If $\sq$ does not perform any program steps then we are done.
Next, suppose $\sq_i \hspace{0.5pt}\pstep\hspace{-1.2pt}\sq_{i+1}$ with $i\stsp <\stsp |\sq| - 1$ is the first such step. Then
$\sq_i\hspace{-1pt} = (\await{\hspace{0.9pt}C\hspace{-0.5pt}}{p},\stsp \sigma_i)$ and $\sq_{i+1}\hspace{-0.5pt} = (\skipp,\stsp \sigma_{i+1})$ 
such that $\sigma_i\hspace{-1pt} \in\hspace{-1.5pt} C$ and $(p,\stsp \sigma_{i})\hspace{-0.5pt} \pstepsN{n}\hspace{-1.7pt} (\skipp,\stsp \sigma_{i + 1})$ with some $n\hspace{-0.1pt}  \in\hspace{-1.1pt}  \naturals$ hold.
Further, noting that $\sigma_i\hspace{-1.2pt}  \in\hspace{-1.9pt}  P\stsp$ is induced by $\stateOf{\sq_0}\hspace{-0.7pt} \in\hspace{-2pt}  P$ and the stability assumption (1),
we can derive $\sq^p\hspace{-0.5pt}  \in\hspace{0.1pt}  \envC{\hspace{-1pt} \bot}\hspace{-0.1pt} \cap\hspace{0.4pt} \inC{\!(\hspace{-1pt}P\hspace{-0.4pt}  \cap\hspace{0.2pt} C)} \cap\pcs{p}$
with $|\sq^p| = n+1$, $\stateOf{\sq\mystrut^p_0} = \sigma_{i}$ as the initial and $\stateOf{\sq\mystrut^p_n} = \sigma_{i+1}$ as the final state.

Thus, we infer
$\sq^p \hspace{-0.7pt}\in\hspace{-0.5pt} \outC{\!(\{\sigma\pr \hspace{2.5pt}|\hspace{3.5pt} \sigma\pr\hspace{-0.7pt} \in\hspace{-0.4pt} Q\hspace{0.4pt} \wedge\hspace{0.1pt} (\sigma_{i},\stsp \sigma\pr)\hspace{-0.5pt} \in\hspace{-0.4pt} G\})}$
instantiating $\sigma$ by $\sigma_i$ in (2),
which yields $\sigma_{i+1}\hspace{-0.5pt} \in\hspace{-0.4pt} Q\stsp$ and $(\sigma_{i},\stsp \sigma_{i+1}) \in\hspace{-0.7pt} G$.
Then $\sq\hspace{-0.1pt} \in\hspace{-0.4pt} \outC{\hspace{-0.2pt} Q}$ follows 
as $\sq_{i+1}$ is the first $\skipp$-configuration on $\sq$, and
$\sq \in\hspace{-0.9pt} \progC{\hspace{0.7pt} G}$ -- 
since $\suffix{i+1}{\sq}\hspace{0.2pt} \in \pcs{\skipp}$. 
\end{proof}
\section{A rule for conditional statements}
\begin{lemma}\label{thm:cond-rule}
Assume $G$ is a reflexive state relation and
\begin{enumerate}
\item[\emph{(1)}] $\rimg{R\hspace{1.2pt}}{\hspace{0.5pt}P}\hspace{0.5pt} \subseteq\hspace{-0.5pt} P$,
\item[\emph{(2)}] $\rgvalid{R}{\hspace{-1.5pt}P\hspace{-0.4pt} \cap\hspace{0.4pt} C}{\hspace{-0.5pt}p\hspace{-0.5pt}}{Q}{\hspace{-1pt}G}$,
\item[\emph{(3)}] $\rgvalid{R}{\hspace{-1.5pt}P\hspace{-0.4pt}  \cap\hspace{-0.7pt} \negate{C}}{\hspace{-0.5pt}q\hspace{-0.5pt}}{Q}{\hspace{-1.2pt}G}$.
\end{enumerate}
Then $\:\rgvalid{R}{\hspace{-1.5pt}P}{\ite{\hspace{0.7pt}C\hspace{-0.2pt}}{p}{q}}{Q}{\hspace{-1pt}G}$.
\end{lemma}
\begin{proof}
  Let $\hspace{0.1pt}\sq\hspace{-0.5pt} \in\hspace{-0.2pt} \envC{\hspace{-1.4pt}R}\hspace{0.1pt} \cap\hspace{-0.1pt} \inC{\hspace{-1.2pt} P}\hspace{-1pt}\cap\hspace{0.2pt} \pcs{\ite{\hspace{0.5pt}C\hspace{-1pt}}{p}{q\hspace{-0.5pt}}}$.
  If $\sq$ has no program steps then we are done.
   Otherwise let $\sq_i\stsp \pstep\hspace{-1.5pt} \sq_{i+1}$ with $i < |\sq| - 1$ be the first such step. Hence we have
$\sq_i = (\ite{\hspace{0.9pt}C\hspace{-0.5pt}}{p}{q}, \stsp\sigma)$ and $\sq_{i+1}\hspace{-1pt} = (x,\stsp \sigma)$ with some $x$ and $\sigma$.
Further, note that the stability (1) and $\sq\hspace{-0.7pt} \in\hspace{-0.9pt} \inC{\hspace{-1.4pt} P}$ induce $\sigma\hspace{-0.2pt} \in\hspace{-0.7pt} P$.
If $\sigma\hspace{-0.3pt} \in\hspace{-0.1pt} C$ then $x=p\stdsp$ and
$\stdsp\suffix{i+1}{\sq}\hspace{0.1pt}\in\hspace{-0.1pt} \envC{\hspace{-1.1pt}R}\hspace{0.4pt} \cap\hspace{0.4pt} \inC{\hspace{-1.7pt}(\hspace{-0.5pt} P\hspace{-0.5pt} \cap\hspace{0.4pt} C)} \cap \pcs{p}\stsp$ follow.
Thus, with (2) we can first infer $\stsp\suffix{i+1}{\sq}\in\hspace{-0.1pt} \outC{\hspace{0.9pt}Q}\hspace{1pt} \cap \hspace{0.7pt}\progC{\hspace{0.7pt}G}$
and then  $\sq\hspace{-0.4pt}\in\hspace{-0.2pt} \outC{\hspace{0.75pt}Q} \hspace{1.1pt}\cap\hspace{0.9pt} \progC{\hspace{0.9pt}G}$.
The case $\sigma\hspace{0.1pt} \notin\hspace{-0.1pt} C$ is entirely symmetric.
\end{proof}
\section{A brief summary}
Starting with the definition of the extended Hoare triples, this section
was dedicated to the rely/guarantee program logic 
presenting
the program correspondence rule as well as a rule for each of the language constructors except $\com{cjump}$.
Although a rule for processing conditional jumps could simply be derived via the program equivalence in Proposition~\ref{thm:cond-norm2} combined with Proposition~\ref{thm:cond-rule},
the actual purpose of the program logic is to enable syntax-driven generation of verification conditions to extended Hoare triples with structured (\ie jump-free) programs only,
as pointed out at the beginning of Section~\ref{Sb:skip-and-basic}. Indirect reasoning upon rely/guarantee properties of
the corresponding programs using jumps is complementary backed by Proposition~\ref{thm:cond-norm2} and Proposition~\ref{thm:while-norm2}.
\newcommand{\imp}{\longrightarrow}
 \setcounter{equation}{0}
\chapter{Case Study: Peterson's Mutual Exclusion Algorithm}\label{S:PM1}
This chapter embarks on the first part of the case study
consisting of three parts concerned with structured derivations of the relevant properties of $\op{thread}_0$ and $\op{thread}_1$
that are shown in Figure~\ref{fig:pm} and shall model 
the threads in the algorithm by Peterson~\cite{PETERSON1981115}.
\begin{figure}
\[
\begin{array}{l l l}
\op{thread}_0 \hspace{2.5pt} \mv{cs}_0 &\hspace{-3pt}  \defeq & \sv{flag}_0 :=\hspace{-1pt}  \op{True};\\
             &        & \sv{turn} :=\hspace{-0.5pt}  \op{True};\\
& & \whileS{\stsp\sv{flag}_1 \hspace{-1.7pt}\wedge\hspace{-0.5pt} \sv{turn}\stsp}{\skipp};\\
& & \mv{cs}_0; \\
& & \sv{flag}_0 := \op{False}\\
& & \\
\op{thread}_1 \hspace{2.5pt} \mv{cs}_1 &\hspace{-3pt} \defeq & \sv{flag}_1 :=\hspace{-1pt}  \op{True};\\
             &        & \sv{turn} := \op{False};\\
& & \whileS{\stsp\sv{flag}_0\hspace{-1.2pt} \wedge\hspace{-1.7pt} \neg\sv{turn}\stsp}{\skipp};\\
& & \mv{cs}_1;\\ 
& & \sv{flag}_1 := \op{False}\\
\end{array}
\]
\caption{Two threads in the mutual exclusion algorithm.} 
\label{fig:pm}
\end{figure}
The definitions are respectively parameterised by the logical variables $\mv{cs}_0$ and $\mv{cs}_1$
being thus 
the placeholders for whatever needs to be accomplished within the critical sections
since they can be instantiated by any term of type $\mathcal{L}_\tau$ for any particular type of states $\tau$.
The parallel composition of the threads 
\[
\op{mutex}\hspace{1.5pt} \mv{cs}_0 \hspace{2pt} \mv{cs}_1 \defeq \op{thread}_0 \hspace{1.5pt}  \mv{cs}_0 \parallel  \op{thread}_1 \hspace{1pt}  \mv{cs}_1
\]
consequently models the actual protocol parameterised by the contents 
of the critical sections.
Note that this modelling is not tailored
to transformations to some low-level representations
in particular due to the compound conditions $\sv{flag}_1\hspace{0.15pt} \wedge\hspace{1pt}\sv{turn}$ and
$\sv{flag}_0\hspace{0.5pt} \wedge\hspace{-0.1pt} \neg\sv{turn}$
as none of them could normally be evaluated by means of one indivisible step, \ie without interleaving.
Nonetheless, this form fits better to the 
the main goal of the case study:
to illustrate the applicability of the proposed methods 
without delving into lots of technical, yet eventually important details deferred to Section~\ref{Sb:arg}.

It should be intuitively clear that the behaviours of $\op{mutex}\hspace{1.9pt} \mv{cs}_0 \hspace{2pt} \mv{cs}_1$ and 
the plain uncontrolled composition $\mv{cs}_0 \hspace{-2pt}\parallel\hspace{-1.4pt} \mv{cs}_1$ may substantially differ whenever $\mv{cs}_0$ and $\mv{cs}_1$ access some shared resource for reading and writing.
More specifically, many input/output properties that can be derived for
$\op{mutex}\hspace{2pt} \mv{cs}_0 \stdsp \mv{cs}_1$ from properties of $\mv{cs}_0$ and $\mv{cs}_1$
just basically do not hold for $\mv{cs}_0 \hspace{-2.9pt}\parallel\hspace{-2.9pt} \mv{cs}_1$ precisely because $\op{mutex}\hspace{2pt} \mv{cs}_0 \hspace{2pt} \mv{cs}_1$
enables certain privileged conditions to $\mv{cs}_0$ and $\mv{cs}_1$ regarding interleaving.
The central goal of this section is to encapsulate by means of a single generic rule the program logical steps transforming 
properties of $\mv{cs}_0$ and $\mv{cs}_1$ to properties of $\op{mutex}\hspace{2pt} \mv{cs}_0 \stdsp \mv{cs}_1$. 

First of all, in order to pin down the mentioned privileged conditions, let $\sv{shared}$ be a state variable
representing the resource to which both threads 
may request access for reading and writing.
Further, let $P_0, Q_0$ and $P_1, Q_1$ be some given predicates 
on the values of $\stsp\sv{shared}\stsp$ so that \eg by $P_0 \hspace{2pt}\sv{shared}$
(as explained in Section~\ref{Sb:parallel-rule} this is actually a shorthand for the set of states $\{\sigma \hspace{3pt}|\hspace{3pt} P_0\hspace{2.9pt} \sigma\sv{shared} \}$)
we can address the precondition of the critical section $\mv{cs}_0$
and by $Q_0\hspace{2.9pt}\sv{shared}$ -- its postcondition.
Thus, the properties that the critical sections shall exhibit in presence of interleaving
can initially be sketched by means of the following extended Hoare triples:
\[
\begin{array}{l}
\hspace{-25pt}\rgvalid{\stsp\sv{shared} = \sv{shared}\pr\wedge \ldots}{P_0 \hspace{2.1pt}\sv{shared}}{\mv{cs}_0}{Q_0 \hspace{2.1pt}\sv{shared}}{\ldots}\\
\hspace{-25pt}\rgvalid{\stsp\sv{shared} = \sv{shared}\pr\wedge \ldots}{P_1 \hspace{2.1pt}\sv{shared}}{\mv{cs}_1}{Q_1 \hspace{2.1pt}\sv{shared}}{\ldots}
\end{array}
\]
Among the relies, the equation $\sv{shared}\hspace{-0.7pt}  = \hspace{-0.7pt} \sv{shared}\pr$ in principle makes out the half of the privileged conditions:
each critical section can locally assume that the opposite thread 
does not modify $\sv{shared}$, addressing thus the core effect of mutual exclusion. 
Another half of the privileged conditions will accordingly amount to \emph{the absence} of  
the same equation
among the yet entirely open guarantees:
its presence there would 
simply preclude any modifications of $\sv{shared}$ 
within $\mv{cs}_0$ and $\mv{cs}_1$,
confining the scope of reasoning to instances of $\mv{cs}_0$ and $\mv{cs}_1$
for which $\mv{cs}_0 \!\parallel\! \mv{cs}_1$ would
actually be a more efficient alternative to $\op{mutex}\hspace{1.5pt} \mv{cs}_0 \stdsp \mv{cs}_1$.

Examining $\op{mutex}$ in more detail reveals that the condition
$\sv{flag}_1 \hspace{-0.5pt}\imp\hspace{-0.5pt} \neg\sv{turn}$ (\ie $\neg\sv{flag}_1\hspace{-1.9pt} \vee\hspace{-1pt} \neg\sv{turn}$)
is sufficient for $\op{thread}_0$  to enter its critical section.
However, the condition is not stable: suppose an evaluation of $\op{mutex}$ is
in a state $\sigma$ with $\neg \sigma\sv{flag}_1\hspace{-0.7pt} \wedge\hspace{0.2pt} \sigma\sv{turn}$ because $\op{thread}_1$ did not set $\sv{flag}_1$ yet
such that by doing so the evaluation runs into a state $\sigma\pr$ with $\sigma\pr\sv{flag}_1\hspace{-1.7pt} \wedge\hspace{-0.7pt} \sigma\pr\sv{turn}$.
Without being relevant for the behaviour of $\op{mutex}$,
this instability poses
a considerable problem from the verification perspective since practically each of the rely/guarantee program logic rules asserts stability in some form.
Deploying auxiliary variables often leads to a solution in such situations.
In this case, a weaker condition
\begin{equation}\label{eq:pm0-draft}
  `\op{thread}_1 \mbox{ did not modify } \sv{turn} \mbox{ yet'} \vee\hspace{-0.5pt} \neg\sv{flag}_1\hspace{-1.2pt} \vee\hspace{-1pt} \neg\sv{turn}
\end{equation}
shall be captured by means of auxiliaries. 
Firstly, (\ref{eq:pm0-draft}) is stable: if $\op{thread}_1$ sets $\sv{flag}_1$ then this happens entirely before it modifies $\sv{turn}$ due to the relative
order of these two assignments.
Secondly, (\ref{eq:pm0-draft}) is sufficient as well: $\op{thread}_0$ may safely enter its critical section in particular when $\op{thread}_1$ did not modify $\sv{turn}$ yet,
no matter what the values of $\sv{flag}_1$ and $\sv{turn}$ are.

Reifying this plan, the programs $\op{thread}^{\mathit{aux}}_0$ and $\op{thread}^{\mathit{aux}}_1$
\begin{figure}
\[
\begin{array}{l l l}
\op{thread}^{\mathit{aux}}_0 \hspace{0.5pt} \mv{cs}_0 &\hspace{-3pt} \defeq & \sv{flag}_0 := \hspace{-1.5pt}\op{True};\\
             &        & \hspace{-5pt}\langle\hspace{0.5pt} \sv{turn} :=\stnsp \op{True};\\
             &        &      \sv{turn\_aux}_0 :=\stnsp \op{True}\rangle;\\
& & \whileS{\stsp\sv{flag}_1 \hspace{-2pt}\wedge\hspace{-0.7pt} \sv{turn}\stsp}{\skipp};\\
& & \mv{cs}_0; \\
& & \sv{flag}_0 := \op{False}\\
& & \\
\op{thread}^{\mathit{aux}}_1 \hspace{0.5pt} \mv{cs}_1 &\hspace{-3pt} \defeq & \sv{flag}_1 := \hspace{-1.5pt}\op{True};\\
             &        &\hspace{-5pt}\langle\hspace{0.5pt} \sv{turn} := \op{False};\\
             &        & \sv{turn\_aux}_1 := \stnsp\op{True}\rangle;\\
& & \whileS{\stsp\sv{flag}_0 \hspace{-1.2pt}\wedge\hspace{-1.7pt} \neg\sv{turn}\stsp}{\skipp};\\
& & \mv{cs}_1; \\
& & \sv{flag}_1 := \op{False}\\
\end{array}
\]
\caption{The threads augmented by the auxiliary variables.} 
\label{fig:pm-aux}
\end{figure}
in Figure~\ref{fig:pm-aux} respectively deploy the variables $\sv{turn\_aux}_0$ and $\sv{turn\_aux}_1$ within the two extra created atomic sections
ensuring that \eg $\sv{turn\_aux}_1$ is set by $\op{thread}_1$ without any interruptions following the modification of $\stsp\sv{turn}$.
The essential effect is that we get $\sigma\sv{turn\_aux}_1\hspace{-1pt} \wedge\hspace{-1pt} \neg \sigma\sv{turn}$ where $\sigma$ denotes any state that $\op{thread}^{\mathit{aux}}_1$ reaches
by processing its atomic section.
The auxiliary model of the mutual exclusion algorithm is then accordingly defined by
\[
\op{mutex}^{\mathit{aux}}\hspace{0.5pt} \mv{cs}_0 \hspace{2pt} \mv{cs}_1 \;\defeq\; \op{thread}^{\mathit{aux}}_0 \hspace{0.2pt} \mv{cs}_0\stsp \parallel\stsp  \op{thread}^{\mathit{aux}}_1 \hspace{0.2pt} \mv{cs}_1
\]
It is quite justified to conjecture what will be elaborated later on, namely that the auxiliaries do not affect the behaviour of the actual $\op{mutex}$ model 
whatsoever.
Therefore without imposing any constraints on $\op{mutex}$ itself 
we can also presume that $\neg\sv{turn\_aux}_1$ and $\neg\sv{turn\_aux}_2$ initially hold
so that (\ref{eq:pm0-draft}) becomes $\neg\sv{turn\_aux}_1\hspace{-1pt} \vee\hspace{-1pt} \neg\sv{flag}_1\hspace{-1.2pt} \vee\hspace{-1pt} \neg\sv{turn}$.

Summing up, the states on which $\op{mutex}$ and $\op{mutex}^{\mathit{aux}}$ operate are shaped as follows. 
First, the variables $\sv{flag}_0, \sv{flag}_1$ and $\sv{turn}$ carry the Boolean values 
used for communication between threads. Second, the Boolean values of $\sv{turn\_aux}_0$ and $\sv{turn\_aux}_1$
will serve the verification purposes only, as explained above.
Third, there are $\sv{shared}, \:\sv{local}_0,\: \sv{local}_1$ whose exact type can remain open yet.
The variables $\sv{local}_0$ and $\sv{local}_1$ are respectively laid out for
$\mv{cs}_0$ and $\mv{cs}_1$ to make a local `snapshot' of $\sv{shared}$  
before submitting a modified
value back to $\sv{shared}$.  

To render the subsequent argumentation in a more structured way, let
\[
\vspace{-9pt}
\begin{array}{l c l}
\hspace{-15pt}\sv{cond}_0 &\hspace{-5pt} \defeq\hspace{-3pt} & \sv{turn\_aux}_1\hspace{-0.5pt} \wedge\hspace{0.2pt} \sv{flag}_1 \stdsp\imp\stdsp  \neg\sv{turn}\\
\hspace{-15pt}\sv{cond}_1 & \hspace{-5pt}\defeq\hspace{-3pt} & \sv{turn\_aux}_0\hspace{-0.5pt} \wedge\hspace{0.2pt}  \sv{flag}_0 \stdsp\imp\stdsp  \sv{turn}
\end{array}
\vspace{-1pt}
\]
and consider the extended Hoare triple
\begin{equation}\label{eq:pm1} 
\hspace{-25pt}\rgvalid{\op{R}_0}{\hspace{-1.5pt}\neg\sv{turn\_aux}_0 \hspace{-0.5pt}\wedge\hspace{-0.5pt} P_0\hspace{2.5pt}\sv{shared}}{\op{thread}^{\mathit{aux}}_0 \hspace{0.1pt} \mv{cs}_0}
        {Q_0\hspace{2.5pt}  \sv{shared}}{\hspace{-2pt}\op{R}_1}
\end{equation}
where $\op{R}_0$ denotes the rely condition
\[
\begin{array}{l r}
 \sv{flag}_0\hspace{-1pt} =\sv{flag}\pr_0\hspace{0.2pt} \wedge\hspace{0.2pt} \sv{turn\_aux}_0\hspace{-0.5pt} = \sv{turn\_aux}\pr_0\hspace{0.2pt} \wedge\hspace{0.2pt} \sv{local}_0\hspace{-0.5pt} = \sv{local}\pr_0\; \wedge  & \hspace{1.41cm}\mathrm{(a)} \\
 (\sv{turn\_aux}_0 \wedge\hspace{-0.5pt} \sv{flag}_0 \wedge \hspace{-0.5pt}\sv{cond}_0 \stdsp\imp\stdsp \sv{shared} = \sv{shared}\pr)\; \wedge  & \mathrm{(b)}\\
 (\sv{cond}_0\stdsp \imp\stdsp \sv{cond}\pr_0)\; \wedge  & \mathrm{(c)}\\
          (P_0\hspace{2.9pt}\sv{shared}\stdsp \imp\stdsp P_0\hspace{2.9pt}  \sv{shared}\pr) \wedge (Q_0\hspace{2.9pt}\sv{shared}\stdsp \imp\stdsp Q_0\hspace{2.9pt}  \sv{shared}\pr)  & \mathrm{(d)}
\end{array}
\]
whose interpretation is given line-by-line below:
\begin{enumerate}
\item[(a)] no step of $\op{thread}^{\mathit{aux}}_1$
  ever modifies $\sv{flag}_0$, $\sv{turn\_aux}_0$ and $\sv{local}_0$;
\item[(b)] the value of $\sv{shared}$ is retained 
           by any program step of $\op{thread}^{\mathit{aux}}_1$ from any state $\sigma$ where $\op{thread}^{\mathit{aux}}_0$ is requesting the exclusive access to $\sv{shared}$
  (\ie the condition $\sigma\sv{turn\_aux}_0\hspace{-1.1pt} \wedge\hspace{-1pt} \sigma\sv{flag}_0$ holds)
           and the request may be regarded as granted (\ie $\sigma\sv{cond}_0\hspace{-1pt}$ holds as well);
           note that this in particular applies when $\op{thread}^{\mathit{aux}}_0$ is inside its critical section;  
\item[(c)] $\sv{cond}_0$ is stable;
\item[(d)] each step of $\op{thread}^{\mathit{aux}}_1$ retains the pre- and the postcondition of $\mv{cs}_0$;
  the extent, to which actions of $\mv{cs}_1$ on the shared resource
  must be constrained in order to protect what $\mv{cs}_0$ achieves, varies for each particular instance of $P_0$ and $Q_0$:
  the threads have to cooperate in this sense to eventually reach the intersection of $Q_0\hspace{2.7pt}  \sv{shared}$ and $Q_1\hspace{2pt} \sv{shared}$
  together. 
\end{enumerate}
As one might expect, the guarantee $\op{R}_1$ in (\ref{eq:pm1}) is entirely symmetric to $\op{R}_0$:
\[
\begin{array}{l}
\hspace{-29pt}  \sv{flag}_1\hspace{-1.2pt} =\sv{flag}\pr_1 \wedge\hspace{0.5pt} \sv{turn\_aux}_1\hspace{-1.1pt} = \sv{turn\_aux}\pr_1 \wedge\hspace{0.5pt} \sv{local}_1\hspace{-0.5pt} = \sv{local}\pr_1\; \wedge\\
\hspace{-29pt}  (\sv{turn\_aux}_1\hspace{-1pt} \wedge \sv{flag}_1\hspace{-1pt} \wedge \sv{cond}_1\stdsp \imp\stdsp \sv{shared} = \sv{shared}\pr)\; \wedge \\
\hspace{-29pt}  (\sv{cond}_1\stdsp \imp\stdsp \sv{cond}\pr_1)\; \wedge \\
\hspace{-29pt}  (P_1\hspace{2.7pt}  \sv{shared}\stdsp \imp\stdsp P_1\hspace{2.2pt}  \sv{shared}\pr) \wedge (Q_1\hspace{2.7pt}  \sv{shared}\stdsp \imp\stdsp Q_1\hspace{2.2pt}  \sv{shared}\pr) 
\end{array}
\]
so that $\op{R}_0$ and $\op{R}_1$ just get swapped in the triple for $\op{thread}^{\mathit{aux}}_1$:
\[
\hspace{-22.5pt}\rgvalid{\op{R}_1}{\hspace{-2pt}\neg\sv{turn\_aux}_1\hspace{-0.5pt} \wedge\hspace{-0.5pt} P_1\hspace{1.7pt} \sv{shared}\stsp}{\op{thread}^{\mathit{aux}}_1 \hspace{0.1pt} \mv{cs}_1}
        {Q_1\hspace{1.7pt}  \sv{shared}}{\hspace{-2pt}\op{R}_0}
        \]
Regarding the critical sections, the extended Hoare triple
\begin{equation}\label{eq:pm3} 
\hspace{-1.2pt}  \rgvalid{\hspace{-0.5pt}\sv{shared}\hspace{-1pt} = \hspace{-1pt}\sv{shared}\pr\hspace{1pt} \wedge\hspace{1.9pt} \sv{local}_0\hspace{-1.5pt} =\hspace{-1pt} \sv{local}\pr_0}{P_0 \hspace{2.4pt}\sv{shared}}{\hspace{1.5pt}\mv{cs}_0\hspace{1.5pt}}{Q_0 \hspace{2.4pt}\sv{shared}}
          {\op{G}_0}\hspace{-1pt}
\end{equation}
where $\op{G}_0$ denotes the guarantee 
\[
\begin{array}{l r}
\sv{flag}_0\hspace{-1pt} =\sv{flag}\pr_0 \wedge\stsp \sv{flag}_1\hspace{-1.5pt} =\sv{flag}\pr_1 \wedge\stsp \sv{turn} =\sv{turn}\pr \wedge\stsp \sv{local}_1 \hspace{-1pt}= \sv{local}\pr_1\stnsp \;\wedge  & \hspace{21pt}\mbox{(a)}\\
 \sv{turn\_aux}_0\hspace{-0.5pt} = \sv{turn\_aux}\pr_0 \stsp\wedge\stsp \sv{turn\_aux}_1\hspace{-1pt} = \sv{turn\_aux}\pr_1 \;\wedge & \mbox{(b)}\\ 
(P_1\hspace{2.2pt} \sv{shared}\stdsp \imp\stdsp P_1\hspace{2.2pt} \sv{shared}\pr) \wedge (Q_1\hspace{2pt} \sv{shared}\stdsp \imp\stdsp Q_1\hspace{2pt} \sv{shared}\pr) & \mbox{(c)}
\end{array}
\]
captures the properties, parameterised by $P_0$, $Q_0$, $P_1$ and $Q_1$, that $\mv{cs}_0$ shall exhibit.
It will thus be assumed when deriving $(\ref{eq:pm1})$ in the next section.  
Note once more the presence of $\sv{shared} = \sv{shared}\pr$ among the rely conditions and its absence in $\op{G}_0$
which nonetheless requires from the program steps of $\mv{cs}_0$
\begin{enumerate}
\item[(a)] not to modify $\sv{flag}_0, \sv{flag}_1, \sv{turn}$ as these are reserved for communication between the two threads
           as well as the variable $\sv{local}_1$ which is `owned' by $\op{thread}^{\mathit{aux}}_1$;
\item[(b)] not to modify the auxiliary variables;
\item[(c)] to retain the conditions $P_1\hspace{2pt} \sv{shared}$ and $Q_1\hspace{1.7pt} \sv{shared}$ in line with $\op{R}_1$.
\end{enumerate}
Regarding $\mv{cs}_1$ we accordingly assume
\[
\hspace{-25pt}  \rgvalid{\sv{shared}\hspace{-1pt} = \hspace{-1pt}\sv{shared}\pr \hspace{-1pt}\wedge \sv{local}_1\hspace{-1.7pt} = \hspace{-1.2pt}\sv{local}\pr_1}{P_1 \hspace{2pt}\sv{shared}}{\mv{cs}_1}{Q_1 \hspace{2pt}\sv{shared}}
          {\hspace{-2pt}\op{G}_1}
          \]
where $\op{G}_1$ denotes                    
\[
\begin{array}{l}
\hspace{-29pt}\sv{flag}_0\hspace{-1pt} =\sv{flag}\pr_0 \wedge \sv{flag}_1\hspace{-1.2pt} =\sv{flag}\pr_1 \wedge \sv{turn} =\sv{turn}\pr\hspace{-0.5pt} \wedge \sv{local}_0 = \sv{local}\pr_0 \;\wedge   \\
\hspace{-29pt}\sv{turn\_aux}_0 = \sv{turn\_aux}\pr_0\stsp \wedge\hspace{0.7pt} \sv{turn\_aux}_1 \stnsp= \sv{turn\_aux}\pr_1\hspace{2.1pt}\wedge \\ 
\hspace{-29pt}(P_0\hspace{2.5pt} \sv{shared}\stdsp \imp\stdsp P_0\hspace{2.5pt}  \sv{shared}\pr) \wedge (Q_0\hspace{2.2pt}  \sv{shared}\stdsp \imp\stdsp Q_0\hspace{2.2pt}  \sv{shared}\pr) 
\end{array}
\]
\section{Processing $\op{thread}^{\mathit{aux}}_0$}\label{S:PM1:process}
The next goal is to establish (\ref{eq:pm1}) applying the rules of the rely/guarantee program logic.
For the first part of $\op{thread}^{\mathit{aux}}_0$ up to the critical section, \ie
\[
\begin{array}{l l l}
\hspace{-19pt}\op{thread}^{\mv{pre}}_0 \hspace{-0.2cm}&\; \defeq\;  &\sv{flag}_0 :=\hspace{-2pt} \op{True};\\
             &        & \hspace{-5pt}\langle\hspace{0.5pt} \sv{turn} :=\hspace{-2pt}  \op{True};\\
             &        & \sv{turn\_aux}_0 := \hspace{-2pt} \op{True}\rangle;\\
& & \whileS{\stsp\sv{flag}_1 \hspace{-1pt} \wedge\hspace{-0.5pt}  \sv{turn}}{\skipp}
\end{array}
\]
we can derive the property
\[
\begin{array}{l}
\hspace{-29pt}\rgvalid{\op{R}_0}{\neg\sv{turn\_aux}_0 \wedge\hspace{-1.5pt}  P_0\hspace{2.5pt}  \sv{shared}}{\\ \hspace{-9pt}\op{thread}^{\mv{pre}}_0\\ \hspace{-20pt}}
        {\hspace{2pt}\sv{flag}_0 \wedge\hspace{-0.5pt} \sv{turn\_aux}_0 \wedge\hspace{-1.5pt} P_0\hspace{2.5pt}  \sv{shared}\hspace{0.2pt} \wedge\hspace{-0.5pt} (\sv{flag}_1 \imp \neg\sv{turn})}{\hspace{-2pt}\op{R}_1}
        \end{array}
        \]
using $\stsp\sv{flag}_0\hspace{-0.5pt} \wedge\hspace{-0.5pt} \neg\sv{turn\_aux}_0 \wedge\hspace{-0.2pt} P_0\hspace{2.5pt}  \sv{shared}\stsp$ and $\stsp\sv{flag}_0 \hspace{-0.1pt}\wedge \sv{turn\_aux}_0 \wedge\hspace{-0.5pt} P_0\hspace{2.5pt}  \sv{shared}\stsp$
as the intermediate conditions which are both stable under $R_0$. 
Thus, it suffices to derive the triple
\begin{equation}\label{eq:pm4}
 \begin{array}{l}
\hspace{-29pt} \rgvalid{\op{R}_0}{\sv{flag}_0 \hspace{-0.5pt}\wedge\hspace{-0.2pt} \sv{turn\_aux}_0 \hspace{-0.2pt}\wedge\hspace{-1pt}  P_0\hspace{3pt} \sv{shared} \wedge\hspace{-0.5pt} (\sv{flag}_1 \imp \neg\sv{turn})}{\\ \hspace{-9pt}\mv{cs}_0 \\ \hspace{-20pt} }
         {\hspace{2pt}Q_0\hspace{3pt} \sv{shared}}{\hspace{-2pt}\op{R}_1}
 \end{array}
\end{equation}
to conclude (\ref{eq:pm1}).
To this end we will appeal to the assumption (\ref{eq:pm3}) 
despite the seeming discord: 
(\ref{eq:pm3}) intentionally relied on the privileged condition $\sv{shared}\hspace{-2.1pt} =\hspace{-2pt} \sv{shared}\pr$
whereas $\op{R}_0$ provides the same condition only under the premise $\sv{turn\_aux}_0 \hspace{-1pt}\wedge\hspace{-0.5pt}\sv{flag}_0 \hspace{-0.9pt}\wedge\hspace{-0.2pt} \sv{cond}_0$.
In other words, the rely in the assumed triple (\ref{eq:pm3}) is stronger than the rely in the anticipated conclusion (\ref{eq:pm4}).
This however makes merely the plain weakening, \ie the program correspondence rule (Proposition~\ref{thm:pcorr-rule}) with $\op{id}$ in place of $r$, impossible
but we apply the rule instantiating $r$ by a state relation that answers to the mentioned premise, namely
\[
\begin{array}{l}
\hspace{-10pt} \op{r}_0 \;\defeq\;  \{\stsp(\sigma,\stsp \sigma\pr) \hspace{3.5pt}|\hspace{3.7pt} \sigma = \sigma\pr\hspace{-1pt} \wedge\hspace{-0.5pt} \sigma\sv{turn\_aux}_0 \hspace{-0.5pt}\wedge\hspace{-0.5pt} \sigma\sv{flag}_0 \hspace{-0.5pt}\wedge\hspace{-0.5pt} \sigma\sv{cond}_0 \stsp\}.
\end{array}
\]
The proof obligations arising by an application of this instance of the program correspondence rule to (\ref{eq:pm3}) and (\ref{eq:pm4}) are:
\begin{enumerate}
\item[-] $\pcorrC{\mv{cs}_0}{\op{r}_0}{\hspace{-1.75pt}\mv{cs}_0}$ which gets passed on to the $\op{mutex}$ rule in form of an assumption, 
  noting that
  it does not additionally constrain the set of admissible $\mv{cs}_0$ instances
  as it merely requires from each step of
  $\mv{cs}_0$ to retain the condition $\sv{turn\_aux}_0\hspace{-0.1pt} \wedge\hspace{0.1pt} \sv{flag}_0\hspace{0.1pt} \wedge\hspace{0.1pt} \sv{cond}_0$
  and this was anyway already captured by the guarantee $\op{G}_0$;\vspace{3pt}
\item[-] $\rcomp{\op{r}_0\hspace{1pt}}{\hspace{0.5pt}\op{R}_0}\hspace{0.5pt} \subseteq \hspace{0.1pt}\rcomp{(\sv{shared}\hspace{-0.7pt} =\hspace{-0.7pt} \sv{shared}\pr\hspace{-0.5pt} \wedge\hspace{-0.5pt} \sv{local}_0 \hspace{-1.5pt}= \hspace{-1pt}\sv{local}\pr_0)\hspace{0.9pt}}{\hspace{1.1pt}\op{r}_0}\stsp$ which
  holds 
  since with $\op{r}_0$ on the \emph{lhs} we can derive $\sv{shared}\hspace{-0.5pt} =\hspace{-0.2pt} \sv{shared}\pr$ from $\op{R}_0$
  and, moreover, 
  $\sv{cond}_0 \imp\stsp \sv{cond}\pr_0$ is provided by $\op{R}_0$ in order to establish $\op{r}_0$ on the \emph{rhs};\vspace{3pt}
\item[-] $\sv{flag}_0 \hspace{-0.2pt}\wedge\hspace{0.5pt} \sv{turn\_aux}_0 \hspace{0.5pt}\wedge\hspace{-0.5pt} P_0\hspace{2.7pt} \sv{shared}\hspace{0.9pt} \wedge (\sv{flag}_1\hspace{-0.7pt} \imp\hspace{-0.2pt} \neg\sv{turn})\hspace{0.7pt} \subseteq\hspace{1pt} \rimg{\op{r}_0\hspace{0.5pt}}{\hspace{1.5pt}(\hspace{-1pt}P_0\hspace{2.7pt}\sv{shared})}$
which holds since $\sv{flag}_1 \hspace{-1pt}\imp \neg\sv{turn}\stsp$ entails $\sv{cond}_0$; \vspace{3pt}
\item[-]  $\rimg{\op{r}_0\hspace{0.7pt}}{\hspace{1.2pt}(Q_0 \hspace{2.5pt}\sv{shared})} \hspace{0.1pt}\subseteq\hspace{0.5pt} Q_0 \hspace{2pt}\sv{shared}\stsp$ which follows
  straight from $\op{r}_0 \hspace{-0.7pt}\subset\hspace{-0.2pt} \op{id}$;\vspace{3pt}
\item[-]  $\rcomp{\rcomp{\op{r}_0\hspace{2.5pt}}{\hspace{2.45pt}\op{G}_0}\hspace{2.1pt}}{\hspace{2.7pt}\op{r}_0}\hspace{0.5pt} \subseteq\hspace{-0.1pt}\op{R}_1$ which holds by 
  the decisive property that the conditions $\sv{turn\_aux}_0\hspace{-0.1pt} \wedge\hspace{0.2pt} \sv{flag}_0\hspace{-0.2pt} \wedge \hspace{-0.2pt}\sv{cond}_0$ and
  $\sv{turn\_aux}_1\hspace{-0.9pt} \wedge\hspace{0.2pt} \sv{flag}_1\hspace{-0.2pt} \wedge\hspace{-0.2pt} \sv{cond}_1$ are mutually exclusive and hence
  in particular entail $\sv{shared} = \sv{shared}\pr$.
\end{enumerate} 
\section{The intermediate result}
What we achieved so far is to derive the extended Hoare triple
\[
\hspace{-21pt}\rgvalid{\op{R}_0}{\hspace{-2pt}\neg\sv{turn\_aux}_0 \wedge\hspace{-0.5pt} P_0\hspace{2.5pt} \sv{shared}\stsp}{\op{thread}^{\mathit{aux}}_0 \hspace{0.1pt} \mv{cs}_0}
        {\stsp Q_0\hspace{2.5pt} \sv{shared}}{\hspace{-2pt}\op{R}_1}
        \]
 by successive application of the program logical rules from Section~\ref{S:prog-log}  assuming  
\[
 \hspace{-21pt} \rgvalid{\sv{shared} = \sv{shared}\pr\hspace{-0.7pt} \wedge \sv{local}_0 = \sv{local}\pr_0}{\hspace{-1.5pt}P_0 \hspace{2.5pt} \sv{shared}\stsp}{\hspace{-1pt}\mv{cs}_0\hspace{-1pt}}{\stsp Q_0 \hspace{2.5pt} \sv{shared}}
          {\hspace{-1pt}\op{G}_0}
\]
and
$
 \pcorrC{\stnsp\mv{cs}_0}{\op{r}_0}{\hspace{-0.07cm}\mv{cs}_0}.
$
In an entirely symmetric way we further derive
\[
\hspace{-21pt}\rgvalid{\op{R}_1}{\hspace{-2pt}\neg\sv{turn\_aux}_1 \wedge\hspace{-1.5pt} P_1\hspace{2.1pt} \sv{shared}\stsp}{\op{thread}^{\mathit{aux}}_1 \hspace{1pt} \mv{cs}_1}
        {\stsp Q_1\hspace{2.1pt} \sv{shared}}{\hspace{-2pt}\op{R}_0}
\]
from the assumptions 
\[
\hspace{-21pt} \rgvalid{\sv{shared} = \sv{shared}\pr\hspace{-1.5pt} \wedge\hspace{-0.5pt} \sv{local}_1 \hspace{-1.5pt}= \sv{local}\pr_1}{\hspace{-2pt}P_1 \hspace{1.7pt}\sv{shared}\stsp}
       {\hspace{-1.5pt}\mv{cs}_1\hspace{-1.5pt}}{\stsp Q_1 \hspace{1.7pt}\sv{shared}}
          {\hspace{-2pt}\op{G}_1}
\]
and
$
 \pcorrC{\stnsp\mv{cs}_1\hspace{-1pt}}{\op{r}_1}{\hspace{-3pt}\mv{cs}_1}
$
 with $\op{r}_1 \hspace{-0.5pt}\defeq\hspace{-0.5pt} \{(\sigma, \sigma\pr) \hspace{2pt}|\hspace{2pt} \sigma\hspace{-1.5pt} =\stnsp \sigma\pr\hspace{-0.7pt}  \wedge \hspace{0.5pt} \sigma\sv{turn\_aux}_1 \hspace{-0.5pt}\wedge \hspace{0.1pt}\sigma\sv{flag}_1\hspace{-0.5pt} \wedge\hspace{0.1pt} \sigma\sv{cond}_1 \}$. 

 Regarding the auxiliary model $\op{mutex}^{\mathit{aux}}$, \ie $\op{thread}^{\mathit{aux}}_0  \mv{cs}_0 \hspace{-0.5pt}\parallel\hspace{-0.5pt}  \op{thread}^{\mathit{aux}}_1 \mv{cs}_1$,
 we can thus conclude using the parallel composition rule (Corollary~\ref{thm:parallel-rule2}):
 \vspace{4pt}
\begin{lemma}[The auxiliary $\op{mutex}$ rule] \label{thm:mutex-aux}
  Assume\\[7pt]
  \begin{tabular}{l l}
    \hspace{-10pt}  \emph{(1)} &\hspace{-12pt}  $\rgvalid{\stsp\sv{shared}\hspace{-1pt} = \sv{shared}\pr\hspace{-2pt} \wedge \hspace{-1pt} \sv{local}_0\hspace{-1pt} = \hspace{-0.5pt} \sv{local}\pr_0}{\hspace{-2.5pt}P_0 \hspace{2.2pt}\sv{shared}\stsp}{\hspace{-1pt} \mv{cs}_0\hspace{-1pt} }{Q_0 \hspace{2.2pt}\sv{shared}}{\hspace{-2pt}\op{G}_0}$\\[2pt]
    \hspace{-10pt}  \emph{(2)} &\hspace{-12pt}  $\rgvalid{\stsp\sv{shared}\hspace{-1pt} = \sv{shared}\pr\hspace{-2pt} \wedge \hspace{-1pt} \sv{local}_1\hspace{-1pt} = \hspace{-0.5pt} \sv{local}\pr_1}{\hspace{-2.5pt}P_1 \hspace{2pt}\sv{shared}\stsp}{\hspace{-1pt} \mv{cs}_1\hspace{-1pt} }{Q_1 \hspace{2pt}\sv{shared}}{\hspace{-2pt}\op{G}_1}$\\[2pt]
   \hspace{-10pt} \emph{(3)} &\hspace{-12pt}  $\pcorrC{\mv{cs}_0}{\op{r}_0}{\hspace{-2.2pt}\mv{cs}_0}$\\[2pt]
   \hspace{-10pt} \emph{(4)} &\hspace{-12pt}  $\pcorrC{\mv{cs}_1}{\op{r}_1}{\hspace{-2.2pt}\mv{cs}_1}$.
  \end{tabular}\\[7pt]  
Then
\vspace{-21pt}
\[ \hspace{-2.9pt}
\rgvalidap{\op{id}}{\hspace{-0.5pt}P_0\hspace{2.1pt} \sv{shared} \wedge\hspace{-1pt} P_1\hspace{2pt} \sv{shared} \wedge \neg\sv{turn\_aux}_0 \wedge \neg\sv{turn\_aux}_1}
{\quad\op{thread}^{\mathit{aux}}_0 \hspace{0.1pt} \mv{cs}_0 \hspace{1pt}\parallel\hspace{1pt}  \op{thread}^{\mathit{aux}}_1 \hspace{0.1pt} \mv{cs}_1}
{ Q_0\hspace{2.5pt} \sv{shared} \wedge\hspace{-0.5pt} Q_1\hspace{2pt} \sv{shared}}{\hspace{-1.5pt}\top} 
\]
\end{lemma}
\section{A rule for $\op{mutex}$} 
Utilising this result, 
we once more resort to the program correspondence rule,
now instantiating $r$ by the state relation 
\[
\begin{array}{l c l l}
 \hspace{-10pt}  \op{r}_{\op{eqv}} & \hspace{-4pt}\defeq & \hspace{-4pt} \{ (\sigma,\stsp \sigma\pr) \hspace{4pt}| \hspace{-7pt}&  \sigma\sv{turn} = \sigma\pr\sv{turn}\hspace{0.2pt} \wedge
  \hspace{0.2pt}\sigma\sv{shared} = \sigma\pr\sv{shared} \hspace{2.5pt}\wedge \\ 
             &        &                         &  (\forall i\hspace{0.1pt} \in\hspace{-1pt} \{0, 1\}. \hspace{2.5pt}\sigma\sv{flag}_i\hspace{-1pt} = \sigma\pr\sv{flag}_i \wedge \sigma\sv{local}_i\hspace{-0.5pt} = \sigma\pr\sv{local}_i ) \}\\
\end{array}
\]
\ie the equivalence on states up to the values of the auxiliary variables.
Thus, a correspondence $\pcorrC{p\hspace{0.5pt}}{\op{r}_{\op{eqv}}}{\hspace{-1pt}q}$ allows us to transfer from $p$ to $q$ those state properties
that do not depend on the auxiliaries. For instance,  $Q\hspace{2.5pt}\sv{shared}$ is clearly such a property for any $Q$.
Hence, $Q(\sigma\sv{shared})$ naturally entails $Q(\sigma\pr\sv{shared})$
provided $(\sigma,\stsp \sigma\pr) \in\hspace{-0.5pt} \op{r}_{\op{eqv}}$.
\begin{lemma}[The $\op{mutex}$ rule]\label{thm:mutex}
    Assume\\[7pt]
  \begin{tabular}{l l}
    \hspace{-10pt}  \emph{(1)} &\hspace{-12pt}  $\rgvalid{\stsp\sv{shared}\hspace{-1pt} = \sv{shared}\pr\hspace{-2pt} \wedge \hspace{-1pt} \sv{local}_0\hspace{-1pt} = \hspace{-0.5pt} \sv{local}\pr_0}{\hspace{-2.5pt}P_0 \hspace{2.2pt}\sv{shared}\stsp}{\hspace{-1pt} \mv{cs}_0\hspace{-1pt} }{Q_0 \hspace{2.2pt}\sv{shared}}{\hspace{-2pt}\op{G}_0}$\\[2pt]
    \hspace{-10pt}  \emph{(2)} &\hspace{-12pt}  $\rgvalid{\stsp\sv{shared}\hspace{-1pt} = \sv{shared}\pr\hspace{-2pt} \wedge \hspace{-1pt} \sv{local}_1\hspace{-1pt} = \hspace{-0.5pt} \sv{local}\pr_1}{\hspace{-2.5pt}P_1 \hspace{1.7pt}\sv{shared}\stsp}{\hspace{-1pt} \mv{cs}_1\hspace{-1pt} }{Q_1 \hspace{1.9pt}\sv{shared}}{\hspace{-2pt}\op{G}_1}$\\[2pt]
   \hspace{-10pt} \emph{(3)} &\hspace{-12pt}  $\pcorrC{\mv{cs}_0}{\op{r}_0}{\hspace{-2.2pt}\mv{cs}_0}$\\[2pt]
   \hspace{-10pt} \emph{(4)} &\hspace{-12pt}  $\pcorrC{\mv{cs}_1}{\op{r}_1}{\hspace{-2.2pt}\mv{cs}_1}$\\[2pt]
   \hspace{-10pt} \emph{(5)} &\hspace{-12pt} $\pcorrC{\mv{cs}_0}{\op{r}_{\op{eqv}}}{\!\mv{cs}_0}$\\[2pt]
     \hspace{-10pt} \emph{(6)} &\hspace{-12pt} $\pcorrC{\mv{cs}_1}{\op{r}_{\op{eqv}}}{\!\mv{cs}_1}$.
  \end{tabular}\\[9pt]  
  Then $\stsp\rgvalid{\op{id}}{\hspace{-1.2pt}P_0\hspace{1.9pt} \sv{shared}\hspace{0.55pt} \wedge\hspace{-0.2pt} P_1\hspace{1.2pt} \sv{shared}}
  {\op{mutex}\hspace{1.2pt}\mv{cs}_0 \hspace{1.2pt}\mv{cs}_1}
  { Q_0\hspace{1.7pt} \sv{shared}\hspace{0.5pt} \wedge\hspace{0.2pt} Q_1\hspace{0.7pt} \sv{shared}}{\hspace{-1.5pt}\top}$. 
\end{lemma}
\begin{proof}
Firstly, the triple
\[\hspace{-20pt}\rgvalida{\op{id}}{\hspace{-1.5pt} P_0\hspace{2.5pt} \sv{shared} \wedge\hspace{-1.5pt}  P_1\hspace{2.1pt} \sv{shared} \wedge \hspace{-1.5pt} \neg\sv{turn\_aux}_0 \wedge \hspace{-1.5pt} \neg\sv{turn\_aux}_1}
{\quad\op{thread}^{\mathit{aux}}_0  \mv{cs}_0 \parallel  \op{thread}^{\mathit{aux}}_1  \mv{cs}_1}
{ Q_0\hspace{2.5pt} \sv{shared} \wedge\hspace{-1pt} Q_1\hspace{2.1pt} \sv{shared}}{\hspace{-1pt}\top}
\]
follows by Proposition~\ref{thm:mutex-aux} using (1) -- (4).
Then we can apply Proposition~\ref{thm:pcorr-rule} based on the correspondence
\[
\hspace{-20pt}\pcorrC{\op{thread}^{\mathit{aux}}_0 \mv{cs}_0 \parallel  \op{thread}^{\mathit{aux}}_1 \mv{cs}_1\;}{\op{r}_{\op{eqv}}}{\op{thread}_0 \hspace{2pt}\mv{cs}_0 \parallel  \op{thread}_1 \hspace{1pt} \mv{cs}_1}
\]
which follows by (5) and (6) since 
\[
\hspace{-20pt}\pcorrC{\langle \sv{turn} :=\hspace{-2pt} \op{True};\hspace{1pt} \sv{turn\_aux}_0 :=\hspace{-2pt} \op{True}\rangle\;}{\op{r}_{\op{eqv}}}{\sv{turn} :=\hspace{-2pt} \op{True}}
\]
and
\[
\hspace{-20pt}\pcorrC{\langle \sv{turn} :=\hspace{-2pt} \op{False};\hspace{1pt} \sv{turn\_aux}_1 :=\hspace{-2pt} \op{True}\rangle\;}{\op{r}_{\op{eqv}}}{\sv{turn} := \hspace{-2pt}\op{False}}
\]
can be derived using Proposition~\ref{thm:await-corrL}.
\end{proof}

To sum up, the assumptions (1) and (2) in the above rule are essential, addressing in particular the input/output properties of the critical sections.
As opposed to that, (3)--(6) get routinely 
discarded when the rule is applied to instances of $\mv{cs}_0$ and $\mv{cs}_1$
that sensibly avoid any form of access to the protocol and the auxiliary variables.
\section{An application of the $\op{mutex}$ rule}\label{sub:mutex-inst}
As motivated at the beginning of the case study, 
by Proposition~\ref{thm:mutex} we can draw sound conclusions regarding the behaviour of the entire protocol for each appropriate instance of the critical sections
as well as of the pre- and postconditions.
In order to illustrate that, the critical sections 
will be instantiated by programs, each adding an element to a shared set.

Recall that the type of $\sv{shared}, \sv{local}_0$ and $\sv{local}_1$ was left open as it is basically immaterial
to the functioning of the mutual exclusion protocol.
Let this type be  $\ty{int} \Rightarrow \ty{bool}$, \ie\stsp set of integers, in this example
and the respective tasks of the critical setions -- to insert $0$ and $1$ to $\sv{shared}$ using $\sv{local}_0$ and $\sv{local}_1$.
Thus, we instantiate $\mv{cs}_0 \mapsto \op{update}_0$ and $\mv{cs}_1 \mapsto \op{update}_1$ where
\[
\begin{array}{l c l}
\op{update}_0 & \hspace{-5pt}\defeq &\hspace{-1pt} \sv{local}_0 \hspace{5.99pt} := \sv{shared}; \\
                      &        & \sv{local}_0 \hspace{5.3pt} := \{0\} \cup \sv{local}_0; \\
                      &        &  \sv{shared}  := \sv{local}_0\\
                      &        & \\
\op{update}_1 & \hspace{-5pt}\defeq &\hspace{-1pt} \sv{local}_1\hspace{5.99pt}  := \sv{shared}; \\
                      &        & \sv{local}_1\hspace{5.3pt}  := \{1\} \cup \sv{local}_1; \\
                      &        &  \sv{shared} := \sv{local}_1\\
\end{array}
\]
and show that upon termination of the mutex protocol, $\sv{shared}$ always contains both, $0$ and $1$, \ie
\[
\hspace{-21pt}\rgvalid{\op{id}}{\!\top}
{\op{mutex} \hspace{2pt}\op{update}_0 \hspace{2pt} \op{update}_1}
{ 0\hspace{0.5pt} \in\hspace{0.7pt}\sv{shared}\stsp \wedge\hspace{-0.5pt} 1\hspace{0.2pt}\in\hspace{0.7pt}\sv{shared}}{\!\top}.
\]
The goal matches the conclusion of Proposition~\ref{thm:mutex} if we set $P_0$ and $P_1$ to $\top$,
$Q_0$ to $0\hspace{0.5pt} \in\hspace{0.2pt} \sv{shared}$ and $Q_1$ to $1\hspace{0.2pt} \in \hspace{0.1pt}\sv{shared}$.
Hence, to match its premises we first need to establish 
\[  
\hspace{-21pt}\rgvalid{\sv{shared} = \sv{shared}\pr\hspace{-1pt} \wedge \sv{local}_0 = \sv{local}\pr_0}{\hspace{-1.5pt}\top}{\op{update}_0 }{0\hspace{0.5pt} \in \sv{shared}}{\hspace{-1.5pt}\op{G}\pr_0}
\]
where $\op{G}\pr_0$ is consequently the following instance of $\op{G}_0$:
\[
\begin{array}{l}
\hspace{-25pt}\sv{flag}_0\hspace{-0.7pt} =\sv{flag}\pr_0 \wedge \sv{flag}_1 \hspace{-0.5pt}=\sv{flag}\pr_1 \wedge \sv{turn} =\sv{turn}\pr\hspace{2pt} \wedge  \\
\hspace{-25pt}\sv{local}_1 \hspace{-0.9pt}= \sv{local}\pr_1 \wedge \sv{turn\_aux}_0\hspace{-0.5pt} = \sv{turn\_aux}\pr_0 \wedge \sv{turn\_aux}_1\hspace{-0.5pt} = \sv{turn\_aux}\pr_1\hspace{2pt} \wedge \\
\hspace{-25pt}(1\in \sv{shared}\stdsp \imp\stdsp 1 \in \sv{shared}\pr). 
\end{array}
\]
The most intricate part constitutes the condition $1\hspace{-0.2pt}  \in\hspace{0.1pt}  \sv{shared}\stsp \imp\stsp 1\hspace{-0.2pt}  \in  \sv{shared}\pr$
which $\op{update}_0$ can surely guarantee. 
As mentioned earlier, its purpose is to ensure that $\op{update}_0$ does not compromise what
$\op{update}_1$ is trying to accomplish, \ie\stsp does not remove $1$ from $\sv{shared}$ in this instance.

Similar observations can be made regarding the triple for $\op{update}_1$:
\[  
\hspace{-22pt}\rgvalid{\sv{shared} = \sv{shared}\pr\hspace{-1pt} \wedge \sv{local}_1 = \sv{local}\pr_1}{\stnsp\top}{\op{update}_1 }{1 \in \sv{shared}}{\hspace{-1pt}\op{G}\pr_1}
\]
where $\op{G}\pr_1$ is the respective instance of $\op{G}_1$.
The remaining proof obligations 
\[
\begin{array}{l}
 \hspace{-27pt}\pcorrC{\op{update}_0}{\op{r}_0}{\op{update}_0}\\
 \hspace{-27pt}\pcorrC{\op{update}_1}{\op{r}_1}{\op{update}_1}\\
 \hspace{-27pt}\pcorrC{\op{update}_0}{\op{r}_{\op{eqv}}}{\op{update}_0}\\
 \hspace{-27pt}\pcorrC{\op{update}_1}{\op{r}_{\op{eqv}}}{\op{update}_1}
\end{array}
\]
can routinely be discarded by applying the syntax-driven correspondence rules in Section~\ref{Sb:corr-rules} since both,
$\op{update}_0$ and $\op{update}_1$, avoid any form of access to the state variables  
$\sv{flag}_0$, $\sv{flag}_1$, $\sv{turn}$, $\sv{turn\_aux}_0$ and $\sv{turn\_aux}_1$.
\setcounter{equation}{0}
\chapter{State Relations as Pre- and Postconditions}\label{S:gener}
A closer look at Definition~\ref{def:rgval} and the conclusion of the example application of the $\op{mutex}$ rule in the preceding chapter, \ie 
\[
\hspace{-15pt}\rgvalid{\op{id}}{\hspace{-2pt}\top}
{\op{mutex} \hspace{2pt}\op{update}_0 \hspace{2pt} \op{update}_1}
{\hspace{0.2pt} 0\hspace{0.55pt} \in\hspace{0.5pt}  \sv{shared} \wedge\hspace{-0.5pt} 1\hspace{0.15pt}  \in\hspace{0.7pt}  \sv{shared}}{\hspace{-2pt}\top}
\]
reveals that 
the postcondition 
encircles the admissible values of $\sv{shared}$ in all states where $\op{mutex} \hspace{2pt}\op{update}_0 \hspace{2pt} \op{update}_1$ might terminate
without any accounts to its initial values.
So, for instance, $\sv{shared}$ could initially carry the set $\{2\}$, but
$\{0, 1\}$ and $\{0, 1, 3\}$ and the like would perfectly satisfy the postcondition although
each of the critical sections merely adds its entry to the shared resource.
To come up to that we however would need state relations rather than state predicates:  
\[\hspace{-10pt}\sv{shared}\stsp \subseteq\stsp \sv{shared}\pr\hspace{-0.5pt} \wedge\hspace{0.5pt} 0\hspace{0.5pt}\in\hspace{0.5pt} \sv{shared}\pr \wedge \hspace{-0.5pt}1 \hspace{-0.1pt}\in\hspace{0.5pt} \sv{shared}\pr\]
gives a more accurate postcondition which would
allow us to conclude that neither $\{0, 1\}$ nor $\{0, 1, 3\}$ can be an output with the input $\{2\}$.

The insight is basically not new, and the problem is usually circumvented by fixing initial values of relevant state variables by means of logical `link' variables in preconditions
in order to refer to these in postconditions. That is, in the above example one could fix the initial value of $\sv{shared}$ by adding $x = \sv{shared}$
to the preconditions
and refer to $x$ in the postconditions by adding $x \subseteq \sv{shared}$.
However, van Staden~\cite{DBLP:conf/mpc/Staden15} in particular
addresses this point by means of an elegant approach leading to a systematic integration of the link variables technique with the program logic.
\section{Generalising the extended Hoare triples}
Firstly, taking a single state $\sigma_1$ as a precondition allows us to use a state relation $Q$ as a postcondition
that depends on the input:
$\rgvalidr{\rho}{R}{\hspace{-0.5pt}\{\sigma_1\}}{\hspace{-1pt}p\hspace{-0.5pt}}{\rimg{Q\hspace{0.2pt}}{\hspace{0.2pt}\{\sigma_1\}}}{\hspace{-1pt}G}$.
Note that
resorting to a singleton set
is decisive to this end because the plain approach $\rgvalidr{\rho}{R}{\hspace{-0.7pt}P}{\hspace{-0.7pt}p\hspace{-0.5pt}}{\rimg{Q\hspace{0.2pt}}{\hspace{0.2pt}P}}{\hspace{-0.7pt}G}$
does not achieve the same effect with any $P$ for if $\sigma\pr$ is in the image of $P$ under $Q$,
\ie $\stdsp\sigma\pr\hspace{-1.2pt} \in\hspace{-0.7pt} \rimg{Q\hspace{0.1pt}}{\hspace{0.1pt}P}$,
then there \emph{exists some} $\sigma\hspace{-0.79pt} \in\hspace{-1.59pt} P$
such that $(\sigma, \stsp \sigma\pr)\hspace{-1pt} \in\hspace{-1pt} Q$
which \emph{per se} does not rule out $(\sigma_1, \stsp \sigma\pr)\hspace{-1pt} \notin\hspace{-1pt} Q$ where $\sigma_1\hspace{-0.59pt} \in\hspace{-0.9pt} P$ is the actual input state.

Developing that further, 
$\rgvalidr{\rho\hspace{-0.5pt}}{R}{\hspace{-0.9pt}\{\sigma_1\}}{\hspace{-1pt}p\hspace{-0.7pt}}{\rimg{Q}{\{\sigma_1\}}}{\hspace{-1.5pt}G}$ in conjunction with
$\rgvalidr{\rho\hspace{0.2pt}}{R}{\hspace{-1pt}\{\sigma_2\}}{\hspace{-0.7pt}p\hspace{-0.5pt}}{\rimg{Q\hspace{0.5pt}}{\hspace{0.5pt}\{\sigma_2\}}}{\hspace{-1pt}G}$
express that any
$\sq \in \rpcs{p}{\rho}\hspace{-0.9pt} \cap\stsp \envC{\hspace{-1.5pt}R}\stsp \cap\stsp \inC\!(\{\sigma_1, \sigma_2\})$
satisfies the output condition $\outC{\hspace{-2pt}(\rimg{Q\hspace{0.7pt}}{\hspace{1pt}\{\stateOf{\sq_0}\}})}$.
Expressing the same using the universal quantifier leads to 
$\forall \sigma\hspace{-0.2pt}\in\hspace{-0.5pt} \{\sigma_1, \sigma_2\}.\hspace{2.5pt} \rgvalidr{\rho}{R}{\hspace{-0.5pt}\{\sigma\}}{\hspace{-0.5pt}p\hspace{-0.5pt}}{\rimg{Q\hspace{0.7pt}}{\hspace{0.75pt}\{\sigma\}}}{\hspace{-1pt}G}$.
So with a general parameter $P$ in place of $\{\sigma_1, \sigma_2\}$ the latter formula
captures exactly that if $p$ terminates in a state $\sigma\pr$ for an input state $\sigma \in\hspace{-0.2pt} P$
then $(\sigma, \stsp\sigma\pr) \in Q$.

The only deficiency in specifying the behaviour of a program $(\rho, p)$ by a set of triples
$\rgvalidr{\rho}{R}{\hspace{-0.5pt}\{\sigma\}}{\hspace{-0.9pt}p\hspace{-0.7pt}}{\rimg{Q\hspace{0.7pt}}{\hspace{0.7pt}\{\sigma\}}}{\hspace{-1pt}G}$
generated by $\sigma\hspace{-0.2pt}\in\hspace{-0.79pt} P$
is the blend of a state relation $Q$ as a postcondition with a state predicate $P$ as a precondition,
which creates certain disbalance that poses particular challenges for reasoning about sequential compositions and motivates the following definition.
\begin{definition}\label{def:rgval2}
Let $(\rho, p)$ be a program and $R, P, Q, G$ -- state relations.
Then  $\rgvalidrext{\rho}{R}{\hspace{-1pt}P}{\hspace{-0.7pt}p\hspace{-0.3pt}}{Q}{\hspace{-1pt}G}$ iff
  $\rgvalidr{\rho}{R}{\{\sigma\}}{\hspace{-0.7pt}p\hspace{-0.3pt}}{\rimg{Q\hspace{0.7pt}}{\hspace{0.5pt}\{\sigma_0\}}}{\hspace{-1pt}G}$ holds for all $(\sigma_0, \stsp\sigma) \in\hspace{-1pt} P$.
\end{definition}
Next proposition essentially provides an alternative and slightly more succinct definition:
\begin{lemma}\label{thm:pcondF} 
$\rgvalidrext{\rho}{R}{\hspace{-1pt}P}{\hspace{-0.7pt}p\hspace{-0.7pt}}{Q}{\hspace{-1pt}G}$ iff
$\rgvalidr{\rho}{R}{\hspace{-1pt}\rimg{P\hspace{-0.3pt}}{\hspace{0.5pt}\{\sigma_0\}}}{\hspace{-0.9pt}p\hspace{-0.9pt}}{\rimg{Q\hspace{0.25pt}}{\hspace{0.25pt}\{\sigma_0\}}}{\hspace{-1.2pt}G}$ holds for all states $\sigma_0$.
\end{lemma}
\begin{proof}
  A straight consequence of Definition~\ref{def:rgval2} and Definition~\ref{def:rgval}.
\end{proof}
A particular result is that
any triple
$\rgvalidr{\rho}{R}{\hspace{-0.9pt}P}{\hspace{-0.5pt}p\hspace{-0.5pt}}{Q}{\hspace{-0.9pt}G}$ is equivalent to 
$\rgvalidrext{\rho}{R}{\hspace{-1.2pt}\overline{P}}{\hspace{-0.5pt}p\hspace{-0.5pt}}{\overline{Q}}{\hspace{-1.2pt}G}$
where $\overline{A} \defeq \{(\sigma, \stsp\sigma\pr)\;|\; \sigma\pr\hspace{-0.9pt} \in\hspace{-1.5pt} A\}$
is a mapping that
embeds each state predicate $A$ into state relations.
On the other hand, by far not each triple
$\rgvalidrext{\rho}{R}{\hspace{-1.2pt}P}{\hspace{-0.5pt}p\hspace{-0.5pt}}{Q}{\hspace{-1.2pt}G}$ has an equivalent counterpart with
state predicates as pre- and postconditions: consider for instance 
\[
\rgvalidrext{\rho}{\op{id}}{\hspace{-1pt}\op{id}}{\hspace{-0.5pt}p\hspace{-0.5pt}}{\stsp\sv{shared}\hspace{0.2pt} \subseteq\hspace{0.2pt} \sv{shared}\pr}{\hspace{-1pt}\top}
\]
which encompasses those programs $(\rho, \stsp p)$ that do ultimately retain all elements in the set assigned to $\sv{shared}$ in the given input state
if this leads to termination.

A notable feature of such generalised triples is that each of the program logic rules from Section~\ref{S:prog-log}
can seamlessly be lifted 
by a systematic replacement of relational images with relational compositions.
This will be illustrated below by means of the program correspondence rule and the rule for $\basic$.
\begin{lemma}\label{thm:pcorr-rule2}
Assume $\pcorr{p}{r}{\hspace{-0.5pt}q}$,\; $\rgvalidrext{\rho}{R}{\hspace{-1pt}P}{\hspace{-0.5pt}p\hspace{-0.5pt}}{Q}{\hspace{-1pt}G}$ and 
\begin{enumerate}
\item[\emph{(1)}] $\rcomp{r\hspace{1.5pt}}{\hspace{0.59pt}R\pr}\hspace{0.5pt} \subseteq \hspace{-0.1pt}\rcomp{R\hspace{1.55pt}}{\hspace{1.55pt}r}$,
\item[\emph{(2)}] $P\pr\hspace{0.5pt} \subseteq \rcomp{P\hspace{0.55pt}}{\hspace{1.5pt}r}$,
\item[\emph{(3)}] $\rcomp{Q\hspace{1.4pt}}{\hspace{1.7pt}r}\hspace{0.5pt} \subseteq \stsp Q\pr$,
 \item[\emph{(4)}] $\rcomp{\rcomp{\rconv{r}\hspace{-0.7pt}}{\hspace{1.47pt}G}\hspace{1.4pt}}{\hspace{2pt}r}\hspace{0.9pt} \subseteq\hspace{0.5pt} G\pr$.
\end{enumerate}
Then $\rgvalidrext{\rho\pr}{R\pr}{\hspace{-1pt}P\pr}{\hspace{-0.5pt}q\hspace{-0.5pt}}{Q\pr}{\hspace{-1pt}G\pr}$.
\end{lemma}
\begin{proof}
By Proposition~\ref{thm:pcondF} we show $\rgvalidr{\rho\pr}{R\pr}{\hspace{-1.5pt}\rimg{P\pr\hspace{-1pt}}{\hspace{0.75pt}\{\sigma_0\}}}{\hspace{-1pt}q\hspace{-1pt}}{\rimg{Q\pr\hspace{-0.25pt}}{\hspace{0.7pt}\{\sigma_0\}}}{\hspace{-1.5pt}G\pr}$
with a fixed $\sigma_0$. Instantiating Proposition~\ref{thm:pcorr-rule} by
$\rimg{P\pr\hspace{-1.1pt}}{\hspace{1pt}\{\sigma_0\}}$ for $P\pr$ and $\rimg{Q\pr\hspace{-0.1pt}}{\hspace{1pt}\{\sigma_0\}}$ for $Q\pr$ as well as
$\rimg{P\hspace{0.2pt}}{\hspace{1.5pt}\{\sigma_0\}}$ for $\hspace{-0.5pt}P$ and $\rimg{Q\hspace{1pt}}{\hspace{1pt}\{\sigma_0\}}$ for $\hspace{-0.5pt}Q$,
it remains thus to establish the following three proof obligations: 
\begin{enumerate}
\item[(i)] $\rgvalidr{\rho\hspace{0.5pt}}{R}{\hspace{-0.9pt}\rimg{P\hspace{0.7pt}}{\hspace{1.5pt}\{\sigma_0\}}}{p}{\rimg{Q\hspace{1.5pt}}{\hspace{1.5pt}\{\sigma_0\}}}{\hspace{-0.9pt}G}$
  is a plain consequence of the assumption $\rgvalidrext{\rho\hspace{0.5pt}}{R}{\hspace{-1pt}P}{\hspace{-0.7pt}p\hspace{-1pt}}{Q}{\hspace{-1pt}G}$ by Proposition~\ref{thm:pcondF};
\item[(ii)] $\rimg{P\pr\hspace{-0.45pt}}{\hspace{0.9pt}\{\sigma_0\}} \subseteq\hspace{0.7pt} \rimg{r\hspace{0.5pt}}{\hspace{0.5pt}(\rimg{P\hspace{0.3pt} }{\hspace{0.5pt}\{\sigma_0\}})}$
  which is equivalent to $\rimg{P\pr\hspace{-1pt}}{\hspace{0.7pt}\{\sigma_0\}} \subseteq \rimg{(\rcomp{P\hspace{0.45pt}}{\hspace{1.4pt}r})\hspace{0.5pt}}{\hspace{0.5pt}\{\sigma_0\}}$
  and hence follows from (2);
\item[(iii)] $\rimg{r\hspace{0.9pt}}{\hspace{1pt}(\rimg{Q\hspace{0.7pt}}{\hspace{0.75pt}\{\sigma_0\}})} \subseteq\hspace{0.5pt} \rimg{Q\pr\hspace{-0.2pt}}{\hspace{0.75pt}\{\sigma_0\}}$
  which is equivalent to $\rimg{(\rcomp{Q\hspace{1.55pt}}{\hspace{1.75pt}r})}{\hspace{1pt}\{\sigma_0\}} \subseteq\hspace{0.2pt} \rimg{Q\pr\hspace{-0.2pt}}{\hspace{1pt}\{\sigma_0\}}$
  and hence follows from (3).
 \end{enumerate}
\end{proof}
\begin{lemma}\label{thm:basic-rule2}
Assume
\begin{enumerate}
\item[\emph{(1)}]  $\rcomp{P\hspace{0.9pt}}{\hspace{0.7pt}R}\hspace{1pt} \subseteq\hspace{-0.1pt} P$,
\item[\emph{(2)}] $P\hspace{0.2pt} \subseteq\stsp \{(\sigma,\stsp \sigma\pr) \hspace{3.5pt}|\hspace{3.9pt} (\sigma,\hspace{0.5pt} f\hspace{1.7pt}\sigma\pr) \in\hspace{-0.25pt} Q \wedge (\sigma\pr,\hspace{0.5pt} f\hspace{1.7pt}\sigma\pr) \in\hspace{-0.2pt} G\stsp \}$.
\end{enumerate}
Then $\stsp\rgvalidext{R}{\hspace{-1pt} P}{\basic\hspace{2.2pt} f}{Q}{\hspace{-0.5pt} G}$.
\end{lemma}
\begin{proof}
We have to show $\rgvalid{R}{\rimg{\hspace{-1pt}P\hspace{-0.1pt}}{\hspace{0.7pt}\{\sigma_0\}}}{\basic\hspace{2.2pt} f}{\rimg{Q\hspace{0.4pt}}{\hspace{0.4pt}\{\sigma_0\}}}{\hspace{-1pt}G}$
for any $\sigma_0$ and apply the instance of Proposition~\ref{thm:basic-rule} with $\rimg{P\hspace{0.2pt}}{\hspace{1.2pt}\{\sigma_0\}}$ for $P$ and $\rimg{Q\hspace{0.7pt}}{\hspace{0.9pt}\{\sigma_0\}}$ for $Q\stsp$.
This is sound because 
$\rimg{R\hspace{1pt}}{\hspace{1.1pt}(\rimg{P\hspace{0.45pt}}{\hspace{1.2pt}\{\sigma_0\}})} \subseteq\hspace{-0.1pt} \rimg{P\hspace{0.4pt}}{\hspace{1.2pt}\{\sigma_0\}}$ is a consequence of (1) whereas
$\rimg{P\hspace{0.29pt}}{\hspace{1.1pt}\{\sigma_0\}} \subseteq \{\sigma \hspace{3.5pt}|\hspace{3.9pt} f\hspace{1.7pt}\sigma\hspace{0.5pt} \in\hspace{0.5pt} \rimg{Q\hspace{0.7pt}}{\hspace{0.7pt}\{\sigma_0\}} \wedge (\sigma,\stsp f\hspace{1.7pt}\sigma) \in G\stsp \}$ -- of (2).
\end{proof}\vspace{2pt}

Next section proceeds to the onward topic concerning more structured reasoning upon rely/guarantee properties
by means of abstraction and reuse.
\section{Abstracting code fragments}
The key advantage of the generalised program logic presented in the preceding section is
that for the variety of environments specified by a rely $R$,
the input/output behaviour of a code fragment $p$ can often be so accurately captured 
by means of the pre- and postcondition in a triple $\rgvalidext{R}{\hspace{-1pt}P}{p}{Q}{\hspace{-1pt}G}$
that it is worth to be proved once and then reused when showing rely/guarantee properties of programs where $p$ occurs 
as a subroutine
interleaved with an environment in the specified variety.

The principle is illustrated by a small 
example addressing most of the essential points
that arise with abstraction of code fragments in the context of interleaving.
The triple
\begin{equation}\label{eq:inc}
\hspace{-21pt}\rgvalidext{\sv{x} = \sv{x}\pr}{\sv{x} = \sv{x}\pr}{\op{inc}}{\sv{x}\pr = \sv{x} + 1}{\sv{x} \le \sv{x}\pr}
\end{equation}
displays a relation between the inputs and the outputs of $\op{inc}\defeq \sv{x} := \sv{x} + 1$ 
if the environment condition $\sv{x}\hspace{-0.4pt} = \hspace{-0.2pt}\sv{x}\pr$ is met, \ie the variable $\sv{x}$
can be viewed as local to $\op{inc}$ when it gets deployed by a thread.
With any environment that does not satisfy $\sv{x} = \sv{x}\pr$,
the pre- and postcondition have to be adjusted accordingly.
For example, by weakening the rely to  $\sv{x} \le \sv{x}\pr$,
the precondition would need to be altered to $\sv{x} \le \sv{x}\pr$ to remain stable so that
the most accurate postcondition regarding 
$\sv{x}$ would consequently be $\sv{x} < \sv{x}\pr$.
Further, the guarantee in (\ref{eq:inc}) 
could also be amplified to $\sv{x} + 1\hspace{-0.2pt} = \sv{x}\pr$, which
would be the strongest possible only if the underlying states have no other variables than $\sv{x}$.
That is, if there is another variable, say $\sv{v}$, then 
$\sv{v} = \sv{v}\pr$ shall be guaranteed by $\op{inc}$ as well. 
To some extent this applies to the postcondition too:
in the purely sequential setting one would without much ado augment
the pre- and postcondition in (\ref{eq:inc}) by $\sv{v} = \sv{v}\pr$, 
but in presence of interleaving this would in turn require the additional rely $\sv{v} = \sv{v}\pr$ to ensure stability of thus updated precondition.

Staying within the variety specified by the rely $\sv{x} = \sv{x}\pr$, the triple (\ref{eq:inc}) can for example be reused twice to show 
\[\hspace{-21pt}\rgvalidext{\sv{x} = \sv{x}\pr}{\sv{x} = \sv{x}\pr}{\op{inc};\op{inc}}{\sv{x}\pr = \sv{x} + 2}{\sv{x} \le \sv{x}\pr}\]
making the reasoning more structured and efficient compared to the plain processing of $\sv{x} := \sv{x} + 1; \sv{x} := \sv{x} + 1$, \ie with the definition of $\op{inc}$ unfolded. 
For the second invocation of $\op{inc}$ one is however prompted to show
\begin{equation}\label{eq:inc2}
\hspace{-21pt}\rgvalidext{\sv{x} = \sv{x}\pr}{\sv{x}\pr = \sv{x} + 1}{\op{inc}}{\sv{x}\pr = \sv{x} + 2}{\sv{x} \le \sv{x}\pr}
\end{equation}
which is not an immediate instance of (\ref{eq:inc}) and demands the following general consideration:
\begin{lemma}\label{thm:ldiv}
Assume $P\pr$ is reflexive, $R\stsp \subseteq\hspace{-0.4pt} R\pr$, $G\pr \subseteq\hspace{0.2pt} G$ and 
\begin{enumerate}
\item[\emph{(1)}] $\rgvalidext{R\pr}{\hspace{-1.4pt}P\pr}{\hspace{-0.5pt}p\hspace{-0.5pt}}{Q\pr}{\hspace{-1pt}G\pr}$,
\item[\emph{(2)}] $\rcomp{P\hspace{0.7pt}}{\hspace{1.45pt}Q\pr} \subseteq\hspace{0.2pt} Q$.
\end{enumerate}
Then $\rgvalidext{R}{\hspace{-1pt}P}{\hspace{-0.5pt}p\hspace{-0.5pt}}{Q}{\hspace{-1pt}G}$.
\end{lemma}
\begin{proof}
By Definition~\ref{def:rgval2} we have to show $\rgvalid{R}{\hspace{-0.5pt}\{\sigma\}}{\hspace{-0.4pt}p\hspace{-0.4pt}}{\rimg{Q\hspace{0.79pt}}{\hspace{1pt}\{\sigma_0\}}}{\hspace{-1.5pt}G}$
for some states $\sigma_0$, $\sigma$ with $(\sigma_0,\stsp \sigma)\hspace{-0.2pt} \in\hspace{-1.2pt} P$.
First note that $\rgvalid{\hspace{0.2pt} R\pr}{\hspace{-0.5pt}\{\sigma\}}{p}{\rimg{Q\pr\hspace{0.1pt}}{\hspace{1.2pt}\{\sigma\}}}{\hspace{-1pt} G\pr}$
follows from (1) by Definition~\ref{def:rgval2} since $(\sigma,\stsp \sigma)\hspace{-1pt} \in\hspace{-2pt} P\pr$. 
Thus, only $\rimg{Q\pr\hspace{-0.9pt}}{\hspace{0.7pt}\{\sigma\}}\hspace{-0.2pt} \subseteq\hspace{-0.2pt}  \rimg{Q\hspace{0.7pt}}{\hspace{0.9pt}\{\sigma_0\}}$
remains to be established.
Assuming $(\sigma,\stsp \sigma\pr)\hspace{-0.9pt} \in\hspace{-1.2pt} Q\pr$ with some $\sigma\pr$ to this end
we can first infer $(\sigma_0,\stsp \sigma\pr) \in\hspace{-0.2pt} \rcomp{P\hspace{0.55pt}}{\hspace{1.2pt} Q\pr}\stsp$
and then $(\sigma_0,\stsp \sigma\pr) \in\hspace{-0.5pt}  Q\stsp$ using (2).
\end{proof}

Since $\sv{x} = \sv{x}\pr$ is reflexive and 
$
\rcomp{(\sv{x}\pr\hspace{-0.2pt} = \sv{x} + 1)\hspace{1pt}}{\hspace{0.9pt}(\sv{x}\pr\hspace{-0.2pt} = \sv{x} + 1)}\hspace{0.5pt} \subseteq\hspace{0.7pt} \sv{x}\pr\hspace{-0.2pt} = \sv{x} + 2
$ holds,
the triple (\ref{eq:inc2}) is derivable by the above proposition and (\ref{eq:inc}). 
\setcounter{equation}{0}
\chapter{Case Study: Strengthening the Specification}\label{S:PM2}
The $\op{mutex}$ rule in Proposition~\ref{thm:mutex} can directly be lifted to the generalised program logic with state relations as pre- and postconditions
proceeding similarly to Propositions~\ref{thm:pcorr-rule2} and~\ref{thm:basic-rule2}.
\section{Lifting the $\op{mutex}$ rule}
The key difference to Chapter~\ref{S:PM1} 
is that the parameters $P_0, Q_0, P_1, Q_1$ can now depend on two values of $\sv{shared}$ so that
\eg\stsp $Q_0$ can capture a relation between initial values of $\sv{shared}$ and its values when $\mv{cs}_0$ terminates. 
Our goal is thus the generalised conclusion of Proposition~\ref{thm:mutex}: 
\begin{equation} \label{mutexext-concl}\hspace{-21.5pt}
\rgvalidapext{\op{id}}{\hspace{-1.5pt}P_0\hspace{2.5pt} \sv{shared}\hspace{2.9pt} \sv{shared}\pr\hspace{-0.5pt} \wedge\hspace{-1pt} P_1\hspace{1.7pt} \sv{shared}\hspace{3.7pt} \sv{shared}\pr}
{\hspace{5pt}\op{thread}_0 \hspace{2pt} \mv{cs}_0 \parallel \hspace{0.1pt} \op{thread}_1 \hspace{1.5pt} \mv{cs}_1}
{ Q_0\hspace{2.7pt} \sv{shared}\hspace{3.5pt}\sv{shared}\pr\hspace{-0.5pt} \wedge\hspace{-0.5pt} Q_1\hspace{2.1pt} \sv{shared}\hspace{3.5pt} \sv{shared}\pr}{\hspace{-1.5pt}\top} 
\end{equation}
Note once more that with \eg\stsp $Q_0\hspace{2.7pt} \sv{shared}\hspace{3pt} \sv{shared}\pr$ in the above postcondition we respectively
relate the value of $\sv{shared}$ in an initial state $\sigma_0$ and
its value in a state where $\op{mutex}$ may terminate starting from $\sigma_0$. This is in contrast to relies and guarantees
where $\sv{shared}$ and $\sv{shared}\pr$ always refer to the respective values in any two \emph{consecutive} states
on a computation. 

Hence (\ref{mutexext-concl}) follows by Proposition~\ref{thm:pcondF} if we show
\begin{equation} \label{mutexext-concl2}\hspace{-24pt}
\rgvalidap{\op{id}}{\hspace{-1.5pt}(P_0\hspace{2pt} \sigma\sv{shared})\hspace{2.7pt} \sv{shared}\hspace{-0.5pt} \wedge\hspace{-1pt} (P_1\hspace{1pt} \sigma\sv{shared})\hspace{3.5pt} \sv{shared}}
{\hspace{7pt}\op{thread}_0 \hspace{1.7pt} \mv{cs}_0 \parallel \hspace{0.4pt} \op{thread}_1 \hspace{1.2pt} \mv{cs}_1}
{ (Q_0\hspace{1.9pt} \sigma\sv{shared})\hspace{3pt}\sv{shared}\hspace{-0.5pt} \wedge\hspace{-0.5pt} (Q_1\hspace{1.2pt} \sigma\sv{shared})\hspace{3pt} \sv{shared}}{\hspace{-1.5pt}\top} 
\end{equation}
for any state $\sigma$. In this context,
$P_i\hspace{1.7pt} \sigma\sv{shared}$ and $Q_i\hspace{1.7pt} \sigma\sv{shared}$ become fixed state predicates for $i\hspace{-0.7pt} \in\hspace{-1.5pt} \{0,1\}$
pointing to how Proposition~\ref{thm:mutex} shall be
instantiated in order to match (\ref{mutexext-concl2}): $P_0$ by $P_0\hspace{2.1pt} \sigma\sv{shared}$, $Q_0$ by $Q_0\hspace{2.1pt} \sigma\sv{shared}$ and so on.
To match the instantiated first premise of Proposition~\ref{thm:mutex}, namely
\[
\hspace{-25pt}\rgvalidap{\sv{shared}\hspace{-0.2pt} = \sv{shared}\pr\hspace{-0.5pt} \wedge \hspace{-0.2pt} \sv{local}_0\hspace{-1pt} = \hspace{-0.5pt} \sv{local}\pr_0}
          {(P_0\hspace{2.1pt} \sigma\sv{shared})\hspace{2.7pt} \sv{shared}}{\hspace{7pt}\mv{cs}_0}
          {(Q_0\hspace{2pt} \sigma\sv{shared})\hspace{3pt}\sv{shared}}{\stnsp \op{G}\pr_0}
\]          
with $\op{G}\pr_0$ denoting the respective instance of $\op{G}_0$
\[
\begin{array}{l}
\hspace{-25pt}\sv{flag}_0\hspace{-0.7pt} =\sv{flag}\pr_0 \wedge \sv{flag}_1\hspace{-1.5pt} =\sv{flag}\pr_1 \wedge \sv{turn} =\sv{turn}\pr \wedge \sv{local}_1 \hspace{-1pt}= \sv{local}\pr_1 \;\wedge \\
\hspace{-25pt} \sv{turn\_aux}_0\hspace{-0.5pt} = \sv{turn\_aux}\pr_0 \stsp\wedge\stsp \sv{turn\_aux}_1\hspace{-0.9pt} = \sv{turn\_aux}\pr_1 \;\wedge \\ 
\hspace{-25pt} ((P_1\hspace{1.5pt} \sigma\sv{shared})\hspace{2.7pt} \sv{shared}\stsp \imp\stsp (P_1\hspace{1.5pt} \sigma\sv{shared})\hspace{2.7pt} \sv{shared}\pr)\; \wedge \\
\hspace{-25pt} ((Q_1\hspace{1.15pt} \sigma\sv{shared})\hspace{2.5pt} \sv{shared}\stsp \imp\stsp (Q_1\hspace{1pt} \sigma\sv{shared})\hspace{2.5pt} \sv{shared}\pr), 
\end{array}
\]
we consequently make the following assumption
\begin{equation} \label{mutexext-asm1}
\hspace{-25pt}\rgvalidapext{\sv{shared}\hspace{-0.5pt} = \sv{shared}\pr\hspace{-1pt} \wedge \hspace{-0.2pt} \sv{local}_0\hspace{-1pt} = \hspace{-0.5pt} \sv{local}\pr_0}
          {\hspace{-2pt}P_0\hspace{2.9pt} \sv{shared}\hspace{2.9pt} \sv{shared}\pr}{\hspace{7pt}\mv{cs}_0}
          {Q_0\hspace{2.7pt} \sv{shared}\hspace{3.2pt}\sv{shared}\pr}{\stnsp\op{G}\prr_0}
\end{equation}
where 
$\op{G}\prr_0$ shall denote
\[
\begin{array}{l}
\hspace{-25pt}\sv{flag}_0 =\sv{flag}\pr_0 \wedge \sv{flag}_1\hspace{-1.5pt} =\sv{flag}\pr_1 \wedge \sv{turn} =\sv{turn}\pr \wedge \sv{local}_1 \hspace{-1pt}= \sv{local}\pr_1 \;\wedge \\
\hspace{-25pt} \sv{turn\_aux}_0\hspace{-0.5pt} = \sv{turn\_aux}\pr_0 \stsp\wedge\stsp \sv{turn\_aux}_1\hspace{-0.5pt} = \sv{turn\_aux}\pr_1 \;\wedge \\ 
\hspace{-25pt} (\forall v.\hspace{2pt} (P_1\hspace{1.55pt} v)\hspace{3.1pt} \sv{shared}\stsp \imp\stsp (P_1\hspace{1.55pt} v)\hspace{3.1pt} \sv{shared}\pr)\; \wedge \\
\hspace{-25pt} (\forall v.\hspace{2pt} (Q_1\hspace{0.9pt} v)\hspace{2.5pt} \sv{shared}\stsp \imp\stsp (Q_1\hspace{0.9pt} v)\hspace{2.5pt} \sv{shared}\pr). 
\end{array}
\]
In response to the instantiated second premise of Proposition~\ref{thm:mutex} we moreover make the symmetric assumption
\begin{equation} \label{mutexext-asm2}
\hspace{-25pt}\rgvalidapext{\sv{shared}\hspace{-0.5pt} = \sv{shared}\pr\hspace{-2pt} \wedge \hspace{-0.5pt} \sv{local}_1\hspace{-1pt} = \hspace{-0.5pt} \sv{local}\pr_1}
          {\hspace{-2pt}P_1\hspace{1.7pt} \sv{shared}\hspace{3.2pt} \sv{shared}\pr}{\hspace{7pt}\mv{cs}_1}
          {Q_1\hspace{2.7pt} \sv{shared}\hspace{3.2pt}\sv{shared}\pr}{\stnsp \op{G}\prr_1}
\end{equation}
where 
$\op{G}\prr_1$ accordingly denotes
\[
\begin{array}{l}
\hspace{-25pt}\sv{flag}_0 =\sv{flag}\pr_0 \wedge \sv{flag}_1\hspace{-1.5pt} =\sv{flag}\pr_1 \wedge \sv{turn} =\sv{turn}\pr \wedge \sv{local}_0 \hspace{-1pt}= \sv{local}\pr_0 \;\wedge \\
\hspace{-25pt} \sv{turn\_aux}_0\hspace{-0.5pt} = \sv{turn\_aux}\pr_0 \stsp\wedge \sv{turn\_aux}_1\hspace{-0.5pt} = \sv{turn\_aux}\pr_1 \;\wedge \\ 
\hspace{-25pt} (\forall v.\hspace{2pt} (P_0\hspace{2.4pt} v)\hspace{3.1pt} \sv{shared} \imp (P_0\hspace{2.4pt} v)\hspace{3.1pt} \sv{shared}\pr)\; \wedge \\
\hspace{-25pt} (\forall v.\hspace{2pt} (Q_0\hspace{1.7pt} v)\hspace{2.5pt} \sv{shared} \imp (Q_0\hspace{1.7pt} v)\hspace{2.5pt} \sv{shared}\pr). 
\end{array}
\]
The overall result is thus the following lifted rule for $\op{mutex}$: 
\begin{lemma}\label{thm:mutex-ext}
  Assume \emph{(\ref{mutexext-asm1})} and \emph{(\ref{mutexext-asm2})} and
  \begin{enumerate}
  \item[]\hspace{-21.5pt} $\pcorrC{\mv{cs}_0}{\op{r}_0}{\hspace{-2.2pt}\mv{cs}_0}$,
  \item[]\hspace{-21.5pt} $\pcorrC{\mv{cs}_1}{\op{r}_1}{\hspace{-2.2pt}\mv{cs}_1}$,
  \item[]\hspace{-21.5pt} $\pcorrC{\mv{cs}_0}{\op{r}_{\op{eqv}}}{\!\mv{cs}_0}$,
  \item[]\hspace{-17pt}$\pcorrC{\mv{cs}_1}{\op{r}_{\op{eqv}}}{\!\mv{cs}_1}$.
  \end{enumerate}
  Then we have \emph{(\ref{mutexext-concl})}.
\end{lemma}  
\section{An application of the lifted $\op{mutex}$ rule}\label{sub:mutex2-inst}
By contrast to Section~\ref{sub:mutex-inst}, the rule in Proposition~\ref{thm:mutex-ext} allows us to derive
a significantly finer input/output property of $\op{mutex} \hspace{2pt}\op{update}_0 \hspace{2pt} \op{update}_1$.
To this end we instantiate the parameters $P_0, P_1, Q_0, Q_1$ so as to make the triples  
(\ref{mutexext-asm1}) and (\ref{mutexext-asm2}) specify the behaviour of $\op{update}_0$ and $\op{update}_1$ taking into account
  also the input value $\sv{shared}$ in presence of interleaving.

More precisely, due to the inherent uncertainty whether the access to $\sv{shared}$ will be granted to $\op{update}_0$ or to $\op{update}_1$ in first place,
the pre- and postcondition of the below triple account for both cases:
\begin{equation}\label{upd0-ext}
\hspace{-27pt}\begin{array}{l}
\rgvalidext{\hspace{1pt}\sv{shared} = \sv{shared}\pr\hspace{-0.5pt} \wedge \sv{local}_0 = \sv{local}\pr_0}{\\\hspace{20pt}\sv{shared} = \sv{shared}\pr\hspace{-0.5pt} \vee \sv{shared} \cup \{1\} = \sv{shared}\pr}{\\ \quad\quad\; \op{update}_0 \\ \quad\;}
           {\sv{shared} \cup \{0\} = \sv{shared}\pr\hspace{-0.5pt} \vee \sv{shared} \cup \{0, 1\} = \sv{shared}\pr}{\!\op{G}\pr_0}
           \end{array}
\end{equation}
where $\op{G}\pr_0$ becomes now accordingly
\[
\hspace{-25pt}\begin{array}{l}
\sv{flag}_0 =\sv{flag}\pr_0 \wedge \sv{flag}_1 =\sv{flag}\pr_1 \wedge \sv{turn} =\sv{turn}\pr \wedge \sv{local}_1 = \sv{local}\pr_1 \;\wedge \\
 \sv{turn\_aux}_0 = \sv{turn\_aux}\pr_0 \wedge \sv{turn\_aux}_1 = \sv{turn\_aux}\pr_1\; \wedge \\
(\sv{shared} = \sv{shared}\pr \hspace{-0.5pt}\vee \sv{shared} \cup \{0\} = \sv{shared}\pr).
\end{array}
\]
Note that $\op{G}\pr_0$ could have been used in this form in Section~\ref{sub:mutex-inst} as well, but
the weaker guarantee $1\hspace{-0.9pt} \in\hspace{-0.2pt} \sv{shared} \stsp\imp\stsp 1\hspace{-0.7pt} \in\hspace{-0.2pt} \sv{shared}\pr$
was more adequate there
than $\sv{shared} = \sv{shared}\pr\hspace{-0.5pt} \vee \sv{shared} \cup \{0\} = \sv{shared}\pr$.

Since the  assumption (\ref{mutexext-asm2}) is similarly addressed by means of the respective triple for 
$\op{update}_1$, using Proposition~\ref{thm:mutex-ext} we can conclude
\[
\hspace{-27pt}\begin{array}{l}
  \rgvalidext{\op{id}}{\hspace{-1pt}(\sv{shared} = \sv{shared}\pr\hspace{-0.5pt} \vee \sv{shared} \cup \{1\} = \sv{shared}\pr)\hspace{2pt} \wedge\\
   \hspace{33pt} (\sv{shared} = \sv{shared}\pr\hspace{-0.5pt} \vee \sv{shared} \cup \{0\} = \sv{shared}\pr)}
             {\\ \hspace{25pt}\op{thread}_0 \hspace{2pt} \op{update}_0 \hspace{-0.1pt}\parallel \hspace{-0.1pt} \op{thread}_1 \hspace{1.5pt} \op{update}_1 \\\hspace{12.5pt}}
    {(\sv{shared} \cup \{0\} = \sv{shared}\pr\hspace{-0.5pt} \vee \sv{shared} \cup \{0, 1\} = \sv{shared}\pr)\hspace{2pt} \wedge\\
     \hspace{19pt} (\sv{shared} \cup \{1\} = \sv{shared}\pr\hspace{-0.5pt} \vee \sv{shared} \cup \{0, 1\} = \sv{shared}\pr)}{\hspace{-1.5pt}\top}
\end{array}
\]
which can further be simplified to
\begin{equation} \label{mutexext-result}
\hspace{-21pt}\rgvalidext{\op{id}}{\hspace{-1.5pt}\op{id}}
{\op{mutex} \hspace{2pt}\op{update}_0 \hspace{2pt} \op{update}_1}
{\sv{shared} \cup \{0, 1\} = \sv{shared}\pr}{\hspace{-1.5pt}\top}
\end{equation}
because from $\sv{shared}\hspace{-0.4pt} \cup\hspace{-0.9pt} \{0\} \hspace{-0.5pt}= \sv{shared}\pr\hspace{-0.5pt} \wedge\hspace{0.5pt}
\sv{shared}\hspace{-0.2pt} \cup\hspace{-0.7pt} \{1\} \hspace{-0.5pt}= \sv{shared}\pr$
we can infer $\{0, 1\}\hspace{0.2pt} \subseteq\hspace{0.7pt} \sv{shared}$ and hence
also $\sv{shared}\hspace{-0.2pt} \cup\hspace{-0.2pt} \{0, 1\} = \sv{shared}\pr$.

Unfolding the definitions, the triple (\ref{mutexext-result}) yields the following statement:
$\hspace{-0.5pt}\sigma\sv{shared}\hspace{0.5pt} \cup\hspace{0.5pt} \{0, 1\}\hspace{-1pt} = \hspace{-0.7pt}\sigma\pr\sv{shared}$ 
holds for all 
$\sq\hspace{-0.7pt} \in\hspace{-0.7pt} \pcs{\hspace{-0.2pt}\op{mutex} \hspace{1.7pt}\op{update}_0 \hspace{1.7pt} \op{update}_1\hspace{-0.5pt}}\hspace{0.1pt} \cap\hspace{0.29pt}
\envC{\hspace{0.1pt}\op{id}}$
where $\sigma \hspace{-1.7pt}=\hspace{-1pt} \stateOf{\sq_0}$ and $\sigma\pr$ is the state of the first $\skipp$-configuration on $\sq$, if such exists.
That the existence is provided even though not for any but for any \emph{fair} computation
$\sq\hspace{-0.2pt} \in\hspace{-0.5pt} \pcsi{\op{mutex} \hspace{2.1pt}\op{update}_0 \hspace{2.1pt} \op{update}_1}\hspace{-0.79pt} \cap\hspace{0.7pt} \envC{\hspace{-0.1pt}\op{id}}$
will ultimately be shown in the last part of the case study in Chapter~\ref{S:PM3}.
The techniques applied in Chapter~\ref{S:PM3} are derived from the general approach elaborated next. 
\setcounter{equation}{0}
\chapter{Verifying Liveness Properties of Programs}\label{S:live}
Liveness covers a variety of questions that is sometimes rather casually summarised by
`will something good eventually happen?', answering which may demand very involved reasoning 
in many instances.
In the current context we will confine `something good' to
\emph{certain computations of a program starting in a state satisfying an input condition reach a state satisfying some output condition}.
Making this point clear, let $\op{p}_1$ be the program
\[
\while{\stsp\sv{a} > 0\stsp}{\skipp\stsp}{\stsp\sv{b} :=\! \op{True}}
\]
whereas $\op{p}_2$ shall comprise the assignment $\sv{a}\hspace{-0.1pt} :=\hspace{-0.1pt} 0$ only
(note that Park~\cite{Park81} uses a similar introductory example for a similar purpose but in a different context). 
Whether \emph{certain} (which \emph{per se} does not rule out \emph{all}) computations of $\op{p}_1\hspace{-1.5pt} \parallel\hspace{-0.5pt}\op{p}_2$ that
start with a state satisfying $\sv{a}\hspace{-0.2pt} = \hspace{-0.2pt}  1\hspace{-0.5pt} \wedge\hspace{-1.5pt} \neg\sv{b}$
eventually reach a state satisfying $\sv{b}$ is an eligible question regarding liveness of $\op{p}_1\hspace{-2.1pt} \parallel\hspace{-0.5pt}\op{p}_2$.
One such computation evidently exists: $\op{p}_2$ assigns $0$ to $\sv{a}$, then $\op{p}_1$ exits the $\com{while}\;\sv{a}\hspace{0.2pt} > 0\;\com{do}\stsp\ldots\stsp$ cycle
and can assign $\op{True}$ to $\sv{b}$. On the other hand, a computation where we have only environment steps which do not modify $\sv{b}$ cannot reach a state satisfying $\sv{b}$ whatsoever. 
Hence, \emph{certain} computations can be consistent but surely cannot mean \emph{any} potential computation, be it finite or infinite.
Intuitively, \emph{certain} shall therefore encircle
the largest possible subset of potential computations of $\op{p}_1\hspace{-1pt} \parallel\hspace{0.1pt}\op{p}_2$ for which the answer to the question is a
\emph{provable} `\emph{yes}'.
More precisely, we may consider the following candidates: 
\begin{enumerate}
\item[(1)] $\pcs{\op{p}_1\hspace{-1.5pt}\parallel\hspace{-0.7pt}  \op{p}_2}$ -- \ie\stsp all finite potential computations of $\op{p}_1\! \parallel\hspace{-0.5pt} \op{p}_2$,
\item[(2)] $\pcsi{\op{p}_1\hspace{-1.5pt} \parallel\hspace{-0.7pt}  \op{p}_2}\hspace{-1pt} \cap\hspace{0.5pt} \envC{\hspace{-1pt}\bot}$ -- \stsp\ie\stsp
  all infinite actual computations of $\op{p}_1\hspace{-1.5pt} \parallel\hspace{-0.2pt} \op{p}_2$,
\item[(3)] $\pcsi{\op{p}_1\hspace{-3.1pt} \parallel\hspace{-2pt} \op{p}_2}\hspace{-2pt} \cap\hspace{0.01pt} \envC{\hspace{-1pt}\op{id}}$ -- \ie\stsp all infinite potential computations of $\op{p}_1\hspace{-2.7pt}  \parallel\hspace{-1.4pt}  \op{p}_2$ in presence of a `stuttering' environment.
\end{enumerate}
With (1) the answer is `\emph{no}': apart from many computations that just do not reach a state satisfying $\sv{b}$,
$\pcs{\op{p}_1\hspace{-2pt} \parallel\hspace{-1pt} \op{p}_2}$ also contains all prefixes of computations that do reach $\sv{b}$ once. Most of these prefixes, however, do not need to reach $\sv{b}$ themselves:
the singleton computation $(\op{p}_1\hspace{-2.5pt} \parallel\stnsp \op{p}_2,\stsp \sigma_0)$ is the simplest example.
With (2) the answer is also `\emph{no}': by considering all infinite computations 
without any environment steps
we actually exclude the most relevant ones that reach a state satisfying $\sv{b}$ and terminate.
By contrast, the option (3) not only covers all relevant computations of $\op{p}_1\hspace{-0.05cm}  \parallel \op{p}_2$
that reach $\sv{b}$ and terminate yielding subsequently to the `stuttering' environment, but also those with
infinite runs of the environment perpetually holding back all possible program steps, 
as well as of the $\com{while}\;\sv{a}\hspace{-0.5pt} >\hspace{-0.5pt} 0\;\com{do}\stsp\ldots\stsp$ cycles arbitrarily interleaved with the environment, 
all having $\neg\sv{b}$ as an invariant.
Casting such runs out
would result in an adequately large subset of computations of $\op{p}_1\hspace{-0.05cm}  \parallel \op{p}_2$ 
which nonetheless enables argumentation 
that the answer to the above question is `\emph{yes}'.

These considerations ineluctably lead to the conclusion that some form of \emph{fairness}
ought to be generally asserted on computations in $\rpcsi{p}{\rho}$ 
in order to achieve rational reasoning upon liveness properties of $p$,
which is by no means a new insight and various forms have been studied 
in the contexts of process algebras and temporal logics
(\cf~\cite{VANGLABBEEK2019100480, 10.1007/3-540-61474-5_84, 10.5555/903616} to name only a few).
Taking advantage of the syntactic structure built into the program part of each configuration,
next section develops a concise notion of fair potential computations.
%
\section{Program positions}\label{Sb:progpos}
The starting point is a view of the inductive rules for program steps in Figure~\ref{fig:pstep}
from the rewriting system perspective.
For example the rule
\[
\Rule{}{\rpstep{\rho\stsp}{(\basic\hspace{2pt} f,\stdsp \sigma)\stsp}{\stnsp(\skipp,\stdsp f\hspace{2pt}\sigma)}}
\]
can in principle be regarded as the identity used to reduce 
$(\sv{x} :=\!1 \parallel \sv{x} := \!2,\stsp \sigma)$ either to $(\skipp\hspace{-0.5pt} \parallel\hspace{-0.5pt} \sv{x} := \!2,\stdsp \sigma_{[\sv{x} := 1]})$
or to $(\sv{x} := \!1\hspace{-0.5pt} \parallel\hspace{-0.5pt} \skipp,\stdsp \sigma_{[\sv{x} := 2]})$
by means of the rule {\it\sl`Parallel'}.
Attaching two different positions to the subterms $\sv{x}\hspace{-1.1pt} :=\!1$ and
$\sv{x}\hspace{-0.5pt} :=\!2$ in order to indicate where a reduction can take place
would make the rule {\it\sl`Parallel'} essentially superfluous and
resemble rewriting of terms in general as described \eg by Baader and Nipkow~\cite{baader_nipkow_1998}.  
Furthermore, the configuration $(\sv{x} :=\!1; \sv{x} := \!2,\stdsp \sigma)$ can similarly be reduced to $(\skipp; \sv{x} := \!2,\stdsp \sigma_{[\sv{x} := 1]})$
by the rule {\it\sl`Sequential'},
but by contrast not to $(\sv{x} := \!1; \skipp,\stdsp \sigma_{[\sv{x} := 2]})$ as there is no rule to this end. 
This deviation from 
general term rewriting, where each subterm gets identified with a position, suggests the following insight:
despite the same tree structure, the terms $\sv{x}\hspace{-1pt} :=\!1 \!\parallel\! \sv{x}\hspace{-1pt} := \!2$ and $\sv{x}\hspace{-1pt} :=\!1; \sv{x}\hspace{-1pt} := \!2$
exhibit different sets of positions 
entitled for a substitution according to the inductive rules in Figure~\ref{fig:pstep}.

In line with the representation of term positions by words over the natural numbers deployed in~\cite{baader_nipkow_1998}, 
the function $\pos$ in this setting maps each term of type $\langA{\alpha}$ to a set of
words over $\ty{nat}$, referred to as \emph{program positions}\index{program position} and defined by the following recursive equations using pattern matching:
\[
\begin{array}{l c l}
\pos\: \skipp & = & \bot \\
\pos(\skipp ;q) & = & \{ 0 \} \\
\pos(p ;q) & = & \{0w \:|\; w \in\hspace{-0.5pt} \pos\hspace{1.7pt} p \} \\
\pos(\Parallel{\!\skipp, \ldots, \skipp}) & = &  \{ 0 \} \\
\pos(\Parallel{\!p_1, \ldots, p_m}) & = &  \bigcup_{i=1}^{m}\{iw \:|\; w \in \hspace{-0.7pt}\pos\hspace{1.9pt} p_i \} \\
\pos\hspace{2pt} p & = & \{ 0 \} \\
\end{array}
\]
so that \eg $\stsp\pos(\sv{x} :=\!1 \hspace{-0.5pt}\parallel \sv{x} := \!2) = \{10, 20\}$ whereas $\pos(\sv{x} :=\!1; \sv{x} := \!2) = \{00\}$.

Immediate consequences of the definition are:
\begin{enumerate}
\item[-] $\pos\hspace{1.7pt}p$ is empty iff $p = \skipp$,
\item[-] $\epsilon\hspace{0.9pt} \notin\hspace{-0.5pt} \pos\hspace{2pt}p$,
\item[-] if $x \in\hspace{-0.5pt} \pos\hspace{2.1pt}p$ then $x = x\pr0$ (where $x\pr$ could also be the empty word $\epsilon$),
\item[-] $\pos\hspace{2pt}p\stsp$ contains more than one word only if the parallel operator, applied to
  at least two arguments different from $\skipp$, occurs in $p$.
  This particularly means that $\pos\hspace{1.9pt}p$ contains exactly one word for any locally sequential $p \neq \skipp$. 
\end{enumerate}

Further, the function satisfying the following recursive equations using pattern matching
\[
\begin{array}{l c l}
\plook{p}{0} & = & \op{inr}\hspace{2pt}p \\
\plook{(p\hspace{1.5pt} ;q)}{0x} & = & \plook{p}{x} \\
\plook{(\Parallel{\!p_1, \ldots, p_m})}{ix} & = & \plook{p_i}{x}\\
\plook{p}{x} & = & \op{inl}\hspace{1pt}\verb|()|
\end{array}
\]
returns a value of the co-product type $\ty{1}\hspace{0.4pt} +\hspace{0.4pt} \langA{\alpha}$ where \verb|()| denotes the value 
of the unit type $\ty{1}$.
Note that there is some $t$ of type $\langA{\alpha}$ such that $\plook{p}{x}\hspace{-1.9pt} =\hspace{-0.7pt} \op{inr}\hspace{2.9pt}t$ holds  
whenever $x\hspace{-1pt} \in\hspace{-1.5pt}  \pos\hspace{2pt} p$.
Hence by $\plook{p}{x}$ we will for the sake of brevity refer to the $t$ 
and say that $\plook{p}{x}$
retrieves the subterm determined by the program position $x\hspace{-0.1pt} \in\hspace{-0.4pt}  \pos\hspace{2pt} p$.

Although $\plook{p}{x}\hspace{-1.5pt} =\hspace{-0.5pt} \op{inr}\hspace{2.19pt}t$ is satisfiable also with $x\hspace{-0.4pt} \notin\hspace{-0.7pt}  \pos\hspace{2pt} p$
(consider \eg \stdsp$\plook{\skipp}{0}$),
this modelling 
enables the following, in this context particularly useful property.
\begin{lemma}\label{thm:eq-lookup-pos}
If \stsp$x \in\hspace{-0.5pt} \pos\hspace{2pt} p$ and \stsp$\plook{p}{x}\hspace{-1pt} =\hspace{0.55pt} \plook{q}{x}$ then \stsp$x\hspace{0.1pt} \in\hspace{-0.37pt} \pos\hspace{2.2pt} q$.
\end{lemma}
\begin{proof}
  By induction on the length of $x$. Let $x \in\hspace{-0.5pt} \pos\hspace{2pt} p$ and suppose $\plook{p}{x}\hspace{-1pt} =\hspace{0.5pt} \plook{q}{x}$.
  If $x$ has the length $0$ then we are done since $\epsilon\hspace{1.2pt} \notin\hspace{-0.2pt} \pos\hspace{2pt}p$.
  If $x$ has the length $1$ then $x = 0$ and hence $\plook{p}{0}\hspace{-1pt} =\hspace{0.5pt} \plook{q}{0}$ entails $p = q$.
  
  Next, suppose $x$ has the form $ix\pr$ with $x\pr\hspace{-0.75pt} \neq\hspace{0.5pt} \epsilon$.
  In case $i =\hspace{-0.4pt} 0$ we infer that there are some $p_1,p_2$ and $q_1,q_2$ such that
  $p\hspace{-1.5pt} =\hspace{-1pt} p_1;p_2$ and $q\hspace{-1.5pt} =\hspace{-1pt} q_1;q_2$. Thus,
$x\hspace{-0.5pt} \in\hspace{-1.5pt} \pos\hspace{2pt} p$ entails $x\pr\hspace{-1.5pt} \in\hspace{-1pt} \pos\hspace{2pt} p_1$ whereas
  $\plook{p}{x}\hspace{-1.2pt} =\hspace{0.7pt} \plook{q}{x}$ entails  $\plook{p_1}{x\pr}\hspace{-1pt} =\hspace{0.5pt} \plook{q_1}{x\pr}$.
  From this $x\hspace{-0.2pt} \in\hspace{-0.7pt} \pos\hspace{2.2pt} q$ follows since $x\pr \in\hspace{-0.2pt} \pos\hspace{2.2pt} q_1$ is provided by the induction hypothesis.

  Lastly, in case $i >\hspace{-0.2pt} 0$ we infer that there are some $p_1, \ldots, p_m$ and $q_1, \ldots, q_{m\pr}$ where $i \le m$ and $i \le m\pr$ such that
  $p =\hspace{1.5pt} \Parallel{\!p_1, \ldots, p_m}$ and $q = \hspace{1.5pt}\Parallel{\!q_1, \ldots, q_{m\pr}}$.
  Thus,
$x\hspace{-0.7pt} \in\hspace{-1.4pt} \pos\hspace{2pt} p$ entails $x\pr\hspace{-1.75pt}\in\hspace{-1.4pt} \pos\hspace{2pt} p_i$ whereas
  $\plook{p}{x}\hspace{-1.5pt} =\hspace{0.2pt} \plook{q}{x}$ entails  $\plook{p_i}{x\pr}\hspace{-1.5pt} =\hspace{0.2pt} \plook{q_i}{x\pr}$.
  From this $x\hspace{0.2pt} \in\hspace{-0.2pt} \pos\hspace{2.2pt} q$ follows since $x\pr\hspace{-0.2pt} \in\hspace{-0.4pt} \pos\hspace{2.2pt} q_i$ is provided by the induction hypothesis.
\end{proof}
Replacing $\plook{p}{x}$ by $p\pr$ at the position $x\hspace{-0.1pt} \in\hspace{-1pt}   \pos\hspace{2pt} p$
is (once more in line with~\cite{baader_nipkow_1998}) denoted by $\psubst{p}{p\pr}{x}$ and defined by the equations 
\[
\begin{array}{l c l}
\psubst{p}{p\pr}{0} & = & p\pr \\
\psubst{(p\hspace{1.5pt} ;q)}{p\pr}{0x} & = & (\psubst{p}{p\pr}{x});q \\
\psubst{(\Parallel{\!p_1, \ldots, p_m})}{p\pr}{ix} & = & \Parallel{\!p_1, \ldots \hspace{1.5pt}\psubst{p_i}{p\pr}{x}\hspace{1.5pt} \ldots, p_m}
\end{array}
\]

Regarding the three syntactic operations together,
we have the property
\begin{equation}\label{eq:plook_psubst}
\plook{(\psubst{p}{q}{x})}{x\pr} = \left\lbrace\begin{array}{l l} q & \mbox{ if }\stsp x\hspace{0.1pt} =\hspace{0.1pt} x\pr \\
\plook{p}{x\pr} & \mbox{ otherwise}\end{array}\right.
\end{equation}
for any $p, q$ and $x, x\pr \hspace{-0.2pt} \in\hspace{-0.4pt} \pos\hspace{1.7pt}p$. 

A valuable consequence of Proposition~\ref{thm:eq-lookup-pos} and (\ref{eq:plook_psubst}) is:
\begin{corollary}\label{thm:rpos-psubst-retain}
  If there are two different positions $x\hspace{-1.2pt} \in\hspace{-2pt}\pos\hspace{2.1pt} p$ and $y\hspace{-1pt} \in\hspace{-2pt}\pos\hspace{2.1pt} p$ 
  then $y\hspace{0.2pt} \in\hspace{-0.5pt} \pos(\psubst{p}{q}{x})$.
\end{corollary}

The main objective behind these operations 
is to describe computation steps by means of substitutions at program positions 
as the following proposition shows.
\begin{lemma}\label{thm:rpos-pstep}
  $\rpstep{\rho\hspace{-0.2pt}}{\hspace{-1pt}(p, \stsp \sigma)\stsp}{\hspace{-1pt}(q, \stsp\sigma\pr)}$ iff
  there exist some $x\hspace{-0.2pt}  \in \hspace{-1pt} \pos\hspace{2pt} p$ and $p\pr$
such that $\rpstep{\rho\hspace{0.5pt}}{(\plook{p}{x},\stsp \sigma)\stsp}{\hspace{-1pt}(p\pr,\stsp \sigma\pr)}$ and $q = \psubst{p}{p\pr}{x}$.
\end{lemma}
\begin{proof}
By structural induction on $p$ for the if-direction, and on the relation $\pstep$ for the opposite. 
\end{proof}

The above statement however does not provide \emph{the} position
$x\hspace{-0.5pt} \in\hspace{-1pt} \pos\hspace{2pt}p$ that has been reduced (or `fired') by $\rpstep{\rho}{(p,\stsp \sigma)\hspace{1pt}}{\hspace{-1pt}(q,\stsp \sigma\pr)}$
so that $q = \psubst{p}{p\pr}{x}$.
On the other hand, the existence of different 
$x\hspace{0.1pt}, x\pr\hspace{-0.2pt} \in\hspace{-0.4pt}  \pos\hspace{2pt}p$
with 
$\rpstep{\rho}{\hspace{-0.5pt}(\plook{p}{x},\stsp \sigma)\hspace{1pt}}{\hspace{-1pt}(u,\stsp \sigma\pr)}$,
$\rpstep{\rho}{\hspace{-0.5pt}(\plook{p}{x\pr},\stsp \sigma)\hspace{1pt}}{\hspace{-1pt}(v,\stsp \sigma\pr)}$
and $\psubst{p}{u}{x}\hspace{-1.2pt} = \psubst{p}{v}{x\pr}$ 
entails $\plook{p}{x}\hspace{-1.2pt} =\hspace{-0.5pt} u$ and $\plook{p}{x\pr}\hspace{-1.2pt} = v$,
\ie the existence of program steps which behave like environment steps by retaining the program part of a configuration.

From $\rpstep{\rho\hspace{0.2pt}}{\hspace{-1pt}(p,\stsp \sigma)\hspace{0.2pt}}{\hspace{0.2pt}(p,\stsp \sigma\pr)}$
one can basically infer that $p$ is not jump-free and there is a label
$i$ with $\ret{\hspace{1pt}i}\stsp$ having the form $\cjump{\hspace{1.2pt}C\hspace{-0.9pt}}{\hspace{0.7pt}i}{w}$.
The following example illustrates that retrieving functions with such a property
can be used to produce very peculiar effects:
let $\rho$ be a retrieving function satisfying $\ret{\hspace{0.7pt}1}\hspace{-0.1pt}=\hspace{-0.1pt} \jump{\hspace{0.2pt}1}$ and
$\ret{\hspace{0.5pt}2}\hspace{-0.1pt} =\hspace{-0.1pt} \jump{\hspace{0.7pt}2}$, and
consider the program $(\rho,\stsp \jump{\hspace{0.1pt}1}\hspace{-0.5pt} \parallel\hspace{-0.5pt}\jump{\hspace{0.7pt}2})$. 
Then any $\sq \in \rpcsi{\jump{1}\hspace{-0.2pt} \parallel\hspace{-0.2pt} \jump{\hspace{0.1pt}2}}{\rho}\hspace{0.2pt} \cap\stsp \envC{\hspace{-0.5pt}\bot}$,
\ie any infinite actual computation of $(\rho, \jump{1} \parallel\hspace{-0.2pt} \jump{\hspace{0.1pt}2}$), has the form
\[
(\jump{1} \parallel\stnsp \jump{2},\stdsp \stateOf{\sq_0}) \pstep (\jump{1} \parallel\stnsp \jump{2},\stdsp \stateOf{\sq_0}) \pstep \ldots
\]
in particular concealing which of the two jumps has been deployed  by $\sq$ at all.  

Ruling such singularities generally out, only such retrieving functions $\rho$ where  
we have $\rho\hspace{3pt} i \hspace{-0.5pt}\neq\stnsp \cjump{\hspace{0.5pt}C\hspace{-1.7pt}}{ i\hspace{-0.7pt}}{\hspace{-0.7pt}w\hspace{-1pt}}$ 
for all $i, C, w$
will be considered from now on.
Then to each transition $\rpstep{\rho\hspace{0.2pt}}{(p,\stsp \sigma)\stsp}{\hspace{-0.5pt}(q,\stsp  \sigma\pr)}$ there is the unique \emph{fired}\index{program position!fired}
position $x \in\hspace{-0.5pt} \pos\hspace{1.9pt} p$
with a step $\rpstep{\rho\hspace{0.2pt}}{\hspace{-1pt}(\plook{p}{x}, \sigma)\stsp}{\hspace{-0.7pt}(p\pr, \sigma\pr)}$ such that $q = \psubst{p}{p\pr}{x}$.
\section{Fair computations}\label{Sb:faircomp}
It is essential to underline that in presence of await-statements, $\pos\hspace{2pt} p$
may also contain program positions that are not always available for a substitution:
if $x\hspace{0.2pt}  \in \hspace{-0.5pt} \pos\hspace{1.7pt} p$ with
$\plook{p}{x}\hspace{-1pt} =\hspace{0.7pt} \await{\hspace{1pt} C\hspace{-0.2pt}}{\hspace{-0.2pt}q\hspace{-0.2pt}}\stsp$ then
\begin{enumerate}
\item[-] $x$ is never available when $C = \bot$,
\item[-] availability of $x$ depends only on termination of $q$ when $C =\stnsp \top$,
\item[-] availability of $x$ additionally depends on the evaluation of $C$ otherwise.
\end{enumerate}
In order to keep the notion of fairness for the beginning as concise as possible,
we will focus on such program positions whose availability is state independent:
\begin{definition}
  Let $p$ be a term of type $\langA{\alpha}$. A position $x\hspace{-0.5pt}  \in \hspace{-1.2pt} \pos\hspace{1.7pt} p\stsp$
  is called \emph{always available}\index{program position!always available} if 
$\stsp\plook{p}{x}\hspace{-1.2pt} = \hspace{0.2pt}\await{\hspace{1pt} C}{q}\stsp$ implies
\begin{enumerate}
\item[(a)] $C = \stnsp\top$ and
\item[(b)] for any $\rho$ and a state $\sigma$ there exists $\sigma\pr$ such that $\rpsteps{\rho}{(q,\stsp \sigma)\hspace{0.5pt}}{\hspace{-1pt}(\skipp,\stsp \sigma\pr)}$.
\end{enumerate}
\end{definition}
Consequently, if $x\hspace{-0.1pt}  \in \hspace{-0.7pt} \pos\hspace{2pt} p$ is always available
then for any $\sigma$  there exist some $p\pr$ and $\sigma\pr$ such that
$\rpstep{\rho}{(\plook{p}{x},\stsp \sigma)\hspace{0.5pt}}{\hspace{-1pt}(p\pr, \stsp\sigma\pr)}$ and 
$\rpstep{\rho}{(p,\stsp \sigma)\hspace{0.5pt}}{\hspace{-1pt}(\psubst{p}{p\pr}{x},\stsp \sigma\pr)}$ hold.

A fair computation in principle does not keep an always available position from making its move once:
\begin{definition}\label{def:fair}
  An infinite potential computation $\sq\hspace{-0.3pt}  \in \hspace{-0.5pt}\rpcsi{p}{\rho}$ is called \emph{fair}\index{computation!fair}
  if for any $i\hspace{0.1pt} \in\hspace{-0.75pt}\naturals$ and
  any always available $x\hspace{-0.55pt} \in \hspace{-1.1pt}\pos(\progOf{sq_i})$
  there exists $j \ge i$ with a program step $\rpstep{\rho}{\stnsp\sq_j\hspace{0.3pt}}{\hspace{-2.7pt}\sq_{j+1}}$
  having $x$ as the fired position and satisfying $\plook{\progOf{\sq_j}}{x}\hspace{-1.2pt} =\hspace{-0.7pt} \plook{\progOf{sq_i}}{x}$.
This will be denoted by $\sq\hspace{0.1pt} \in\hspace{0.1pt} \rpcsf{p}{\rho}$.
\end{definition}
Thus, the definition leaves for instance open whether a program step will ever be performed by some 
$\sq\hspace{-0.7pt} \in\hspace{-0.4pt} \pcsf{\await{\stsp C\hspace{-1pt}}{\skipp}}$
with $C \hspace{-1pt}\subset\hspace{-2pt} \top$ because the only position $0$ is not always available.
Indeed, a plain generalisation of Definition~\ref{def:fair} from \emph{always available} to \emph{any} position
would make it inadequately restrictive as
we would be able to infer that any $\sq\hspace{-0.5pt} \in\hspace{-0.5pt} \pcsf{\await{\hspace{0.5pt} C\hspace{-0.7pt}}{\skipp}}$
reaches a state satisfying $C$ at least once. 
A way to generalise Definition~\ref{def:fair} to additionally enable reasoning upon liveness
properties of programs deploying await-statements beyond atomic sections will be discussed in Section~\ref{Sb:fair-await}.

For a program $(\rho, p)$ that is sequential, the property
$\sq\hspace{-1pt} \in\hspace{-1.2pt} \rpcsf{p}{\rho}$
simplifies to: $\sq\hspace{-0.4pt} \in\hspace{-0.5pt} \rpcsi{p}{\rho}$ and
for any $i\hspace{0.2pt} \in\hspace{-0.5pt} \naturals$ and any always available $x\hspace{-0.5pt} \in\hspace{-1pt} \pos(\progOf{sq_i})$
there exists $j\stsp \ge\stsp i$ with a program step
$\rpstep{\rho\stsp}{\hspace{-0.1pt}\sq_j\hspace{0.2pt}}{\hspace{-1.4pt}\sq_{j+1}}$. In other words, fairness for sequential programs means that
there are no infinite runs of environment steps as long as an always available position is present.

It shall lastly be outlined how this notion of fair computations allows us to substantiate the answer `\emph{yes}' in the introductory example
where we have $\pos(\op{p}_1\hspace{-1.9pt} \parallel\hspace{-0.7pt}\op{p}_2) = \{10, 20\}$ and both positions are moreover always available.
Hence, any computation $\sq \in \pcsf{\op{p}_1 \hspace{-1.9pt}\parallel\stnsp \op{p}_2}\hspace{-1.2pt} \cap\hspace{0.4pt} \envC{\hspace{-0.7pt}\op{id}}$ 
eventually reduces $\op{p_2}$ to $\skipp$ at the position $20$ assigning thereby $0$ to $\sv{a}$.
Note that the first parallel component must still be in the $\com{while}\;\sv{a}\hspace{-0.2pt} >\hspace{-0.7pt} 0\;\com{do}\stsp\ldots\stsp$ cycle as $\sv{a}$ had constantly the value $1$ prior to that. 
Now however only one available position remains and the states of all subsequent configurations will moreover satisfy $\sv{a} = 0$.
The first component will therefore eventually exit the $\com{while}\;\sv{a} > 0\;\com{do}\stsp\ldots\stsp$ cycle making in turn the subterm $\sv{b} := \!\op{True}\stsp$ available at the position $10$.
Finally, the environment, prior to keep on `stuttering' \emph{ad infinitum}, will first let $\sq$ reduce this position to $\skipp$ assigning thereby $\!\op{True}$ to $\sv{b}$, and then reduce $\skipp\! \parallel\! \skipp$ to $\skipp$. 
\section{A refutational approach to proving liveness properties}\label{S:refute}
The above paragraph can at best be considered as a sketch of a sketch of an argumentation that
$\sq\hspace{-1.2pt} \in\hspace{-1pt} \pcsf{\op{p}_1 \hspace{-2.9pt}\parallel\hspace{-1.5pt} \op{p}_2}\hspace{-2.1pt} \cap\hspace{-0.2pt}\inC{\!(\sv{a}\hspace{-0.5pt} =\hspace{-0.7pt} 1\hspace{-1pt} \wedge\hspace{-1.5pt} \neg\sv{b})}\hspace{-0.2pt} \cap\hspace{0.1pt} \envC{\hspace{-0.5pt}\op{id}}$ 
reaches a state satisfying $\sv{b}$. Nonetheless it pointed to how one can follow the program structure
in order to draw relevant conclusions, \ie the very basic approach is to some extent not unlike the syntax-driven rely/guarantee reasoning.

Generally, to show that any
fair computation of a program $p$ that starts in a state satisfying some given condition $P$ and has only `stuttering' environment steps eventually
reaches a state satisfying some given $Q$, one can attempt to refute the opposite, \ie that
there exists
$\sq\hspace{-0.2pt} \in\hspace{-0.1pt}  \pcsf{p}\hspace{-2pt} \cap\hspace{0.5pt}\inC{\hspace{-1.1pt} P}\hspace{-0.5pt} \cap\hspace{0.4pt} \envC{\hspace{-0.9pt}\op{id}}$ having $\neg Q$ as an invariant.
This motivates the following three definitions.
\begin{definition}\label{def:fcomp-from}
  Let $\sq$ be an infinite sequence of configurations, $p$ -- a jump-free program and $P, Q$ -- state predicates.
  Then the condition
  $\lcond{\sq\hspace{-0.5pt}}{\hspace{-0.7pt} P\hspace{-0.5pt}}{\hspace{0.9pt} p\hspace{0.9pt}}{\hspace{0.7pt}Q}$ holds
  iff $\sq\hspace{-0.2pt} \in\hspace{-0.2pt}  \pcsf{p}\hspace{-1.5pt} \cap\hspace{-0.1pt} \inC{\hspace{-1pt}P}$ and 
  $\stateOf{\sq_i} \notin\hspace{-0.5pt}  Q$ holds for all $i\hspace{0.7pt} \in\hspace{-0.1pt}\naturals$.
\end{definition}  
\begin{definition}\label{def:fcomp-fromN}
  The condition $\lcondN{\sq\hspace{-0.25pt}}{\! P\hspace{-0.5pt}}{\hspace{0.5pt} p\hspace{0.25pt}}{\hspace{0.7pt}Q}\stsp$ holds iff
  $\stsp\lcond{\sq\hspace{-0.25pt}}{\! P\hspace{-0.7pt}}{\hspace{0.5pt}p\hspace{0.4pt}}{\hspace{0.7pt}Q}$ 
  and 
  $\progOf{\sq_i} \hspace{-0.5pt}\neq \skipp$ holds for all $i\hspace{0.9pt} \in\hspace{-0.1pt}\naturals$.
\end{definition}
\begin{definition}\label{def:fcomp-fromT}
  Let additionally $n \in\hspace{-0.5pt} \naturals$.
  Then $\lcondT{\sq\hspace{-0.25pt}}{\hspace{-1.9pt} P\hspace{-0.5pt}}{\hspace{0.5pt}p\hspace{0.7pt}}{Q}{\hspace{0.2pt}n}\stsp$ holds iff
  $\sq\hspace{0.2pt}  \in\hspace{0.1pt} \pcsi{p} \hspace{-1pt} \cap\hspace{0.5pt} \inC{\hspace{-0.75pt}P}$ and $\stateOf{\sq_i}\hspace{0.1pt} \notin\hspace{-0.4pt}  Q\stsp$
  holds for all $i\hspace{1.1pt} \in\hspace{-0.1pt} \naturals$, and moreover
  $n$ is the least number such that $\progOf{\sq_n} = \skipp$.
\end{definition}
Thus, $\lcond{\sq\hspace{-0.2pt}}{\hspace{-1.5pt}P\hspace{-0.9pt}}{\hspace{0.5pt} p\hspace{0.2pt}}{\hspace{0.7pt}Q}$ entails
$\lcondN{\sq\hspace{-0.4pt}}{\!P\hspace{-0.5pt}}{\hspace{0.4pt}p\hspace{0.2pt}}{\hspace{0.2pt}Q}\hspace{-0.4pt} \vee\hspace{-0.5pt} (\exists n.\:\lcondT{\sq\hspace{-0.4pt}}{\hspace{-2pt}P\hspace{-0.5pt}}{\hspace{0.4pt} p\hspace{0.2pt}}{\hspace{0.2pt}Q}{n})$
and the disjunction is moreover exclusive:
the condition $\lcondN{\sq\stnsp}{\!P\stnsp}{p}{Q}$ covers the cases 
with a non-terminating $\sq$
and $\lcondT{\sq\hspace{-0.5pt}}{\hspace{-2.5pt}P\hspace{-0.9pt}}{\hspace{0.2pt}p\hspace{0.2pt}}{Q}{n}$ -- the cases with $\sq$ reaching the first $\skipp$-configuration by $n$ steps.
In particular this means that it is sound to refute the two cases in order to conclude
that there exists some $i$ with $\stateOf{\sq_i} \in\hspace{-0.5pt}  Q$ if $\sq\hspace{-0.1pt} \in\hspace{-0.4pt}  \pcsf{p}\hspace{-2pt} \cap\hspace{1pt} \inC{\hspace{-1.2pt}P}$.

Conversely, $\stsp\lcond{\sq\hspace{-0.7pt}}{\!P\hspace{-1.1pt}}{\hspace{0.2pt}p\hspace{0.3pt}}{\hspace{0.5pt}Q}\stsp$ 
follows straight from
$\lcondN{\sq\stnsp}{\hspace{-2pt}P\hspace{-0.9pt}}{\hspace{0.2pt}p\hspace{0.3pt}}{\hspace{0.5pt}Q}$,
but the same conclusion is not immediate with $\lcondT{\sq\hspace{-0.5pt}}{\hspace{-2.2pt}P\hspace{-1pt}}{\hspace{0.5pt}p\hspace{0.5pt}}{\hspace{0.5pt}Q\hspace{-0.2pt}}{\hspace{0.2pt}n}\stsp$ as Definition~\ref{def:fcomp-fromT}
does not explicitly require from $\sq$ to be fair. This point is however settled by the following general property. 
\begin{lemma}\label{thm:tfair}
  If $\sq\hspace{-0.1pt} \in\hspace{-0.1pt} \rpcsi{p}{\rho}$ and $\progOf{\sq_n} = \skipp\stsp$ 
  then $\stsp\sq \in \rpcsf{p}{\rho}$.
\end{lemma}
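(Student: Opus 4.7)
The first step is to observe that once a computation reaches a $\skipp$-configuration it stays there, since no rule in Figure~\ref{fig:pstep} admits $\skipp$ as a redex; so $\progOf{\sq_i} = \skipp$ for every $i \ge n$. Because $\pos\:\skipp = \bot$, the universal quantifier over always available $x$ in Definition~\ref{def:fair} is then vacuously satisfied for those indices. Only the finitely many indices $i < n$ require actual work.

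For $i < n$, fix an always available $x \in \pos(\progOf{\sq_i})$ and proceed by induction on $n - i$. If the transition $\sq_i \to \sq_{i+1}$ is an environment step then $\progOf{\sq_{i+1}} = \progOf{\sq_i}$, so $x$ remains always available at index $i+1$ with the same subterm, and the induction hypothesis delivers the required $j \ge i+1 \ge i$. If it is a program step, Proposition~\ref{thm:rpos-pstep} supplies a position $y \in \pos(\progOf{\sq_i})$ and some $p\pr$ witnessing the step. When $y = x$, the choice $j = i$ discharges every clause of Definition~\ref{def:fair} directly from that proposition. When $y \neq x$, it remains to argue that $x$ survives the substitution at $y$ intact, so that the induction hypothesis applies at $i+1$.

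The only genuinely technical ingredient is this survival lemma: \emph{if $x, y \in \pos\:p$ with $x \neq y$, then $x \in \pos(\psubst{p}{p\pr}{y})$ and $\plook{\psubst{p}{p\pr}{y}}{x} = \plook{p}{x}$.} I would prove it by structural induction on $p$, supported by the separate observation that distinct positions in $\pos\:p$ are always pairwise incomparable: positions branch only at parallel compositions (picking up disjoint numerical heads), and in a sequential composition $p_1;p_2$ with $p_1 \neq \skipp$ every position has the form $0z$ for $z \in \pos\:p_1$, so incomparability lifts by induction. Given incomparability, the recursive equations of $\pos$ and $\psubst$ commute exactly as needed and the subterm at $x$ is genuinely untouched by the update at $y$.

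Equipped with this lemma, the ``always available'' clause transfers for free, since the equality of subterms at $x$ preserves both the $\await{\top}{p\prr}$ shape (when that is the relevant form) and the termination-from-any-state requirement. Chaining the induction yields the witness $j$ for the original $i$, completing the verification of Definition~\ref{def:fair}. I expect no difficulty in the environment-step or $y = x$ cases; the substitution lemma is the main obstacle, but it is essentially dictated by the syntactic shape of the defining equations of $\pos$ and $\psubst$ and should go through by a routine case analysis.
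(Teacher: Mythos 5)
Your proposal is correct and follows essentially the same route as the paper's proof: both arguments rest on the observation that an always available position which is never fired persists, with its subterm unchanged, through every environment step and every program step fired elsewhere, which is incompatible with reaching the $\skipp$-configuration at index $n$ since $\pos\:\skipp$ is empty (the paper runs this as a contradiction with an induction on $j$ establishing persistence, you as a direct induction on $n-i$, which is the same argument turned around). The only substantive difference is that you explicitly isolate and sketch the substitution-survival lemma for distinct positions, which the paper uses silently in its induction step.
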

\begin{proof}
  Unfolding Definition~\ref{def:fair}, let $i\hspace{0.3pt} \in\hspace{-0.9pt}\naturals$ and $x\hspace{-0.7pt} \in \hspace{-1.5pt}\pos(\progOf{sq_i})$ be an always available position.
  Firstly, note that $i < n$ must hold since otherwise we would have $x\hspace{-0.4pt} \in\hspace{-1.4pt} \pos\hspace{2.2pt}\skipp$ which is impossible.
  Next, we proceed by showing that the additional assumption
  \begin{enumerate}
  \item[(a)]\hspace{-4pt} \emph{each program step $\rpstep{\rho\hspace{-0.2pt}}{\hspace{-0.2pt}\sq_j\stsp}{\hspace{-2pt}\sq_{j+1}}$ with $j\hspace{-0.1pt} \ge\hspace{0.1pt} i$
    and $\stsp\plook{\progOf{sq_j}}{x}\hspace{-1.4pt} =\hspace{-1pt} \plook{\progOf{sq_i}}{x}$ has the fired position $x\pr\hspace{-0.5pt} \neq\hspace{0.5pt} x$}
   \end{enumerate}
  leads to a contradiction. Indeed, with (a) it is possible to derive
    \begin{enumerate}
    \item[(b)] \emph{$j \ge i\stsp$ implies $\plook{\progOf{sq_j}}{x}\hspace{-1pt} = \plook{\progOf{sq_i}}{x}$}
    \end{enumerate}
    for any $j$, so that taking $n$ for $j$ in (b) we get $\plook{\skipp}{x}\hspace{-1pt} =\hspace{-1pt}\plook{\progOf{sq_i}}{x}$
    which once more yields  $x\hspace{-1.5pt} \in\hspace{-2pt} \pos\hspace{2.4pt}\skipp$ by Proposition~\ref{thm:eq-lookup-pos},
    meaning that the negation of (a) must be valid, \ie $\stsp\sq\hspace{-0.1pt} \in\hspace{0.1pt} \rpcsf{p}{\rho}$.
    Hence, it remains to show (b) 
    which will be accomplished by induction on $j$.

    The base case is clear since $j \hspace{-0.7pt}=\hspace{-0.5pt} 0$ implies $i\hspace{-0.2pt} =\hspace{-0.2pt} 0$.
    As the induction step we have to show $\plook{\progOf{sq_{j+1}}}{x}\hspace{-0.9pt} =\hspace{-0.2pt} \plook{\progOf{sq_i}}{x}$
    additionally assuming (b) and $j\hspace{-0.1pt} +\hspace{-0.2pt} 1\hspace{-0.5pt} \ge i$. Since with $j +\hspace{-0.5pt} 1\hspace{-0.7pt} = i$ the goal follows straight,
    with $j\stsp \ge\stsp i$ we infer $\plook{\progOf{sq_{j}}}{x}\hspace{-1pt} = \plook{\progOf{sq_i}}{x}$ from (b).
    Then 
    only $\plook{\progOf{sq_{j+1}}}{x}\hspace{-1.7pt} =\hspace{-1.2pt} \plook{\progOf{sq_j}}{x}$ remains to be established,
 noting that $x\hspace{0.1pt}  \in\hspace{-0.4pt}  \pos(\progOf{sq_{j}})$ moreover holds by Proposition~\ref{thm:eq-lookup-pos}.
    
 Firstly, if we have a program step
 $\rpstep{\rho\hspace{0.25pt}}{\hspace{-0.5pt}\sq_j\hspace{0.4pt}}{\hspace{-2.2pt}\sq_{j+1}}\hspace{-0.5pt}$
 then (a) entails $x\pr\hspace{-0.9pt} \neq x$ where $x\pr$ is its fired position. 
Thus,
$\plook{\progOf{sq_{j+1}}}{x}\hspace{-1pt} = \hspace{-1pt}\plook{(\psubst{\progOf{sq_j}}{p\pr}{x\pr})}{x}\hspace{-1pt}=\hspace{-1pt} \plook{\progOf{sq_j}}{x}$ follows using (\ref{eq:plook_psubst}).
    Secondly, with $\sq_j \hspace{0.2pt}\estep\hspace{-0.7pt} \sq_{j+1}$ we have $\progOf{\sq_{j+1}}\hspace{-0.5pt} = \hspace{-0.5pt}\progOf{\sq_{j}}$.
\end{proof}

To sum up the result:
\begin{corollary}\label{thm:split-eq}
 $\hspace{-2pt}\lcond{\sq\hspace{-0.5pt}}{\hspace{-1.5pt}P\hspace{-0.5pt}}{\hspace{0.9pt}p\hspace{0.7pt}}{\hspace{0.9pt}Q}$ iff  
  \hspace{0.2pt}$\lcondN{\sq}{\hspace{-1.75pt}P\hspace{-0.5pt}}{\hspace{0.7pt}p\hspace{0.9pt}}{\hspace{1pt}Q}\hspace{0.1pt} \vee (\exists n.\hspace{2.5pt} \lcondT{\sq}{\hspace{-1.9pt}P\hspace{-0.5pt}}{\hspace{0.7pt}p\hspace{0.9pt}}{\hspace{1pt}Q\hspace{0.2pt}}{\hspace{0.5pt}n})$
\end{corollary}
Note that the two disjuncts are also useful on their own:
one can for instance attempt to refute statements of the form
$\exists\sq\hspace{-0.9pt} \in\hspace{-0.9pt}  \envC{\hspace{-1pt}\op{id}}. \hspace{2.5pt}\lcondN{\sq\hspace{-0.7pt}}{\hspace{-2.5pt}P\hspace{-1pt}}{\hspace{0.3pt}p\hspace{0.2pt}}{\hspace{-0.7pt}\bot}$
which is the same as to establish 
that any $\sq\hspace{-1.9pt} \in\hspace{-1.9pt} \pcsf{p}\hspace{-2.1pt} \cap\hspace{0.2pt} \envC{\hspace{-1.2pt}\op{id}}\hspace{-0.7pt} \cap\hspace{-0.2pt} \inC{\hspace{-1.5pt}P}$
reaches a $\skipp$-configuration once,
\ie $\stsp p\stsp$ terminates on any input in $P$.

From the definitions follows that different conclusions can in general be drawn
depending on whether we have $\hspace{-0.5pt}\lcondN{\sq\hspace{-0.7pt}}{\hspace{-2pt}P\hspace{-0.7pt}}{\hspace{0.2pt}p\hspace{0.4pt}}{\hspace{0.59pt}Q}\stsp$
or $\lcondT{\sq\hspace{-0.7pt}}{\hspace{-2.1pt}P\hspace{-0.79pt}}{\hspace{0.25pt}p\hspace{0.2pt}}{\hspace{0.59pt}Q\hspace{-0.2pt}}{\hspace{0.2pt}n}$.
The syntax-driven rules presented in the remainder of this section handle therefore the two cases 
for each language constructor (with particular restrictions on await-statements for the non-terminating case) in place of $p$.

Starting with the simplest, an assumption $\lcondN{\sq}{\hspace{-1.5pt} P\hspace{-0.4pt}}{\stsp\skipp\stsp}{\hspace{0.7pt}Q\stsp}$ immediately leads to a contradiction, 
whereas $\lcondT{\sq}{\!P\hspace{-0.5pt}}{\hspace{0.7pt}\skipp\hspace{0.5pt}}{\hspace{0.7pt}Q}{\hspace{0.5pt}n}$ entails $n = \hspace{0.2pt}0$. 

Similarly to $\skipp$, $\lcondN{\sq\hspace{-1pt}}{\hspace{-2.5pt}P\hspace{-1pt}}{\hspace{0.2pt}\basic\hspace{2.1pt} f\hspace{-0.7pt}}{\hspace{0.4pt}Q}\stsp$ leads to a contradiction too
but, in contrast to $\skipp$, fairness of $\sq$ 
is decisive for this conclusion as it rules out infinite runs of environment steps.
Further, with $\lcondT{\sq}{\hspace{-2.2pt} P\hspace{-1pt}}{\hspace{0.5pt}\basic\hspace{2.1pt} f\hspace{-0.7pt}}{\hspace{0.4pt}Q}{n}$
in addition to $\stateOf{\sq_0}\hspace{-0.2pt} \in\hspace{-1.1pt} P$ we can infer
\begin{enumerate}
\item[(a)] $n > 0$, 
\item[(b)] $\sq_{n-1}\hspace{-1pt} = (\basic\hspace{2.1pt} f,\hspace{0.5pt} \sigma)$ and $\sq_n\hspace{-1pt} = (\skipp, \hspace{0.9pt}f\hspace{2pt}\sigma)$,
\item[(c)] $f \hspace{2pt} \sigma\hspace{0.2pt} \notin\hspace{0.2pt} Q$
\end{enumerate}
for some state $\sigma$, acquiring thus the information that 
the atomic action associated to $\stnsp f$ has been triggered at the point $n-1$ on $\sq$.
As will become evident later on, 
several such points can moreover indicate a relative order of events: 
as opposed to the parallel composition $\basic\hspace{2.1pt}f\hspace{-0.9pt}\parallel\hspace{-0.5pt}\basic\hspace{2.7pt}g$ where we can merely acquire that
$f$ and $g$ have been triggered at some respective points $n$ and $m$ on $\sq$, with
the sequential $\basic\hspace{2.1pt}f;\basic\hspace{2.7pt}g$ we would additionally have
$n < m$. 

Next two propositions handle the conditional and the await-statements.
\begin{lemma}\label{thm:await-live}
The following implications hold: 
\begin{enumerate}
\item[\emph{(1)}] if $\lcondN{\sq\hspace{-1.5pt}}{\hspace{-3.1pt}P\stnsp}{\langle p \rangle\hspace{-0.5pt}}{Q}\stsp$ then there exists a state $\sigma$ such that
  for all $\hspace{1pt}\sigma\pr\hspace{-0.5pt}$ the configuration $(\skipp,\stsp \sigma\pr)$ is not reachable from $(p,\stsp \sigma)$ by any sequence of program steps,
  \emph{\ie} $\hspace{-2pt}(p,\stsp \sigma) \not\psteps\hspace{-1pt} (\skipp,\stsp \sigma\pr)$;
\item[\emph{(2)}] if $\lcondT{\sq}{\hspace{-1.7pt}P\hspace{-0.7pt}}{\hspace{0.5pt}\await{\hspace{0.7pt}C\hspace{-0.5pt}}{\hspace{-0.2pt}p\hspace{-0.2pt}}\hspace{0.5pt}}{\hspace{0.9pt}Q\hspace{0.2pt}}{\hspace{0.7pt}n}\stsp$ then $\stsp n > 0$, $\stsp\stateOf{\sq_{n-1}} \in\hspace{-0.2pt} C$ and
  \[(p,\stsp \stateOf{\sq_{n-1}})\hspace{0.2pt} \psteps\hspace{-1pt} (\skipp,\stsp \stateOf{\sq_{n}}).\]
\end{enumerate}
\end{lemma}
\begin{proof}
  To show (1) we appeal to fairness as follows. Suppose for all states $\sigma$ there exists some $\sigma\pr$ such that
  $(p,\stsp \sigma)\hspace{0.2pt} \psteps \hspace{-0.79pt}(\skipp, \stsp\sigma\pr)$.  
Then the position $0$ (\ie $\langle p \rangle$) in $\progOf{\sq_0}$ is always available, and since $\sq$ is fair
there must be a program step
$\sq_j\hspace{0.5pt} \pstep\hspace{-2.1pt} \sq_{j+1}$ with $\progOf{\sq_{j}} = \langle p \rangle$ 
and
$\progOf{\sq_{j+1}} = \skipp$ which is a contradiction to $\lcondN{\sq}{\hspace{-1.4pt}P\hspace{-0.4pt}}{\stsp\langle p \rangle\hspace{0.2pt}}{\stsp Q}$.

To show (2), 
suppose $\lcondT{\sq\hspace{-0.55pt}}{\hspace{-2.4pt}P\stnsp}{\hspace{0.2pt}\await{\hspace{0.5pt}C\hspace{-1pt}}{\hspace{-0.7pt}p\hspace{-0.7pt}}\hspace{0.4pt}}{\hspace{0.4pt}Q}{n}$
and note that by contrast to (1)
not only atomic sections but 
all await-statements are thereby covered.
Since $\sq_n$ is the first $\skipp$-configuration on $\sq$ we can infer  
$n > 0$ and $(\await{\hspace{0.5pt}C\hspace{-1pt}}{\hspace{-1pt}p\hspace{-0.5pt}},\stdsp \stateOf{\sq_{n-1}})\stdsp \pstep \stnsp(\skipp,\stdsp \stateOf{\sq_{n}})$
from which
$\stateOf{\sq_{n-1}} \in\hspace{-0.7pt} C$ and $(p,\stdsp \stateOf{\sq_{n-1}}) \psteps\stnsp (\skipp, \stdsp\stateOf{\sq_{n}})$ follow.
\end{proof}
\begin{lemma}\label{thm:cond-live}
The following implications hold:
\begin{enumerate}
\item[\emph{(1)}] if $\lcondN{\sq\hspace{-0.5pt}}{\hspace{-1.9pt}P\hspace{-0.7pt}}{\hspace{0.4pt}\ite{\hspace{0.7pt} C\hspace{-1pt}}{\hspace{-0.5pt}p\hspace{-0.2pt}}{q}\hspace{-0.1pt}}{\hspace{0.5pt}Q}$ then there exists $n > \hspace{-0.5pt}0$
  such that either \:$\lcondN{\suffix{n}{\sq}\hspace{0.7pt}}{\hspace{-1pt}C\hspace{-0.2pt}}{\hspace{0.3pt}p\hspace{0.2pt}}{\hspace{0.5pt}Q}$ or
  $\:\lcondN{\suffix{n}{\sq}\hspace{0.5pt}}{\!\neg C}{\hspace{0.75pt}q\hspace{0.2pt}}{\hspace{0.5pt}Q}$; 
\item[\emph{(2)}] if $\lcondT{\sq\hspace{-0.5pt}}{\hspace{-1.9pt} P\hspace{-0.7pt}}{\hspace{0.4pt}\ite{\stsp C \hspace{-0.7pt}}{p}{q}}{\hspace{0.7pt}Q\hspace{-0.1pt}}{\hspace{0.2pt}n}$ then
  there exists some $m\hspace{-0.1pt}\in\hspace{-0.5pt}\naturals$ with \stsp$0\hspace{-0.4pt} <\hspace{-0.2pt} m$ and \stsp$m\hspace{-0.4pt} \le n$
    such that either \stsp$\lcondT{\suffix{m}{\sq}\hspace{-0.5pt}}{\hspace{-1.7pt}C\hspace{-1.1pt}}{\hspace{0.3pt}p\hspace{0.2pt}}{\hspace{0.4pt}Q\hspace{-0.25pt}}{\hspace{0.5pt}n-m}\stsp$ or $\lcondT{\suffix{m}{\sq}}{\!\neg C\hspace{-0.2pt}}{\hspace{0.59pt}q\hspace{0.2pt}}{\hspace{0.9pt} Q\hspace{0.25pt}}{\hspace{0.9pt} n-m}\:$.
\end{enumerate}
\end{lemma}
\begin{proof}
In (1) we particularly have $\sq\in\pcsf{\ite{\hspace{1pt} C\hspace{-0.5pt}}{p}{q}}$ which entails that a program step reducing the available position $0$ in $\progOf{\sq_0}$ will eventually be performed.
That is, there must be some $n > 0$ with $\sq_{n-1} \hspace{0.5pt}\pstep\hspace{-1.5pt} \sq_n$ such that
$\progOf{\sq_{n-1}}\hspace{-0.5pt} =\hspace{-0.5pt} \ite{\hspace{0.5pt} C\hspace{-1.2pt}}{\hspace{-1pt}p\hspace{-1pt}}{\hspace{-1pt}q\hspace{-1pt}}$ and
$\stateOf{sq\mystrut_{n}}\hspace{-1pt} =\hspace{-0.5pt} \stateOf{sq_{n-1}} $.
If $\stateOf{sq_{n-1}}\hspace{-1.2pt} \in\hspace{-1.9pt} C$ then $\suffix{n}{\sq}\hspace{-0.9pt} \in\hspace{-0.5pt} \pcsi{p}$ starts in a state satisfying $C$. 
Furthermore, as a suffix of $\sq$, the computation $\suffix{n}{\sq}\hspace{-1.5pt} \in\hspace{-1.2pt} \pcsi{p}$ neither reaches a $\skipp$-configuration nor a state satisfying $Q$ and is moreover fair.
Altogether 
$\lcondN{\suffix{n}{\sq}}{\hspace{-0.9pt}C\hspace{-0.5pt}}{\hspace{0.5pt}p\hspace{0.5pt}}{\hspace{0.75pt}Q}$ follows. 
With $\stateOf{sq_{n-1}} \hspace{0.5pt}\notin \hspace{-0.2pt}C$ we similarly conclude
$\lcondN{\suffix{n}{\sq}\hspace{0.2pt}}{\hspace{-1.5pt}\neg C}{\hspace{0.7pt}q\hspace{0.7pt}}{\hspace{0.75pt}Q}$.

In (2), $\lcondT{\sq\hspace{-0.5pt}}{\hspace{-2.55pt}P\hspace{-0.7pt}}{\hspace{0.25pt}\ite{\hspace{0.7pt} C\hspace{-1.2pt}}{p}{q\hspace{-0.5pt}}\hspace{0.2pt}}{\hspace{0.9pt} Q\hspace{-0.2pt}}{\hspace{0.5pt}n}\stsp$ entails some $i < n$ with $\sq_i\stsp \pstep\hspace{-2pt} \sq_{i+1}$
and $\progOf{\sq_i} = \ite{\stsp C\stnsp}{p}{q}$ because otherwise we would have $\progOf{\sq_n} \!\neq\! \skipp$. 
Hence, 
the suffix $\suffix{i+1}{\sq}$ starts either with $(p,\stsp \stateOf{\sq_{i+1}})$ where $\stateOf{\sq_{i+1}} \in \hspace{-0.2pt} C$ or
with $(q, \stsp\stateOf{\sq_{i+1}})$ where $\stateOf{\sq_{i+1}} \notin\hspace{-0.2pt} C$, and in both cases 
reaches the first $\skipp$-configuration in $n - i - 1$ steps.
\end{proof}

Let $\steqv{\sq\hspace{-1.5pt}}{\hspace{-1.5pt}\sq\pr}$ from now on denote that these infinite sequences of configurations 
run through the same states, \ie $\stdsp\stateOf{\sq_i}\hspace{-1pt} =\hspace{-1pt} \stateOf{\sq\pr_i}$ for all $i\hspace{-0.7pt}\in\hspace{-1.5pt}\naturals$.
This equivalence allows us to relate computations by their state traces and is of importance to all of the remaining rules starting with the sequential composition.
\begin{lemma}\label{thm:seq-live}
The following implications hold:
\begin{enumerate}
\item[\emph{(1)}] if $\lcondN{\sq\hspace{-0.2pt}}{\hspace{-1.7pt}P\hspace{-0.79pt}}{\hspace{0.5pt}p\hspace{0.9pt};\stnsp q\hspace{0.4pt}}{\hspace{0.5pt}Q}$ then there exists a computation $\steqv{\sq\pr\hspace{-1.2pt}}{\hspace{-0.9pt}\sq}$ and 
  either $\lcondN{\sq\pr\hspace{-1.4pt}}{\hspace{-2.9pt}P\hspace{-1.25pt}}{\hspace{-0.25pt}p\hspace{-0.4pt}}{\hspace{0.25pt}Q}$ or there exist $m, n\hspace{-1.2pt} \in\hspace{-1.9pt} \naturals$
  with $m\hspace{-1.9pt} <\hspace{-1.2pt} n$ such that we have $\lcondT{\sq\pr\hspace{0.3pt}}{\hspace{-1pt}P\hspace{-0.4pt}}{\hspace{0.5pt}p\hspace{0.7pt}}{\hspace{0.75pt}Q\hspace{-0.2pt}}{\hspace{0.7pt}m}$ and $\stsp\lcondN{\suffix{n}{\sq}\hspace{0.2pt}}{\hspace{-2pt}\top\hspace{-0.4pt}}{\hspace{0.7pt}q\hspace{0.25pt}}{\hspace{0.55pt} Q}$;  
\item[\emph{(2)}] if $\lcondT{\sq}{\hspace{-1.7pt}P\hspace{-0.5pt}}{\hspace{0.9pt}p\hspace{0.9pt};\stnsp q\hspace{0.4pt}}{\hspace{1pt}Q\hspace{0.5pt}}{\hspace{0.7pt}n}$ then there exist $\steqv{\sq\pr\hspace{-1.1pt}}{\hspace{-0.7pt}\sq}$ and some $m, l\stsp \in\hspace{-0.5pt} \naturals$ with $m < l \le n$ such that
  $\lcondT{\sq\pr}{\!P\hspace{-0.2pt}}{\hspace{0.7pt} p\hspace{0.5pt}}{\hspace{0.5pt}Q\hspace{0.2pt}}{\hspace{0.2pt}m}$ and $\stsp\lcondT{\suffix{l}{\sq}}{\hspace{-2pt}\top\hspace{-0.75pt}}{\hspace{0.92pt}q\hspace{0.25pt}}{\hspace{0.7pt}Q\hspace{0.1pt}}{\hspace{1.2pt}n-\hspace{0.79pt}l}$.  
\end{enumerate}
\end{lemma}
\begin{proof}
  In (1) we assume $\sq \in \pcsf{p;\stnsp q}$ that does not reach a $\skipp$-configuration.
  First, suppose $\progOf{\sq_i}\hspace{-1pt} \neq \skipp;q\stsp$ holds for all \stsp$i\hspace{0.7pt} \in\hspace{-0.2pt}\naturals$.
In this case we can infer 
that  for any $i\hspace{1pt}\in\hspace{0.1pt}\naturals\hspace{0.5pt}$ there is some $u_i$ with $\progOf{\sq_i}\hspace{-0.5pt} = u_i;q$.
Then let $\sq\pr_i\stsp \defeq\stsp (u_i,\stsp \stateOf{\sq_i})$.
Since $\sq\pr\hspace{-0.79pt}\in\hspace{-0.2pt} \pcsi{p}$ and $\steqv{\sq\pr\hspace{-0.7pt}}{\hspace{-0.1pt}\sq}$
are provided by this construction,
it only remains to be shown that $\sq\pr$ is fair.
Let $i\hspace{1pt}\in\naturals$ and let
$x\hspace{0.1pt} \in\hspace{-0.5pt} \pos\hspace{2pt}u_i$ be an always available position to this end.
Then $0x$ is an always available position of $u_i;q$ and hence
there is some $j \ge i$ with $(u_j;q,\stsp \sigma)\hspace{1pt}\hspace{0.5pt} \pstep \hspace{-0.5pt}(u_{j+1};q,\stsp \sigma\pr)$,
$\sigma\hspace{-0.9pt} = \stateOf{\sq_j}$, $\sigma\pr\hspace{-1.2pt} = \stateOf{\sq_{j+1}}$,
the fired position $0x$
and, moreover, $\plook{(u_j;q)}{0x}\hspace{-1pt} = \plook{(u_i;q)}{0x}$ \ie\stsp also $\stsp\plook{u_j}{x}\hspace{-1.1pt} = \plook{u_i}{x}$.
That $0x$ is fired yields a program step $(\plook{(u_j;q)}{0x},\stsp \sigma)\hspace{0.5pt} \pstep\hspace{-0.5pt} (p\pr, \stsp\sigma\pr)$ with some $p\pr$
such that $u_{j+1};q = \psubst{(u_j;q)}{p\pr}{0x} = (\psubst{u_j}{p\pr}{x});q$ holds.
Thus, $u_{j+1}\hspace{-1.5pt} = \psubst{u_j}{p\pr}{x}$ and $(\plook{u_j}{x}, \stsp\sigma)\hspace{1pt} \pstep\hspace{-0.7pt} (p\pr,\stsp \sigma\pr)$,
\ie $\stdsp x$ is the fired position of $(u_j, \stsp \sigma)\hspace{1pt} \pstep\hspace{-0.7pt} (u_{j+1},\stsp\sigma\pr)$.

Second, suppose $\mu$ is the smallest index with $\progOf{\sq_\mu} = \skipp;q$. We define $\sq\pr_i \defeq (u_i,\stsp \stateOf{\sq_i})$ for all $i\hspace{1.2pt} \le \mu$ and
$\sq\pr_i \defeq (\skipp,\stsp \stateOf{\sq_i})$ with $\sq\pr_{i-1}\hspace{0.5pt} \estep\hspace{-2.4pt} \sq\pr_{i}$ for all $i\hspace{0.5pt} > \mu$.
Thus, $\steqv{\sq\pr\hspace{-0.9pt}}{\hspace{-0.5pt}\sq}$ and
$\lcondT{\sq\pr\hspace{-0.5pt}}{\hspace{-1.9pt}P\hspace{-0.5pt}}{\hspace{0.5pt}p\hspace{0.7pt}}{\hspace{0.7pt}Q\hspace{0.3pt}}{\hspace{0.3pt}\mu}$ hold by construction. 
Since the position $0$ of $\progOf{\sq_\mu}$ is always available and $\sq$ is fair, we further have some $n > \mu$
with the program step $\sq_{n-1}\stsp \pstep\hspace{-2pt} \sq_n$ 
 such that $\progOf{\sq_{n-1}} = \skipp;q$ and, consequently, $\progOf{\sq_{n}}\hspace{-0.5pt} = q$.
 As nothing else than $\stateOf{\sq_n}\hspace{-0.7pt}  \notin\hspace{-0.9pt} Q$ may in general be asserted regarding the state $\stateOf{\sq_n}$,
 we conclude $\lcondN{\suffix{n}{\sq}}{\!\top\hspace{-0.5pt}}{\hspace{0.9pt}q\hspace{0.1pt}}{\hspace{0.7pt}Q}$.

 In (2) we have $\sq\hspace{-0.2pt} \in\hspace{-0.2pt}  \pcsi{p;q}$ and $\progOf{\sq_n} = \skipp$. 
 Hence, there must be the least index $\mu < n$ with $\progOf{\sq_\mu} = \skipp;q$.
 Then for any $i\hspace{1pt}\le\hspace{0.5pt} \mu$ there is some $u_i$ such that $\progOf{\sq_i} = u_i;q$ holds and
we once more define $\sq\pr_i \defeq (u_i, \stsp\stateOf{\sq_i})$ for all $i\hspace{1pt}\le\hspace{0.5pt} \mu$ and $\sq\pr_i \defeq (\skipp,\stsp \stateOf{\sq_i})$
with $\sq\pr_{i-1}\hspace{1pt} \estep\hspace{-1.9pt} \sq\pr_{i}$ for all $i\hspace{1pt}>\hspace{-0.2pt} \mu$. Thus, $\steqv{\sq\pr\hspace{-0.5pt}}{\hspace{-0.2pt}\sq}$ and
$\lcondT{\sq\pr\hspace{-0.9pt}}{\hspace{-2.5pt}P\hspace{-1pt}}{\hspace{0.2pt}p\hspace{0.1pt}}{\hspace{0.7pt}Q\hspace{-0.1pt}}{\hspace{0.3pt}\mu}$
hold by construction.
Further, there is some $l$ with $\mu\hspace{-0.1pt} <\hspace{0.1pt} l\hspace{0.1pt} \le\hspace{0.1pt} n$
such that $\progOf{\sq_{l}}\hspace{-1pt} = \hspace{-1pt}q$ 
for otherwise we would have $\progOf{\sq_{n}} = \skipp;q$. 
We therefore conclude $\stdsp\lcondT{\suffix{l}{\sq}\hspace{0.5pt}}{\hspace{-2pt}\top\hspace{-0.7pt}}{\hspace{0.95pt}q\hspace{0.4pt}}{\hspace{0.9pt}Q\hspace{0.25pt}}{\hspace{1pt}n-\hspace{0.9pt}l}$.
\end{proof}

Next proposition handles the cases with the parallel composition but, for the sake of simplicity, of two components only.
A very similar argumentation however leads to a proof of 
the general rule:
\emph{if $\lcondN{\sq\hspace{-0.7pt}}{\hspace{-1.55pt}P\hspace{1.7pt}}{\hspace{-0.9pt}\Parallel{\hspace{-1.5pt}p_1, \ldots, p_m}\hspace{-0.7pt}}{\hspace{0.4pt}Q}$ then there is a function sending each $k\hspace{-0.1pt} \in \hspace{-1pt}\{1, \ldots, m\}$ to a computation
$\sq^{p_k}$ such that}
\begin{enumerate}
\item[(i)] $\steqv{\sq^{p_k}\hspace{-2.1pt}}{\hspace{-0.4pt}\sq}$ \emph{for all} $k\hspace{0.2pt} \in \hspace{-0.4pt}\{1, \ldots, m\}$,
\item[(ii)] 
  $\lcond{\sq^{p_k}\hspace{-1.7pt}}{\hspace{-1pt}P\hspace{-0.5pt}}{\hspace{0.4pt}p_k\hspace{-0.9pt}}{\hspace{0.5pt}Q}\stsp$ \emph{for all} $k\hspace{0.1pt} \in \hspace{-0.4pt}\{1, \ldots, m\}$ \emph{and} 
\item[(iii)] \emph{there exists some} $k \in\hspace{-0.5pt} \{1, \ldots, m\}$ with
  $\lcondN{\sq^{p_k}\hspace{-1pt}}{\hspace{-1.2pt}P\hspace{-0.7pt}}{\hspace{0.5pt}p_k\hspace{-0.5pt}}{\hspace{0.7pt}Q}$.
\end{enumerate}
For $m\hspace{-1pt} = 2$, (ii) is the same as $\lcond{\sq^{p_1}\hspace{-2.1pt}}{\hspace{-1.5pt}P\hspace{-1pt}}{\hspace{0.2pt}p_1\hspace{-1.1pt}}{\hspace{0.2pt}Q} \wedge
\lcond{\sq^{p_2}\hspace{-1.9pt}}{\hspace{-1.5pt}P\hspace{-1pt}}{\hspace{0.2pt}p_2\hspace{-0.7pt}}{\hspace{0.2pt}Q}$
which by Corollary~\ref{thm:split-eq} is in turn equivalent to the formula
\[
\begin{array}{l}
(\lcondN{\sq^{p_1}}{\hspace{-1.75pt}P\hspace{-0.5pt}}{\hspace{0.7pt}p_1\hspace{0.1pt}}{\hspace{1pt}Q}\hspace{0.1pt} \vee
(\exists n.\hspace{2.5pt} \lcondT{\sq^{p_1}}{\hspace{-1.7pt}P\hspace{-0.5pt}}{\hspace{0.7pt}p_1\hspace{0.1pt}}{\hspace{1pt}Q\hspace{0.2pt}}{\hspace{0.9pt}n})) \; \wedge \\
(\lcondN{\sq^{p_2}}{\hspace{-1.75pt}P\hspace{-0.5pt}}{\hspace{0.7pt}p_2\hspace{0.1pt}}{\hspace{1pt}Q}\hspace{0.1pt} \vee
  (\exists n.\hspace{2.5pt} \lcondT{\sq^{p_2}}{\hspace{-1.7pt}P\hspace{-0.5pt}}{\hspace{0.7pt}p_2\hspace{0.1pt}}{\hspace{1pt}Q\hspace{0.2pt}}{\hspace{0.9pt}n})). 
\end{array}  
\]
Transforming it using the distributive law
results in the three disjuncts encountered as (a),(b) and (c) in the part (1) of Proposition~\ref{thm:parallel2-live} whereas the fourth
\[\lcondT{(\exists n.\hspace{2.5pt} \sq^{p_1}\hspace{-1pt}}{\hspace{-1.7pt}P\hspace{-0.9pt}}{\hspace{0.5pt}p_1\hspace{-1.1pt}}{\hspace{0.5pt}Q\hspace{-0.2pt}}{\hspace{0.5pt}n}) \hspace{0.9pt}\wedge\hspace{0.5pt}
\lcondT{(\exists n. \hspace{2.5pt} \sq^{p_2}\hspace{-0.7pt}}{\hspace{-1.5pt}P\hspace{-0.75pt}}{\hspace{0.5pt}p_2\hspace{-0.5pt}}{\hspace{0.5pt}Q}{\hspace{0.7pt}n})\]
contradicts (iii) and gets therefore discarded. The condition (iii) can thus disappear as well because it is implied by each of the disjuncts (a), (b) and (c).

Generally, if $\sq\hspace{-1pt} \in\hspace{-1pt} \pcsf{\hspace{1pt}\Parallel{\hspace{-2pt}p_1, \ldots, p_m}\hspace{0.7pt}}$
does not reach $\skipp$ then there must be
non-terminating components $p_{i_1},\ldots,p_{i_k}$ with $1 \le i_1 \hspace{-1.5pt}< \ldots < i_k\hspace{-1.5pt} \le m$ and $0\hspace{-0.5pt} < k$,
such that all $p_j$ with $j\hspace{-0.1pt} \notin\hspace{-0.1pt} \{i_1,\ldots,i_k\}$ can accordingly be assumed as terminating.
This amounts to $\sum_{k=1}^m {m \choose k}\hspace{-0.7pt} = 2^m\hspace{-1.7pt} - 1$ branches (\ie disjuncts) to be refuted
in order to refute $\lcondN{\sq\hspace{-0.5pt}}{\hspace{-2.1pt} P\hspace{1.5pt}}{\hspace{-1pt}\Parallel{\hspace{-1.25pt}p_1, \ldots, p_m}\hspace{-0.7pt}}{\hspace{1.2pt}Q}$
explaining also the three disjuncts for $m =\hspace{0.1pt} 2$.
As will become apparent in Chapter~\ref{S:PM3}, with $\op{mutex}$ we indeed will have to follow all of the three branches.
There is however a notable class of programs
with `independent' parallel components whose non-termination can be refuted independently of other components behaviour.
More precisely, in such cases 
it becomes possible to refute
$\bigvee_{i=1}^{m}\steqv{\sq^{p_i}\hspace{-2.1pt}}{\hspace{-0.5pt}\sq}\hspace{0.4pt} \wedge\hspace{-0.1pt} \lcondN{\sq^{p_i}\hspace{-1.7pt}}{\hspace{-1.7pt}P\hspace{-0.5pt}}{\hspace{0.5pt}p_i\hspace{-0.7pt}}{\hspace{0.79pt}Q}$, \ie $m$ branches.

The part (2) of Proposition~\ref{thm:parallel2-live} is extended to $m > 2$ components as follows: 
if $\lcondT{\sq\hspace{-1pt}}{\hspace{-2.1pt}P\hspace{1.4pt}}{\hspace{-1.5pt}\Parallel{\hspace{-2.1pt}p_1, \ldots, p_m}\hspace{-1.2pt}}{\hspace{0.57pt}Q\hspace{-0.32pt}}{\hspace{0.25pt}n}$
then for all $i\hspace{-0.2pt} \in\hspace{-0.9pt} \{1,\ldots,m\}$ 
there exist
$\steqv{\sq^{p_i}\hspace{-2.7pt}}{\hspace{-1pt}\sq}$ and $n_i <\hspace{0.5pt} n$
such that $\lcondT{\sq^{p_i}\hspace{-1.5pt}}{\hspace{-1.7pt} P\hspace{-0.7pt}}{\hspace{0.29pt}p_i\hspace{-0.7pt}}{\hspace{0.4pt}Q}{\hspace{0.4pt}n_i}$.
\begin{lemma}\label{thm:parallel2-live}
The following implications hold:
\begin{enumerate}
\item[\emph{(1)}] \hspace{-2.9pt}if $\lcondN{\sq\hspace{-0.5pt}}{\hspace{-2pt}P\hspace{0.29pt}}{\hspace{1.2pt}p\hspace{-0.94pt} \parallel\hspace{-0.9pt} q\hspace{1.1pt}}{\hspace{1.7pt}Q}$ then there exist $\steqv{\sq^p\hspace{-1.5pt}}{\stnsp\sq}$ and
  $\steqv{\sq^q\hspace{-1.5pt}}{\stnsp\sq}$ such that either
  \begin{enumerate}
  \item[\emph{(a)}] $\lcondN{\sq\mystrut^p\hspace{-1pt}}{\hspace{-1.7pt}P\hspace{-0.75pt}}{\hspace{0.2pt}p\hspace{0.1pt}}{\hspace{0.25pt}Q}$ and
                    $\lcondN{\sq\mystrut^q\hspace{-1pt}}{\hspace{-1.7pt}P\hspace{-0.71pt}}{\hspace{0.32pt}q\hspace{-0.15pt}}{\hspace{0.29pt}Q}$, or
  \item[\emph{(b)}] $\lcondN{\sq\mystrut^p\hspace{-1pt}}{\hspace{-1.7pt}P\hspace{-0.79pt}}{p}{\hspace{0.2pt}Q}$ and there is some $n$ with
                    $\lcondT{\sq\mystrut^q\hspace{-1pt}}{\hspace{-1.7pt}P\hspace{-0.75pt}}{\hspace{0.4pt}q\hspace{0.15pt}}{\hspace{0.2pt}Q\hspace{0.2pt}}{\hspace{0.4pt}n}$, or
  \item[\emph{(c)}] $\lcondN{\sq\mystrut^q\hspace{-1pt}}{\hspace{-1.7pt}P\hspace{-0.79pt}}{\hspace{0.4pt}q\hspace{0.05pt}}{\hspace{0.25pt}Q}$ and there is some $n$ with
                    $\lcondT{\sq\mystrut^p\hspace{-1.5pt}}{\hspace{-1.7pt}P\hspace{-0.9pt}}{\hspace{0.2pt}p}{\hspace{0.4pt}Q\hspace{0.2pt}}{\hspace{0.4pt}n}$;
   \end{enumerate}
\item[\emph{(2)}] \hspace{-1.5pt}if $\lcondT{\sq\hspace{-1.1pt}}{\hspace{-2.4pt}P\hspace{-1pt}}{\hspace{0.2pt} p\hspace{-1.75pt} \parallel\hspace{-1.5pt} q\hspace{-0.15pt}}{\hspace{0.39pt}Q\hspace{-0.3pt}}{\hspace{0.39pt}n}$ then there exist $\steqv{\sq^p\hspace{-2pt}}{\stnsp\sq}$, $\steqv{\sq^q\hspace{-2pt}}{\stnsp\sq}$, $n_1\hspace{-1.5pt} < n$ and $n_2 < n$ such that
  $\lcondT{\sq^p\hspace{-1pt}}{\!P\hspace{-0.7pt}}{\hspace{0.4pt}p\hspace{0.29pt}}{\hspace{0.55pt}Q\hspace{0.2pt}}{\hspace{0.7pt}n_1}$ and
  $\lcondT{\sq^q\hspace{-0.7pt}}{\!P\hspace{-0.7pt}}{\hspace{0.4pt}q\hspace{0.19pt}}{\hspace{0.55pt}Q\hspace{0.2pt}}{\hspace{0.7pt}n_2}$.
\end{enumerate}
\end{lemma}
\begin{proof}
In (1) we have a computation $\sq\hspace{-0.2pt} \in\hspace{-0.4pt} \pcsf{p\hspace{-1.2pt}\parallel\hspace{-1.2pt} q}$ that does not reach a $\skipp$-configuration.
Now, if there would be some $u, v, n_1, n_2$ with $\progOf{\sq_{n_1}} = \skipp \parallel v$ and
$\progOf{\sq_{n_2}} = u\hspace{-0.5pt} \parallel\hspace{-0.5pt} \skipp$
then we could infer $\progOf{\sq_m} = \skipp\hspace{-0.5pt} \parallel\hspace{-0.5pt} \skipp$ where $m$ is the maximum of $n_1$ and $n_2$,
and hence by fairness of $\sq$ we could further obtain some $n\hspace{-0.2pt} >\hspace{-1.2pt} m$
with $\progOf{\sq_n}\hspace{-0.2pt} =\hspace{-0.2pt} \skipp$ in contradiction to the assumption on $\sq$.
As a result, for all $i\hspace{0.7pt}\in\hspace{-0.9pt}\naturals$ we have some $u_i$ and $v_i$ such that $\progOf{\sq_i}\hspace{-0.5pt} = u_i\hspace{-1.9pt}\parallel\hspace{-1pt} v_i$ holds,
and define $\sq^p_i \hspace{-0.2pt}\defeq\hspace{-0.5pt} (u_i,\stsp \stateOf{\sq_i})$  
with $\sq\mystrut^p_i \hspace{0.5pt}\pstep\hspace{-2.4pt}\sq\mystrut^p_{i+1}$ if there is $\sq_i\hspace{0.7pt} \pstep\hspace{-2.4pt} \sq_{i+1}$
having the fired position of the form $1x$ where $x\hspace{0.4pt} \in\hspace{-0.79pt} \pos\hspace{2pt}u_i$, 
and $\sq\mystrut^p_i \stsp\estep\hspace{-2.5pt} \sq\mystrut^p_{i+1}$ otherwise.
Thus, $\sq^p\hspace{-1.7pt} \in\hspace{-0.9pt} \pcsi{p}$ and $\steqv{\sq^p\hspace{-3pt}}{\hspace{-1.9pt}\sq}$ are provided by this construction.
To establish that $\sq^p$ is fair,
let $i\hspace{0.2pt}\in\hspace{-0.5pt}\naturals$ and $x\hspace{-0.5pt} \in\hspace{-1pt} \pos\hspace{1.9pt}u_i$ be an always
available position.
Then $1x$ is an always available position of $u_i\hspace{-2.4pt} \parallel\hspace{-1.5pt} v_i$, and since $\sq$ is fair
there is some $j\hspace{0.5pt} \ge\hspace{0.5pt}  i$ with
$(u_j\hspace{-1.9pt} \parallel\hspace{-0.7pt}  v_j, \stsp\sigma)\hspace{0.7pt} \pstep\hspace{-0.5pt} (u_{j+1}\hspace{-1pt} \parallel\hspace{-0.5pt}  v_{j+1},\stsp \sigma\pr)$,
$\sigma\hspace{-0.5pt} = \stateOf{\sq_j}$, $\sigma\pr\hspace{-1.2pt}=\stateOf{\sq_{j+1}}$ such that
\begin{enumerate}
  \item[-] $\plook{u_j}{x} = \plook{(u_j\hspace{-1.2pt}  \parallel\hspace{-0.7pt}  v_j)}{1x} =  \plook{(u_i\hspace{-0.9pt}  \parallel\hspace{-0.5pt}  v_i)}{1x} = \plook{u_i}{x}$
  \item[-] $(\plook{(u_j \hspace{-1.2pt} \parallel\hspace{-0.5pt}  v_j)}{1x},\stsp \sigma) \hspace{0.5pt} \pstep\hspace{-0.5pt}  (p\pr,\stsp \sigma\pr)$
  \item[-] $u_{j+1} \hspace{-0.9pt} \parallel\hspace{-0.7pt}  v_{j+1} = \psubst{(u_j \hspace{-1.1pt} \parallel\hspace{-0.5pt}  v_j)}{p\pr}{1x} = \hspace{0.5pt}\psubst{u_j}{p\pr}{x}\hspace{-1.2pt}  \parallel\hspace{-0.7pt}  v_j$
\end{enumerate}
hold with some $p\pr$.
Then $u_{j+1}\hspace{-1.2pt} =\hspace{-0.7pt}\psubst{u_j}{p\pr}{x}$ and $(\plook{u_j}{x}, \stsp\sigma)\hspace{1pt} \pstep\hspace{-1pt} (p\pr,\stsp \sigma\pr)$,
\ie $\stdsp x$ is the fired position of $(u_j, \stsp \sigma)\stsp \pstep\hspace{0.2pt} (u_{j+1},\stsp\sigma\pr)$.
A computation $\steqv{\sq^q\hspace{-1pt}}{\hspace{-0.5pt}\sq}$ with $\sq^q\in \pcsf{q}$  is constructed likewise
-- the fired position is here of the form $2x$ where $x\hspace{0.2pt} \in\hspace{-0.5pt} \pos\hspace{2pt}v_j$. 

Thus, if $u_i\hspace{-1.2pt} \neq\hspace{-0.7pt} \skipp$ and $v_i\hspace{-1pt} \neq \skipp$ hold for all $i$ then
$\lcondN{\sq^p\hspace{-1.2pt}}{\hspace{-2.9pt} P\hspace{-0.9pt}}{\hspace{0.1pt}p\hspace{-0.15pt}}{\hspace{0.25pt} Q}$ and
$\lcondN{\sq^q\hspace{-0.7pt}}{\hspace{-1.9pt}P\hspace{-0.55pt}}{\hspace{0.79pt}  q\hspace{0.45pt}}{\hspace{0.79pt} Q}$ follow by the above constructions.
If we have $u_i\hspace{-1.1pt} \neq\hspace{-0.5pt} \skipp$ for all $i$ but there is some $j$ such that $v_j\hspace{-1pt} =\hspace{-0.1pt} \skipp$,
we conclude
$\lcondN{\sq^p\hspace{-1pt}}{\hspace{-1.9pt}P\hspace{-0.9pt}}{\hspace{0.1pt}p\hspace{-0.2pt}}{\hspace{0.4pt}Q}\stsp$ and
$\lcondT{\sq^q\hspace{-1pt}}{\hspace{-2.2pt}P\hspace{-1.1pt}}{\hspace{0.4pt}q\hspace{-0.2pt}}{\hspace{0.5pt}Q\hspace{-0.2pt}}{\hspace{0.5pt}n}\stsp$
where $n$ is the first index with $v_n\hspace{-1.7pt} =\hspace{-0.4pt}\skipp$.
The last case with $v_i \neq \skipp$ for all $i$ and some $j$ with $u_j\hspace{-1pt} = \skipp$ is handled accordingly.

 In (2) we assume $\sq\hspace{-0.2pt} \in\hspace{-0.2pt}  \pcsi{p\parallel q}$ and $\progOf{\sq_n}\hspace{-0.5pt} = \skipp$ where $n$ is the least such index.
 Then for all $i<n$ we have $\progOf{\sq_i}\hspace{-0.5pt} = u_i\hspace{-1.5pt} \parallel\hspace{-1pt} v_i$ with some $u_i, v_i$.
Furthermore, $\progOf{\sq_{n-1}} = \skipp\!\parallel \!\skipp$ and $\sq_{n-1}\stsp \pstep\hspace{-2.5pt} \sq_n$ must hold as well.
Consequently, there exist the least index $n_1\hspace{-1.5pt}<n$ with $u_{n_1}\hspace{-2.5pt} = \skipp$
and the least index $n_2\hspace{-1pt}<n$ with $v_{n_2}\! = \skipp$.
We can thus define the computation $\sq\mystrut^p$ by
$\sq\mystrut^p_i \defeq (u_i,\stsp \stateOf{\sq_i})$ for all $i\hspace{0.5pt}\le\hspace{-0.5pt} n_1$ and $\sq\mystrut^p_i \hspace{-0.9pt}\defeq\hspace{-0.5pt} (\skipp,\stdsp \stateOf{\sq_i})$
with $\sq^p_{i-1}\stsp \estep\hspace{-2.9pt} \sq^p_{i}$ for all $i>\hspace{-1pt} n_1$, such that $\steqv{\sq^p\hspace{-2pt}}{\stnsp\sq}$ and
$\lcondT{\sq^p\hspace{-1.2pt}}{\hspace{-1.95pt}P\hspace{-1pt}}{\hspace{0.4pt}p\hspace{0.1pt}}{\hspace{0.75pt}Q\hspace{0.05pt}}{\hspace{0.5pt}n_1}$
 follow by this construction.
 Finally,
 a computation $\steqv{\sq^q\stnsp}{\stnsp\sq}$ with $\lcondT{\sq^q\hspace{-0.1pt}}{\hspace{-1.5pt}P\hspace{-1pt}}{\hspace{0.57pt}q\hspace{0.25pt}}{\hspace{0.59pt}Q\hspace{0.25pt}}{\hspace{0.4pt}n_2}\stsp$ is defined likewise. 
\end{proof}

Shifting lastly the attention to while-statements, from an assumption of the form
$\lcondT{\sq\hspace{-0.4pt}}{\hspace{-1.7pt}P\hspace{-0.9pt}}{\hspace{0.79pt}\while{\hspace{1pt} C\stnsp}{p}{q}\stsp}{\hspace{0.79pt}Q\hspace{0.4pt}}{\hspace{0.7pt}n}$
we infer that
there is some $i\le n$ with $\lcondT{\suffix{i}{\sq}}{\!\neg C\hspace{-0.5pt}}{\hspace{0.59pt}q\hspace{0.29pt}}{\hspace{1.1pt}Q\hspace{0.49pt}}{\hspace{1pt}n - \hspace{0.5pt}i}$. That is, $\sq$
proceeds to $q$ in $i \le n$ steps reaching thereby a state in $\neg C$ from which it reaches a $\skipp$-configuration in $n -\stsp i$ steps.
Note that such $\sq$ in general does not have to visit a state in $C$.

With $\lcondN{\sq\hspace{-1pt}}{\hspace{-2.9pt}P\hspace{-1.2pt}}{\hspace{0.2pt}\while{\hspace{1pt} C\hspace{-1.2pt}}{\hspace{-1pt}p\hspace{-0.5pt}}{q\hspace{-1pt}}\hspace{0.2pt}}{\hspace{0.4pt}Q}$
we can by contrast infer that $\sq$ passes
through the set of states satisfying $C$ infinitely many times
whenever all fair computations of $p$ and $q$, starting respectively from $C$ and $\neg C$, reach a $\skipp$-configuration.
This is captured by the following statement
where strictly ascending infinite sequence of naturals in this setting means a function $\phi$ that maps each $i$ of type $\ty{nat}$ to
a value of the same type, for convenience denoted by $\phi_i$ instead of $\phi\hspace{2.7pt} i$, such that $\phi_n\hspace{-1pt} < \phi_m$ holds for all $n, m$ with $n < m$.
\begin{lemma}\label{thm:while-live}
If $\stsp\lcondN{\sq\hspace{-1.2pt}}{\hspace{-2.9pt}P\hspace{-1pt}}{\hspace{0.2pt}\while{\hspace{1pt}C\hspace{-1.5pt}}{\hspace{-1.2pt} p\hspace{-0.7pt}}{q\stnsp}\hspace{0.2pt}}{\hspace{0.2pt}Q}\stsp$ 
then either
\begin{enumerate}
\item[\hspace{-2cm}\emph{(1)}] 
  there exists a strictly ascending infinite sequence of natural numbers $\phi$ such that for any $n\hspace{0.2pt}\in\hspace{-0.5pt}\naturals$ there exist some 
  $\steqv{\sq\pr\hspace{-1.2pt}}{\hspace{-0.7pt}\suffix{\phi_n}{\sq}}\stsp$ and $\stsp m\hspace{-0.4pt}< \phi_{n+1} -\hspace{0.5pt} \phi_n$
  with $\lcondT{\sq\pr}{\stnsp C\hspace{-0.5pt}}{\hspace{0.4pt}p\hspace{0.5pt}}{\hspace{0.9pt}Q\hspace{0.1pt}}{\hspace{0.7pt}m}$, or
\item[\emph{(2)}] there exist some $n\hspace{0.2pt}>0\stsp$ and $\steqv{\sq\pr\stnsp}{\hspace{-0.5pt}\suffix{n}{\sq}}\stsp$ such that
  $\lcondN{\sq\pr\hspace{-0.5pt}}{\hspace{-1.2pt} C\hspace{-0.7pt}}{\hspace{0.2pt}p\hspace{0.2pt}}{\hspace{0.4pt}Q}$, or
\item[\emph{(3)}] there exists some $n > 0\stsp$ such that $\stsp\lcondN{\suffix{n}{\sq}}{\hspace{-2pt}\neg C\hspace{-0.5pt}}{\hspace{0.1pt}q}{\hspace{0.4pt}Q}$.
\end{enumerate}
\end{lemma}
\begin{proof}
  Let $p_{\com{while}}$ abbreviate $\while{\hspace{0.5pt} C\hspace{-2pt}}{\hspace{-0.9pt}p\hspace{-0.9pt}}{\hspace{-0.9pt}q\hspace{-0.9pt}}$ 
  and assume we have $\sq\hspace{-0.2pt} \in\hspace{-0.2pt} \pcsf{p_{\com{while}}}$ which
starts with a state in $\hspace{-1pt}P$, has $\neg Q$ as a state invariant and 
  does not reach a $\skipp$-configuration.
Moreover, we will assume the respective negations of (2) and (3) (note that the invariant $\neg Q$ is omitted because it is inherited from $\sq$ in both cases anyway): 
\begin{enumerate}
\item[(a)] \emph{for any $\sq\pr \hspace{-0.5pt}\in\hspace{-0.1pt} \pcsf{p}\hspace{-1pt} \cap\hspace{1.4pt} \inC{\hspace{0.55pt} C}$ and $n>\hspace{-0.5pt}0$ such that
$\stsp\steqv{\sq\pr\hspace{-1.2pt}}{\hspace{-1.2pt}\suffix{n}{\sq}}$ holds
           there exists $j\hspace{0.2pt}\in\hspace{-0.2pt}\naturals\stsp$ with $\progOf{\sq\pr_j} = \skipp$},
\item[(b)] \emph{for any $n\hspace{-0.5pt} >\hspace{-0.5pt} 0$ such that
$\stsp\suffix{n}{\sq}\hspace{-0.2pt} \in\hspace{-0.7pt} \pcsf{q} \hspace{-2.1pt}\cap\hspace{-0.05pt} \inC{\hspace{-2pt}(\neg C)}$ holds
there exists $m\hspace{-0.5pt}\in\hspace{-0.5pt}\naturals\stsp$ with $\progOf{\sq_{n+m}} = \skipp$}           
\end{enumerate}
in order to derive (1) constructively. To this end we first show that
for any $i\hspace{1.2pt}\in\hspace{-0.1pt}\naturals$ with a step $\sq_i\hspace{1pt} \pstep\hspace{-1.05pt} \sq_{i+1}\hspace{0.3pt}$ where
$\progOf{\sq_i}\hspace{-0.5pt} =\hspace{-0.5pt} p_{\com{while}}$ and $\stateOf{\sq_i}\hspace{-0.2pt} \in\hspace{-0.2pt} C$
there exists some $j\hspace{0.2pt} > i$ with a step $\sq_j\hspace{0.7pt} \pstep\hspace{-2.5pt} \sq_{j+1}\hspace{0.3pt}$ where
$\progOf{\sq_j}\hspace{-1pt} =\hspace{-1pt} p_{\com{while}}$ and $\stateOf{\sq_j}\hspace{-1pt} \in\hspace{-1.2pt} C$.
Notice that $\sq_{i+1} = (p;\skipp;p_{\com{while}},\stdsp\stateOf{\sq_i})$ in particular follows for such $i$. 

Further,
suppose
there is no $k > i$ with $\progOf{\sq_{k}} = \skipp;\skipp;p_{\com{while}}$. 
Then for all $k > i$ we have some $u_k \neq \skipp$ with $\progOf{\sq_{k}} = u_k;\skipp;p_{\com{while}}$, and 
define $\sq\pr_l \vspace{0.1pt}\defeq (u_{i+l+1},\stdsp \stateOf{\sq_{i+l+1}})$ for all $l$ 
inheriting the respective transitions from $\sq$.
Thus,  $\sq\pr\hspace{-0.5pt} \in\hspace{-0.2pt} \pcsf{p} \hspace{-1.7pt}\cap\hspace{0.7pt} \inC{\hspace{-0.25pt} C}$ and
$\steqv{\sq\pr\hspace{-1.2pt}}{\hspace{-1.2pt}\suffix{i+1}{\sq}}$ hold by construction.
Resorting to (a) we obtain some $j$ with $\progOf{\sq\pr_j} = u_{i+j+1} = \skipp\stdsp$, \ie a contradiction.

Then let $k > i$ be an index with $\progOf{\sq_{k}} = \skipp;\skipp;p_{\com{while}}$.
Since $\sq$ is a fair computation, we get some $l_1\stnsp > k$ with $\progOf{\sq_{l_1}} = \skipp;p_{\com{while}}$
           and further some $l_2 > l_1$ with $\progOf{\sq_{l_2}} = p_{\com{while}}$.
 Once more,
           since the position $0$ of $\progOf{\sq_{l_2}}$ is always available and $\sq$ is fair, there is also 
           some $j \ge\hspace{0.2pt} l_2$ with $\sq_j\hspace{1.2pt} \pstep\hspace{-2.9pt} \sq_{j+1}$ and $\progOf{\sq_j} = p_{\com{while}}$.
           In case $\stateOf{\sq_j} \hspace{-0.5pt}\notin\hspace{-0.2pt} C$ 
           we would have 
           $\suffix{j+1}{\sq} \in\hspace{-0.7pt} \pcsf{q}\! \cap\hspace{0.7pt} \inC{\hspace{-2pt}(\hspace{-0.9pt}\neg C)}$ 
           and using (b) obtain 
some $m$ with $\progOf{\sq_{j+m+1}} = \skipp$ in contradiction to the assumption that $\sq$ does not reach a $\skipp$-configuration.           

           As the intermediate result we obtain a function $\psi$ on the natural numbers 
           such that $\psi\hspace{2.7pt} n\hspace{0.4pt} >\hspace{-0.1pt} n$, $\sq_{\psi\hspace{1.5pt} n}\hspace{-0.7pt} \pstep\hspace{-2.5pt} \sq_{\psi\hspace{1.5pt} n \stsp + \stsp 1}$, $\progOf{\sq_{\psi\hspace{1.5pt}n}}\stnsp =\stnsp p_{\com{while}}$ and
           $\stateOf{\sq_{\psi\hspace{1.2pt}n}}\hspace{-0.9pt} \in\hspace{-0.7pt} C$
           hold for any $n$ 
           with $\sq_n \hspace{0.2pt}\pstep\hspace{-2.29pt} \sq_{n+1}$, $\progOf{\sq_n}\hspace{-0.5pt} =\hspace{-0.5pt} p_{\com{while}}$ and
           $\stateOf{\sq_n}\hspace{-0.5pt} \in\hspace{-0.5pt} C$.
           Moreover, note that an initial
           index $n_0$ with $\sq_{n_0}\hspace{-1.5pt} \pstep\hspace{-0.9pt} \sq_{n_0+1}$ and $\progOf{\sq_{n_0}}\hspace{-0.5pt} = p_{\com{while}}$
exists since $\sq$ is fair,
whereas $\stateOf{\sq_{n_0}}\hspace{-0.2pt} \in\hspace{-0.2pt} C$ must additionally hold for otherwise we would have
$\suffix{n_0+1}{\sq}\hspace{-0.7pt} \in\hspace{-0.7pt} \pcsf{q}\hspace{-2pt} \cap \hspace{0.1pt}\inC{\hspace{-2pt}(\hspace{-0.5pt}\neg C)}$
and hence
a contradiction using (b) as in the above paragraph.
A strictly ascending sequence $\phi$ is thus defined by $\phi_i \defeq 1 + \psi^i\hspace{1.2pt} n_0$.

Let $n$ be an arbitrary natural number for the remainder of the proof.
By the definition of $\phi$ we have $\progOf{\sq_{\phi_n}} = p;\skipp;p_{\com{while}}$, 
$\progOf{\sq_{\phi_{n+1} - 1}} = p_{\com{while}}$ and $\stateOf{\sq_{\phi_n}}\hspace{-1.9pt} \in\hspace{-1.7pt} C$.
This in particular means that 
there must be the least index $\mu$ such that $\phi_n \le \mu < \phi_{n+1}$ and
$\progOf{\sq_\mu} = \skipp;\skipp;p_{\com{while}}$ hold.
Then
for any $i$ with $\phi_n \le i \le \mu$ there is some $u_i$ with
$\progOf{\sq_i}\hspace{-0.5pt} =\hspace{-0.5pt} u_i;\skipp;p_{\com{while}}$.
Thus, we can define $\sq\pr_i \defeq (u_{\phi_n + i}, \stsp\stateOf{\sq_{\phi_n + i}})$
for all $i\le \mu - \phi_n\stsp$
and $\sq\pr_i \defeq (\skipp, \stsp\stateOf{\sq_{\phi_n + i}})$ for all $i>\hspace{-0.9pt} \mu - \phi_n$.
The computation $\sq\pr$ meets the conditions $\steqv{\sq\pr\hspace{-1.7pt}}{\hspace{-1pt}\suffix{\phi_n\stnsp}{\sq}}\stsp$ and
$\lcondT{\sq\pr\hspace{-0.5pt}}{\hspace{-0.7pt} C\hspace{-0.7pt}}{\hspace{0.4pt}p\hspace{0.4pt}}{\hspace{0.7pt}Q\hspace{-0.1pt}}{\hspace{0.5pt}\mu - \phi_n}$ by construction.
\end{proof}

Summing up, an assumption of the form
$\lcond{\sq\hspace{-0.9pt}}{\hspace{-1.55pt} P\hspace{-1pt}}{\hspace{-0.05pt} p\hspace{-0.1pt}}{\hspace{0.4pt}Q}$,
where 
$p$ does not contain any await-statements beyond atomic sections,
is first replaced by the two disjuncts in Corollary~\ref{thm:split-eq}
which in turn get processed by successive application of the presented syntax-driven rules down to
a series of branches to be closed, \ie refuted. In each of these we will have access to some additional piece of event/control flow information, 
which comprises a collection of
specific to the particular branch state properties exhibited by hypothetical computations running through 
the same states as certain suffixes of the initially fixed computation $\sq$.
This essentially allows us to project all this information back onto $\sq$
evincing thereby the relative order of events whenever it is present. 
For example, if a program starts with the assignment $\sv{a}\hspace{0.1pt} :=\hspace{-2pt} \op{True}$ followed by the conditional
$\ite{\stdsp\sv{b}\hspace{0.5pt} =\hspace{0.1pt} 1}{q}{r}$
then
one part of the branches will exhibit the properties 
$\sigma\sv{a}\stsp$ and $\sigma\pr\sv{b} \hspace{0.5pt}=\hspace{-0.5pt} 1$ with
$\sigma\hspace{-1pt} =\hspace{-0.5pt} \stateOf{\sq_1}$, $\sigma\pr\hspace{-1.2pt} = \hspace{-0.5pt}\stateOf{\sq_n}$ and $1 < n$,
whereas the other part -- the properties
$\sigma\sv{a}\stsp$ and $\sigma\pr\sv{b} \hspace{0.5pt}\neq\hspace{-0.5pt} 1$ with
$\sigma\hspace{-0.7pt} =\hspace{-0.2pt} \stateOf{\sq_1}$, $\sigma\pr\hspace{-1pt} = \hspace{-0.2pt}\stateOf{\sq_n}$ and $1\hspace{0.4pt} < n$.
Note that $1 < n$ reflects that $\op{True}$ is assigned to $\sv{a}$ first. 
Chapter~\ref{S:PM3} describes an application of
this technique in detail.
\section{Program correspondences and fair computations}\label{Sb:fair-corr}
The systematic replaying of computations along program correspondences, as implemented by the proof of Proposition~\ref{thm:corr-sim} handling finite cases,
similarly applies
to infinite computations yet without retaining fairness in general. 
Consider for instance the correspondence 
$\pcorrC{\!(\basic\hspace{1.9pt}f \hspace{-0.5pt}\parallel\hspace{-0.5pt} q)\hspace{-0.5pt}}{}{q}$ with a jump-free $q$.
Any $\sq^q\hspace{-0.5pt} \in \hspace{-0.2pt}\pcsf{q}$ is thus replayed by the corresponding 
$\sq\hspace{-0.1pt} \in\hspace{-0.1pt} \pcsi{\basic\hspace{1.7pt}f \hspace{-0.5pt}\parallel\hspace{0.5pt} q}$
that keeps 
the always available position $10$ from making its move once, 
\ie $\basic\hspace{1.9pt}f$ becomes so to say a `spare' component in this process.

This example underlines that retaining fairness demands additional restrictions on program correspondences.
We will focus on a \emph{componentwise} restriction particularly prohibiting such `spare' components.
To sketch the principle suppose $\pcorrC{p_1\hspace{-1.7pt}}{}{\hspace{-0.5pt}q_1}$ and $\pcorrC{\hspace{-0.5pt} p_2\hspace{-1pt}}{}{\hspace{-0.5pt}q_2}$.
Then we may require that any step from $q_1\hspace{-1.2pt} \parallel\hspace{-0.3pt} q_2$
having the fired position $ix$ where $x \in\hspace{-0.5pt} \pos\hspace{1.9pt} q_i$ and $i \stsp\in\hspace{-0.5pt}  \{1, 2\}$
is matched by a step from $p_1\hspace{-1.7pt} \parallel\hspace{-0.3pt} p_2$ with the fired position $iy$ where $y\hspace{-0.2pt} \in\hspace{-1.2pt} \pos\hspace{1.9pt}p_i$.
An exact definition of componentwise correspondence will be elaborated below.
\begin{definition}\label{def:non-block}
  A term $p$ of type $\langA{\alpha}$ will be called \emph{locally non-blocking}\index{locally non-blocking term} if for any $p$-subterm of the form
  $\await{\stsp C\stnsp}{\hspace{-0.5pt}p\pr\hspace{-1pt}}$
  we have $C \stnsp = \!\top$ and, moreover, 
  for any $\sigma$ there exists some $\sigma\pr$ such that $\rpsteps{\rho\hspace{0.2pt}}{\stnsp(p\pr,\stsp \sigma)}{\hspace{-1.5pt}(\skipp,\stsp \sigma\pr)}$.
Furthermore, a program $(\rho, p)$ is called \emph{non-blocking}\index{program!non-blocking} if
$p$ is locally non-blocking and $\rho\hspace{2.5pt} i\stsp$ is locally non-blocking
for any label $i\hspace{0.5pt} \in\hspace{-1pt} \jumps(\rho, p)$. 
\end{definition}
Note that the above definition addresses \emph{all} subterms of $p$
and not only those which are determined by the positions of $p$.
Consider for instance a jump-free $\op{p} = \basic\hspace{2pt}f ; \await{\stsp\bot}{\skipp}$.
Although the position $00$ of $\op{p}$, \ie $\basic\hspace{2pt}f$, is always available, the program is \emph{not} non-blocking
due to the `blocking' subterm $\await{\bot}{\skipp}$: any $\sq \in \pcsf{\op{p}}$ has a suffix $\suffix{n}{\sq}\hspace{0.2pt} \in \pcsf{\await{\bot}{\skipp}}$
without an available position.
By contrast, any $\progOf{\sq_i}$ is either $\skipp$ or has an always available position
when $\sq\hspace{-0.2pt} \in\hspace{-0.5pt} \rpcsf{p}{\rho}$ and $(\rho,p)$ is non-blocking.
This follows from the property that, as with the sequentiality, the non-blocking condition is retained by
the evaluation: 
for any step $\rpstep{\rho}{(p, \sigma)\stsp}{\hspace{0.2pt}(p\pr, \sigma\pr)}$
the program $(\rho, p\pr)$ is non-blocking if $(\rho, p)$ is.
\begin{definition}\label{def:componentwise}
Let $m \ge 1$ and $r$ be of type $\alpha \times \beta \Rightarrow \ty{bool}$.
Then two terms, $p$ of type $\langA{\alpha}$ and $q$ of type $\langA{\beta}$,
will be called \emph{corresponding componentwise}\index{program correspondence!componentwise} w.r.t. $\! m, r, \rho, \rho\pr$ if 
$p$ has the form $\Parallel{\hspace{-2pt} p_1,\ldots, p_m}$, $q$ has the form $\Parallel{\hspace{-1.7pt} q_1,\ldots, q_m}$ and, moreover, the conditions 
\begin{enumerate}
\item[(1)] $\pcorr{p_i\hspace{-0.9pt}}{r}{\hspace{-0.7pt}q_i}$
\item[(2)] $(\rho, p_i)$ is sequential
\item[(3)] $(\rho\pr, q_i)$ is non-blocking
\end{enumerate}
hold for all $i \in\stnsp \{1, \ldots, m\}$.
\end{definition}
Next proposition is essentially an extension of Proposition~\ref{thm:corr-sim} to infinite potential computation in presence of the `stuttering' environment.
It moreover presumes componentwise corresponding programs in order to replay fair computations as sketched above.
\begin{lemma}\label{thm:fair-corr-sim1}
Suppose $p$ and $\hspace{0.5pt}q$
  correspond componentwise w.r.t. $\hspace{-0.7pt}m, r, \rho, \rho\pr$
and let $\sq\mystrut^q\hspace{-1pt} \in\hspace{-0.5pt} \rpcsi{q}{\rho\pr}\hspace{-0.2pt} \cap\hspace{0.7pt} \envC{\hspace{-0.2pt}\op{id}}$. 
Furthermore, let $\stsp\sigma_0\stnsp$ be a state with $(\sigma_0,\stsp \stateOf{\sq^q_0}) \in\hspace{-0.2pt} r$.
Then there exists a computation
$\sq\mystrut^p \hspace{-1.9pt}\in\hspace{-0.9pt} \rpcsi{p}{\rho}\hspace{-0.5pt} \cap \envC{\hspace{-0.7pt}\op{id}}$
such that $\stateOf{\sq\mystrut^p_0}\hspace{-0.8pt} =\hspace{-0.5pt} \sigma_0$ and the following conditions hold for all $n \in\hspace{-0.5pt} \naturals$\emph{:}
\begin{enumerate}
\item[\emph{(1)}] either $\progOf{\sq^p_n}$ and $\progOf{\sq^q_n}$ correspond componentwise w.r.t. $\hspace{-2.5pt} m, r, \rho, \rho\pr$ or both are $\skipp$,
\item[\emph{(2)}] $(\stateOf{\sq^p_n}, \stdsp\stateOf{\sq^q_n}) \in\hspace{0.2pt} r$,
\item[\emph{(3)}] if $n\hspace{0.2pt}>0\stsp$ and $\stsp\rpstep{\rho\pr}{\hspace{-0.5pt}\sq\mystrut^q_{n-1}\hspace{0.2pt}}{\hspace{-1.5pt}\sq\mystrut^q_{n}}\stsp$
  has the fired position $kx$ 
  where $1\le k \le m$ then there exists some $x\pr$ such that
      $\rpstep{\rho\hspace{0.2pt}}{\hspace{-0.5pt}\sq\mystrut^p_{n-1}\hspace{0.1pt}}{\hspace{-2pt}\sq\mystrut^p_{n}}$ holds with the fired position $kx\pr$.
\end{enumerate}
\end{lemma}
\begin{proof}
We will show that to any finite computation $\sq\hspace{-0.2pt} \in\hspace{-0.4pt} \rpcs{p}{\rho}\hspace{0.4pt} \cap\hspace{1.2pt} \envC{\hspace{0.1pt}\op{id}}$ 
satisfying the conditions (1)--(3) for all $n < |\sq|$ there exists a configuration extending $\sq$ to
$\sq\pr\hspace{-0.7pt} \in\hspace{-0.2pt} \rpcs{p}{\rho}\hspace{0.4pt} \cap\hspace{1.1pt} \envC{\hspace{0.2pt}\op{id}}$ that in turn satisfies (1)--(3) for all $n \le |\sq| = |\sq\pr| - 1$.
Using this argument we can construct an admissible $\sq^p\hspace{-0.7pt} \in\hspace{-0.2pt} \rpcsi{p}{\rho}\hspace{0.2pt} \cap\hspace{0.7pt} \envC{\hspace{-0.5pt}\op{id}}$ recursively as follows.
Initially, $(p, \sigma_0)$ is assigned to $\sq^p_0$. 
Next, let $n\hspace{0.5pt} \in\hspace{-0.5pt} \naturals$ and assume that
the prefix $\sq\mystrut^p_0, \ldots, \sq^p_n\stsp$ being a computation in $\rpcs{p}{\rho}\hspace{-0.2pt} \cap\hspace{0.25pt} \envC{\hspace{-0.4pt}\op{id}}$ of length $n+1$
meets the conditions (1)--(3) for all $i \le n$. Then any configuration extending the prefix in the above sense  can be assigned to $\sq\mystrut^p_{n+1}$.

Now, let $\sq\hspace{-1.2pt} \in\hspace{-1.5pt} \rpcs{p}{\rho}\hspace{-0.9pt} \cap\hspace{-0.15pt} \envC{\hspace{-1.2pt}\op{id}}$, 
$\stsp n\hspace{-0.9pt} =\hspace{-1pt} |\sq|$ and suppose that
(1)--(3) hold for all $i\hspace{-0.5pt} <\hspace{0.2pt} n$. 
Firstly, if there is an environment step $\sq\mystrut^q_{n-1} \estep\hspace{-2.5pt} \sq\mystrut^q_{n}\stsp$ then $\sq\mystrut^q_{n}\hspace{-0.9pt} =\hspace{-0.9pt} \sq\mystrut^q_{n-1}$
holds because
$\sq\mystrut^q \hspace{-0.75pt}\in\hspace{-0.1pt} \envC{\hspace{-0.2pt}\op{id}}$.
By setting $\sq_n \hspace{-0.9pt}\defeq\hspace{-0.5pt} \sq_{n-1}$ we extend $\sq$ accordingly to  
the computation $\stsp\sq_0, \ldots, \sq_{n-1}\hspace{0.2pt} \estep\hspace{-2.2pt} \sq_{n}$
satisfying (1)--(3) for all $i \le\hspace{1pt} n$.

Secondly, with a program step $\rpstep{\rho\pr\hspace{-0.5pt}}{\stnsp\sq\mystrut^q_{n-1}\hspace{-0.1pt}}{\hspace{-2.1pt}\sq\mystrut^q_{n}}$ 
we particularly have that
$\progOf{\sq_{n-1}}$ and $\progOf{\sq^q_{n-1}}$ correspond componentwise w.r.t. $\! m, r, \rho, \rho\pr$ by the condition (1)
since $\progOf{\sq^q_{n-1}}\hspace{-0.7pt} \neq\hspace{-0.5pt} \skipp$. 
This in turn means $\progOf{\sq_{n-1}}\hspace{-1pt} =\hspace{1.7pt} \Parallel{\hspace{-2pt}u_1,\ldots, u_m}$ and
$\progOf{\sq^q_{n-1}}\stnsp = \hspace{2,5pt}\Parallel{\!v_1,\ldots, v_m}$
hold with some $u_1, \ldots, u_m$ and $v_1, \ldots, v_m\stsp$ satisfying $\stsp\pcorr{u_k\hspace{-0.5pt}}{r}{\hspace{-0.75pt}v_k}\stsp$ for all $1 \le k \le m$.

Further, if $\progOf{\sq\mystrut^q_{n}}\hspace{-0.5pt} \neq\hspace{-0.5pt} \skipp$ then
the fired position of $\rpstep{\rho\pr\hspace{-0.5pt}}{\sq^q_{n-1}\hspace{0.5pt}}{\hspace{-2.5pt}\sq\mystrut^q_{n}}$ is
of the form $kx$ where $1\le k \le m$ and
$x\hspace{0.2pt} \in \hspace{-0.7pt}\pos\hspace{1.9pt}v_k$
such that
\begin{enumerate}
\item[--] $\rpstep{\rho\pr}{(\plook{v_k}{x}, \stsp\stateOf{\sq\mystrut^q_{n-1}})\stsp}{(w,\stsp \stateOf{\sq\mystrut^q_{n}})}$ and 
\item[--] $\progOf{\sq\mystrut^q_n}= \:\Parallel{\!v_1,\ldots \psubst{v_k}{w}{x} \ldots, v_m}$
\end{enumerate}
hold with some $w$,
\ie we have $\rpstep{\rho\pr\hspace{-0.5pt}}{\hspace{-1pt}(v_k,\stsp \stateOf{\sq^q_{n-1}})\hspace{0.5pt}}{\hspace{-1pt}(\psubst{v_k}{w}{x},\stsp \stateOf{\sq\mystrut^q_{n}})}$.
Using $\pcorr{u_k\stnsp}{r}{\hspace{-1.2pt} v_k}$ we derive thus 
a matching step $\rpstep{\rho\hspace{0.5pt}}{\hspace{-0.7pt}(u_k,\stsp \stateOf{\sq_{n-1}})\stsp}{\stnsp(u\pr_k,\stsp \sigma\pr)}$
with some $u\pr_k$ and $\sigma\pr$ satisfying $\pcorr{u_k\pr\hspace{-0.2pt}}{r}{\hspace {-1.1pt}\psubst{v_k}{w}{x}}$ and
$(\sigma\pr,\stsp \stateOf{\sq\mystrut^q_n})\hspace{-0.1pt} \in\hspace{0.01pt} r$.
Let $x\pr$ denote the fired position of $\rpstep{\rho\hspace{0.2pt}}{\hspace{-0.9pt}(u_k,\stsp \stateOf{\sq_{n-1}})\stsp}{\stnsp(u\pr_k,\stsp \sigma\pr)}$,
\ie $\stdsp x\pr\hspace{-0.79pt} \in\hspace{-0.79pt} \pos\hspace{1.9pt}u_k$ and there is some $w\pr$ 
such that $\rpstep{\rho\hspace{-0.1pt}}{\hspace{-0.9pt}(\plook{u_k}{x\pr},\stsp \stateOf{\sq^p_{n-1}})\hspace{0.9pt}}{\hspace{-1pt}(w\pr, \stsp\sigma\pr)}$ and
$u\pr_k\hspace{-1.2pt} =\hspace{-0.7pt} \psubst{u_k}{w\pr}{x\pr}$ hold. 
Consequently, the program step
\[\rpstep{\rho\stsp}{(\Parallel{\!u_1,\ldots u_k \ldots,  u_m}, \stdsp\stateOf{\sq_{n-1}})\stsp}{(\Parallel{\!u_1,\ldots \psubst{u_k}{w\pr}{x\pr} \ldots, u_m},\stsp \sigma\pr)}\]
has the fired position $kx\pr$.
Thus, by setting $\sq_n\hspace{-2.5pt} \defeq\hspace{-0.7pt} (\Parallel{\!u_1,\ldots \psubst{u_k}{w\pr}{x\pr} \ldots, u_m},\stsp \sigma\pr)$
we extend $\sq$ to $\stsp\sq_0, \ldots, \sq_{n-1}\hspace{0.4pt} \pstep\hspace{-2pt} \sq_{n}\stsp$
satisfying (1)--(3) for all $i \le\hspace{1pt} n$.

Finally, in case $\progOf{\sq^q_{n}}\hspace{-1.2pt} =\stnsp \skipp$ we can infer $v_k\hspace{-1.2pt} =\stnsp \skipp$ 
and $u_k\hspace{-1.2pt} =\stnsp \skipp$ for all $k\hspace{0.35pt} \in\hspace{-0.5pt} \{1, \ldots, m\}$ and
extend $\sq\stsp$ to $\stsp\sq_0, \ldots, \sq_{n-1}\stsp \pstep\hspace{-1pt} (\skipp,\stdsp \stateOf{\sq_{n-1}})$.
\end{proof}

Next proposition complements this result
showing that the conditions (1)--(3) are sufficient to 
draw conclusions about fairness of replayed computations. 
\begin{lemma}\label{thm:fair-corr-sim2}
  Let $\sq^p \hspace{-1.2pt}\in\hspace{-0.2pt} \rpcsi{p}{\rho}$ and $\sq^q\hspace{-1pt} \in\hspace{-0.2pt} \rpcsi{q}{\rho\pr}$ be two computations
  satisfying
  the conditions \emph{(1)--(3)} of 
  the preceding proposition
  for all $n\hspace{-0.4pt} \in\hspace{-0.7pt} \naturals$.
Then $\sq^p$ is fair whenever $\sq^q$ is.
\end{lemma}
\begin{proof}
Let $y$ be an always available position of $\progOf{\sq^p_i}$ with some $i\hspace{0.79pt}\in\hspace{-0.4pt}\naturals$.
By (1), $\progOf{\sq^p_i}$ corresponds to $\progOf{\sq^q_i}$ componentwise w.r.t. $\hspace{-1.5pt}m, r, \rho, \rho\pr$. 
This means
that $\progOf{\sq^p_{i}}\stnsp =\hspace{2pt} \Parallel{\!u_1,\ldots, u_m}$ and $\progOf{\sq^q_{i}}\stnsp = \hspace{2pt}\Parallel{\!v_1,\ldots, v_m}$
hold with some $u_1, \ldots, u_m$ and $v_1, \ldots, v_m$ satisfying $\pcorr{u_k}{r}{\hspace{-1pt}v_k}$ for all $1 \le k \le m$ because $\progOf{\sq^p_i} \neq \skipp$.

If $y =\hspace{-0.5pt} 0\stsp$ then we consequently have $u_k \hspace{-1.5pt} = v_k \stnsp= \skipp$ for all $k\stnsp \in\hspace{-1.2pt} \{1, \ldots, m\}$.
Since $sq\mystrut^q$ is assumed to be fair, there must further be some $j\hspace{-1pt} \ge\hspace{-1pt} i$ such that
$\progOf{sq\mystrut^q_j}\hspace{-1pt} = \hspace{2.5pt}\Parallel{\!\skipp, \ldots, \skipp}$
and $\progOf{sq\mystrut^q_{j+1}}\hspace{-1.2pt} =\hspace{-1pt} \skipp$. By the condition (1) we can thus once more infer
$\progOf{sq\mystrut^p_j}\hspace{-1.5pt} = \hspace{2.5pt}\Parallel{\hspace{-2.5pt}\skipp, \ldots, \skipp}$ and $\progOf{sq\mystrut^p_{j+1}}\hspace{-1.2pt} =\hspace{-1pt} \skipp$,
\ie we have the program step $\stsp\rpstep{\rho\hspace{0.7pt}}{\sq\mystrut^p_j\hspace{1.5pt}}{\hspace{-2.1pt}\sq\mystrut^p_{j+1}}$ with the fired position $0$.

Next, 
suppose $y\hspace{-0.5pt} =\hspace{-0.7pt} kx$ with $k\hspace{-0.29pt} \in\hspace{-0.7pt} \{1, \ldots, m\}$ and $x\hspace{-0.29pt}  \in\hspace{-1pt} \pos\hspace{2pt}u_k$.
Hence $v_k\hspace{-1.5pt} \neq\hspace{-0.7pt} \skipp$ due to $\pcorr{u_k\hspace{-0.7pt}}{r}{\hspace{-1.7pt} v_k}$ and since
$(\rho\pr, v_k)$ is assumed to be non-blocking, there must be an always available 
$x\pr\hspace{-0.59pt} \in\hspace{-0.59pt} \pos\hspace{2.1pt}v_k$. 
Furthermore,
$kx\pr$ is an always available position of $\progOf{\sq^q_i}$ and
since $\sq^q$ is fair we get some $j \ge i$ such that the conditions 
\begin{enumerate}
\item[--] $\progOf{sq^q_j} = \:\Parallel{\hspace{-1.7pt} v\pr_1,\ldots, v\pr_m}$ 
\item[--] $x\pr\hspace{-0.4pt} \in\hspace{-0.55pt} \pos\hspace{2pt}v\pr_k$
\item[--] $\rpstep{\rho\pr\hspace{-0.2pt}}{\sq^q_j\stsp}{\hspace{-2.5pt}\sq^q_{j+1}}\stsp$ has the fired position $kx\pr$
\end{enumerate}
hold with some $v\pr_1,\ldots, v\pr_m$.
This allows us to draw the following conclusions regarding $\sq^p$. 
Firstly, by the condition (1) we have $\progOf{\sq^p_j} = \:\Parallel{\!u\pr_1,\ldots, u\pr_m}$ with some locally sequential $u\pr_1,\ldots, u\pr_m$. 
Secondly, the condition (3) yields a position $x\prr\hspace{-0.9pt} \in \hspace{-0.5pt}\pos\hspace{2.1pt}u\pr_k$
that has been fired by $\rpstep{\rho\hspace{0.4pt}}{\sq^p_j\hspace{1pt}}{\hspace{-2.5pt}\sq^p_{j+1}}$.

It remains to
show that the relevant position $kx$ in $\progOf{\sq^p_i}$ must have been fired by a step $\rpstep{\rho}{\hspace{-0.79pt}\sq^p_l\hspace{0.4pt}}{\hspace{-2.5pt}\sq^p_{l+1}}$
with $i \le l \le j$. Assuming the opposite 
we can induce $\plook{u_k}{x}\hspace{-1.2pt} = \plook{u\pr_k}{x}$ and therefore  $x \hspace{0.1pt}\in\hspace{-0.55pt} \pos\hspace{2.1pt}u\pr_k$ holds by Proposition~\ref{thm:eq-lookup-pos}.
On the other hand, from the componentwise correspondence follows that $u\pr_k$ is locally sequential and
has thus at most one program position, \ie\stdsp $x = x\prr$.
\end{proof}

Combining the results of this section, the proof of the following corollary
consequently resorts first to Proposition~\ref{thm:fair-corr-sim1} and then to Proposition~\ref{thm:fair-corr-sim2}
showing how 
termination properties can be carried along componentwise correspondences.
The statement's applicability will be substantiated
in the next chapter establishing 
termination of fair computations of $\op{mutex}$ indirectly via 
the auxiliary model $\op{mutex}^{\mathit{aux}}$.
\begin{corollary}\label{thm:fair-corr-sim3}
Let $p$ and $q$ correspond componentwise w.r.t. \!$m, r, \rho, \rho\pr$
and assume 
\begin{enumerate}
\item[\emph{(1)}] $P\pr\hspace{-0.1pt} \subseteq\hspace{0.9pt} \rimg{r\hspace{1pt}}{\hspace{0.5pt}P}$,
\item[\emph{(2)}] for any $\sq^p\hspace{-1pt} \in\hspace{-0.2pt} \rpcsf{p}{\rho}\hspace{-0.9pt} \cap\hspace{1.5pt} \inC{\hspace{0.39pt}P}\hspace{0.7pt} \cap\hspace{1.5pt} \envC{\hspace{0.59pt}\op{id}}$ there exists some 
  $i$ with
      $\progOf{\sq_i}\hspace{-0.9pt} =\hspace{-0.5pt} \skipp$.
\end{enumerate} 
Then  for any computation $\sq^q\hspace{-1.5pt} \in\hspace{-0.9pt} \rpcsf{q}{\rho\pr}\hspace{-2.5pt} \cap\hspace{-0.4pt} \inC{\hspace{-1.25pt}P\pr} \hspace{-1.7pt}\cap\hspace{0.2pt} \envC{\hspace{-1pt}\op{id}}$ there also exists some $i$ with
      $\progOf{\sq_i} = \skipp$.
\end{corollary}
\begin{proof}
  Let $\sq^q\hspace{-0.79pt} \in \hspace{-0.55pt}\rpcsf{q}{\rho\pr}\hspace{-2.1pt} \cap\hspace{-0.1pt} \inC{\hspace{-0.9pt} P\pr}\hspace{-1.4pt} \cap\hspace{0.2pt} \envC{\hspace{-0.5pt}\op{id}}$.
  Then from (1) we get a preimage $\sigma\hspace{-0.4pt} \in\hspace{-1pt} P$ of $\stateOf{\sq^q_0}$ under $r$, \ie $(\sigma,\stsp \stateOf{\sq^q_0})\hspace{-0.2pt} \in\hspace{-0.1pt} r$.
  As a result, Proposition~\ref{thm:fair-corr-sim1} provides
a corresponding computation $\stsp\sq^p \hspace{-1pt}\in\hspace{-0.4pt} \rpcsi{p}{\rho}\hspace{-0.2pt} \cap\hspace{0.55pt} \inC{\hspace{-0.55pt}P}\hspace{-0.29pt} \cap\hspace{0.9pt}\envC{\hspace{-0.2pt}\op{id}}$ which must be moreover fair by Proposition~\ref{thm:fair-corr-sim2}.
Thus, from (2) we obtain some $i$ with $\progOf{\sq^p_i} = \skipp$
and can conclude $\progOf{\sq^q_i} = \skipp\stsp$ since $\pcorr{\progOf{\sq^p_i}\hspace{0.2pt}}{r}{\hspace{-1.5pt}\progOf{\sq^q_i}}$.
\end{proof}
\section{A brief summary}
Starting with the 
question which computations of a particular program reach a state satisfying a predicate that one normally expects them to reach,
this chapter 
first focused on developing a concise notion of fair computations without full support for await-statements however
(the topic addressed in Section~\ref{Sb:fair-await}).
Based on this, a refutational approach to proving liveness properties has led to a collection of syntax-driven rules for processing programs which are jump-free and also
do not make use of the await-statements beyond atomic sections.
Similarly to the program logic in Chapter~\ref{S:prog-log} with the input/output and invariant properties,
processing statements by these rules can not only 
supply arguments why a program exhibits some liveness property, but also
hint to the reasons why it does not.
Lastly, correspondence conditions, stronger than in Proposition~\ref{thm:corr-sim},
ensuring that any fair computation of a program can be replayed by a fair computation of a corresponding program
have been elaborated.
Next chapter applies all of the techniques to establish termination of $\op{mutex}$.
\setcounter{equation}{0}
\chapter{Case Study: Proving Termination}\label{S:PM3}
Recall the model $\op{mutex}$ of the Peterson's mutual exclusion algorithm shown in Figure~\ref{fig:pm}.
The proofs 
in Chapter~\ref{S:PM1} 
implicitly utilised the property
that the threads cannot be simultaneously within their critical sections in order to derive
that a certain condition holds \emph{upon termination} of $\op{mutex}$,
\ie for all computations of $\op{mutex}$ that reach a $\skipp$-configuration.
In this sense it would have taken significantly less efforts to achieve  
exactly the same result with a slightly modified model
simply making $\op{thread}_0$ (or $\op{thread}_1$, or both) spin in perpetual waiting to enter $\mv{cs}_0$ (or $\mv{cs}_1$),
\ie with $\whileS{\op{True}}{\skipp}$ in place of $\whileS{\hspace{1pt}\sv{flag}_1 \hspace{-2pt}\wedge\hspace{-1pt} \sv{turn}}{\stnsp\skipp\stnsp}$.
All fairness considerations become basically irrelevant for (non-)termination of such a contrived model: no computation will evidently reach a $\skipp$-configuration anyway.
This is in contrast to the following, more subtle modification attempting to `optimise' the protocol by ignoring $\sv{turn}$, \ie
\[
\begin{array}{l}
 \sv{flag}_0 :=\stnsp \op{True};\\
\whileS{\stdsp\sv{flag}_1}{\skipp};\\
\mv{cs}_0; \\
\sv{flag}_0 := \stnsp\op{False}
\end{array}
\]
being the modified $\op{thread}_0$ whereas $\op{thread}_1$ is modified accordingly. Although intuitively clear, it is not straight forward anymore
(one can make $\sv{turn}$ an auxiliary variable proceeding similarly to Chapter~\ref{S:PM1})
to show that the threads still would not interfere on a shared resource. But the actual point is that a considerable part of the fair computations would indeed reach a $\skipp$-configuration:
one thread can pass to its critical section without any interruption by the other which can then freely move to its `busy waiting' phase.
As opposed to that, not only a part but \emph{all} fair computations of $\op{mutex}$ eventually reach a $\skipp$-configuration. 

To sum up, the ultimate goal of this part of the case study is a proof that any computation in
$\hspace{-0.5pt}\pcsf{\op{mutex} \hspace{1.5pt} \mv{cs}_0 \hspace{1.5pt} \mv{cs}_1} \hspace{-0.5pt}\cap\hspace{1.9pt} \envC{\hspace{0.79pt}\op{id}}$
reaches a $\skipp$-configuration provided $\mv{cs}_0$ and $\mv{cs}_1$ unconditionally terminate.  
Also for this task we will resort to certain properties of the auxiliary model $\op{mutex}^{\mathit{aux}}$, shown in Figure~\ref{fig:pm-aux},
which are not directly accessible by $\op{mutex}$,
\ie $\op{mutex}^{\mathit{aux}}$ is once more scrutinised first.

\section{Preparations}
Following the approach described in Section~\ref{S:refute},
we seek to refute the statement
\begin{equation}\label{eq:pm30}
\hspace{-20pt}\exists\sq\hspace{-0.25pt} \in\hspace{-0.5pt} \envC{\hspace{-0.7pt}\op{id}}. \hspace{2.5pt}\lcondN{\sq}{\!\neg\sv{turn\_aux}_0\hspace{-1pt} \wedge\hspace{-1.7pt} \neg\sv{turn\_aux}_1}
{\hspace{0.5pt}\op{mutex}^{\mathit{aux}}\hspace{1.7pt} \mv{cs}_0\hspace{1.7pt}\mv{cs}_1}{\hspace{-0.5pt}\bot} 
\end{equation}
assuming that any claim of non-termination of $\mv{cs}_0$ or $\mv{cs}_1$ can be refuted, \ie
\begin{equation}\label{eq:pm3-asm1}
  \centering
  \begin{split}
   \hspace{-1.1cm} & \forall\sq\hspace{2pt}P.\hspace{3.3pt}\NlcondN{\sq}{\hspace{-1.9pt}P\hspace{-0.5pt}}{\hspace{-0.2pt}\mv{cs}_0}{\hspace{-0.2pt}\bot}\\
    & \forall\sq\hspace{2pt}P.\hspace{3.1pt}\NlcondN{\sq}{\hspace{-1.9pt}P\hspace{-0.5pt}}{\hspace{-0.2pt}\mv{cs}_1}{\hspace{-0.2pt}\bot}
  \end{split}
  \end{equation}
Now, from the assumption (\ref{eq:pm30}) we derive a computation $\sqt\hspace{-1.2pt}\in\hspace{-1.2pt} \envC{\hspace{-1.2pt}\op{id}}$
having moreover the property
\begin{equation}\label{eq:pm32}
  \hspace{-5pt}\lcondN{\sqt\hspace{0.1pt}}{\!\neg\sv{turn\_aux}_0\hspace{-0.9pt} \wedge\hspace{-1.5pt} \neg\sv{turn\_aux}_1\hspace{-0.5pt}}
  {\hspace{0.9pt}\op{mutex}^{\mathit{aux}}\hspace{1pt} \mv{cs}_0\hspace{1.7pt}\mv{cs}_1}{\hspace{-0.2pt}\bot} 
\end{equation}
It is essential to
note that $\sqt$ will refer to this computation in the course of the entire refutation process,
\ie up to Proposition~\ref{thm:PM3-aux}.
Furthermore, the shorthand $\sqts_i$ will be used for the state $\stateOf{\sqt_i}$ of the $i$-th configuration on $\sqt$.

Sketching a general refutation plan,
an application of the rule for the parallel composition in Proposition~\ref{thm:parallel2-live} to (\ref{eq:pm32})
would produce the three branches:
\begin{enumerate}
\item[(a)] $\op{thread}^{\mathit{aux}}_0$ and $\op{thread}^{\mathit{aux}}_1$ do not terminate, or
\item[(b)] $\op{thread}^{\mathit{aux}}_0$ terminates but $\op{thread}^{\mathit{aux}}_1$ does not, or
\item[(c)] $\op{thread}^{\mathit{aux}}_1$ terminates but $\op{thread}^{\mathit{aux}}_0$ does not.
\end{enumerate}

In the branch (a) we could further infer that both threads got stuck in their `busy waiting' phases, \ie
the conditions $\sv{flag}_1 \hspace{-1.7pt}\wedge\hspace{-0.7pt} \sv{turn}$ and $\sv{flag}_0  \hspace{-1pt}\wedge \hspace{-1.5pt}\neg\sv{turn}$ must recur infinitely often on $\sqt$,
in particular 
meaning that the value of $\sv{turn}$ flips infinitely many times. 
On the other hand, we could also derive that starting from a certain point, the condition 
$\sv{turn\_aux}_0\hspace{-1.2pt} \wedge\hspace{-0.7pt} \sv{turn\_aux}_1$ holds perpetually on $\sqt$
and hence from this very point the value of $\sv{turn}$ must remain constant 
in contradiction to the previous conclusion.

In the branch (b) only  $\op{thread}^{\mathit{aux}}_1$ gets stuck in its `busy waiting' phase, \ie
only the condition $\sv{flag}_0 \hspace{-1pt}\wedge \hspace{-1.5pt}\neg\sv{turn}$ must recur infinitely often on $\sqt$.
On the other hand, regarding $\op{thread}^{\mathit{aux}}_0$ we could now derive that it eventually assigns $\op{False}$ to $\sv{flag}_0$ at some point on $\sqt$,
and hence $\neg\sv{flag}_0$ holds perpetually starting from this point.
Altogether we would get a state on $\sqt$ where both, $\neg\sv{flag}_0$ and $\sv{flag}_0$, hold. 

The branch (c) is clearly symmetric to (b): replacing $\neg\sv{turn}$ by $\sv{turn}$ as well as swapping $0$ and $1$ in the preceding paragraph yields
the sought contradiction. We therefore will not treat this branch explicitly in what follows.

Although rather superficial, these considerations nonetheless reveal that in order to close a branch
we appeal to certain invariants 
(\eg $\sv{flag}_0$ keeps its value once $\op{False}$ assigned) which hold on any $\sq \in \pcsi{\op{mutex}^{\mathit{aux}}}\hspace{-1pt} \cap\hspace{0.5pt} \envC{\hspace{-0.9pt}\op{id}}$. 
These invariants are accurately captured 
by the guarantees of the following extended Hoare triple
\begin{equation}\label{eq:pm33}
\hspace{-20pt}\rgvalidi{\op{id}}{\stnsp\neg\sv{turn\_aux}_0 \wedge\stnsp \neg\sv{turn\_aux}_1}{\op{mutex}^{\mathit{aux}}\hspace{0.5pt} \mv{cs}_0 \hspace{1.9pt}\mv{cs}_1 }{\top}{\hspace{-2pt}\op{G}_{\stnsp\mv{global}}}
\end{equation}
where $\op{G}_{\stnsp\mv{global}}$ accordingly stands for
\[
\begin{array}{l}
  \hspace{-25pt}(\sv{turn\_aux}_0 \stsp\imp\stdsp \sv{turn\_aux}\pr_0) \wedge   (\sv{turn\_aux}_1\stsp \imp\stdsp \sv{turn\_aux}\pr_1) \; \wedge  \\
\hspace{-25pt}(\sv{turn\_aux}_0 \wedge \sv{flag}\pr_0 \stsp\imp\stdsp \sv{flag}_0) \wedge  (\sv{turn\_aux}_1 \wedge \sv{flag}\pr_1 \stsp\imp\stdsp \sv{flag}_1) \; \wedge  \\
\hspace{-25pt}(\sv{turn\_aux}_0 \wedge \sv{turn\_aux}_1\stsp \imp\stdsp \sv{turn} = \sv{turn}\pr). 
\end{array}
\]
The triple (\ref{eq:pm33}) is derivable using the program logic rules from Chapter~\ref{S:prog-log}
under the assumptions
$\rgvalid{\top}{\hspace{-1.5pt}\top}{\mv{cs}_0 }{\top}{\hspace{-1.5pt}\op{G}_{\stnsp\mv{cs}}}$ and
$\rgvalid{\top}{\hspace{-1.5pt}\top}{\mv{cs}_1\hspace{-0.5pt}}{\top}{\hspace{-1.5pt}\op{G}_{\stnsp\mv{cs}}}$
with $\op{G}_{\stnsp\mv{cs}}$  
merely requiring 
from each program step of $cs_0$ and $cs_1$ 
to leave the auxiliary and the protocol variables unmodified:
\[
\begin{array}{l}
\hspace{-21pt}\sv{turn\_aux}_0\hspace{-0.7pt} =\sv{turn\_aux}\pr_0 \wedge \sv{turn\_aux}_1\hspace{-1pt} =\sv{turn\_aux}\pr_1 \; \wedge \\
\hspace{-21pt}\sv{flag}_0\hspace{-1.1pt} =\sv{flag}\pr_0 \wedge \sv{flag}_1\hspace{-1.5pt} =\sv{flag}\pr_1 \wedge \sv{turn} =\sv{turn}\pr.
\end{array}
\]
It is worth noting that the guarantees in $\op{G}_{\stnsp\mv{global}}$, 
immaterial from the perspective of potential interferences
between the two threads (\cf Proposition~\ref{thm:mutex-aux}), become decisive arguments in the liveness context.

Now, the primary purpose of the triple (\ref{eq:pm33}) is to conclude $\sqt\hspace{-0.2pt} \in\hspace{-0.9pt} \progC{\hspace{-0.5pt}\op{G}_{\stnsp\mv{global}}}$
using that $\sqt\hspace{-0.55pt} \in\hspace{-0.55pt} \pcsi{\op{mutex}^{\mathit{aux}}\hspace{0.5pt} \mv{cs}_0 \hspace{1.9pt}\mv{cs}_1}\hspace{-1.5pt} \cap\hspace{-0.1pt}
\inC{\hspace{-1.7pt}(\neg\sv{turn\_aux}_0\hspace{-0.9pt} \wedge\hspace{-2pt} \neg\sv{turn\_aux}_1)}\hspace{-0.3pt} \cap\hspace{0.2pt} \envC{\hspace{-0.7pt}\op{id}}$ is a straight consequence of (\ref{eq:pm32}).
This notably means that 
  \begin{enumerate}
    \item[(i)] $\sqts_i\sv{turn\_aux}_0 \stsp\imp\stdsp \sqts_{i+1}\sv{turn\_aux}_0$
    \item[(ii)] $\sqts_i\sv{turn\_aux}_1\stsp \imp\stdsp \sqts_{i+1}\sv{turn\_aux}_1$
    \item[(iii)] $\sqts_i\sv{turn\_aux}_0\hspace{-0.7pt} \wedge\hspace{-0.5pt} \sqts_{i+1}\sv{flag}_0\stsp \imp\stdsp \sqts_i\sv{flag}_0$
    \item[(iv)] $\sqts_i\sv{turn\_aux}_1\hspace{-0.7pt} \wedge\hspace{-0.5pt} \sqts_{i+1}\sv{flag}_1\stsp \imp\stdsp \sqts_i\sv{flag}_1$  
      \item[(v)] $\sqts_i\sv{turn\_aux}_0\hspace{-0.7pt} \wedge\hspace{-0.5pt} \sqts_i\sv{turn\_aux}_1 \stsp\imp\stdsp \sqts_i\sv{turn} = \sqts_{i+1}\sv{turn}$
  \end{enumerate}
  hold \emph{globally} on $\sqt$, \ie for all $i$ regardless 
  of the kind of the transition from $\sqt_i$ to $\sqt_{i+1}$:
  the program steps are covered by the above conclusion $\sqt\hspace{-0.1pt} \in\hspace{-0.79pt} \progC{\hspace{0.7pt}\op{G}_{\stnsp\mv{global}}}$ whereas the environment --
  by the property $\sqt\hspace{-0.4pt} \in\hspace{-0.4pt} \envC{\hspace{0.9pt}\op{id}}$ since (i)--(v) are especially valid when
  $\sqts_i\hspace{-0.7pt} = \sqts_{i+1}$.
  
Next two propositions utilise (i)--(v) in order to derive relevant properties for certain suffixes of $\sqt$.
\begin{lemma}\label{thm:PM3a}
  \!Let $\hspace{0.3pt}n, m\hspace{-0.1pt} \in\hspace{-0.4pt} \naturals$ with $\hspace{0.7pt}\sqts_n\sv{turn\_aux}_0$ and $\hspace{0.7pt}\sqts_m\sv{turn\_aux}_1$.
  \hspace{-2.5pt}Then there exists $d\hspace{0.7pt}\in\hspace{-0.5pt}\naturals$ such that  $\stsp\sqts_i\sv{turn} = \sqts_j\sv{turn}$
holds for all $\hspace{1pt}i\hspace{0.7pt} \ge d$ and $j\hspace{0.7pt} \ge d$.
\end{lemma}
\begin{proof}
  Let $d$ be the maximum of $n$ and $m$.
  With (i) and (ii) we can induce $\sqts_i\sv{turn\_aux}_0\hspace{0.2pt} \wedge\hspace{0.2pt} \sqts_i\sv{turn\_aux}_1$ for all $i\ge d$.
  Then $\stsp\sqts_i\sv{turn} = \sqts_{i+1}\sv{turn}$ follows for all $i\ge d$ by (v).
  Thus, $\stsp\sqts_i\sv{turn} = \sqts_{j}\sv{turn}$ holds for all $i \ge d$ and $j \ge d$.
\end{proof}
\begin{lemma}\label{thm:PM3b}
Assume $\sqts_m\sv{turn\_aux}_0$ and $\neg\sqts_n\sv{flag}_0$ with $m \le n$. Then
$\neg\sqts_i\sv{flag}_0$ holds for all $\hspace{1pt} i\hspace{0.7pt} \ge n$.
\end{lemma}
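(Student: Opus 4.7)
The plan is to establish the conjunction $\sigma_i\sv{turn\_aux}_0 \wedge \neg\sigma_i\sv{flag}_0$ simultaneously by induction on $i \ge n$, using the global properties (a) and (d) of $\sqt$. The two conjuncts must be carried together because (d) in its contrapositive form needs $\sigma_i\sv{turn\_aux}_0$ in order to propagate $\neg\sv{flag}_0$ from step $i$ to step $i+1$.

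For the base case $i = n$, the condition $\neg\sigma_n\sv{flag}_0$ is assumed directly. To obtain $\sigma_n\sv{turn\_aux}_0$, I would do a short auxiliary induction on $n - m$ using property (a): starting from $\sigma_m\sv{turn\_aux}_0$ and iterating (a) exactly $n - m$ times yields $\sigma_n\sv{turn\_aux}_0$.

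For the inductive step, assume $\sigma_i\sv{turn\_aux}_0 \wedge \neg\sigma_i\sv{flag}_0$ for some $i \ge n$. Property (a) immediately gives $\sigma_{i+1}\sv{turn\_aux}_0$. For $\neg\sigma_{i+1}\sv{flag}_0$ I would argue by contradiction: assuming $\sigma_{i+1}\sv{flag}_0$ together with the induction hypothesis $\sigma_i\sv{turn\_aux}_0$, property (d) would force $\sigma_i\sv{flag}_0$, contradicting the other half of the induction hypothesis.

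There is no real obstacle here; the statement is essentially a direct bookkeeping consequence of the global invariants (a) and (d) established from $\sqt \in \progC{\stsp G_{\mv{global}}}$. The only subtle point is recognising that one must carry the $\sv{turn\_aux}_0$-conjunct through the induction — dropping it would break the contrapositive use of (d) in the step case.
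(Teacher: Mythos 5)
Your proof is correct and matches the paper's argument in substance: both use (a) to propagate $\sv{turn\_aux}_0$ forward from $m$ and the contrapositive of (d) to propagate $\neg\sv{flag}_0$ forward from $n$. The only cosmetic difference is that the paper runs these as two successive inductions (first $\sigma_i\sv{turn\_aux}_0$ for all $i\ge m$, then $\neg\sigma_i\sv{flag}_0$ for all $i\ge n$) whereas you carry both conjuncts in a single simultaneous induction.
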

\begin{proof}
  With (i) we first induce $\sqts_i\sv{turn\_aux}_0$ for all $i\ge m$.
Then with $\neg\sqts_n\sv{flag}_0$ 
and (iii) we conclude $\neg\sqts_i\sv{flag}_0$ for all $\stsp i\hspace{0.7pt}\ge n$.
\end{proof}

\section{The actual refutation}\label{Sb:act-refute}
Following the general plan we start with processing (\ref{eq:pm32}) by successive application of the syntax-driven rules from
Section~\ref{S:refute}.
Resolving the parallel composition operator in the first place, an application of Proposition~\ref{thm:parallel2-live} to (\ref{eq:pm32}) yields two computations $\sq^0$ and $\sq^1$
with $\steqv{\sq^0\hspace{-2pt} }{\hspace{-0.7pt} \sqt}$
and $\steqv{\sq^1\hspace{-2pt} }{\hspace{-0.7pt} \sqt}$
such that either
\begin{eqnarray}
  \begin{aligned}
  \label{eq:pm34}
 & \lcondN{\sq^0\hspace{-0.2pt}}{\hspace{-1.2pt}\neg\sv{turn\_aux}_0\hspace{-0.5pt} \wedge\hspace{-1.2pt} \neg\sv{turn\_aux}_1}
{\stsp\op{thread}^{\mathit{aux}}_0\hspace{1pt} \mv{cs}_0}{\bot} \\ 
\label{eq:pm35}
 & \lcondN{\sq^1\hspace{-0.2pt}}{\hspace{-1.2pt}\neg\sv{turn\_aux}_0\hspace{-0.5pt} \wedge\hspace{-1.2pt}  \neg\sv{turn\_aux}_1}
       {\stsp\op{thread}^{\mathit{aux}}_1\hspace{1pt} \mv{cs}_1}{\bot}
       \end{aligned}
\end{eqnarray}
or
\begin{eqnarray}
        \begin{aligned}
          \label{eq:pm36}
          & \lcondT{\sq^0\hspace{-0.2pt}}{\hspace{-1.2pt}\neg\sv{turn\_aux}_0\hspace{-0.5pt} \wedge\hspace{-1.2pt} \neg\sv{turn\_aux}_1}
       {\stsp\op{thread}^{\mathit{aux}}_0\hspace{1pt} \mv{cs}_0}{\bot\hspace{0.2pt}}{\hspace{1pt}n} \\
\label{eq:pm37} 
 & \lcondN{\sq^1\hspace{-0.2pt}}{\hspace{-1.2pt}\neg\sv{turn\_aux}_0\hspace{-0.5pt} \wedge\hspace{-1.2pt}  \neg\sv{turn\_aux}_1\stnsp}
       {\stsp\op{thread}^{\mathit{aux}}_1\hspace{1pt} \mv{cs}_1}{\bot}
      \end{aligned}
\end{eqnarray}
hold with some $n\hspace{-0.05pt}\in\hspace{-0.35pt}\naturals$.
Note once more that the third disjunct is omitted in this presentation because it is the same as (\ref{eq:pm36}) with the indices $0$ and $1$ swapped.

Further processing (\ref{eq:pm34})  
using the rules from Section~\ref{S:refute} leads to several trivially refutable branches such as
\begin{eqnarray*}
  \begin{aligned}
& \lcondN{\sq\pr\stsp}{\stnsp\neg\sv{turn\_aux}_0 \wedge \stnsp\neg\sv{turn\_aux}_1}{\stdsp\sv{flag}_0 \hspace{-0.2pt}:=\! \op{True}\stdsp}{\bot}\: \mbox{, or} \\
    & \lcondN{\sq\prr\stnsp}{\stnsp\top\hspace{-0.7pt}}{\stdsp\langle \sv{turn} :=\! \op{True};\stdsp\sv{turn\_aux}_0 \hspace{-0.4pt}:= \!\op{True}\rangle\stdsp}{\bot}
\end{aligned}
\end{eqnarray*}
for some $\sq\pr$ and $\sq\prr$.
Moreover, note that all branches claiming termination of the `busy waiting' phase of $\op{thread}^{\mathit{aux}}_0$ or $\op{thread}^{\mathit{aux}}_1$
are simply refutable as well because all other parts of the program are unconditionally terminating due to (\ref{eq:pm3-asm1})
which is a contradiction to (\ref{eq:pm34}).
As a result, it remains only one non-trivial branch exhibiting the following properties:
\begin{enumerate}
\item[(a)] there are some $m_0, m_1\hspace{-0.5pt} \in\hspace{-0.2pt} \naturals$ with $\sqts_{m_0}\sv{turn\_aux}_0$ and $\sqts_{m_1}\sv{turn\_aux}_1$
\item[(b)] there are two strictly ascending sequences of natural numbers $\phi, \psi$ and also some $d_0\hspace{-1pt}>\hspace{-1pt}m_0$ and $d_1\hspace{-1pt}>\hspace{-1pt}m_1$ such that both
  \begin{enumerate}
    \item[(b$_0$)] $\sqts_{\phi_{i} + d_0}\sv{flag}_1\hspace{-0.7pt} \wedge\hspace{-0.2pt} \sqts_{\phi_{i} + d_0}\sv{turn}$
    \item[(b$_1$)] $\sqts_{\psi_{i} + d_1}\sv{flag}_0\hspace{-0.7pt} \wedge\hspace{-1.5pt} \neg\sqts_{\psi_{i} + d_1}\sv{turn}$
  \end{enumerate}
hold for all $i\hspace{0.5pt}\in \naturals$.
\end{enumerate}
Then by Proposition~\ref{thm:PM3a} and (a) we obtain some $d$ such that the equality $\sqts_i\sv{turn} = \sqts_j\sv{turn}$ holds for all $i, j\ge d$,
\ie we have $\sqts_{\phi_{d} + d_0}\sv{turn} = \sqts_{\psi_{d} + d_1}\sv{turn}$ 
in particular due to $\phi_{d}\hspace{-1pt} \ge\hspace{-1pt} d$ and $\psi_{d}\hspace{-1pt} \ge\hspace{-1pt} d$.
In contradiction to that,
$\sqts_{\phi_{d} + d_0}\sv{turn}$ and $\neg\sqts_{\psi_{d} + d_1}\sv{turn}$ follow respectively from (b$_0$) and (b$_1$).

Next we backtrack to the pending branch (\ref{eq:pm37}) that still needs to be processed. 
Note that successive application of the rules from Section~\ref{S:refute} to
\[
\hspace{-15pt}\lcondT{\sq^0}{\hspace{-2pt}\neg\sv{turn\_aux}_0 \stnsp\wedge \stnsp\neg\sv{turn\_aux}_1}
       {\stsp\op{thread}^{\mathit{aux}}_0\hspace{0.7pt} \mv{cs}_0\stsp}{\bot\hspace{-0.2pt}}{\hspace{0.9pt}n}
       \]
       goes ahead without any further branching, thereby 
       producing a series of conclusions about the shape of 
       $\sq^0$ and hence also of $\sqt$.         
In particular, we obtain $\neg\sqts_{n}\sv{flag}_0$, \ie $\neg\sv{flag}_0$ holds on the $n$-th state of $\sqt$ 
since the assignment $\sv{flag}_0\stnsp := \stnsp\op{False}$  
is carried out by the last program step on $\sq^0$.
Moreover, we get some $m\hspace{-0.5pt}<n$ with $\sigma_{m}\sv{turn\_aux}_0$,
and hence Proposition~\ref{thm:PM3b} provides a more advanced conclusion
that $\neg\sigma_i\sv{flag}_0$ holds for all $i\ge n$. 
       Regarding the opposite thread, 
       further processing of the assumption
\[
\hspace{-15pt}\lcondN{\sq^1\hspace{-0.5pt}}{\!\neg\sv{turn\_aux}_0 \hspace{-1pt}\wedge\hspace{-1.9pt} \neg\sv{turn\_aux}_1}{\stsp\op{thread}^{\mathit{aux}}_1 \mv{cs}_1}{\bot}
\]
as with (\ref{eq:pm34}) results in only one non-trivially refutable branch
where we have a strictly ascending sequence $\phi$ and some $d$ 
such that $\sigma_{\phi_{i} + d}\sv{flag}_0 \wedge\hspace{-1pt} \neg\sigma_{\phi_{i} + d}\sv{turn}$ holds for all $i\hspace{0.15pt} \in \hspace{-0.79pt}\naturals$
and hence for $n$ in particular, \ie $\sigma_{\phi_{n} + d}\sv{flag}_0$. On the other hand we also have $\neg\sigma_{\phi_{n} + d}\sv{flag}_0$ since $\phi_{n}\hspace{-1pt} \ge n$.

This concludes the refutation process of (\ref{eq:pm30}) so that
the following proposition can summarise the result: 
\begin{lemma}\label{thm:PM3-aux}
Assume
\begin{enumerate}
\item[\emph{(1)}] $\rgvalid{\top}{\hspace{-1.9pt}\top}{\mv{cs}_0 }{\top}{\!\op{G}_{\mv{cs}}}$,
\item[\emph{(2)}] $\rgvalid{\top}{\hspace{-1.9pt}\top}{\mv{cs}_1 }{\top}{\!\op{G}_{\mv{cs}}}$,
\item[\emph{(3)}] $\lcondN{\sq}{\!P\hspace{-0.7pt}}{\mv{cs}_0\hspace{-0.5pt}}{\hspace{-0.7pt}\bot}$ yields a contradiction for all $\sq$\hspace{-0.7pt} and $P$,
\item[\emph{(4)}] $\lcondN{\sq}{\!P\hspace{-0.9pt}}{\mv{cs}_1\hspace{-0.5pt}}{\hspace{-0.7pt}\bot}$ yields a contradiction for all $\sq$\hspace{-0.7pt} and $P$.
\end{enumerate}
Then any $\sq\hspace{-0.2pt} \in\hspace{-0.2pt} \pcsf{\op{mutex}^{\mathit{aux}}\hspace{1pt} \mv{cs}_0 \hspace{1.7pt}\mv{cs}_1}\hspace{-1.5pt} \cap\hspace{0.2pt} \inC{\hspace{-2.5pt}(\neg\sv{turn\_aux}_0 \hspace{-0.5pt}\wedge\hspace{-1.5pt} \neg\sv{turn\_aux}_1)} \cap\hspace{0.2pt} \envC{\hspace{-0.5pt}\op{id}}$
has some $n\hspace{0.55pt} \in\hspace{-0.5pt} \naturals$ such that $\progOf{\sq_n}\hspace{-0.5pt} = \skipp$.
\end{lemma}
\section{Termination of $\op{mutex}$}
To derive the corresponding result for $\op{mutex}$ we
proceed similarly to Section~\ref{S:PM1} using 
Corollary~\ref{thm:fair-corr-sim3} instead of Proposition~\ref{thm:corr-sim}.
The program correspondence,
\[
\hspace{-10pt}\pcorrC{\hspace{-1pt}\op{mutex}^{\mathit{aux}} \hspace{0.2pt}\mv{cs}_0\hspace{2pt}\mv{cs}_1\hspace{-0.7pt}}{\op{r}_{\op{eqv}}}{\!\op{mutex} \hspace{2pt} \mv{cs}_0\hspace{2pt}\mv{cs}_1}
\]
established in Section~\ref{S:PM1},
is however \emph{per} \emph{se} not sufficient because Corollary~\ref{thm:fair-corr-sim3} presumes 
$\pcorrC{\op{thread}^{\mathit{aux}}_0 \hspace{0.2pt} \mv{cs}_0}{\op{r}_{\op{eqv}}}{\hspace{-2.1pt}\op{thread}_0 \hspace{2.15pt} \mv{cs}_0}\stsp$ and 
$\stsp\pcorrC{\op{thread}^{\mathit{aux}}_1 \hspace{0.2pt} \mv{cs}_1}{\op{r}_{\op{eqv}}}{\hspace{-2.1pt}\op{thread}_1 \hspace{1.5pt} \mv{cs}_1}$
On the other hand, these correspondences have been implicitly derived from the assumptions
$\pcorrC{\stnsp\mv{cs}_0\hspace{-0.2pt}}{\op{r}_{\op{eqv}}}{\hspace{-2.5pt}\mv{cs}_0}$ and $\pcorrC{\stnsp\mv{cs}_1\hspace{-0.7pt}}{\op{r}_{\op{eqv}}}{\hspace{-2.5pt}\mv{cs}_1}$
in the proof of Proposition~\ref{thm:mutex} since it used the syntax-driven rules from Section~\ref{Sb:corr-rules} only.
Furthermore, note that $\op{thread}^{\mathit{aux}}_0$, $\op{thread}^{\mathit{aux}}_1$, $\op{thread}_0$ and $\op{thread}_1$
are by definition sequential and non-blocking, provided $\mv{cs}_0$ and $\mv{cs}_1$ are sequential and non-blocking.

This essentially means that $\op{mutex}^{\mathit{aux}} \hspace{1pt}\mv{cs}_0\hspace{2pt}\mv{cs}_1$ and
$\op{mutex} \hspace{2pt} \mv{cs}_0\hspace{2pt}\mv{cs}_1$ correspond componentwise w.r.t. \hspace{-1pt}$\op{r}_{\op{eqv}}$ by Definition~\ref{def:componentwise}
whenever $\mv{cs}_0$ and $\mv{cs}_1$ are sequential and non-blocking as well as
$\pcorrC{\stnsp\mv{cs}_0\hspace{-0.2pt}}{\op{r}_{\op{eqv}}}{\hspace{-2.5pt}\mv{cs}_0}$ and
$\pcorrC{\stnsp\mv{cs}_1\hspace{-0.5pt}}{\op{r}_{\op{eqv}}}{\hspace{-2.5pt}\mv{cs}_1}$ hold.

\begin{lemma}\label{thm:PM3}
Assume
\begin{enumerate}
\item[\emph{(1)}] $\rgvalid{\top}{\hspace{-1.7pt}\top}{\hspace{-0.5pt}\mv{cs}_0 \hspace{-0.5pt}}{\top}{\hspace{-1.7pt}\op{G}_{\mv{cs}}}$,
\item[\emph{(2)}] $\rgvalid{\top}{\hspace{-1.7pt}\top}{\hspace{-0.5pt}\mv{cs}_1 \hspace{-0.5pt}}{\top}{\hspace{-1.7pt}\op{G}_{\mv{cs}}}$,
\item[\emph{(3)}] $\lcondN{\sq}{\!P\hspace{-0.59pt}}{\hspace{0.1pt}\mv{cs}_0\hspace{-0.5pt}}{\hspace{-0.5pt}\bot}$ yields a contradiction for any $\sq$\hspace{-0.7pt} and $P$,
\item[\emph{(4)}] $\lcondN{\sq}{\!P\hspace{-0.59pt}}{\hspace{0.1pt}\mv{cs}_1\hspace{-0.9pt}}{\hspace{-0.5pt}\bot}$ yields a contradiction for any $\sq$\hspace{-0.7pt} and $P$,
\item[\emph{(5)}] $\pcorrC{\stnsp\mv{cs}_0\hspace{-0.2pt}}{\op{r}_{\op{eqv}}}{\hspace{-2.25pt}\mv{cs}_0}$,
\item[\emph{(6)}] $\pcorrC{\stnsp\mv{cs}_1\hspace{-0.5pt}}{\op{r}_{\op{eqv}}}{\hspace{-2.25pt}\mv{cs}_1}$,
\item[\emph{(7)}] $\mv{cs}_0$ and $\mv{cs}_1$ are sequential and non-blocking.
\end{enumerate}
Then any $\sq\hspace{-0.5pt} \in\hspace{-0.5pt} \pcsf{\op{thread}_0\hspace{1.7pt} \mv{cs}_0 \hspace{-1.7pt}\parallel\stnsp \op{thread}_1\hspace{1.5pt}\mv{cs}_1}\hspace{-2.2pt} \cap\envC{\hspace{-1.2pt}\op{id}}$
eventually reaches a $\skipp$-configuration.
\end{lemma}
\begin{proof}
Using Proposition~\ref{thm:PM3-aux} and the assumptions (1)--(4) we infer that any
$\sq\hspace{-0.2pt} \in\hspace{-0.2pt} \pcsf{\op{mutex}^{\mathit{aux}}\hspace{0.5pt} \mv{cs}_0 \hspace{2pt}\mv{cs}_1}\hspace{-1.9pt}  \cap\hspace{0.15pt} \inC{\!(\neg\sv{turn\_aux}_0\hspace{-0.9pt} \wedge\hspace{-2pt} \neg\sv{turn\_aux}_1)} \cap\hspace{-0.05pt} \envC{\hspace{-0.7pt}\op{id}}$
reaches  a $\skipp$-configuration. Moreover, $\op{mutex}^{\mathit{aux}}\hspace{0.5pt} \mv{cs}_0 \hspace{2pt}\mv{cs}_1$ and
$\op{mutex}\hspace{2pt} \mv{cs}_0 \hspace{2pt}\mv{cs}_1$ correspond componentwise w.r.t. $\hspace{-2.7pt}\op{r}_{\op{eqv}}$
due to the assumptions (5)--(7).
Hence, in order to apply Corollary~\ref{thm:fair-corr-sim3} only   
$\hspace{-0.5pt}\top\hspace{-1pt} \subseteq \hspace{0.4pt}\rimg{\op{r}_{\op{eqv}}\hspace{-0.1pt}}{\hspace{1.5pt}(\neg\sv{turn\_aux}_0\hspace{-0.5pt} \wedge \hspace{-1.5pt}\neg\sv{turn\_aux}_1)}$
remains to be shown, \ie\stsp for any $\sigma$ there is some $\sigma\pr$ with 
$\neg\sigma\pr\sv{turn\_aux}_0$, $\neg\sigma\pr\sv{turn\_aux}_1$ and $(\sigma\pr, \sigma)\hspace{0.5pt}\hspace{-0.5pt} \in\hspace{-0.5pt} \op{r}_{\op{eqv}}$.
Set $\sigma\pr \defeq \sigma_{[\sv{turn\_aux}_0\stsp :=\stsp \op{False},\hspace{2pt} \sv{turn\_aux}_1\stsp :=\stsp \op{False}]}$ to this end.
\end{proof}

Regarding lastly the instantiation of $\mv{cs}_0$ and $\mv{cs}_1$ by $\op{update}_0$ and $\op{update}_1$,
defined in Section~\ref{sub:mutex-inst}, 
any $\sq\hspace{-1.5pt} \in\hspace{-1.5pt} \pcsf{\op{mutex} \hspace{2.1pt}\op{update}_0 \hspace{2.5pt} \op{update}_1}\hspace{-2.5pt} \cap\hspace{-0.2pt} \envC{\hspace{-1.4pt}\op{id}}$
reaches a $\skipp$-configuration by Proposition~\ref{thm:PM3}.

The side condition `if such exists' in the final result of Section~\ref{sub:mutex2-inst} is now discarded as follows:
$\hspace{-0.5pt}\sigma\sv{shared}\hspace{-0.5pt} \cup\hspace{-0.5pt} \{0, 1\}\hspace{-1.1pt} = \hspace{-0.9pt}\sigma\pr\sv{shared}$ holds
for any computation $\sq\hspace{-0.2pt} \in\hspace{-0.5pt} \pcsf{\op{mutex} \hspace{2.5pt}\op{update}_0 \hspace{2.1pt} \op{update}_1}\hspace{-1.5pt} \cap\hspace{0.7pt} \envC{\hspace{-0.29pt}\op{id}}$
where $\sigma \hspace{-1.2pt}=\hspace{-1pt} \stateOf{\sq_0}$ and $\sigma\pr$ is the state of the first $\skipp$-configuration on $\sq$.

\setcounter{equation}{0}
\chapter{Conclusion}\label{S:concl}
This report gave a detailed presentation of a framework,
built as a conservative extension to the simply typed higher-order logic
and geared towards modelling, verification
and transformation of concurrent imperative programs.
The essential points were:
\begin{itemize}
\item[-] a concise computational model encompassing fine-grained interleaving, state abstraction and jump instructions;
\item[-] stepwise program correspondence relations
  allowing us in particular to reason upon properties of programs that use jumps in place of conditional and while-statements via a semantic equivalence to their structured counterparts; 
\item[-] a Hoare-style rely/guarantee program logic seamlessly tied in with the state abstraction
and featuring the program correspondence rule as a generalised rule of consequence;
\item[-] a light-weight extension of the logic enabling 
         state relations in place of pre/postconditions;
       \item[-] a refutational approach to proving liveness properties 
         driven by a concise 
         notion of fair computations.
\end{itemize}
The remainder of the chapter discusses further enhancements to the framework.
\section{Fair computations and await-statements}\label{Sb:fair-await}
As pointed out in Section~\ref{Sb:faircomp},
a plain generalisation of Definition~\ref{def:fair} from \emph{always available} to \emph{any} program position
would essentially require from in this sense fair computations of $\await{\hspace{1pt}C\hspace{-1.2pt}}{p}$ to visit a state satisfying $C$ at least once,
confining thus the scope of liveness reasoning to rather a handpicked (and sometimes even empty) subset of the potential computations of
programs that make use of await-statements beyond atomic sections.
Whereas the concept of an always available position can conveniently be detached from any notion of computation,
availability of a position referring to some 
$\wait{\hspace{0.2pt}C}$ cannot:
depending on the computational context such program positions may be regarded either as available or not. 
More precisely, such a position is certainly unavailable on a computation that does not visit a state in $C$ at all
but it might become available on a computation which does,
leading to the question: how often shall a computation pass through a state in $C$
so that one can sensibly require from it to reduce the subterm   
$\wait{\hspace{0.4pt} C}$ to $\skipp$ once? A finite number might be appropriate to tackle some very specific models,
but the general answer shall be: \emph{infinitely often}.
Always available positions fit in this scheme as a corner case: any infinite computation visits a state satisfying $\top$ infinitely often.

As a result of these considerations, part (1) of Proposition~\ref{thm:await-live} could be generalised 
as follows:  
\begin{enumerate}
\item[--] if $\lcondN{\sq\stnsp}{\hspace{-2pt}P\hspace{-0.5pt}}{\hspace{0.5pt}\await{\hspace{0.7pt}C}{p}\hspace{0.5pt}}{\hspace{0.5pt}Q}\stsp$ 
  then \emph{either}
  \begin{enumerate}
    \item[(a)] there exists a state $\sigma$ such that
      for all $\hspace{1pt}\sigma\pr\hspace{-0.5pt}$ the configuration $(\skipp, \sigma\pr)$ is not reachable from $(p, \sigma)$ by any sequence of program steps, \emph{\ie} $\hspace{-2pt}(p, \sigma) \not\psteps\stnsp (\skipp, \sigma\pr)$, \emph{or}
    \item[(b)] \emph{the condition $C\hspace{-1pt}$ recurs only finitely many times on} $\sq$.
      \end{enumerate}
\end{enumerate}
Once more, Proposition~\ref{thm:await-live} would be an instance of the above rule because $\top$ does trivially recur infinitely often on any infinite computation.
Processing thus programs that resort to await-statements for synchronisation purposes we would have to close the additional branch (b). 
The according variation of $\op{mutex}^{\mathit{aux}}$ 
\[
\begin{array}{l || l}
  \hspace{-21pt}  \sv{flag}_0\hspace{-0.7pt} :=\stnsp \op{True}; & \hspace{10pt}\sv{flag}_1\hspace{-1pt} :=\stnsp\op{True}; \\
\hspace{-25.5pt}   \langle \sv{turn} :=\stnsp \op{True}; & \hspace{5pt}\langle \sv{turn} :=\hspace{-0.2pt}\op{False}; \\
\hspace{-20pt}               \sv{turn\_aux}_0 :=\hspace{-1.5pt} \op{True}\rangle; & \hspace{10pt}\sv{turn\_aux}_1 :=\hspace{-1.5pt}\op{True}\rangle; \\
 \hspace{-21pt}  \wait{\neg\sv{flag}_1\hspace{-2.1pt} \vee\hspace{-1pt} \neg\sv{turn}};\hspace{5pt} & \hspace{10pt}\wait{\neg\sv{flag}_0\hspace{-1.7pt} \vee\hspace{-0.45pt}\sv{turn}};\\
 \hspace{-21pt}   \mv{cs}_0; & \hspace{10pt}\mv{cs}_1;\\
\hspace{-21pt}\sv{flag}_0\hspace{-0.5pt} := \hspace{-0.5pt}\op{False} & \hspace{10pt}\sv{flag}_1\hspace{-0.5pt} := \stnsp\op{False}
\end{array}
\]
can be used to illustrate this.
As with $\op{mutex}^{\mathit{aux}}$ in Chapter~\ref{S:PM3} we process this model by successive application of the syntax-driven rules from Section~\ref{S:refute}.
In contrast to Chapter~\ref{S:PM3}, with this variation we clearly do not need to apply the rule for while-statements, 
but encounter instead a branch carrying the assumption of the form
$\lcondN{\sq}{\hspace{-1.9pt}P\hspace{-0.7pt}}{\hspace{0.9pt}\wait{\hspace{0.5pt}\neg\sv{flag}_1\hspace{-2.1pt} \vee\hspace{-1.5pt} \neg\sv{turn}\hspace{0.5pt}}\hspace{0.2pt}}{\hspace{1.2pt}Q}$
where $\steqv{\sq\hspace{-0.5pt}}{\hspace{-0.5pt}\sqt}$.

Appealing to the above presented, generalised version of Proposition~\ref{thm:await-live} we could infer that
$\neg\sv{flag}_1\hspace{-2.25pt} \vee\hspace{-1.7pt} \neg\sv{turn}$ recurs only finitely many times on $\sq$ and hence also on $\sqt$ due to $\steqv{\sq}{\sqt}$.
Furthermore, non-termination assumption for the second thread would let us symmetrically derive that
$\neg\sv{flag}_0\hspace{-1.25pt} \vee\hspace{-0.15pt} \sv{turn}$ recurs only finitely many times on $\sqt$.
We could thus reason that there must be
\begin{enumerate}
\item[--] a suffix $\suffix{d_0}{\sqt}$ on which $\sv{flag}_1\hspace{-1.5pt} \wedge\hspace{-0.55pt} \sv{turn}$ holds globally, and
\item[--] a suffix $\suffix{d_1}{\sqt}$ on which $\sv{flag}_0\hspace{-1.2pt} \wedge\hspace{-1.9pt} \neg\sv{turn}$ holds globally.
\end{enumerate}
Since `globally' in particular means here \emph{ad infinitum},  
quite similarly to Chapter~\ref{S:PM3} we would get a contradiction in form of a state $\sigma$ on $\sqt$ such that $\sigma\sv{turn}$ and $\neg\sigma\sv{turn}$ hold
regardless whether we have $d_0\hspace{-1pt} \le d_1$ or $d_0\hspace{-1pt} \ge d_1$. 
\section{Enhancing the refutational approach}\label{Sb:enh-refute}
Each rule in Section~\ref{S:refute} essentially just carried around the parameter $Q$, set to capture an invariant condition on $\sq$
when drawing conclusions from assumptions of the form $\lcondN{\sq\hspace{-1pt}}{\hspace{-2.7pt}P\hspace{-1.1pt}}{p}{Q}$ or $\lcondT{\sq\hspace{-1pt}}{\hspace{-2.5pt}P\hspace{-0.9pt}}{p}{Q}{n}$.
The sole motivation behind that was to achieve a better perception of the underlying approach.
This parameter can actually be taken out
and the invariant asserted only once for the initial computation:
the refutational approach using the syntax-driven rules in Section~\ref{S:refute}
works the same
regardless if we start with the assumption
$\exists\sq\hspace{-0.7pt} \in\hspace{-0.7pt} \envC{\hspace{-0.39pt}\op{id}}. \hspace{2.5pt}\lcondN{\sq\hspace{-0.5pt}}{\hspace{-2pt}P\hspace{-0.9pt}}{\hspace{0.2pt}p\hspace{0.2pt}}{\hspace{0.2pt}Q}$
\stdsp or \stdsp
$\exists\sq\hspace{-0.79pt} \in\hspace{-0.7pt} \envC{\hspace{-0.39pt}\op{id}}. \hspace{2.5pt}\lcondN{\sq\hspace{-0.5pt}}{\hspace{-2pt}P\hspace{-0.9pt}}{\hspace{0.2pt}p\hspace{0.2pt}}{\hspace{-0.9pt}\bot} \hspace{-0.2pt}\wedge\hspace{-0.2pt}
\forall i. \hspace{2.9pt}\stateOf{\sq_i}\hspace{-0.5pt} \notin\hspace{-0.9pt} Q$.
In the latter case we however avoid propagation of the invariant
$\forall i. \hspace{2.9pt}\stateOf{\sq\pr_i}\hspace{-0.5pt} \notin\hspace{-0.9pt} Q$ 
explicitly down to each computation $\sq\pr$ that arises by successive application of the rules.

It is beneficial to keep track of environment conditions instead. To this end,
Definition~\ref{def:fcomp-fromN} and Definition~\ref{def:fcomp-fromT} shall respectively be updated to: 
\begin{definition}\label{def:fcomp-fromN-mod}
  $\hspace{-2pt}\lcondN{\sq\hspace{-0.1pt}}{\hspace{-1.7pt} R\hspace{0.9pt}}{\hspace{0.2pt}P\hspace{-0.2pt}}{\hspace{0.79pt}p}\stsp$ holds iff
  $\sq\hspace{-0.4pt} \in\hspace{-0.5pt} \pcsf{p}\hspace{-2pt}  \cap\hspace{0.5pt} \envC{\hspace{-0.9pt}R} \hspace{0.79pt}\cap\hspace{0.59pt} \inC{\hspace{-0.9pt}P}
  \hspace{-0.2pt}\cap\hspace{0.55pt} \outC{\hspace{-0.5pt}\bot}$
  (note that $\sq\hspace{-0.5pt} \in\hspace{-0.4pt} \outC{\hspace{-0.5pt}\bot}$ is by Definition~\ref{def:inC-outC} equivalent to $\progOf{\sq_i}\hspace{-1pt} \neq\hspace{-0.5pt} \skipp$ for all $i\hspace{0.7pt} \in\hspace{-0.2pt} \naturals$).
  \end{definition}
\begin{definition}\label{def:fcomp-fromT-mod}
  Let additionally $n\hspace{0.25pt} \in\hspace{-0.25pt} \naturals$.
  Then $\lcondT{\sq\hspace{-0.2pt}}{\hspace{-1.9pt} R\hspace{0.51pt}}{\hspace{-0.29pt}P\hspace{-0.5pt}}{\hspace{0.29pt}p\hspace{0.4pt}}{\hspace{0.9pt}n}\stsp$ holds iff
  $\sq\hspace{0.1pt}  \in\hspace{-0.2pt} \pcsi{p} \hspace{-1.7pt} \cap\hspace{0.25pt}\envC{\hspace{-1.27pt}R}\hspace{0.31pt} \cap\hspace{0.21pt} \inC{\hspace{-1.2pt}P}$
  and $n$ is the least number such that $\progOf{\sq_n}\hspace{-0.5pt} = \skipp$.
\end{definition}
Adjusting the syntax-driven rules in Section~\ref{S:refute} to this setting, 
only the rule for the parallel composition cannot simply pass the environment condition parameter to the emerging computations.
More precisely, the proof of Proposition~\ref{thm:parallel2-live} derives from $\sq\hspace{-1pt} \in\hspace{-1.2pt} \pcsf{p\! \parallel\! q}$
two computations $\sq^p\hspace{-1.2pt} \in\hspace{-1pt} \pcsf{p}$ and $\sq^q\hspace{-0.5pt} \in \pcsf{q}$ such that program steps of $\sq^p$
become supplementary environment steps on $\sq^q$ and \emph{vice versa}. For that reason we cannot draw any particular conclusions
regarding environment conditions on $\sq^p$ and $\sq^q$ from the assumption $\sq\hspace{-0.5pt} \in\hspace{-0.5pt} \envC{\hspace{-1.2pt}R}$ alone,
but assuming additionally the extended Hoare triples
\begin{eqnarray}\label{concl:guars1}
  \begin{aligned}
 &\hspace{-7pt} \rgvalid{\top}{\!\top}{\stnsp p\hspace{-0.5pt}}{\top}{\!G_{\hspace{-0.5pt}p}} \hspace{-4pt} & \mbox{with some } G_{\hspace{-0.5pt}p} \hspace{-0.7pt}\supseteq R\\
    &\hspace{-7pt}\rgvalid{\top}{\!\top}{\stnsp q\hspace{-0.5pt}}{\top}{\!G_q} & \mbox{with some } G_q\hspace{-0.7pt} \supseteq R
    \end{aligned}
\end{eqnarray}
would consequently provide
$\sq^p\hspace{-1.2pt} \in\hspace{-0.1pt} \envC{\hspace{1.5pt}G_q}$ and $\sq^q\hspace{-0.7pt} \in\hspace{-0.2pt} \envC{\hspace{1.79pt}G_p}$.
That is,
an adjusted rule for the parallel composition operator
shall derive
from the assumptions $\lcondN{\sq\hspace{-0.5pt}}{\hspace{-2.1pt}R\hspace{0.27pt}}{\hspace{-0.45pt}P\hspace{-0.9pt}}{\hspace{0.2pt}p\hspace{-1.5pt} \parallel\hspace{-1.2pt} q}$ and (\ref{concl:guars1})
that
there exist $\steqv{\sq^p\hspace{-1.5pt}}{\stnsp\sq}$ and
  $\steqv{\sq^q\hspace{-1.5pt}}{\stnsp\sq}$ such that either
  \begin{enumerate}
  \item[(a)] $\lcondN{\sq\mystrut^p\hspace{-1pt}}{\hspace{-1pt}G_q\hspace{-0.75pt}}{\hspace{-0.1pt}P\hspace{-0.5pt}}{\hspace{0.4pt}p}$ and
                    $\lcondN{\sq\mystrut^q\hspace{-1pt}}{\hspace{-1pt}G_{\hspace{-0.4pt}p}\hspace{-0.7pt}}{\hspace{-0.05pt}P\hspace{-0.5pt}}{\hspace{0.5pt}q}$, or
  \item[(b)] $\lcondN{\sq\mystrut^p\hspace{-1pt}}{\hspace{-1pt}G_q\hspace{-0.75pt}}{\hspace{-0.5pt}P\hspace{-0.59pt}}{\hspace{0.4pt}p}$ and there is some $n$ with
    $\lcondT{\sq\mystrut^q\hspace{-1pt}}{\hspace{-1pt}G_{\hspace{-0.5pt}p}\hspace{-0.9pt}}{\hspace{-0.25pt}P\hspace{-0.5pt}}{\hspace{0.37pt}q\hspace{0.1pt}}{\hspace{0.45pt}n}$, or
  \item[(c)] $\lcondN{\sq\mystrut^q\hspace{-1pt}}{\hspace{-1pt}G_{\hspace{-0.5pt}p}\hspace{-0.79pt}}{\hspace{-0.1pt}P\hspace{-0.7pt}}{\hspace{0.4pt}q}$ and there is some $n$ with
                    $\lcondT{\sq\mystrut^p\hspace{-1.2pt}}{\hspace{-1pt}G_q\hspace{-0.9pt}}{\hspace{-0.1pt}P\hspace{-0.9pt}}{\hspace{0.4pt}p\hspace{-0.05pt}}{\hspace{0.45pt}n}$.
   \end{enumerate}
Note that with $G_{\hspace{-0.5pt}p}\hspace{-1pt} = G_q\hspace{-1pt} =\! \top$ the assumptions (\ref{concl:guars1}) become valid for all $p$ and $q$ so that we in principle get back the part (1) of Proposition~\ref{thm:parallel2-live}.

Considering $\op{mutex}^{\mathit{aux}}$ once more but under a shifted angle, 
we may thus refute the assumption
$\lcondN{\sqt\hspace{-0.1pt}}{\hspace{-1pt}\op{id}\hspace{1pt}}{\hspace{1pt}\neg\sv{turn\_aux}_0 \stsp\wedge\hspace{-0.5pt} \neg\sv{turn\_aux}_1\stsp}{\stsp\op{mutex}^{\mathit{aux}}\hspace{0.7pt}\mv{cs}_0\hspace{1.9pt}\mv{cs}_1}$ for a fixed initial computation $\sqt$.
To apply the above parallel composition rule it is feasible to instantiate $G_{\hspace{-0.5pt}p}$ by $\op{G}_0$ and $G_q$ by $\op{G}_1$ where
\begin{eqnarray*}
  \begin{aligned}
  &\hspace{-5pt}  \op{G}_0 \hspace{-7pt}& \defeq\hspace{7pt} & \sv{turn\_aux}_1\hspace{-1pt} =\sv{turn\_aux}\pr_1 \hspace{-0.2pt}\wedge \sv{flag}_1\hspace{-1.5pt} =\sv{flag}\pr_1 \\
    &\hspace{-5pt}  \op{G}_1 \hspace{-7pt}& \defeq\hspace{7pt} & \sv{turn\_aux}_0\hspace{-1pt} =\sv{turn\_aux}\pr_0 \hspace{-0.2pt}\wedge \sv{flag}_0\hspace{-1.5pt} =\sv{flag}\pr_0
    \end{aligned}
\end{eqnarray*}
noting that both, $\op{G}_0$ and $\op{G}_1$, are reflexive state relations. As the result, we in particular would have to close the branch with 
two computations $\steqv{\sq^0\hspace{-1.5pt}}{\stnsp\sqt}$ and $\steqv{\sq^1\hspace{-1.5pt}}{\stnsp\sqt}\stsp$ such that
\begin{eqnarray*}\label{concl:triples}
  \begin{aligned}
 &\hspace{-5pt} \lcondN{\sq^0\hspace{-0.5pt}}{\hspace{-1.7pt}\op{G}_1\hspace{-0.79pt}}{\hspace{-0.2pt}\neg\sv{turn\_aux}_0\hspace{-1pt} \wedge\hspace{-1.5pt} \neg\sv{turn\_aux}_1}
       {\stsp\op{thread}^{\mathit{aux}}_0\hspace{1pt} \mv{cs}_0}\\
 &\hspace{-5pt}\lcondN{\sq^1\hspace{-0.5pt}}{\hspace{-1.7pt}\op{G}_0\hspace{-0.7pt}}{\hspace{-0.2pt}\neg\sv{turn\_aux}_0\hspace{-1pt} \wedge\hspace{-1.5pt} \neg\sv{turn\_aux}_1}
       {\stsp\op{thread}^{\mathit{aux}}_1\hspace{0.7pt} \mv{cs}_1}
   \end{aligned}
\end{eqnarray*}
hold and the set of hypothetical solutions to $\sq^0$ and $\sq^1$ gets strictly reduced
in comparison to Section~\ref{Sb:act-refute} due to the properties 
$\sq^0\hspace{-1.2pt} \in\hspace{-0.1pt} \envC{\hspace{-0.5pt}\op{G}_1}$ and $\sq^1\hspace{-1.4pt} \in\hspace{-0.1pt} \envC{\hspace{-0.5pt}\op{G}_0}$.
The merits of the refined technique will be highlighted in the next section.
\section{Abstraction and granularity}\label{Sb:arg}
As pointed out at the beginning of the case study in Chapter~\ref{S:PM1}, 
the conditions $\sv{flag}_1\hspace{-1.7pt} \wedge\hspace{-0.5pt} \sv{turn}$ and $\sv{flag}_0\hspace{-1pt} \wedge\hspace{-1.7pt} \neg\sv{turn}$,
deployed by $\op{mutex}$ to keep one of the threads spinning,
are actually examples of modelling that on the one hand makes the essentials significantly clearer, 
but on the other -- abstracts over interleaving to which an eventual implementation may be subjected    
because from the low-level code perspective an evaluation of \eg $\stdsp\sv{flag}_1\hspace{-2.1pt} \wedge\hspace{-0.9pt} \sv{turn}$ requires a series of atomic actions:
reading the respective Boolean values out of the memory, computing their conjunction and storing the result so that a conditional jump can correctly be performed.   
By then we in particular could not say whether the shared memory locations still contain the values that have been read out
at the start of such a `checkout' phase.

\newcommand{\chk}{\mv{checkout}_0}
\newcommand{\bcond}{\sv{mxcond}}
Expressing that in terms of $\langA{}$, we `unfold' $\op{thread}^{\mathit{aux}}_0$ to
the model shown in Figure~\ref{fig-PM-unfold} 
where the additional parameter $\chk$ shall be instantiated by
\begin{figure}
\[
\begin{array}{l}
\hspace{74pt}\sv{flag}_0 := \!\op{True};\\
\hspace{70pt}\langle \sv{turn} :=\! \op{True};\\
\hspace{75pt}  \sv{turn\_aux}_0 :=\hspace{-1.5pt} \op{True}\rangle; \\
\hspace{74pt}\chk;\\
\hspace{74pt}\whileS{\stdsp\bcond_0\stsp}{\chk};\\
\hspace{74pt}\mv{cs}_0; \\
\hspace{74pt}\sv{flag}_0\hspace{-1pt} := \op{False}
\end{array}
\]
\caption{$\op{thread}^{\mathit{aux}}_0$ `unfolded' .}
  \label{fig-PM-unfold}
\end{figure}
\begin{equation}
  \label{concl:inst}
\begin{array}{l l l l l}
 \hspace{-5pt}\chk &\hspace{-4pt} \mapsto &  \sv{local\_flag}_0 \hspace{7pt} &\hspace{-9pt} := &\hspace{-4pt} \sv{flag}_1;\\
 & & \sv{local\_turn}_0 \hspace{2.1pt} &\hspace{-9pt} := &\hspace{-4pt}  \sv{turn};\\
& &\bcond_0 \hspace{0.5pt} &\hspace{-9pt} := &\hspace{-4pt} \sv{local\_flag}_0\hspace{-0.7pt} \wedge\hspace{-0.7pt} \sv{local\_turn}_0
\end{array}
\end{equation}
with $\op{thread}^{\mathit{aux}}_1$ modified accordingly using $\mv{checkout}_1$.

Thus, the model resides at a finer level of granularity than $\op{mutex}^{\mathit{aux}}$ in Figure~\ref{fig:pm-aux},
and the conclusion that it retains the properties, 
presented in Proposition~\ref{thm:mutex-aux} and Proposition~\ref{thm:PM3-aux},
is not at all immediate:
a state satisfying both, $\sv{flag}_1\stnsp$ and $\sv{turn}$, is not explicitly required anymore
to keep  $\op{thread}^{\mathit{aux}}_0$ spinning in the `busy waiting' phase so that especially
the liveness reasoning in Chapter~\ref{S:PM3} demands a revision.
The crucial point to this end is that based on the enhancements to the refutational approach outlined in the preceding section,
it is
strikingly more expedient 
to leave $\chk$ as another parameter alongside $\mv{cs}_0$ specifying a variety of its admissible instances as follows. 

Let $\op{R}_0$ from now on denote the condition that the state variables $\sv{local\_flag}_0$, $\sv{local\_turn}_0$ and $\bcond_0$ remain unmodified,
whereas $\op{G}_0$ -- that all other variables remain unmodified.
Symmetrically we have $\op{R}_1$ and $\op{G}_1$ for the state variables $\sv{local\_flag}_1$, $\sv{local\_turn}_1, \bcond_1$ and their complement.
Then the assumptions \\[9pt]
\begin{tabular}{l l}
\hspace{-5pt}(1) &\hspace{-7pt} $\rgvalid{\top}{\!\top}{\chk}{\top}{\!\op{G}_0}$\\[2pt]
\hspace{-5pt}(2) &\hspace{-7pt} $\rgvalid{\op{R}_0}{\hspace{-1pt}\bcond_0}{\chk}{\top}{\bcond\pr_0 \imp\hspace{0.7pt} \bcond_0}$\\[2pt]
\hspace{-5pt}(3) &\hspace{-7pt} $\rgvalid{(\sv{turn} = \sv{turn}\pr)\hspace{-0.5pt} \wedge\hspace{-1.1pt} \op{R}_0}{\hspace{-1pt}\top}{\hspace{-0.5pt}\chk\hspace{-0.5pt}}{\stsp \bcond_0 \imp\hspace{0.7pt} \sv{turn}}{\hspace{-1.7pt}\top}$\\[2pt]
\hspace{-5pt}(4) &\hspace{-7pt} $\rgvalid{(\sv{flag}\pr_1\hspace{-1pt} \imp\hspace{0.4pt} \sv{flag}_1)\hspace{-0.5pt} \wedge\hspace{-1pt} \op{R}_0}{\hspace{-1.2pt} \neg\sv{flag}_1}{\hspace{-1pt}\chk\hspace{-1pt}}{\neg\bcond_0}{\hspace{-2pt}\top}$\\[2pt]  
\hspace{-5pt}(5) &\hspace{-7pt} $\forall\sq\:P\:R.\hspace{2.5pt}\NlcondN{\sq}{\hspace{-1.5pt}R\hspace{0.5pt}}{\hspace{0.2pt}P\hspace{-0.1pt}}{\hspace{-0.1pt}\chk}$.
\end{tabular}\\[9pt]
(and accordingly for $\mv{checkout}_1$) are 
sufficient to prove termination of the model parameterised by $\chk$ and $\mv{checkout}_1$,
being in particular resolvable by the instantiation (\ref{concl:inst}).
What is more, this abstraction makes the verification process significantly more structured and concise 
as will be sketched below
keeping to a refutation plan that in many aspects leans on the plan used in Chapter~\ref{S:PM3}.

Utilising the enhancement to the refutational approach 
drafted in the preceding section, we thus seek to refute the assumption
\begin{equation}\label{concl:init-asm}
\hspace{-9pt}\lcondN{\sqt\hspace{0.2pt}}{\hspace{-1.2pt}\op{id}\hspace{0.2pt}}{\hspace{0.4pt}\op{P}\hspace{-0.2pt}}{\stsp\op{mutex}^{\mathit{aux}}\hspace{1pt}\chk\hspace{3.1pt}\mv{cs}_0\hspace{2.7pt}\mv{checkout}_1\hspace{1.7pt}\mv{cs}_1}
\end{equation}
with a fixed initial computation $\sqt$ and $\op{P}$ denoting the state predicate
\[
 \hspace{-10pt}\neg\sv{turn\_aux}_0 \wedge\hspace{-1.5pt} \neg\sv{turn\_aux}_1\hspace{-0.5pt}\wedge \bcond_0 \wedge\hspace{-0.5pt} \bcond_1.\]
 Provided by the assumptions (1) and $\rgvalid{\top}{\!\!\top}{\!\mv{checkout}_1\!}{\top}{\!\!\op{G}_1}$,
 the global guarantees $\op{G}_{\stnsp\mv{global}}$ (\cf (\ref{eq:pm33}) in Chapter~\ref{S:PM3})
 \begin{equation*}\label{concl:pm33}
\hspace{-21pt}\rgvalidi{\op{id}}{\hspace{-1.5pt}\op{P}}{\op{mutex}^{\mathit{aux}}\hspace{1pt}\chk\hspace{3.1pt}\mv{cs}_0\hspace{2.7pt}\mv{checkout}_1\hspace{1.7pt}\mv{cs}_1 }{\top}{\hspace{-2pt}\op{G}_{\stnsp\mv{global}}}
 \end{equation*}
 remain derivable and apply thus to $\sqt$.
 Recall that $\op{G}_{\stnsp\mv{global}}$ in particular contains $\sv{turn\_aux}_0\hspace{-0.4pt} \wedge\hspace{0.3pt} \sv{turn\_aux}_1\hspace{-1pt} \imp\hspace{0.5pt} \sv{turn} = \sv{turn}\pr$ and
$\sv{turn\_aux}_0\hspace{-1pt} \imp\hspace{0.5pt} \sv{turn\_aux}\pr_0$ and $\sv{turn\_aux}_1\hspace{-1pt} \imp\hspace{0.5pt} \sv{turn\_aux}\pr_1$.
In the current setting we can moreover derive
 \begin{eqnarray}\label{concl:guar1}
   \begin{aligned}
   &\hspace{-25pt} \rgvalidi{\top}{\!\top}{\op{thread}^{\mathit{aux}}_0\hspace{1pt}\chk\hspace{3pt} \mv{cs}_0}{\top}{\!\op{R}_1} \\
     &\hspace{-25pt} \rgvalidi{\top}{\!\top}{\op{thread}^{\mathit{aux}}_1\hspace{1pt}\chk\hspace{3pt} \mv{cs}_1}{\top}{\!\op{R}_0}
   \end{aligned}
 \end{eqnarray}
 and
  \begin{eqnarray}\label{concl:guar2}
   \begin{aligned}
   &\hspace{-25pt} \rgvalidi{\op{R}_0}{\!\bcond_0}{\op{thread}^{\mathit{aux}}_0\hspace{0.5pt}\chk\hspace{2pt} \mv{cs}_0\hspace{-1pt}}{\top}{\!\bcond\pr_0\hspace{-1pt}\imp\hspace{0.2pt} \bcond_0} \\
     &\hspace{-25pt} \rgvalidi{\op{R}_1}{\!\bcond_1}{\op{thread}^{\mathit{aux}}_1\hspace{0.5pt}\chk\hspace{2pt} \mv{cs}_1\hspace{-1pt}}{\top}{\!\bcond\pr_1\hspace{-1pt} \imp\hspace{0.2pt} \bcond_1}
     \end{aligned}
   \end{eqnarray}
Using (\ref{concl:guar1}), 
an application of the parallel composition rule, sketched in the preceding section, 
to (\ref{concl:init-asm}) accordingly yields two computations $\sq^0$ and $\sq^1$ with $\steqv{\sq^0\hspace{-1.9pt} }{\hspace{-0.7pt} \sqt}$
and $\steqv{\sq^1\hspace{-1.9pt} }{\stnsp \sqt}$ such that either
\begin{enumerate}
  \item[(a)] $\lcondN{\sq^0\hspace{-1pt}}{\hspace{-1.5pt}\op{R}_0\hspace{-0.5pt}}{\hspace{0.2pt}\op{P}\hspace{-0.4pt}}
{\hspace{0.59pt}\op{thread}^{\mathit{aux}}_0\hspace{1pt}\chk\hspace{2.1pt} \mv{cs}_0} \mbox{ and }\\[3pt] 
              \lcondN{\sq^1\hspace{-1pt}}{\hspace{-1.5pt}\op{R}_1\hspace{-0.5pt}}{\hspace{0.2pt}\op{P}\hspace{-0.4pt}}
                     {\hspace{0.59pt}\op{thread}^{\mathit{aux}}_1\hspace{1pt}\mv{checkout}_1\hspace{2pt} \mv{cs}_1} \mbox{, or }$
  \item[(b)] $\lcondN{\sq^0\hspace{-1.1pt}}{\hspace{-1.7pt}\op{R}_0\hspace{-0.7pt}}{\hspace{0.2pt}\op{P}\hspace{-0.5pt}}
{\hspace{0.59pt}\op{thread}^{\mathit{aux}}_0\hspace{1pt}\chk\hspace{2.1pt} \mv{cs}_0} \mbox{ and there is some } n \mbox{ with }\\[3pt] 
              \lcondT{\sq^1\hspace{-0.7pt}}{\hspace{-1.45pt}\op{R}_1\hspace{-0.5pt}}{\hspace{0.15pt}\op{P}\hspace{-0.59pt}}
                     {\hspace{0.59pt}\op{thread}^{\mathit{aux}}_1\hspace{1pt}\mv{checkout}_1\hspace{2pt} \mv{cs}_1}{\hspace{0.7pt}n} \mbox{, or }$
   \item[(c)] $\lcondN{\sq^1\hspace{-1pt}}{\hspace{-1.92pt}\op{R}_1\hspace{-0.9pt}}{\hspace{0.2pt}\op{P}\hspace{-0.7pt}}
{\stsp\op{thread}^{\mathit{aux}}_1\hspace{1pt}\mv{checkout}_1\hspace{2pt} \mv{cs}_1} \mbox{ and there is some } n \mbox{ with }\\[3pt] 
              \lcondT{\sq^0\hspace{-1pt}}{\hspace{-1.7pt}\op{R}_0\hspace{-0.2pt}}{\hspace{0.05pt}\op{P}\hspace{-0.7pt}}
                     {\hspace{0.5pt}\op{thread}^{\mathit{aux}}_0\hspace{1pt}\mv{checkout}_0\hspace{2.1pt} \mv{cs}_0}{\hspace{0.7pt}n}$.
   \end{enumerate}

An argumentation closing the branch (a) can be briefly outlined as follows.
Processing both statements in (a) up to $\mv{checkout}_0$ and $\mv{checkout}_1$  using the syntax-driven rules 
yields 
a state on $\sq^0$ where $\sv{turn\_aux}_0$ holds and a state on $\sq^1$ where $\sv{turn\_aux}_1$ holds.
Projected back onto $\sqt$,
we once more (\cf Chapter~\ref{S:PM3})
get some $d$ such that
$\sv{turn\_aux}_0\hspace{-1.1pt} \wedge\hspace{-1.1pt}  \sv{turn\_aux}_1$ and hence also $\sv{turn} = \sv{turn}\pr$
hold globally on the suffix $\suffix{d}{\sqt}$.

Propagating this result the other way round along the state equivalence $\steqv{\sq^0\hspace{-1.9pt} }{\hspace{-0.7pt} \sqt}$\stdsp,
for any two states $\sigma$, $\sigma\pr$ on the suffix $\suffix{d}{\sq^0}$
we have  
$\sigma\sv{turn} = \sigma\pr\sv{turn}$.
We clearly get the same result for the suffix $\suffix{d}{\sq^1}$ due to $\steqv{\sq^1\hspace{-2.5pt} }{\hspace{-1pt} \sqt}$.
Next two paragraphs focus on drawing further conclusions for $\op{thread}^{\mathit{aux}}_0 \hspace{1pt}\mv{checkout}_0\hspace{2.1pt} \mv{cs}_0$.

Unconditional termination for all instances of $\mv{cs}_0$ 
has been assumed in Chapter~\ref{S:PM3} whereas the assumption (5) additionally
asserts the same for all instances of $\chk$.
Thus
$\lcondN{\sq^0\hspace{-0.9pt}}{\hspace{-1.5pt}\op{R}_0\hspace{0.25pt}}{\hspace{0.9pt}\op{P}\hspace{0.2pt}}
{\hspace{1.2pt}\op{thread}^{\mathit{aux}}_0\hspace{0.7pt}\chk\hspace{2.1pt} \mv{cs}_0}$
can hold only due to non-termination of the `busy waiting' phase $\whileS{\stsp\bcond_0}{\stnsp\chk\stnsp}$.
More precisely, processing $\lcondN{\sq^0\hspace{-1.2pt}}{\hspace{-1.7pt}\op{R}_0\hspace{-0.29pt}}{\hspace{0.39pt}\op{P}\hspace{-0.45pt}}
{\hspace{0.59pt}\op{thread}^{\mathit{aux}}_0\hspace{0.5pt}\chk\hspace{2.1pt} \mv{cs}_0}$
using syntax-driven rules
yields a strictly ascending infinite sequence $\phi$ such that 
for all $i\hspace{0.7pt}\in\hspace{-0.5pt}\naturals$
we have a computation $\steqv{\prescript{i}{}{\sq}\hspace{-0.7pt}}{\hspace{-0.7pt}\suffix{\phi_i}{\sq^0}}\stsp$
with $\lcondT{\prescript{i}{}{\sq}\hspace{-0.2pt}}{\hspace{-1.9pt}\op{R}_0\hspace{-0.25pt}}{\hspace{0.7pt}\bcond_0\hspace{-0.2pt}}{\hspace{0.1pt}\chk\hspace{-0.3pt}}{\hspace{0.5pt}\prescript{i}{}{n}}$
for some $\prescript{i}{}{n}\hspace{-0.5pt} <\hspace{-0.2pt} \phi_{i+1}\hspace{-0.5pt} - \phi_i$
(the notations $\prescript{i}{}{\sq}$ and $\prescript{i}{}{n}$ shall just emphasise
that the values are actually parameterised by $i$).
From this we can further infer
\begin{enumerate}
\item[(i)]\hspace{-2.9pt}$\lcondT{\prescript{d}{}{\sq}\hspace{-0.5pt}}{\hspace{-1.5pt}(\sv{turn}\hspace{-0.1pt} =\hspace{-0.4pt} \sv{turn}\pr)\hspace{-1.25pt} \wedge\hspace{-1.29pt} \op{R}_0\hspace{-0.1pt}}{\hspace{0.7pt}\bcond_0}{\hspace{0.2pt}\chk}{\hspace{0.7pt}\prescript{d}{}{n}}\stsp$ with  $\stsp\steqv{\prescript{d}{}{\sq}\hspace{-1pt}}{\hspace{-0.2pt}\suffix{\phi_d}{\sq^0}}$
\item[(ii)]\emph{for all states $\sigma$ on $\sq^0$ we have $\sigma\bcond_0 \hspace{-1pt}=\hspace{-2pt} \op{True}$}.
\end{enumerate}
To (i) note that $\sigma\sv{turn} = \sigma\pr\sv{turn}$ holds for any $\sigma$, $\sigma\pr$ on $\suffix{d}{\sq^0}$ and
for environment steps on $\prescript{d}{}{\sq}$ 
in particular, since $d \le \phi_d$.
Regarding (ii), if there would be a state $\sigma^0_h = \stateOf{\sq^0_h}$ with $\neg\sigma^0_h\bcond_0$
then using (\ref{concl:guar2}) we can derive $\neg\sigma^0_j\bcond_0$ for all $j \ge h$ %
and hence for $\phi_h\hspace{-1.5pt} \ge\hspace{-0.5pt} h$ in particular, which is a contradiction since the state
$\stateOf{\sq^0_{\phi_h}}$ 
satisfies $\bcond_0$.

Now with (i) we can take the decisive advantage of the $\chk$ abstraction:
the triple (3) ensures that for all $\sq\hspace{-1.2pt} \in\hspace{-1pt} \pcs{\chk}\hspace{-0.9pt} \cap\hspace{-0.1pt} \envC{\!((\sv{turn}\hspace{-0.9pt} =\hspace{-1.1pt} \sv{turn}\pr)\hspace{-1pt} \wedge\hspace{-1.9pt} \op{R}_0)}$ we have
$\sq\hspace{-0.2pt} \in\hspace{-0.2pt} \outC{\hspace{-1.5pt}(\bcond_0\hspace{-1pt} \imp\hspace{0.5pt} \sv{turn})}$ whereas 
$\prescript{d}{}{\sq}$ not only meets the relevant conditions but also reaches a $\skipp$-configuration in $\prescript{d}{}{n}$ steps by (i).
Hence we have a state $\sigma$ on $\suffix{\phi_d}{\sq^0}$ such that $\sigma\sv{turn}\hspace{-2pt} = \hspace{-4.1pt}\op{True}$ holds due to (ii).
Then a projection back onto $\sqt$
yields that $\sv{turn}$ must hold globally on $\suffix{d}{\sqt}$
because the value of $\sv{turn}$ is constant on $\suffix{d}{\sqt}$.

Recasting the argumentation in the last two paragraphs to $\op{thread}^{\mathit{aux}}_1$
accordingly yields that $\neg\sv{turn}$ also holds globally on $\suffix{d}{\sqt}$.
Altogether we particularly get $\stateOf{\sqt_d}\hspace{-0.2pt} \in\hspace{-1.2pt} \bot$
and hence the branch (a) is closed.

As in Chapter~\ref{S:PM1} and Chapter~\ref{S:PM3} we can establish a program correspondence
between the model in Figure~\ref{fig-PM-unfold} and
\[
\begin{array}{l}
\hspace{74pt}\sv{flag}_0 := \!\op{True};\\
\hspace{74pt}\sv{turn} :=\! \op{True};\\
\hspace{74pt}\chk;\\
\hspace{74pt}\whileS{\stdsp\bcond_0\stsp}{\chk};\\
\hspace{74pt}\mv{cs}_0; \\
\hspace{74pt}\sv{flag}_0\hspace{-1pt} := \op{False}
\end{array}
\]
translating the relevant protocol properties along the correspondence.
Applying subsequently the transformations from Section~\ref{Sb:seq-norm} and Section~\ref{Sb:jump-norm}
in order to replace the while-statements by jumps in both threads, we in principle arrive at
the componentwise equivalent representation 
\[
\hspace{-5pt}\begin{array}{l l || l l}
 &\hspace{-7pt} \sv{flag}_0 := \hspace{-1pt}\op{True}; &  & \hspace{-7pt}\sv{flag}_1 := \hspace{-1pt}\op{True};\\
 &\hspace{-7pt}  \sv{turn} := \hspace{-0.7pt}\op{True};   &  & \hspace{-7pt}\sv{turn} := \op{False}; \\
  &\hspace{-7pt} \chk;                     &  &\hspace{-7pt} \mv{checkout}_1; \\
  1: &\hspace{-7pt} \com{cjump}\:\neg\bcond_0\hspace{2.9pt}\com{to}\hspace{2.5pt}2     & \hspace{2pt} 3: &\hspace{-7pt} \com{cjump}\:\neg\bcond_1\;\com{to}\;4 \\ 
     &\hspace{-4.5pt} \com{otherwise}\;\chk;\jump{1}\;\com{end};  &  & \hspace{-4.5pt}\com{otherwise}\;\mv{checkout}_1;\jump{3}\;\com{end}; \\
2: &\hspace{-7pt} \mv{cs}_0;                                    & \hspace{2pt} 4: &\hspace{-7pt} \mv{cs}_1;\\
   &\hspace{-7pt} \sv{flag}_0 \hspace{-1pt}:= \op{False}        &     &\hspace{-7pt} \sv{flag}_1 \hspace{-1pt}:= \op{False}
\end{array}
\]
which is notably closer to a template for provably correct assembly level implementations
derived from the
Peterson's mutual exclusion algorithm.

\backmatter

\bibliographystyle{unsrt}
\bibliography{framework.bib}

\begin{thebibliography}{10}

\bibitem{do2011333}
Radio Technical Commission for Aeronautics (RTCA).
\newblock {\em DO-333: Formal methods supplement to DO-178C and DO-278A}, 2011.

\bibitem{standard:dod178c}
Radio Technical Commission for Aeronautics (RTCA).
\newblock {\em DO-178C: Software Considerations in Airborne Systems and
  Equipment Certification}, 2011.

\bibitem{Milner72}
Robin Milner.
\newblock Implementation and applications of {S}cott’s logic for computable
  functions.
\newblock {\em SIGPLAN Notices}, 7(1):1–6, January 1972.

\bibitem{Milner411}
Robin Milner.
\newblock The use of machines to assist in rigorous proof.
\newblock {\em Philosophical Transactions of the Royal Society of London A:
  Mathematical, Physical and Engineering Sciences}, 312(1522):411--422, 1984.

\bibitem{Nipkow_PW:Isabelle}
Tobias Nipkow, Lawrence Paulson, and Markus Wenzel.
\newblock {\em {Isabelle/HOL} --- A Proof Assistant for Higher-Order Logic},
  volume 2283 of {\em Lecture Notes in Computer Science}.
\newblock Springer, 2002.

\bibitem{DBLP:conf/tphol/Harrison09a}
John Harrison.
\newblock {HOL} light: An overview.
\newblock In Stefan Berghofer, Tobias Nipkow, Christian Urban, and Makarius
  Wenzel, editors, {\em Theorem Proving in Higher Order Logics, 22nd
  International Conference, TPHOLs 2009, Munich, Germany, August 17-20, 2009.
  Proceedings}, volume 5674 of {\em Lecture Notes in Computer Science}, pages
  60--66. Springer, 2009.

\bibitem{DBLP:conf/tphol/Gordon91}
Michael J.~C. Gordon.
\newblock Introduction to the {HOL} system.
\newblock In Myla Archer, Jeffrey~J. Joyce, Karl~N. Levitt, and Phillip~J.
  Windley, editors, {\em Proceedings of the 1991 International Workshop on the
  {HOL} Theorem Proving System and its Applications, August 1991, Davis,
  California, {USA}}, pages 2--3. {IEEE} Computer Society, 1991.

\bibitem{do-case-studies}
Darren Cofer and Steven Miller.
\newblock {DO}-333 certification case studies.
\newblock In Julia~M. Badger and Kristin~Yvonne Rozier, editors, {\em NASA
  Formal Methods}, pages 1--15. Springer, 2014.

\bibitem{STIRLING1988347}
Colin Stirling.
\newblock A generalization of {O}wicki-{G}ries's {H}oare logic for a concurrent
  while language.
\newblock {\em Theoretical Computer Science}, 58(1):347--359, 1988.

\bibitem{Owicki_Gries_76}
Susan Owicki and David Gries.
\newblock An axiomatic proof technique for parallel programs.
\newblock {\em Acta Informatica}, 6:319--340, 1976.

\bibitem{DBLP:phd/dnb/Nieto02}
Leonor~P. Nieto.
\newblock {\em Verification of parallel programs with the Owicki-Gries and
  Rely-Guarantee methods in Isabelle/HOL}.
\newblock PhD thesis, Technical University Munich, Germany, 2002.

\bibitem{Schirmer:PhD}
Norbert Schirmer.
\newblock {\em Verification of Sequential Imperative Programs in
  {I}sabelle/{HOL}}.
\newblock PhD thesis, Technical University Munich, Germany, 2006.

\bibitem{DBLP:conf/mpc/Staden15}
Stephan van Staden.
\newblock On rely-guarantee reasoning.
\newblock In Ralf Hinze and Janis Voigtl{\"{a}}nder, editors, {\em Mathematics
  of Program Construction - 12th International Conference, {MPC} 2015,
  K{\"{o}}nigswinter, Germany, June 29 - July 1, 2015. Proceedings}, volume
  9129 of {\em Lecture Notes in Computer Science}, pages 30--49. Springer,
  2015.

\bibitem{PLOTKIN20043}
Gordon~D. Plotkin.
\newblock The origins of structural operational semantics.
\newblock {\em The Journal of Logic and Algebraic Programming}, 60-61:3--15,
  2004.

\bibitem{Park81}
David Park.
\newblock Concurrency and automata on infinite sequences.
\newblock In Peter Deussen, editor, {\em Theoretical Computer Science}, pages
  167--183. Springer, 1981.

\bibitem{Milner80}
Robin Milner.
\newblock {\em A Calculus of Communicating Systems}, volume~92 of {\em Lecture
  Notes in Computer Science}.
\newblock Springer, 1980.

\bibitem{STIRLING1998113}
Colin Stirling.
\newblock Decidability of bisimulation equivalence for normed pushdown
  processes.
\newblock {\em Theoretical Computer Science}, 195(2):113--131, 1998.

\bibitem{DBLP:books/el/01/BurkartCMS01}
Olaf Burkart, Didier Caucal, Faron Moller, and Bernhard Steffen.
\newblock Verification on infinite structures.
\newblock In Jan~A. Bergstra, Alban Ponse, and Scott~A. Smolka, editors, {\em
  Handbook of Process Algebra}, pages 545--623. North-Holland / Elsevier, 2001.

\bibitem{Hoare}
C.~A.~R. Hoare.
\newblock An axiomatic basis for computer programming.
\newblock {\em Communications of the ACM}, 12(10):576–580, October 1969.

\bibitem{10.1007/3-540-36575-3_24}
Leonor~P. Nieto.
\newblock The rely-guarantee method in {I}sabelle/{HOL}.
\newblock In Pierpaolo Degano, editor, {\em Programming Languages and Systems},
  pages 348--362. Springer, 2003.

\bibitem{PETERSON1981115}
Gary~L. Peterson.
\newblock Myths about the mutual exclusion problem.
\newblock {\em Information Processing Letters}, 12(3):115--116, 1981.

\bibitem{VANGLABBEEK2019100480}
Rob van Glabbeek.
\newblock Ensuring liveness properties of distributed systems: Open problems.
\newblock {\em Journal of Logical and Algebraic Methods in Programming},
  109:100480, 2019.

\bibitem{10.1007/3-540-61474-5_84}
Orna Kupferman and Moshe~Y. Vardi.
\newblock Verification of fair transition systems.
\newblock In Rajeev Alur and Thomas~A. Henzinger, editors, {\em Computer Aided
  Verification}, pages 372--382. Springer, 1996.

\bibitem{10.5555/903616}
Yonit Kesten, Amir Pnueli, Li{-}on Raviv, and Elad Shahar.
\newblock Model checking with strong fairness.
\newblock {\em Formal Methods in System Design}, 28(1):57--84, 2006.

\bibitem{baader_nipkow_1998}
Franz Baader and Tobias Nipkow.
\newblock {\em Term Rewriting and All That}.
\newblock Cambridge University Press, 1998.

\end{thebibliography}

\printindex
\end{document}